\documentclass[12pt]{ociamthesis}  % default square logo 

\pdfoutput=1 % arxiv requires this for PDFLatex

%load any additional packages
\usepackage{url}
\usepackage{amssymb}
\usepackage{amsmath}
\usepackage{amsthm}
\usepackage[titletoc]{appendix} % Appendices play nicely with the ToC
\usepackage{xcolor} % Allow colouring of blocks with \color and \textcolor
\definecolor{grey}{rgb}{0.35, 0.35, 0.35}
\usepackage[colorlinks=true, linkcolor=grey, citecolor=grey, urlcolor=grey]{hyperref} % Make link colours grey for printing
\usepackage{textcomp} % LyX includes this, so I need it, esp. for the "logical not" symbol
\usepackage{enumitem} % Allows changing the enumerate numbering style
\usepackage{bbm} % Enables black-board-bold numerals for e.g. the identity
\usepackage[T1,OT1]{fontenc} % Get extended character sets
\usepackage[mono=false]{libertine} % Use libertine family fonts
\usepackage{sectsty} % Enables custom styling for section headings etc.
\usepackage{microtype} % Greatly reduce the Badbox Overfull \hbox errors as explained http://tex.stackexchange.com/questions/34721/why-do-hyphenated-words-cause-margin-violations-and-how-can-i-prevent-it
% Suppress underful hbox warnings from the bibliograpy as described here: http://tex.stackexchange.com/questions/10924/underfull-hbox-in-bibliography
\usepackage{etoolbox}
\apptocmd{\sloppy}{\hbadness 10000\relax}{}{}

% Set the fonts of sections' and chapters' headings
\allsectionsfont{\fontfamily{ppl}\bfseries\selectfont\raggedright}
% Set pretty chapter headings with the same font
\usepackage{quotchap} 
\renewcommand*{\sectfont}{\fontfamily{ppl}\mdseries\selectfont\raggedleft}
\qsetcnfont{ppl}

% Make the Palatino font play nice with the \l character as described in the answer here http://tex.stackexchange.com/questions/229168/linux-libertine-the-l-character-ligatures-and-t1-fontenc
\UndeclareTextCommand{\l}{OT1}
\DeclareTextSymbolDefault{\l}{T1}

% Remove the brackets amsthm puts around comments in a theorem definition.
\makeatletter
\def\thmhead@plain#1#2#3{%
  \thmname{#1}\thmnumber{\@ifnotempty{#1}{ }\@upn{#2}}%
  \thmnote{ {\the\thm@notefont#3}}}
\let\thmhead\thmhead@plain
\makeatother

% Allow sensible line-breaking to URLs
\makeatletter
\g@addto@macro{\UrlBreaks}{\UrlOrds}
\makeatother

% Create environments for def, thm, lem, cor
\newtheorem{definition}{Definition}[chapter]
\newtheorem{theorem}{Theorem}[chapter]
\newtheorem{lemma}{Lemma}[chapter]

\newtheorem*{theorem*}{Theorem}

\newtheorem{proposition}{Proposition}[chapter]

%Trace
\DeclareMathOperator{\Tr}{Tr}

%Linking-out
\DeclareMathOperator{\Ln}{Ln}

%Identity
\newcommand\id{\mathrm{id}}

%DAG subsets
\DeclareMathOperator{\Pa}{Pa}
\DeclareMathOperator{\Ch}{Ch}
\DeclareMathOperator{\Desc}{De}
\DeclareMathOperator{\Ndesc}{Nd}

%Diagramatic trace from file eps/trace.eps
\DeclareRobustCommand{\trace}{%
  \begingroup\normalfont
  \includegraphics[height=\fontcharht\font`\B]{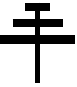}%
  \endgroup
}

%Diagramatic classical copy from file eps/classical-copy-dot.eps
\DeclareRobustCommand{\ccopy}{%
  \begingroup\normalfont
  \includegraphics[height=\fontcharht\font`\B]{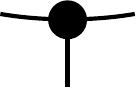}%
  \endgroup
}

%Diagramatic "quantum dot" from file eps/quantum-dot.eps
\DeclareRobustCommand{\qcopy}{%
  \begingroup\normalfont
  \includegraphics[height=\fontcharht\font`\B]{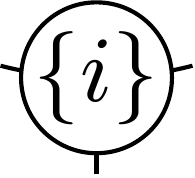}%
  \endgroup
}

%Diagramatic CNOT from file eps/quantum-CNOT.eps
\DeclareRobustCommand{\CNOT}{%
  \begingroup\normalfont
  \includegraphics[height=\fontcharht\font`\B]{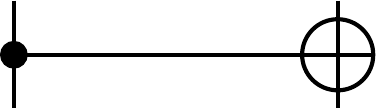}%
  \endgroup
}

%Diagramatic SWAP from file eps/quantum-SWAP.eps
\DeclareRobustCommand{\SWAP}{%
  \begingroup\normalfont
  \includegraphics[height=\fontcharht\font`\B]{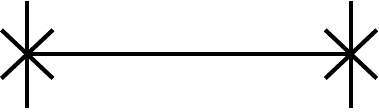}%
  \endgroup
}

%Equals with "def" written above it
\newcommand\eqdef{\mathrel{\overset{\makebox[0pt]{\mbox{\normalfont\tiny def}}}{=}}}

%Equals with "?" above it
\newcommand\eqquest{\mathrel{\overset{\makebox[0pt]{\mbox{\normalfont\tiny ?}}}{=}}}

%Red comment text
\newcommand{\red}[1]{\textcolor{red}{#1}}
%\renewcommand{\red}[1]{}  % Uncomment this line to remove the red notes-to-self

%Weak comment that can be removed at will by uncommmenting the second line here
\newcommand{\wcomment}[1]{#1}
\renewcommand{\wcomment}[1]{}

%Orange warning text

%Ontological model preparation transformation
\newcommand{\transto}[1]{\mathrel{\overset{\scriptstyle #1}{\leadsto}}}

%Used for do-conditionals
\newcommand{\docn}{\mathop{}\!\mathrm{do} \ }

%Straight "d" for differentials
\AtBeginDocument{\renewcommand{\d}{\mathop{}\!\mathrm{d}}}

%K-bar for "inverse-kernel"
\AtBeginDocument{\renewcommand{\k}{\mathchar'26\mkern-9mu k}}

% Bibliography style commands taken from http://tex.stackexchange.com/questions/15677/revtex-4-1-bibliography-style-in-other-classes	
\usepackage[sort&compress,numbers]{natbib}
\bibliographystyle{allalpha.mod}
\usepackage{doi}%<----------
% \usepackage{hyperref} % I replaced this line above
%\AtBeginDocument{ \nocite{apsrev41Control} }

% Make the numbering for chapters roman numerals
%\renewcommand\thechapter{\Roman{chapter}}

%input macros (i.e. write your own macros file called mymacros.tex 
%and uncomment the next line)
%\include{mymacros}

\title{Reality, Causality, and Quantum Theory}   %note \\[1ex] is a line break in the title

\author{John-Mark A. Allen}             %your name
\college{St. John's College}  %your college

\degree{Doctor of Philosophy}     %the degree
\degreedate{Hilary Term 2017}         %the degree date

%end the preamble and start the document
\begin{document}	

%this baselineskip gives sufficient line spacing for an examiner to easily
%markup the thesis with comments
\baselineskip=18pt plus1pt

%set the number of sectioning levels that get number and appear in the contents
\setcounter{secnumdepth}{3}
\setcounter{tocdepth}{3}

\maketitle                  % create a title page from the preamble info
\begin{alwayssingle}

\thispagestyle{empty}
\begin{center}
\vspace*{1.2cm}
{\LARGE }
\end{center}
\vspace{0.5cm}

\begin{changemargin}{0cm}{6cm}
\begin{flushleft}
``I daresay you haven't had much practice. When I was your age, I always did it for half-an-hour a day. Why, sometimes I've believed as many as six impossible things before breakfast.''
\end{flushleft}

\begin{flushright}
-- \textit{The White Queen} \footnote{\emph{Through the Looking Glass}, Lewis Carroll}
\end{flushright}
\end{changemargin}

\vspace*{1.5cm}

\begin{changemargin}{6cm}{0cm}
\begin{flushleft}
``When you are a Bear of Very Little Brain, and you Think of Things, you find sometimes that a Thing which seemed very Thingish inside you is quite different when it gets out into the open and has other people looking at it.''
\end{flushleft}

\begin{flushright}
-- \textit{Pooh Bear} \footnote{\emph{The House at Pooh Corner}, A. A. Milne}
\end{flushright}
\end{changemargin}

\includegraphics[scale=0.075,angle=0]{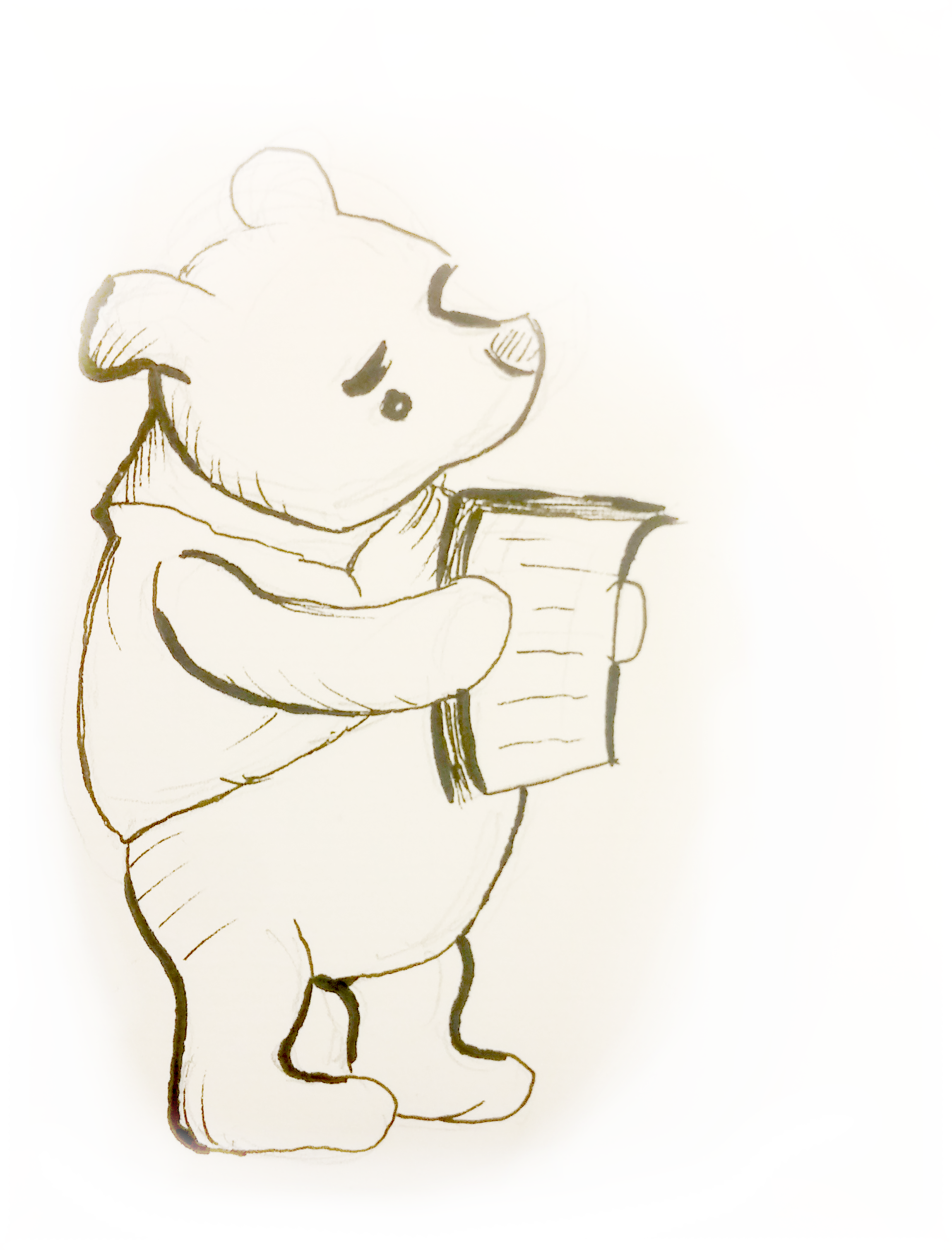} 

\end{alwayssingle}
\begin{dedication}

\noindent

\textit{To}

\bigskip

Mary

\medskip

Peter

\medskip

John

\medskip

Caroline

\medskip

Tim

\medskip

Phil	

\end{dedication}        % include a dedication.tex file
\begin{acknowledgements}

{\baselineskip=14pt plus1pt

Down the long wall of the main stairwell in the Scottish National Gallery of Modern Art One there are many names, neatly arranged in unassuming columns of clean Helvetica on the white gallery wall. They appear as a very standard (if impressively long) list of benefactors to the gallery. Look closer and you can recognise some famous names of modern art in the crowd. Regain some awareness of your surroundings and you notice the label identifying it as ``List of Names (Random)'' by Douglas Gordon, an ongoing piece listing all of the people the artist can remember meeting.

Thinking about writing these acknowledgements, my mind keeps coming back to that wall. Acknowledgements tend to end up looking like a long list of loosely related and largely unfamiliar names, perhaps sprinkled with some you recognise. But mostly I end up thinking of all the people Gordon must have forgotten to include. It seems horrible and cruel to forget people, yet failing to mention everyone is inevitable. So I guess I'm starting with an apology---a long-winded and somewhat pretentious apology to anybody who has helped me at all in the making of this thesis but I have failed to mention here. I'm sorry and I am certainly grateful.

Foremost thanks most definitely go to Jon Barrett. As a supervisor, collaborator, source of encouragement, fantastic explanations, and good humour he has been wonderful throughout. In particular should be noted his clarity and patience when explaining an open problem to me for the fourth time (when my memory refuses to play along) and his encouragement in finding further research opportunities and collaborations in Oxford, Waterloo, and elsewhere.

Dom Horsman also deserves more than a special mention here. He has selflessly kept me sane and on-track in many ways over many conversations. He has been both amazingly practical and delightfully irreverent, each exactly when they were needed. I have done nothing to deserve such a wonderful and effective pair to guide me through my time at Oxford and I am forever grateful.

They are far from the only people from the Quantum Group who need thanking, however. In particular Ciar\'{a}n Lee (ah, yeah, it's been grand), Matty Hoban, John Selby, and Stefano Gogioso have been exceptionally accommodating to being distracted by sometimes-interesting sometimes-banal questions. Samson Abramsky, Bob Coecke, Carlo Maria Scandolo, Nathan Walk, Niel de Beaudrap, Subhayan Roy Moulik, Robin Lorenz, Jamie Vicary, Shane Mansfield, Sean Tull, Miriam Backens, Will Zeng, Brendan Fong, Aleks Kissinger, Chris Heunen, Dan Marsden, Sina Salek, and Ognyan Oreshkov all need to be thanked for interesting and productive conversations, advice, and teaching.

I have also benefited from working closely with several collaborators, from whom I have learned much. These include Owen Maroney, Stefano Gogioso, Ciar\'{a}n Lee, Jon Barrett, and Dom Horsman in Oxford. I'm indebted to Rob Spekkens, Matt Pusey, Matt Leifer, and Elie Wolfe at Perimeter Institute and especially Lucien Hardy for hosting my visit there. Special thanks also to L\'{i}dia del Rio in Z\"{u}rich for hosting a productive and enjoyable visit.

Back in the classical world, there are many more people who have been a joy to spend time with. They have helped me to keep my head and humour throughout my DPhil process in Oxford. Alma Brodersen has been unyieldingly delightful and encouraging. Solja H\"{o}ft is absolutely brilliant and a magnificent friend. Jan Cosgrave has been determined to keep me topped up with sleep and dessert wine (dude, with Ross Haines, also deserves utmost respect for slogging all the way through this thesis in search of typos). Sally Le Page has been both inspiring and comforting. The list of wonderful people who've kept me sane and happy goes on and on with Sarah Penington, Marina Lambrakis, Jon Templeman, Katherine Fender, Laura Grima, Kate Radford, Ali Ward, Sam Forbes, Tim Bourns, Max Emmerich, Ellie Milnes-Smith, Martin Bittner, Chris Arran, and Alan Percy. In fact, many people from the St. John's MCR could happily fit on this list. One group that deserves particular note is Oriel College Choir including David Maw, Claire Lowe, Alasdair Cameron, Johanna Hockmann, and Lizzie Searle to name but a few, because there's nothing like depressed Tudor Catholic to brighten up your week.

The list of wonderful friends who have helped in many ways extends well beyond Oxford. In particular, Ed Whittle, Harriet Cook, Chris Hallam, Kenny Masters, Kieran Franklin, Katie Dooley, Emma Wilson, Emily Baker, Becca Lee, Josh Rhodes, Bonnie Barker, Darryl Hoving, Lizzie Briggs, Giulia Postal, Katharine Elliot, Sophie Reed, Elena Teh, Heather Parsons, Olivia Cleary, and Jazzy Ramsay Gray should all be thanked. Tort, Richard, Emma, and Madeleine Olver have all been hugely generous, fun, and kind.

Of course there are several institutions which have practically enabled this thesis. The Department of Computer Science, especially Julie Sheppard and Destiny Chen, has been very supportive in many ways. St. John's College has acted as a home and source of great support. The Engineering and Physical Science Research Council (and, by extension, the UK taxpayer) funded this research, for which I am forever grateful. I also received generous funding support from ETH Z\"{u}rich and the Perimeter Institute for Theoretical Physics for visits.

My family, to whom this thesis is dedicated, quite literally cannot be thanked enough. Brothers, uncles, aunts, cousins, I only become increasingly aware of how absurdly lucky I am to have you. Most of all, my parents and grandparents have always supported me with a compassion and grace that is difficult to describe. Thank you all, for everything that has led up to this thesis and everything that will follow.

Finally, Catherine. There is certainly nobody who understands the personal content of this work better than you. Thank you for your patience and your impatience. Your understanding and your refusal to understand. Your support and your drive. Your excitement and your confusion. Your celebrations and your commiserations. Your presence. Your letters. Your thoughts. Your belief. Thank you.

}

\end{acknowledgements}  % include an acknowledgements.tex file
\begin{abstract}

Quantum theory describes our universe incredibly successfully. To our classic\-ally-inclined brains, however, it is a bizarre description that requires a re-imagining of what fundamental reality, or ``ontology'', could look like. This thesis examines different ontological features in light of the success of quantum theory, what it requires, and what it rules out. While these investigations are primarily foundational, they also have relevance to quantum information, quantum communication, and experiments on quantum systems.

The way that quantum theory describes the state of a system is one of its most unintuitive features. It is natural, therefore, to ask whether a similarly strange description of states is required on an ontological level. This thesis proves that almost all quantum superposition states for $d>3$ dimensions must be real---that is, present in the ontology in a well-defined sense. This is a strong requirement which prevents intuitive explanations of the many quantum phenomena which are based on superpositions. A new theorem is also presented showing that quantum theory is incompatible with macro-realist ontologies, where certain physical quantities must always have definite values. This improves on the Leggett-Garg argument, which also aims to prove incompatibility with macro-realism but contains loopholes. Variations on both of these results that are error-tolerant (and therefore amenable to experimentation) are presented, as well as numerous related theorems showing that the ontology of quantum states must be somewhat similar to the quantum states themselves in various specific ways. Extending these same methods to quantum communication, a simple proof is found showing that an exponential number of classical bits are required to communicate a linear number of qubits. That is, classical systems are exponentially bad at storing quantum data.

Causal influences are another part of ontology where quantum theory demands a revision of our classical notions. This follows from the outcomes of Bell experiments, as rigorously shown in recent analyses. Here, the task of constructing a native quantum framework for reasoning about causal influences is tackled. This is done by first analysing the simple example of a common cause, from which a quantum version of Reichenbach's principle is identified. This quantum principle relies on an identification of quantum conditional independence which can be defined in four ways, each naturally generalising a corresponding definition for classical conditional independence. Not only does this allow one to reason about common causes in a quantum experiments, but it can also be generalised to a full framework of quantum causal models (mirroring how classical causal models generalise Reichenbach's principle). This new definition of quantum causal models is illustrated by examples and strengthened by it's foundation on a robust quantum Reichenbach's principle.

An unusual, but surprisingly fruitful, setting for considering quantum ontology is found by considering time travel to the past. This provides a testbed for different ontological concepts in quantum theory and new ways to compare classical and quantum frameworks. It is especially useful for comparing computational properties. In particular, time travel introduces non-linearity to quantum theory, which brings (sometimes implicit) ontological assumptions to the fore while introducing strange new abilities. Here, a model for quantum time travel is presented which arguably has fewer objectionable features than previous attempts, while remaining similarly well-motivated. This model is discussed and compared with previous quantum models, as well as with the classical case.

Together, these threads of investigation develop a better understanding of how quantum theory affects possible ontologies and how ontological prejudices influence quantum theory.

\end{abstract}          % include the abstract.tex file

\begin{romanpages}          % start roman page numbering
\tableofcontents            % generate and include a table of contents
\chapter*{Related Publications}
\addcontentsline{toc}{chapter}{\protect\numberline{}Related Publications}% Add "Related Publications" to TOC

\begin{itemize}

\item[\cite{Allen14}] \textbf{JMAA}, \emph{Treating Time Travel Quantum Mechanically}. Physical Review A \textbf{90}(4), 042107 (2014). arXiv:1401.4933 [quant-ph]
	
\item[\cite{Allen16a}] \textbf{JMAA}, \emph{Quantum Superpositions Cannot be Epistemic}. Quantum Studies: Mathematics and Foundations \textbf{3}(2), 161--177 (2016). arXiv:1501.05969 [quant-ph]

\item[\cite{AllenBarrett+17}] \textbf{JMAA}, J. Barrett, D. C. Horsman, C. M. Lee, and R. W. Spekkens, \emph{Quantum Common Causes and Quantum Causal Models}. Physical Review X \textbf{In preparation} (2017). arXiv:1609.09487 [quant-ph]

\item[\cite{AllenMaroney+16}] \textbf{JMAA}, O. J. E. Maroney, and S. Gogioso, \emph{A Stronger Theorem Against Macro-realism} (2016). arXiv:1610.00022 [quant-ph]

\end{itemize}
\chapter*{Notation} \label{sec:notation}
\addcontentsline{toc}{chapter}{\protect\numberline{}Notation}% Add "Notation" to TOC

Most of the notation used in this thesis is based on standard usage in quantum foundations or physics in general. In order to streamline the text, some of the most common notation used is summarised here. If the reader comes across an unfamiliar symbol in the text, this should be their first port of call for a definition.

The symbol $\eqdef$ is used to indicate that the expression is a definition of the item on the left. Read ``is defined as being equal to''.

For logarithms, $\log$ is used for the base-two logarithm (as in information theory), while $\ln$ is used for base-$e$ natural logarithms.

\subsection*{Quantum Systems and States}

Quantum systems will normally be given capital Latin letters, such as $A$.

\begin{itemize}

\item $\mathcal{H}_A$ is the \emph{Hilbert space} (bounded inner-product complex vector space) corresponding to system $A$.

\item $d_A \eqdef \dim \mathcal{H}_A$ is the dimension of the system $A$.

\item $\mathbbm{1}_A$ is the identity operator on $\mathcal{H}_A$.

\item $\mathcal{H}_A^\ast = (\mathcal{H}_A)^\ast$ is the dual Hilbert space to $\mathcal{H}_A$, with dimension $d_{A^\ast} = d_A$ and identity $\mathbbm{1}_{A^\ast}$.

\item $\mathcal{P}(\mathcal{H}_A) \eqdef \{ |\psi\rangle\in\mathcal{H}_A : \|\psi\| = 1,\; |\psi\rangle \cong e^{\mathrm{i}\theta}|\psi\rangle, \;\forall\theta\in\mathbb{R} \}$ is the set of \emph{pure physical states} in $A$, \emph{viz.} unit vectors in $\mathcal{H}_A$ where vectors equal up to a global phase are considered equivalent.

\item $\mathcal{D}(\mathcal{H}_A)$ is the set of normalised \emph{density operators} (\emph{i.e.} mixed states) of $A$, \emph{viz.} positive Hermitian trace-one operators $\rho$ on $\mathcal{H}_A$.

\item A \emph{superposition} state with respect to some given orthonormal basis $\mathcal{B}$ of $\mathcal{H}$ is any $|\psi\rangle\in\mathcal{P}(\mathcal{H})$ such that $|\psi\rangle\not\in\mathcal{B}$.

\item $AB$, where $A$ and $B$ are each distinct quantum systems, is the bipartite contemporary system comprised of the two with Hilbert space $\mathcal{H}_A \otimes \mathcal{H}_B$. 

\item If a state for a multipartite system $AB$ is given by, \emph{e.g}, $\rho_{AB}$, then the same symbol with omitted indices indicates a partial trace of that state, such as $\rho_{A} = \Tr_B \rho_{AB}$.

\item Products of states that do not share all of their subsystems are taken to implicitly include identities on those systems, \emph{e.g.} $\rho_A \rho_B \eqdef ( \rho_A \otimes \mathbbm{1}_B )( \mathbbm{1}_A \otimes \rho_B )$ and $\rho_{AB}\rho_{BC} \eqdef ( \rho_{AB} \otimes \mathbbm{1}_C )( \mathbbm{1}_A \otimes \rho_{BC} )$.

\item A \emph{measurement} $M$ of a quantum system consists of a set of outcomes $E\in M$ which can be obtained when the measurement is performed.

\item A \emph{basis measurement} $M = \mathcal{B} = \{ |i\rangle \}_i$ is a measurement where the outcomes are pure states $|i\rangle$ which form an orthonormal basis $\mathcal{B}$ for the Hilbert space of system. The probability of obtaining outcome $|i\rangle$ when the system is in state $\rho$ is $\langle i|\rho|i\rangle$.

\item A \emph{POVM measurement} $M = \{ E_i \}_i$ is a measurement where the outcomes are Hermitian positive definite operators on the Hilbert space that form a partition of unity $\sum_i E_i = \mathbbm{1}$. The probability to obtain outcome $E_i$ when the system is in state $\rho$ is $\Tr\left(\rho E_i\right)$.

\item A \emph{von Neumann measurement} $M = \{ E_i \}_i$ is a POVM measurement where all outcomes are projectors $E_i E_j = \delta_{ij} E_i$. If all operators are one-dimensional projectors then this is equivalent to a basis measurement.

\item A \emph{quantum channel} from $A$ to $B$, $\mathcal{E}_{B|A}$, is a completely-positive trace-preserving (CPTP) map from states in $\mathcal{D}(\mathcal{H}_A)$ to states in $\mathcal{D}(\mathcal{H}_B)$.

\item A \emph{quantum instrument} from $A$ to $B$, $\{\mathcal{E}_{B|A}^k\}_k$, is a set of completely-positive trace-non-increasing maps from states in $\mathcal{D}(\mathcal{H}_A)$ to sub-normalised density operators on $\mathcal{H}_B$. Each map is labelled by a classical outcome $k$ and they sum to a quantum channel. These represent general operations on a quantum system that include transformations and measurements. The interpretation is that one such map is applied and the corresponding outcome observed, with probabilities given by the trace of the corresponding output state. The channel obtained by summing the maps outputs the proper mixed state obtained by ignoring this classical outcome.

\item $\mathcal{S}_{|\psi\rangle}$ is the set of \emph{stabiliser unitaries} for the pure state $|\psi\rangle\in\mathcal{P}(\mathcal{H})$. That is, $\mathcal{S}_{|\psi\rangle} \eqdef \{U\text{ on }\mathcal{H}\,:\,U|\psi\rangle=|\psi\rangle\}$. In other words, $\mathcal{S}_{|\psi\rangle}$ is the stabiliser subgroup of the unitary group over $\mathcal{H}$ with respect to $|\psi\rangle$.

\end{itemize}

\subsection*{Quantum Circuit Diagrams}

A standard box-and-wire notation for quantum circuits will be used, specifically with the following conventions.

\begin{itemize}

\item Circuits proceed from bottom to top.

\item Wires represent systems and may or may not be labelled with the appropriate symbol or Hilbert space.

\item Boxes represent unitary or CPTP evolutions from (the tensor product of) input systems (below) to output systems (above) and are labelled with the appropriate operation.

\item Semi-circles represent state preparations of the outgoing systems (above) and are labelled with the appropriate state.

\item The unitary CNOT gate is illustrated as \CNOT{} where the solid dot is on the control system and the $\oplus$ is on the target system.

\item The unitary SWAP gate may be illustrated as \SWAP{}.

\end{itemize}

\subsection*{Ontological Models and Overlap Measures}

Ontological models are defined in Sec.~\ref{sec:IN:ontological-models} with the notation summarised here for convenience.

\begin{itemize}

\item $\Lambda$ is the set of \emph{ontic states} $\lambda\in\Lambda$.

\item $\Sigma$ is the sigma-algebra such that $(\Lambda,\Sigma)$ is a measurable space---the \emph{ontic state space}.

\item $\Delta_P$ is the set of \emph{preparation measures} $\mu\in\Delta_P$ for a preparation $P$.

\item $\Delta_{|\psi\rangle}$ is therefore the set of preparation measures for all preparation methods that produce quantum state $|\psi\rangle$ if the system is a quantum system.

\item $\mu\in\Delta_P$ gives the probability $\mu(\Omega)$ for the resulting ontic state being in $\Omega\in\Sigma$ for some preparation method of $P$.

\item $\Gamma_T$ is the set of \emph{stochastic maps} $\gamma\in\Gamma_T$ for a transformation $T$ of the system.

\item $\Gamma_U$ is therefore the set of stochastic maps for all transformation methods corresponding to a unitary $U$ if the system is a quantum system.

\item $\gamma\in\Gamma_T$ gives the probability $\gamma(\Omega | \lambda)$ for the resulting ontic state being in $\Omega\in\Sigma$ given that some method for $T$ was applied to a system in ontic state $\lambda\in\Lambda$.

\item $\mu\transto{\gamma}\nu$ indicates that if a system is prepared according to $\mu$ and then a transformation is applied according to $\gamma$, then $\nu$ is the preparation measure that describes that composite process, Eq.~(\ref{eq:IN:ontological-model-transformation}).

\item $\Xi_M$ is the set of \emph{response functions} $\mathbb{P}_M \in\Xi_M$ for a measurement $M$. 

\item If $M = \{ |i\rangle \}_i$ is a basis measurement then $\Xi_M$ is therefore the set of response functions for all methods of performing the measurement $M$ on a quantum system.

\item $\mathbb{P}_M \in\Xi_M$ gives the probability $\mathbb{P}_M (E\,|\,\lambda)$ that the outcome $E\in M$ is obtained given that some method for measuring $M$ was performed on a system in ontic state $\lambda\in\Lambda$.

\item Similarly, $\mathbb{P}_M (E\,|\,\mu) \eqdef \int_\Lambda \d\mu(\lambda)\,\mathbb{P}_M (E\,|\,\lambda)$ is the probability of obtaining outcome $E\in M$ given that some method for measuring $M$ was performed on a system prepared according to $\mu$.

\item $\varpi(\cdot\,|\,\cdot)$ is the \emph{asymmetric overlap} as defined and explained in Sec.~\ref{sec:SO:asymmetric-overlap} \cite{AllenMaroney+16,Allen16a,Ballentine14,LeiferMaroney13,Maroney12a}.

\item $\omega(\cdot,\cdot)$ is the \emph{symmetric overlap} as defined and explained in Sec.~\ref{sec:SO:error-tolerance} \cite{AllenMaroney+16,Allen16a,BarrettCavalcanti+14,Branciard14,Leifer14a,Leifer14b,Maroney12a}.

\item $\varpi_\epsilon (\cdot\,|\,\cdot)$ for some $\epsilon\in[0,1)$ is the \emph{$\epsilon$-asymmetric overlap} as defined and explained in Sec.~\ref{sec:MR:error-tolerant-superpositions-and-mr}. $0$-asymmetric overlap is the normal asymmetric overlap.

\item $\k$ is a mapping from measurable functions $g : \Lambda \rightarrow [0,1]$ to measurable sets over $\Lambda$. $\k(g) \eqdef \ker(1 - g) \in \Sigma$ as defined in Def.~\ref{def:SO:k-bar}.

\item $\k_\epsilon$, for $\epsilon\in[0,1)$ generalises $\k$ as defined in Def.~\ref{def:MR:k-bar-e}.

\end{itemize}

\subsection*{Classical Random Variables}

Classical random variables are normally given capital Latin letters, such as $X$. Values that the variable can take are normally given by the corresponding lower-case Latin letter, such as $x$. Random variables are assumed to be discrete unless otherwise indicated.

\begin{itemize}

\item $\mathbb{P}(X=x)$ is the probability that random variable $X$ has value $x$. $\mathbb{P}(X)$ is the probability distribution for $X$. Alternatively, $\mathbb{P}(x)\eqdef\mathbb{P}(X=x)$ is used as shorthand where there is no ambiguity in doing so.

\item $\mathbb{P}(X,Y)$ is the joint probability distribution for random variables $X$ and $Y$. $\mathbb{P}(X,Y=y)$ is the distribution of probabilities that $X=x\;\wedge\;Y=y$ for all possible $x$.

\item $\mathbb{P}(X|Y)$ is the conditional probability distribution for $X$ given that $Y$ takes any given value.

\end{itemize}

\subsection*{Entropies and Information}

Classical \emph{Shannon entropies} are usually denoted with $H$. Quantum \emph{von Neumann entropies} are usually denoted with $S$. Recall that $\log$ is used for base-two logarithms. The definitions here assume that all random variables are discrete and all quantum systems are finite-dimensional.

Note that the definitions of quantum entropies can all be applied to both quantum systems and quantum states, depending on context.

\begin{itemize}

\item $H(X)$ is the classical \emph{entropy} of a classical random variable, defined $H(X) \eqdef -\sum_x \mathbb{P}(x)\log\mathbb{P}(x)$.

\item $H(X,Y)$ is the classical \emph{joint entropy}, defined $H(X,Y) \eqdef -\sum_{x,y} \mathbb{P}(x,y)\log\mathbb{P}(x,y)$.

\item $H(X|Y) \eqdef H(X,Y) - H(Y)$ is the classical \emph{conditional entropy}.

\item $I(X:Y) \eqdef H(X) + H(Y) - H(X,Y)$ is the classical \emph{mutual information}.

\item $I(X:Y|Z) \eqdef H(X,Z) + H(Y,Z) - H(Z) - H(X,Y,Z) = H(X|Z) - H(X|Y,Z)$ is the classical \emph{conditional mutual information}.

\item $S(\rho)$ is the quantum \emph{entropy} of a state $\rho$ (equivalently, of a quantum system in state $\rho$), defined $S(\rho) = -\Tr\left( \rho\log\rho \right)$.

\item $S(A)$ is the \emph{entropy} of a quantum system $A$, defined $S(\rho_A)$ where $\rho_A$ is the state of $A$.

\item $S(AB)$ is the quantum \emph{joint entropy} of bipartite system $AB$, defined $S(\rho_{AB})$ where $\rho_{AB}$ is the state of $AB$.

\item $S(A|B) \eqdef S(AB) - S(A)$ is the quantum \emph{conditional entropy}.

\item $I(A:B) \eqdef S(A) + S(B) - S(AB)$ is the quantum \emph{mutual information}.

\item $I(A:B|C) \eqdef S(AC) + S(BC) - S(C) - S(ABC) = S(A|C) - S(A|BC)$ is the quantum \emph{conditional mutual information}.

\item Quantum and classical mutual informations, despite using the same symbol, can be told apart by the types of system on which they act.

\end{itemize}
\end{romanpages}            % end roman page numbering

%now include the files of latex for each of the chapters etc
\chapter{Introduction} \label{ch:IN}

\begin{quote}
``... So it will be difficult. But the difficulty, really, is psychological and exists in the perpetual torment that results from your saying to yourself ``But how can it be like that?'' Which really is a reflection of an uncontrolled, but I say utterly vain, desire to see it in terms of some analogy with something familiar. I will not describe it in terms of an analogy with something familiar. I'll simply describe it.

...

So that's the way to look at the lecture---is not to try to understand---well, you have to understand the English, of course. But in any sense in terms of something else. And don't keep saying to yourself, if you can possibly avoid it, ``But how can it be like that?'' Because you'll get down the drain. You'll get down into a blind alley in which nobody has yet escaped. Nobody knows how it can be like that.'' \cite{Feynman64}
\end{quote}

These words of Richard Feynman, delivered to the audience of a public lecture on quantum theory in 1964, contain both a warning and an invitation. You must not get lost down a blind alley trying to understand quantum theory. But nobody knows how it can be like this; which, to a theorist, is an invitation if ever there was one.

More than fifty years later, it is still true that nobody knows how or why nature is successfully described by quantum theory. Some would certainly claim that they do, but unfortunately such people generally disagree with one another \cite{SaundersBarrett+10,BohmHiley95,FuchsMermin+14,DoratoEsfeld10}\footnote{Of course, this is not to say that none of these people are correct, but it certainly shows that most are likely to be incorrect.}. However, the study of quantum theory, especially in responding to Feynman's challenge, has changed significantly. Quantum foundations has bloomed into its own sub-field on the borders of physics, philosophy, mathematics, information science, and computer science. The most productive insights from this field have not come from stumbling into the alley, but by carefully peering in and trying and make out its main contours.

This thesis contributes to that effort while concentrating on \emph{quantum ontology}. In foundations of quantum theory, \emph{ontology} refers to any potential ``actual state of affairs'' or fundamental description, essentially the \emph{reality} of the physical system. What does quantum theory require or suggest about the behaviour of physics on the most fundamental level? Does the success of quantum theory require certain ways of understanding the actual state of affairs of a system, or does it rule out others? Rather than attempting to give a comprehensive account of an exact quantum ontology, such questions can be carefully drawn out in specific ways, revealing the important features of any plausible or actual ontology.

It is helpful to note that quantum theory can fruitfully be considered as a \emph{framework} rather than a \emph{theory} as such\footnote{I have heard this point being made by various people, but have been unable to find a citation for it. This is therefore presented with apologies to anybody who may lay claim to its conception.}, a framework being understood as a low-level set of rules that theories built from that framework must respect. The classical framework, for example, might be expressed as the facts that: systems occupy exactly one state from a space at any given point, joint states of multiple systems are found by taking the Cartesian product, transformations can move systems between states in such a way that probability is conserved, \emph{etc.} Out of this framework, specific theories such as Maxwellian electromagnetism, statistical mechanics, and Newtonian orbital mechanics can be built. Similarly, the quantum framework can be understood as the Hilbert space structure of system states, the unitary/CPTP (completely-positive trace-preserving) nature of transformations, the Born rule for measurement outcomes \emph{etc.} and out of this, theories such as quantum optics and the standard model can be built. This distinction between framework and theory goes a long way to explaining why quantum theories are uniquely difficult for us to understand compared to any other physical theory: they are all predicated on an unfamiliar non-classical framework. It also allows for easier comparisons between quantum and classical predictions in general by concentrating on what is allowed by these general abstract frameworks, rather than getting lost in the exact physics of specific systems. However, ``quantum theory'' will continue to be the term used throughout this thesis since ``quantum framework'' is clunky and ``quantum mechanics'' is no more specific.

The mother of the modern approach to quantum ontology is Bell's theorem \cite{Bell66}, developed in response to controversy over hidden variables arguments in quantum theory in general and the EPR argument \cite{EinsteinPodolsky+35} in particular. This approach starts by precisely defining an ontological feature of interest in a way that is independent of either the quantum or classical frameworks. In the case of Bell's theorem, this generic property has come to be known as ``Bell locality'' \cite{Shimony13}. The content of the theorem is to show, using as few assumptions as possible, that this property can be violated in quantum theories and, ultimately, in quantum experiments. The conclusion: quantum theory does not admit ontologies with this property. The key to the extraordinary success of Bell's theorem is in keeping all assumptions to a bare minimum.

Since Bell, many others have followed this broad outline when investigating quantum ontology and other foundational issues. Highlights include the Kochen-Specker theorem \cite{KochenSpecker67,Held13}, the Leggett-Garg inequality \cite{LeggettGarg85}, and the PBR theorem \cite{PuseyBarrett+12}. The most immediately impressive results are often \emph{no-go theorems}, proving that quantum theory precludes certain ontological features (as in Bell's theorem). However, there has also been good progress in finding formal examples of quantum ontologies which are compatible with quantum theory and do have other interesting ontological features, for example in Refs.~\cite{LewisJennings+12,AaronsonBouland+13,KochenSpecker67,HarriganRudolph07}. In these ways, certain major features and limitations required of a plausible quantum ontology can be carefully discerned.

In these discussions, a dichotomy is often drawn between ontic and \emph{epistemic} features or explanations. An epistemic feature is an artefact of a particular agent's description of a physical system, subject to their (often incomplete) knowledge of it. This is in contrast to ontic features, which are considered objective states of affairs. Most often, epistemic explanations make use of probability distributions (or, more generally, probability measures) representing the agent's (imperfect) knowledge. Note, however, that while this is consistent with their use in quantum foundations, philosophers may use the terms ``ontic'' and ``epistemic'' somewhat differently.

Quantum theory is unrivalled in its experimental success, having never faced an incompatible experimental result. This is certainly impressive, but much more understandable when quantum theory is understood as a framework rather than a theory---comparisons to Maxwellian electromagnetism, for instance, are unfair. This has lead to theorists often taking quantum predictions (in the broad sense of the quantum framework) to be generically correct and any incompatibility with them to be damning. However, in science experimental results should still reign supreme. Any theoretical result should be seen as preliminary until it can be subject to experimental verification. That said, it is common for a strict theoretical incompatibility between quantum theory and some ontology to be found before that result can be extended to something amenable to experiment.

This thesis further characterises necessary or plausible quantum ontologies in three ways. In the first, the subject is ontology of quantum states. This is the most common setting for results in quantum ontology, as the states of quantum systems are what most obviously differs compared to classical systems. In the second, the focus is on quantum causality. This is a much more recent subject in the study of quantum theory, but certainly of no less importance than the ontology of quantum states for understanding the reality of quantum systems. Third, a rather different approach is taken by considering quantum theory supplemented with time travel to the past, as this gives interesting alternative perspectives on many of the problems of quantum ontology.

%\section{Causality and Ontology}
%
%\red{ [Introduce Reichenbach's principle breifly] }
%
%\red{ [Introduce the concept of classical causal models] }
%
%\red{ [Make sure you define fine-tuning] }
%
%Causality is a more recent addition to the study of quantum ontology and is much less mature than the study of quantum state ontology. Classically, the scientific study of causality uses the mathematical framework of \emph{causal models}. Causal models combine the ``actual'' causal influences between events with the probabilistic/statistical relationships between them. One way to understand this is on two levels: the causal influences are ontic and the probabilistic relationships are epistemic (though, this of course depends on one's philosophical prejudices, especially with regard to probabilities). One may reasonably expect a similar split between ontic causal influences and the probabilistic/statistical relationships in quantum theory.
%
%The reason

\section{``Ontology'', ``Epistemic'', and ``Causality''} \label{sec:IN:philosophical-language}

Before proceeding, some comments are necessary on the use of terms like ``ontology'', ``epistemic'', and ``causality'' in this thesis. These terms have all been inherited from philosophy, in particular metaphysics. However, this thesis is in the tradition of quantum foundations which, as a field, tends to use these terms somewhat differently (and rather more loosely) than its more philosophical cousins. 

For the avoidance of doubt, therefore, it will be useful to quickly outline what the intended meanings of such words are in this thesis. These are not philosophically robust definitions, nor are they meant to be, as the intended audience is primarily physicists.

By ``ontology'', what is meant is all or part of some actual or conceivable final and fundamental description of reality. That is, supposing that there is an objective reality, this is described by the ontology. Not everybody will agree that such a thing needs to exist, of course. As noted above, ``epistemic'' describes a feature that is part of an agent's subjective knowledge about a system and may often take the form of probability distributions (or measures) over ontic features. By ``causality'', what is meant is the study of causal relations where one event or physical value has some actual influence on the occurrence of another, as distinct from mere correlations.

There are vast philosophical literatures on each of these topics, none of which can be effectively engaged with here. It is, however, hoped that the contents of this thesis may be able to inform some work in these philosophical fields. So the intention here is not to ignore the valuable philosophical work of these areas at all, but simply to note that they must unfortunately lie well beyond the scope of this thesis.

An inevitability of working in these areas of quantum foundations is some philosophic\-ally-charged language. Despite this, all of the results presented here should be fairly philo\-sophically-neutral. That is, they should be valuable to people of most common philosophical leanings. On the simplest level, this is because the results are mathematical in nature. For example, the results in Chaps.~\ref{ch:SO}, \ref{ch:MR}, based on the ontological models framework [Sec.~\ref{sec:IN:ontological-models}], are very general regardless of philosophy, since they apply to any potential underlying physical theory that can be cast as an ontological model---a very large class of theories indeed.

\section{Thesis Overview}

This thesis is about ontology in quantum theory. More specifically, about obtaining a better understanding of the types of ontology that are plausible given the apparent correctness of quantum theory. This is done by carefully obtaining precise results about the nature of quantum ontology under well-defined assumptions. The study of the ontology of quantum states is relatively mature and so this thesis can build on that work to prove some very specific statements. In particular: that almost all superposition states must be real (in a well-defined sense); that quantum theory is incompatible with most types of macro-realism; and that under reasonable assumptions any conceivable overlap between many quantum states on an ontological level must be small if not zero. States are not, however, the only feature of ontology. A newer field of study is that of causes in quantum theory. This thesis constructs an expressive framework for this study by building on the specific, but important, example of a complete common cause. Since causes are typically considered to be ontological features, this framework provides a new setting for discussing ontology in quantum causality. While it is much less typical to consider time travel to the past to be real, it remains a possibility and, even beyond that possibility, time travel to the past provides an interesting playground in which to test and compare ideas in quantum ontology. This thesis therefore does so by comparing different models of quantum time travel and constructing a novel one that addresses their shortcomings. Combining these approaches to quantum ontology, a fuller and more nuanced understanding can be achieved.

In Chap.~\ref{ch:SO}, the ontology of quantum states is studied. After discussing how different ontologies are classified, the current state of theorems both ruling out and demonstrating the possibilities of various classes of ontology is discussed. This discussion identifies two main shortcomings common to many such theorems, to be addressed by new results presented later in the chapter. First, is proved that almost all superpositions must be real (in a well-defined sense) for any ontology of a quantum system of dimension $d>3$. The same techniques used are then applied to proving that no quantum states can be \emph{$\psi$-epistemic} [Sec.~\ref{sec:SO:specific-states-ontology-definitions}] and that in large-dimensional quantum systems, many quantum states must be close to ontologically distinct. The effect of these results is to show that potential epistemic uncertainty over the exact ontological state of a quantum system is unlikely to have much power for explaining any features of quantum theory. However, all of these new results assume that quantum predictions are exactly correct and are therefore not immediately relevant to experiments, so a proof-of-concept result for error-tolerant extensions is also given. Finally, it is shown how these foundational techniques can be applied to problems in quantum information theory. In particular, it is shown that exponentially many classical resources are required to simulate a quantum channel. Similar results have appeared in the literature before, but the method used here has some key advantages. For instance, it is a significantly simpler proof than previous results and it can also be easily extended to generate potentially better bounds from new classical error-correction codes in the future.

In Chap.~\ref{ch:MR}, the focus remains within the ontology of quantum states, but shifts slightly to study \emph{macro-realism}. Macro-realism is a particular ontological property that is most commonly associated with the Leggett-Garg inequalities and corresponding no-go theorem. After introducing macro-realism in detail, the Leggett-Garg argument is briefly reviewed and some loopholes in it are identified. That is, the Leggett-Garg argument can only show that quantum theory is incompatible with one of three types of macro-realism. This is followed by a new result showing that quantum systems of $d>3$ dimensions are incompatible with two of those three types of macro-realism, improving on the Leggett-Garg argument by using the methods of Chap.~\ref{ch:SO}. Since both this result and the result showing that superpositions are real are intolerant to experimental error, these results are then brought together and given error-tolerant variations. The main conclusions of Chaps.~\ref{ch:SO}, \ref{ch:MR} are then discussed in terms of their impact and potential for further work.

Large parts of chapters~\ref{ch:SO}, \ref{ch:MR} are based on Refs.~\cite{Allen16a,AllenMaroney+16}. In particular, results of Secs.~\ref{sec:MR:intro}--\ref{sec:MR:main-theorem} are the result of collaboration with Owen Maroney and Stefano Gogioso. The communication work in Sec.~\ref{sec:SO:communication} is joint work with Jonathan Barrett.

In Chap.~\ref{ch:CI} quantum causality is studied. After introducing the topic and its primary motivations the first major problem for quantum causality is identified. That is, the problem of finding a satisfactory version of Reichenbach's principle for a quantum universe [Sec.~\ref{sec:CI:introduction-Bell}]. An appropriate quantum Reichenbach's principle is then carefully justified by developing a notion of quantum conditional independence. This quantum conditional independence has four definitions which are proved to be equivalent. Each definition naturally generalises a corresponding definition for classical conditional independence, lending strength to the proposed quantum Reichenbach's principle that results. These new definitions are then illustrated with some examples and generalised from conditional independence of two systems to conditional independence of $k\geq 2$ systems.

Chapter~\ref{ch:CM} extends this approach to quantum causality to the general case of causal models. Classically, causal models form a general framework based on Reichenbach's principle that allows systematic study of causal relationships. These classical causal models are briefly introduced before being used to motivate an analogous generalisation to quantum causal models. A proposed definition of these quantum causal models is then given that generalises quantum Reichenbach's principle from Chap.~\ref{ch:CI} in a sensible way. Some examples then illustrate this new framework, including that of Bell's theorem. The constructions of Chaps.~\ref{ch:CI}, \ref{ch:CM} are then discussed and compared to alternative approaches to formalising the study of quantum causality. In particular, it is noted that the solid foundation provided by the quantum Reichenbach's principle places confidence in the robustness of the approach taken here.

The results of Chaps.~\ref{ch:CI}, \ref{ch:CM} first appeared in Ref.~\cite{AllenBarrett+17} and are the result of joint work with Jonathan Barrett, Dominic Horsman, Ciar\'{a}n Lee, and Robert Spekkens.

A somewhat different approach is taken in Chap.~\ref{ch:TT}. There, quantum ontology is studied through the lens of possible time travel to the past. The most common way to motivate this is through noting that \emph{closed timelike curves} (CTCs) permitted by general relativity would allow for such time travel. After motivating the approach, previous attempts to model quantum time travel using quantum circuits are briefly covered. There are two such models, called D-CTCs and P-CTCs [Sec.~\ref{sec:TT:review}]. Between them, they highlight the interactions between ontology and possible non-linear extensions of quantum theory. In particular, that non-linearity added to quantum theory forces one to be more specific about ontology to consistently describe a system. Both the successes and shortcomings of D- and P-CTCs are then used to construct two classes of new models for time travel in quantum theory. From these classes, one model---dubbed T-CTCs---is fully fleshed out and compared to D- and P-CTCs at length. These findings are then discussed, with particular attention paid to the roles of non-linearity and ontological understanding in the models.

Finally, Chap.~\ref{ch:CO} summarises the results of the thesis and lays them out in the context of the further work that they suggest. In particular, the possibilities for combining the approaches to quantum ontology taken in this thesis will be discussed.

Some notational conventions used throughout this thesis have been summarised starting on page~\pageref{sec:notation}. As is common in modern literature on quantum foundations, the focus will be on finite-dimensional quantum systems so finite-dimensional Hilbert spaces may often be assumed.

\chapter{Ontology of Quantum States and Superpositions} \label{ch:SO}

\section{Theorising About State Ontology} \label{sec:SO:introduction}

When considering the ontology of quantum systems the usual questions centre on the ontology of quantum \emph{states}. Questions such as: What properties must the ontological states of the system have, or not have? How can the structure of the space of these ontological states relate to preparations of quantum states? and How similar must this ontological state space be to the quantum state spaces, $\mathcal{P}(\mathcal{H})$ and $\mathcal{D}(\mathcal{H})$?

Many physicists would, for example, be much more comfortable if quantum phenomena could be explained with a concise and elegant realist ontology for states. Some certainly believe that existing explicit realist interpretations---such as Bohmian mechanics \cite{BohmHiley95,Bohm52a,Bohm52b,deBroglie27} or Everettian interpretations \cite{DeWittGraham73,SaundersBarrett+10}---achieve this, but such opinions are hardly uncontroversial \cite{SaundersBarrett+10,Goldstein16}. But even regardless of one's view on interpretations, it is interesting to consider whether certain ontological features can be ruled out \emph{a priori} as being incompatible with the predictions of quantum theory. Moreover, there has also been considerable cross-pollination between foundational ontology results and information-theoretic results \cite{Montina12,BarrettHardy+05,PerryJain+15,LiuPerry+16,Montina13,Montina15,MontinaWolf16,Leifer14b}, as discussed in Sec.~\ref{sec:SO:communication}.

Some of these questions will be addressed and answered in this chapter. In doing so, the results will fall into a tradition of ``ontology theorems'' in quantum foundations \cite{Leifer14b}. In particular, this chapter will concentrate on the ontology of quantum superposition states, proving that almost all quantum superpositions must be ``real'' in a well-defined sense [Sec.~\ref{sec:SO:superpositions-are-real}]. The techniques used to prove this main result will then be adapted for two purposes. First, to address some common shortcomings of many of the current ontology theorems [Sec.~\ref{sec:SO:specific-states-ontology}]. Second, to provide a simple way to exponentially bound the classical resources required to simulate a quantum channel [Sec.~\ref{sec:SO:communication}].

This chapter will also lay much of the groundwork for Chap.~\ref{ch:MR}, where the general question of quantum state ontology is applied specifically to the case of macro-realism.

The methods of this chapter and the next are all based on the mathematical framework of ontological models. Because of this, before introducing the relevant background it will be prudent to first introduce this framework in Sec.~\ref{sec:IN:ontological-models}. This framework itself is independent of quantum theory, however since the focus here is exclusively on quantum systems it will normally be assumed that ontological models exactly reproduce quantum predictions. As well as this, it will be assumed that all quantum systems are finite-dimensional, with the infinite-dimensional case being discussed briefly in Sec.~\ref{sec:MR:summary-and-discussion}.

Most of the material in this chapter and the next overlaps with that published in Refs.~\cite{Allen16a,AllenMaroney+16}, with the notable exceptions of Secs.~\ref{sec:SO:communication}, \ref{sec:MR:error-tolerant-argument}.

\subsection{The Ontological Models Framework} \label{sec:IN:ontological-models}

Debates about the ontology of quantum states are at least as old as quantum theory itself. In more recent years, this subject has benefited from a standardisation of definitions and concepts into the \emph{ontological models} framework \cite{HarriganRudolph07,HarriganSpekkens10,Leifer14b}. The framework of ontological models has been expressly developed to make discussions about ontology in physics precise and is the natural arena for such discussions. It will therefore be useful to lay down the mathematics of this framework before discussing the motivations and history behind the work in this chapter as it will enable a much easier and more precise discussion.

An ontological model is exactly that: a bare-bones model for the underlying ontology of some physical system. Since the ontological models framework is so sparse, very many conceivable ontologies can be understood as ontological models. The system may also be correctly described by some other, higher, theory (or framework of theories)---such as Newtonian mechanics or quantum theory---in which case the possible ontological models considered should be constrained to reproduce the predictions of that theory. By combining these constraints with the very general framework of ontological models, interesting and general conclusions can be drawn about the nature of the ontology. It is important to note that, while ontological models are normally used to discuss quantum ontology, the framework itself is entirely independent from quantum theory.

In this section the framework of ontological models will be defined and introduced. First on its own and then as applied to quantum theory in both the absence and presence of possible experimental error. This account of ontological models is based on the one given in Ref.~\cite{Leifer14b}.

The framework of ontological models relies on just two core assumptions: (1) that the system of interest has some ontic state $\lambda$ representing the entirety of the actual state of affairs of the system and (2) that standard probability theory may be applied to these states. That is, at any given time the entire ontology of a system is given by its ontic state. Together, these bring us to consider the ontology of some physical system as represented by some measurable space $(\Lambda,\Sigma)$ of ontic states $\lambda\in\Lambda$ which the system might occupy ($\Sigma$ being a sigma algebra of measurable subsets of $\Lambda$). The requirement that ontic states occupy a measurable space simply guarantees that sensible probabilities can be defined over them.

In the lab, a system can be prepared, transformed, and measured in certain ways. Each of these operational processes needs to be describable in the ontological model for it to be capable of describing the system.

Preparation must result in the system ending up in some ontic state $\lambda$, though the exact state need not be known. Thus, each use of an operational preparation $P$ gives rise to some \emph{preparation measure} $\mu$ over $\Lambda$ which is a probability measure ($\mu(\emptyset)=0$, $\mu(\Lambda)=1$). For every measurable subset $\Omega\in\Sigma$, $\mu(\Omega)$ gives the probability that the resulting $\lambda$ is in $\Omega$. The set of all such preparation measures for some $P$ is $\Delta_P$. Note how the measure can vary between uses of the same preparation.

Similarly, an operational transformation $T$ of the system will generally change the ontic state from $\lambda^{\prime}\in\Lambda$ to a new $\lambda\in\Lambda$. Recalling that the ontic state $\lambda^{\prime}$ represents the entirety of the actual state of affairs before the transformation, then the final state can only depend on $\lambda^{\prime}$ (and not the preparation method or any previous ontic states, except as mediated through $\lambda^{\prime}$). The transformations must therefore be described as \emph{stochastic maps} $\gamma$ on $\Lambda$. A stochastic map consists of a probability measure $\gamma(\cdot|\,\lambda^{\prime})$ for each initial ontic state, such that for any measurable $\Omega\in\Sigma$, $\gamma(\Omega\,|\,\lambda^{\prime})$ is the probability that the final $\lambda$ lies in $\Omega$ given that the initial state was $\lambda^{\prime}$ \footnote{These stochastic maps, viewed as a set of functions $\gamma(\Omega\,|\cdot)\,:\,\Lambda\rightarrow[0,1]$ (one for each $\Omega\in\Sigma$), must be \emph{measurable functions}. That is, for any measurable set $\mathcal{S}\subseteq[0,1]$ and any $\gamma(\Omega\,|\,\lambda^{\prime})$, then $\{\lambda^{\prime}\in\Lambda\,:\,\gamma(\Omega\,|\,\lambda^{\prime})\in\mathcal{S}\}\subseteq\Lambda$ is a measurable set.}. The set of all stochastic maps corresponding to some $T$ is $\Gamma_T$.

Finally, a measurement $M$ may give rise to some outcome $E$. Again, which outcome is obtained can only depend on the current ontic state $\lambda^{\prime}$. Therefore a measurement $M$ gives rise to a conditional probability distribution\footnote{These probability distributions, viewed as functions $\Lambda\rightarrow[0,1]$, must also be measurable functions.} $\mathbb{P}_M(E\,|\,\lambda^{\prime})$, sometimes called a \emph{response function}. The set of all such response functions corresponding to $M$ is $\Xi_M$. For this thesis it is only necessary to consider measurements that have countable sets of possible outcomes $E$.

Putting these parts together: given a system where a preparation $P$ is performed followed by some transformation $T$ and some measurement $M$ then the ontological model for that system must have some preparation measure $\mu\in\Delta_P$, stochastic map $\gamma\in\Gamma_T$, and conditional probability distribution $\mathbb{P}_M \in \Xi_M$ such that the probability of obtaining outcome $E$ is
\begin{equation} \label{eq:IN:ontological-model-probability}
\mathbb{P}_M (E\,|\,\nu) = \int_{\Lambda}\mathrm{d}\nu(\lambda)\,\mathbb{P}_M(E\,|\,\lambda)
\end{equation}
where 
\begin{equation} \label{eq:IN:ontological-model-transformation}
\nu(\Omega)\eqdef\int_{\Lambda}\mathrm{d}\mu(\lambda)\,\gamma(\Omega\,|\,\lambda)
\end{equation}
is the effective preparation measure obtained by preparation $P$ followed by transformation $T$. This is schematically illustrated in Fig.~\ref{fig:SO:ontological-models}.

\begin{figure}
\begin{centering}
\includegraphics[scale=1,angle=0]{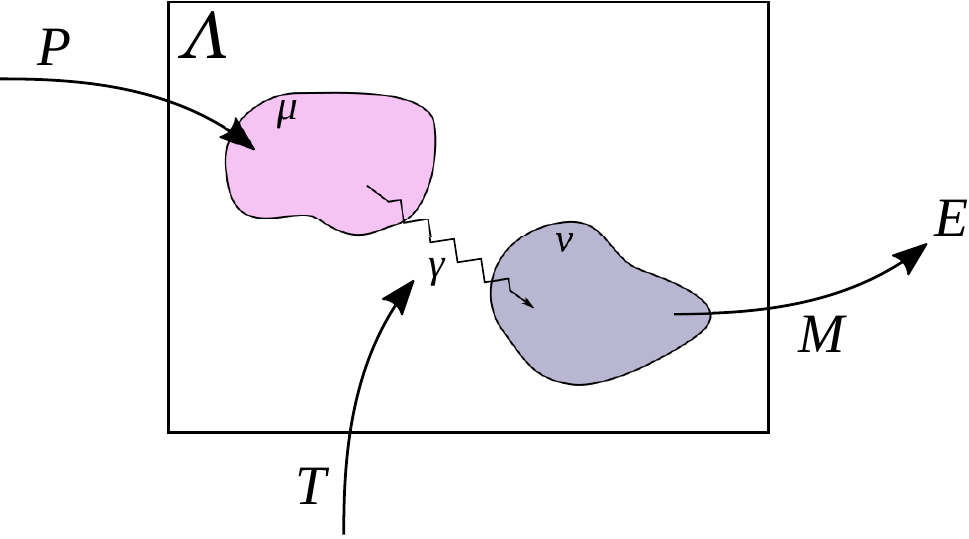}
\par\end{centering}
\protect\caption{Illustration of the basic prepare-transform-measure procedure in an ontological model. The box represents the entire ontic state space $\Lambda$, while shaded regions represent the ontic states that can be prepared by the indicated preparation measures. Some preparation $P$ is performed, resulting in an ontic state according to $\mu\in\Delta_P$. Some transformation $T$ is performed, resulting in some stochastic map $\gamma\in\Gamma_T$, effectively transforming $\mu$ to $\nu$ as in Eq.~(\ref{eq:IN:ontological-model-transformation}). Finally, some measurement $M$ is made, resulting in some outcome $E$ according to probabilities given by Eq.~(\ref{eq:IN:ontological-model-probability}).}
\label{fig:SO:ontological-models}
\end{figure}

Note that ontological models must be closed under transformations. That is, for any preparation $\mu$ and transformation $\gamma$ in the model then the preparation $\nu$ defined by Eq.~(\ref{eq:IN:ontological-model-transformation}) must also exist in the model (since a preparation followed by a transformation is itself a type of preparation).

It is useful to think of stochastic maps acting on preparation measures as an emergent property of them acting on the underlying ontic state space. If an ontic state is sampled from $\mu$ and then transformed via $\gamma$ the effect is the same as sampling from the measure $\nu$ defined by Eq.~(\ref{eq:IN:ontological-model-transformation}). Rather that having to say this in such a cumbersome way, the notation $\mu\transto{\gamma}\nu$ will be used to express the same thing and may be read ``$\mu$ transforms to $\nu$ under $\gamma$''. The full definition of this notation is, however, via Eq.~(\ref{eq:IN:ontological-model-transformation}) as above.

Summarising, an ontological model for some physical system:
\begin{enumerate}
\item defines a measurable space $(\Lambda,\Sigma)$ of ontic states for the system;
\item for each possible transformation $T$ defines a set of stochastic maps $\gamma\in \Gamma_T$ from $\Lambda$ to itself;
\item for each possible preparation $P$ defines a set of preparation measures $\mu\in\Delta_P$ over $\Lambda$, ensuring closure under the actions of the stochastic maps as in Eq.~(\ref{eq:IN:ontological-model-transformation});
\item for each possible measurement $M$ defines a set of response functions $\mathbb{P}_M \in\Xi_M$ over the outcomes given $\lambda\in\Lambda$;
\end{enumerate}
and then produces probabilities for measurement outcomes via Eqs.~(\ref{eq:IN:ontological-model-probability}, \ref{eq:IN:ontological-model-transformation}). 

It should be noted that ontological models are usually presented using probability \emph{distributions} rather than the more mathematically involved measures used here. However, as noted in Ref.~\cite{Leifer14b}, this simplification precludes many reasonable ontological models, including the archetypal Beltrametti-Bugajski model \cite{BeltramettiBugajski95} (see also Sec.~\ref{sec:SO:desired-ontologies}). The more accurate approach is taken here both for the accuracy itself and to serve as a resource of how to construct proofs in measure-theoretic ontological models as such proofs are rarely seen in the literature.

The above definition for ontological models \emph{per se} does not have much useful structure. The power is found when the probabilities given by Eqs.~(\ref{eq:IN:ontological-model-probability}, \ref{eq:IN:ontological-model-transformation}) are constrained to match those given by other theories known to accurately describe the system (such as quantum theory) or by experiments. To this end, consider applying ontological models to systems accurately described by quantum theory.

An ontological model is defined by some ontic state space $(\Lambda,\Sigma)$ as well as the relevant preparation measures, stochastic maps, and conditional probability distributions. For a quantum system these must include at least the following. For each state $|\psi\rangle\in\mathcal{P}(\mathcal{H})$ there must be a set $\Delta_{|\psi\rangle}$ of preparation measures $\mu_{|\psi\rangle}$---potentially at least one for each distinct experimental procedure for preparing $|\psi\rangle$. Similarly, for each unitary operator $U$ on $\mathcal{H}$ there is a set $\Gamma_{U}$ of stochastic maps $\gamma_{U}$ and for each basis measurement $M=\{|i\rangle\}_{i=0}^{d}$ there is a set $\Xi_{M}$ of conditional probability distributions $\mathbb{P}_{M}$---again, potentially at least one stochastic map/probability distribution for each experimental procedure for transforming/measuring.

Since the quantum system is accurately described by quantum theory, the ontological model must also reproduce the predictions of quantum theory. That is, for any $|\psi\rangle\in\mathcal{P}(\mathcal{H})$,
$\mu\in\Delta_{|\psi\rangle}$, $U,$ $\gamma\in\Gamma_{U}$, basis
$M$, and $\mathbb{P}_{M}\in\Xi_{M}$ it is required that
\begin{equation} \label{eq:IN:ontological-model-quantum-probability}
|\langle i|U|\psi\rangle|^{2}=\int_{\Lambda}\mathrm{d}\nu(\lambda)\,\mathbb{P}_{M}(|i\rangle\,|\,\lambda),\quad\forall|i\rangle\in M
\end{equation}
where $\mu\transto{\gamma}\nu$ is defined as in Eq.~(\ref{eq:IN:ontological-model-transformation}). Note also that $\nu\in\Delta_{U|\psi\rangle}$ since preparing the quantum state $|\psi\rangle$ (via any ontological preparation $\mu\in\Delta_{|\psi\rangle}$) followed by performing the quantum transformation $U$ (via any $\gamma\in\Gamma_{U}$) is simply a way to prepare the quantum state $U|\psi\rangle$.

So Eq.~(\ref{eq:IN:ontological-model-quantum-probability}) must hold when quantum theory is known to accurately describe the system. What of the case where quantum theory only approximately describes the system? Suppose, for instance, that the probabilities predicted by quantum theory are accurate to within $\pm\epsilon$ for some given $\epsilon\in(0,1]$. It follows that the ontological model need only reproduce these approximate predictions, so Eq.~(\ref{eq:IN:ontological-model-quantum-probability}) is replaced by
\begin{equation} \label{eq:IN:ontological-model-approx-quantum-probability}
|\langle i|U|\psi\rangle|^{2} + \epsilon \geq \int_{\Lambda}\mathrm{d}\nu(\lambda)\,\mathbb{P}_{M}(|i\rangle\,|\,\lambda) \geq |\langle i|U|\psi\rangle|^{2} - \epsilon, \quad\forall|i\rangle\in M
\end{equation}
where, again, $\nu\in\Delta_{U|\psi\rangle}$ is defined as in Eq.~(\ref{eq:IN:ontological-model-transformation}).

This mathematical framework forms the foundation of many modern results in the ontology of quantum states. It may be seen as a successor to, and extension of, the \emph{hidden variables} models used historically \cite[and references therein]{HarriganRudolph07}. Recall that, unless otherwise stated, quantum predictions are assumed to be exactly accurate throughout this chapter and Chap.~\ref{ch:MR}. In other words, Eq.~(\ref{eq:IN:ontological-model-quantum-probability}) will be assumed on top of the bare ontological models framework unless otherwise stated.

\subsection{The Desire for Simpler Ontologies} \label{sec:SO:desired-ontologies}

Given the definitions and mathematical background of the ontological models framework, it is now time to consider the motivation behind wanting a relatively simple ontology for quantum states. Whether or not you believe it to be possible, this motivation is quite easy to understand.

The simplest way to capture textbook quantum theory in an ontological model is to: have the pure quantum states \emph{be} the ontic states $\Lambda\cong\mathcal{P}(\mathcal{H})$, have the pure state preparation measures give unit probability to the corresponding ontic state, and have transformations and measurements act on $\Lambda$ exactly as they do on $\mathcal{P}(\mathcal{H})$ in textbook quantum theory. This transliteration of quantum theory to an ontological model is called the \emph{Beltrametti-Bugajski model} \cite{BeltramettiBugajski95,HarriganRudolph07,Leifer14b}. 

The Beltrametti-Bugajski model is often seen as an unattractive ontology for several reasons. It requires an uncountable ontic state space for any non-trivial system, even when there are only two distinguishable preparations. It also forces one to talk about ontology globally, due to entanglement. Finally, it contains a lot of redundancy: the properties of the quantum state $|\psi\rangle = \alpha|0\rangle + \beta|1\rangle$ are entirely inherited from properties of $|0\rangle$ and $|1\rangle$, it therefore seems ontologically extravagant to describe $|\psi\rangle$ entirely separately from $|0\rangle$ and $|1\rangle$ on an ontological level.

So how does one respond to this easy, but rather ugly, understanding of state ontology? One way is to simply accept that the ontology of quantum states is just like this or to argue that it is not as ugly as it might seem. That view might lead one towards Everettian interpretations of quantum theory, for example. Another is to deny the need for a realist ontology of quantum states altogether---a position variously called ``anti-realist'', ``neo-Copenhagen'', and ``instrumentalist'' \cite{Leifer14b}---a route with its own conceptual and philosophical hurdles. A third way (and the one of most interest in this thesis) is to seek a more elegant realist ontology that might underlie quantum theory.

This third approach is sometimes called that of the ``epistemic realist'' and naturally leads to using ontological models because of their extreme generality for realist ontologies. By using the fact that preparing some quantum state may leave some uncertainty over the ontic state (the preparation measure can give non-zero probability to a large number of ontic states), the epistemic realist can hope to explain many features of quantum systems as arising from this ``ontological uncertainty''. These features include the indistinguishability of non-orthogonal states, no-cloning, stochasticity of measurement outcomes, and the exponential increase in state complexity with increasing system size \cite{Spekkens07}.

In particular, the epistemic realist can hope to use the fact that both quantum states and ontological model preparations can ``overlap''. Two quantum states in $\mathcal{P}(\mathcal{H})$ overlap by an amount quantified by the Born rule (equivalently, their inner product). The corresponding ontic overlap occurs when preparations for different quantum states can prepare some of the same ontic states.

A typical motivation given for the epistemic realist view is that such ontic overlaps might naturally explain the indistinguishability of non-orthogonal quantum states \cite{Spekkens07,BarrettCavalcanti+14,Leifer14a}. If two non-orthogonal quantum states have finite ontic overlap, then sometimes they will prepare the same ontic states and, since the ontic state describes the entire ontology of the system, there is no way to tell them apart. By looking at the probability of this occurring, the ability of ontic overlaps to account for indistinguishability can be quantified [Sec.~\ref{sec:SO:anti-distinguishability}]. This is illustrated in Fig.~\ref{fig:SO:ontic-overlap}. More thorough discussions of the way that epistemic realist explanations can explain puzzles in quantum foundations can be found in Refs.~\cite{Spekkens07,Spekkens14}.

\begin{figure}
\begin{centering}
\includegraphics[scale=1,angle=0]{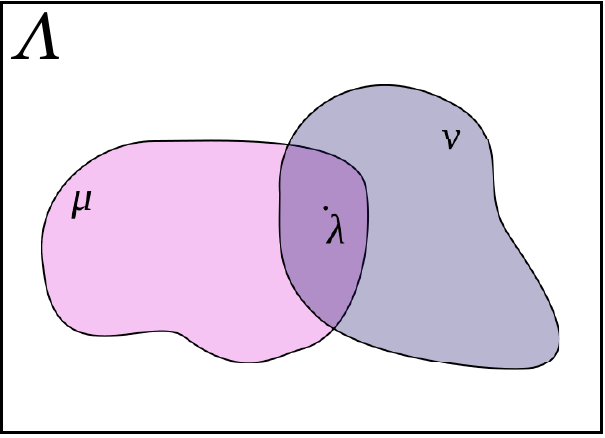}
\par\end{centering}
\protect\caption{Illustration of the epistemic realist explanation for indistinguishability of non-orthogonal quantum states. Suppose that $\mu\in\Delta_{|\psi\rangle}$ and $\nu\in\Delta_{|\phi\rangle}$ are preparation measures over ontic state space $\Lambda$ for non-orthogonal quantum states $|\psi\rangle$ and $|\phi\rangle$. If they overlap, as shown, there are ontic states such as the $\lambda\in\Lambda$ illustrated that may be prepared by both. In this case, there is no way to distinguish which preparation, $\mu$ or $\nu$, was performed, naturally explaining why there is no way to perfectly distinguish between $|\psi\rangle$ and $|\phi\rangle$.}
\label{fig:SO:ontic-overlap}
\end{figure}

The epistemic realist perspective on the foundations of quantum theory is not only philosophically attractive but also appears to be tenable. There are theories that explain the quantum state in an epistemically realist manner that reproduce interesting and large subsets of quantum theory including many characteristically quantum features \cite{Spekkens14,Spekkens07,BartlettRudolph+12,JenningsLeifer15}. There are also several explicit ontological models exactly describing whole isolated quantum systems that make very good use of ontological uncertainty, though only in $d=2$ dimensions as discussed below.

A particularly powerful result for the epistemic realist would be to find an explanation of certain quantum superposition states as statistical effects due to ontic overlap. Superpositions are behind quantum interference, the uncertainty principle, wave-particle duality, entanglement, Bell non-locality \cite{Bell87}, and the probable increased computational power of quantum theory \cite{JozsaLinden03}. Perhaps most alarmingly, superpositions give rise to the measurement problem, so captivatingly illustrated by the ``Schr\"{o}dinger's cat'' thought experiment \cite{Gibbins87}. Explaining superpositions with ontological uncertainty would therefore go a long way to explaining these features of quantum theory.

In quantum theory, if $|\psi\rangle$ is a superposition over, say, $|0\rangle$ and $|1\rangle$ then all of the properties of $|\psi\rangle$ are inherited directly from $|0\rangle$ and $|1\rangle$, mediated by the amplitudes of the superposition. This is in a very similar way to how a probability distribution over classical states inherits all of its properties from the underlying classical states, mediated by the probabilities. This situation would naturally be explained if the superposition $|\psi\rangle$ was just a statistical effect over the ontic states corresponding to $|0\rangle$ and $|1\rangle$, exactly as ontic overlap could explain indistinguishability. A full mathematical treatment of this intuition is deferred until Sec.~\ref{sec:SO:ontic-superpositions}.

If the epistemic realist programme is successful in finding an elegant underlying ontology that naturally explains any of these quantum features, it will become a very attractive proposition indeed. The primary problem for the epistemic realist is that no such ontological model has yet been found. Rather, the epistemic realist sits between the success of models that partially reconstruct quantum theory \cite{Spekkens14,Spekkens07,BartlettRudolph+12,JenningsLeifer15} and the ontology theorems that constrain their ability to reconstruct the rest, discussed in Sec.~\ref{sec:SO:previous-theorems}.

\subsection{Classifying Ontologies} \label{sec:SO:classifying-ontologies}

In order to effectively discuss the types of ontological models are or are not compatible with quantum theory, it is necessary to identify some classes of ontological models worth discussing. By finding examples of ontological models in some classes compatible with quantum theory and proving the impossibility of such models in other classes, a clearer prognosis for the epistemic realist perspective emerges. The main classifications from the literature used in this chapter are: $\psi$-ontic, $\psi$-epistemic, maximally $\psi$-epistemic, and various contextualities.

An ontological model for a quantum system is \emph{$\psi$-ontic} if and only if each ontic state can only be prepared by a single quantum state. That is, if one were able to see the ontology of the system directly, there would be no ambiguity as to which quantum state was prepared. The Beltrametti-Bugajski model is trivially $\psi$-ontic, as the ontic states \emph{are} the quantum states, but it is possible to consider others (including Bohmian mechanics) which are $\psi$-ontic but quantum state preparations can result in more than one ontic state.

The opposite of $\psi$-ontic is \emph{$\psi$-epistemic}. That is, an ontological model is $\psi$-epistemic if and only if it is not $\psi$-ontic. A $\psi$-epistemic model is something of a minimum requirement for the epistemic realist since $\psi$-ontic models leave no room for explaining anything other than indeterminism by ontological uncertainty. If there were a theorem proving that all ontological models that reproduce quantum statistics must be $\psi$-ontic then the epistemic realist programme would certainly be dead. However, no such theorem can exist as there are ontological models for every finite dimension which are both $\psi$-epistemic and compatible with quantum theory \cite{LewisJennings+12,AaronsonBouland+13}.

So $\psi$-ontic delineates one extreme of the spectrum of conceivable ontological models. The other end is marked by \emph{maximally $\psi$-epistemic} models. Recall that both quantum states and ontological model preparations can overlap. An ontological model is maximally $\psi$-epistemic if and only if the ontic overlap entirely accounts for the Born rule overlap \cite{Maroney12a,LeiferMaroney13,Leifer14a,Ballentine14}. As will become clear in Sec.~\ref{sec:SO:specific-states-ontology-definitions}, it is impossible for ontic overlaps to be any larger than this, hence such models are ``maximally'' $\psi$-epistemic.

A maximally $\psi$-epistemic ontology would be ideal for the epistemic realist, as the large amount of ontological uncertainty between quantum states would allow for powerful explanations \cite{Leifer14b,Maroney12a,LeiferMaroney13,Leifer14a,Ballentine14}. Such ontological models do exist in $d=2$ dimensions, such as the Kochen-Specker model \cite{KochenSpecker67,HarriganRudolph07}, but no such model can exist for $d>2$, as discussed in Sec.~\ref{sec:SO:previous-theorems}.

These three notions---$\psi$-ontic, $\psi$-epistemic, and maximally $\psi$-epistemic---form a coarse first-order classification of the ontological models of interest to the epistemic realist. Precise definitions of each will be deferred until Sec.~\ref{sec:SO:specific-states-ontology-definitions} once the appropriate mathematical background has been covered. While this classification is far from nuanced, it serves as an excellent starting point to discuss finer distinctions within the class of $\psi$-epistemic models.

For example, the concepts behind the $\psi$-epistemic/ontic dichotomy can also be used to discuss the reality of quantum superpositions. Consider the example of Schr\"{o}dinger's cat. Schr\"{o}dinger's cat is set up to be in a superposition of $|{\rm dead}\rangle$ and $|{\rm alive}\rangle$ quantum states. The epistemic realist (and probably the cat) would ideally prefer the ontic state of the cat to only ever be one of ``dead'' or ``alive'' (\emph{viz.,} only in ontic states accessible when preparing either the $|{\rm dead}\rangle$ or $|{\rm alive}\rangle$ quantum states). In that case, the cat's apparent quantum superposition would be \emph{epistemic}---there would be nothing ``real'' about the superposition state not already captured by $|{\rm dead}\rangle$ and $|{\rm alive}\rangle$. Conversely, if there are ontic states which can only obtain when the cat is in a quantum superposition (and never when the cat is in either quantum $|{\rm dead}\rangle$ or $|{\rm alive}\rangle$ states), then the superposition is unambiguously \emph{ontic}: there are ontological features which correspond to that superposition but not to non-superpositions, so that superposition is real. Precise mathematical definitions will be deferred until Sec.~\ref{sec:SO:ontic-superpositions}.

It is not yet known to what degree epistemic superpositions are compatible with quantum theory. Some of the epistemic realist theories reproducing subsets of quantum theory noted above do include epistemic superpositions. The question of the reality of superpositions in quantum theory is, therefore, very much open.

Obviously quantum superpositions are different from proper mixtures of basis states. The question here is rather whether quantum superpositions can be understood as distributions over some subset of underlying ontic states, where each such ontic state is also accessible by preparing some basis state.

A neat toy example is found in Spekkens' toy theory \cite{Spekkens07}, where the ``toy-bit'' reproduces a subset of qubit behaviour. A toy bit consists of four ontic states, \textbf{$a,b,c,d$}, and four possible preparations, $|0),|1),|+),|-)$, which are analogous to the correspondingly named qubit states. Each preparation corresponds to a uniform probabilistic distribution over exactly two ontic states: $|0)$ is a distribution over $a$ and $b$; $|1)$ a distribution over $c$ and $d$; $|+)$ over $a,c$; and $|-)$ over $b,d$. Full details of how these states behave and how they reproduce qubit phenomena are described in Ref.~\cite{Spekkens07}. For the purposes here, it suffices to note that all ontic states corresponding to the superposition states $|+)$ and $|-)$ are also ontic states corresponding to either $|0)$ or $|1)$---this toy-bit has nothing on the ontological level which can be identified as a superposition. The toy theory superpositions are epistemic. Toy models such as this therefore lend credibility to the idea that quantum superpositions themselves might, in a similar way, fail to have an ontological basis.

Finally, it is common to identify ontological models with various features collectively referred to as ``contextuality'' \cite{Spekkens05,HarriganRudolph07,LeiferMaroney13}. Contextuality was first introduced with what is now known as ``Kochen-Specker contextuality'' \cite{KochenSpecker67,Held13}. However, Ref.~\cite{Spekkens05} showed how contextuality can be more broadly be thought of in terms of operational equivalence. Loosely, contextuality refers to when operationally indistinguishable situations are described differently in the ontological model. There are various types of contextuality of interest in various situations (including Kochen-Specker contextuality) but the only one necessary for this thesis is \emph{preparation contextuality}. Roughly, a model is preparation contextual if it contains separate descriptions for preparations that are operationally equivalent. That is, the ontology depends on the ``context'' of the preparation in a way that is not operationally discernible. A precise mathematical definition will be given in Sec.~\ref{sec:SO:specific-states-ontology-definitions}.

Contextuality of any kind is traditionally thought of as undesirable for the epistemic realist but it is certainly not fatal. It has even been argued that contextuality should be expected in an appropriate epistemic realist theory \cite{Ballentine14}. 

While the classifications introduced here are relatively coarse, they are good starting points for identifying more subtle classifications, some of which will be discussed in Sec.~\ref{sec:SO:specific-states-ontology}. Before getting there though, it is appropriate first to review the main results in the literature concerning the classifications already introduced.

\subsection{Previous Ontology Theorems} \label{sec:SO:previous-theorems}

As noted above, there is a tradition of ``ontology theorems'' in quantum foundations. These are typically no-go theorems, proving that certain classes of ontological models can never be compatible with the predictions of quantum theory. The most famous ontology theorem is Bell's theorem \cite{Bell87}, which primarily concentrates on locality. However, the ontological models framework as defined here---and in much of the recent literature---avoids talking about composition of local systems, preferring to concentrate on single systems where issues of locality and composition do not arise. The primary reason for this is the PBR theorem.

The PBR theorem \cite{PuseyBarrett+12} proves that all ontological models for $d>2$ must be $\psi$-ontic if they satisfy the \emph{preparation independence postulate} (PIP). The PIP is a reasonable extra restriction on the structure of ontological models for multipartite systems that is not present in the bare ontological models framework. The PBR theorem is therefore a very powerful blow to the epistemic realist. However, it does problematically depend on the PIP, which has been challenged \cite{Mansfield16,EmersonSerbin+13,Hall11,SchlosshauerFine14,Wallden13} (especially as being similar to Bell locality, which is already ruled out by Bell's theorem) and without it the PBR theorem is impotent. It is for this reason that recent work has avoided the PIP and related issues by looking only at single systems in order to seek more conclusive ontology results. This will be the perspective taken here.

While $\psi$-epistemic models are not possible in multipartite systems with the PIP, they are possible for single systems without the PIP. As noted above, explicit $\psi$-epistemic models have been presented for every finite dimension in Refs.~\cite{LewisJennings+12,AaronsonBouland+13}. 

The obvious next question is whether maximally $\psi$-epistemic models are possible without the PIP? In $d=2$ dimensions they are, as shown explicitly by the Kochen-Specker model \cite{KochenSpecker67,HarriganRudolph07}. However, they are not possible for $d>2$. There are several disparate ways of proving this \cite{HarriganRudolph07,Ballentine14,LeiferMaroney13,Maroney12a,BarrettCavalcanti+14,Branciard14,Leifer14a}. One of these proofs even existed before the idea of ``maximally $\psi$-epistemic'' \cite{HarriganRudolph07,Ballentine14}. Reference \cite{LeiferMaroney13} proved that ``maximally $\psi$-epistemic'' implies Kochen-Specker contextuality and is therefore ruled out by the Kochen-Specker theorem \cite{KochenSpecker67,Held13}. The majority of these theorems, however, prove their results by bounding ontic overlaps to be less than maximal for $d>2$ \cite{Maroney12a,BarrettCavalcanti+14,Branciard14,Leifer14a,Ballentine14}. As a result these theorems are able to rule out more than simply maximally $\psi$-epistemic models; each also rules out some subset of non-maximally $\psi$-epistemic models, depending on the exact bound found. In particular, the stated aim is often to show that certain ontic overlaps must be small and therefore, whilst $\psi$-epistemic ontologies are possible, using them to explain the indistinguishability of quantum states is implausible. The overlap bounds derived each also tend to zero in some limit, suggesting negligible ontic overlap in those limits (often in large dimensions).

These overlap bound ontology theorems are of most interest to this chapter. As numerical inequalities their conclusions can be more nuanced in ruling out a range of ontological models. They are also likely to be more amenable to experimental investigation \cite{NiggMonz+15,RingbauerDuffus+15,Knee16}. This contrasts especially with results based on the Kochen-Specker theorem, which has finite-precision loopholes \cite{Meyer99,Kent99,CliftonKent00,BarrettKent04}.

The above is not an exhaustive list, nor a thorough discussion, of the current state of ontology theorems. Only those directly relevant to the approach taken in this chapter were included. The majority of other overlap theorems, including Refs.~\cite{ColbeckRenner13b,ColbeckRenner12a,ColbeckRenner11,ColbeckRenner12b,Hardy04,Montina08,Hardy13,PatraPironio+13,AaronsonBouland+13}, are more specialised, often making extra assumptions on top of the ontological models formalism. Several are reviewed at length in Ref.~\cite{Leifer14b}. However, no previous result has yet tackled the ontology of quantum superposition states directly.

Before discussing some shortcomings of these results, it is worth noting that one will often hear that ontological models for quantum systems must be preparation contextual \cite{Spekkens05,LeiferMaroney13,Leifer14b}. However, such results only rule out preparation non-contextuality for \emph{mixed} quantum states---\emph{viz.} they prove that there must exist mixed quantum states where the ontological description of the preparation can vary. Preparation non-contextuality for \emph{pure} quantum states is certainly possible \cite{LewisJennings+12,AaronsonBouland+13,Leifer14b} and often assumed without question. This is likely because the contextuality/non-contextuality distinction for pure state preparations is often irrelevant \cite{Leifer14b}. However, as will be seen in Sec.~\ref{sec:SO:specific-states-ontology}, this will not be the case here and discussion of pure state preparation contextuality will be required.

\subsection{Limitations and Loopholes} \label{sec:SO:limitations}

Of the ontology theorems mentioned in Sec.~\ref{sec:SO:previous-theorems}, those which bound ontic overlaps for $d>2$ are probably the greatest threat to the epistemic realist view. References~\cite{Maroney12a,BarrettCavalcanti+14,Branciard14,Leifer14a,Ballentine14} each prove a bound of this type: a set of quantum states is constructed and an inequality on ontological overlaps is proved to hold for at least one pair from the set, bounding their ontic overlaps to be less than the corresponding Born rule overlap. Trivially, this implies that the ontological model for the system cannot be maximally $\psi$-epistemic. The authors further argue that when these upper bounds become small (typically in large dimensions) then explaining indistinguishability of quantum states by ontological uncertainty becomes implausible. Indeed, if the epistemic realist is hoping to use the ontic overlap to explain quantum features, then it seems unlikely that a very small overlap could have much explanatory power. All of these theorems, however, share at least the following two shortcomings.

First, the proofs are non-constructive proofs of existence. That is, they conclude that there is \emph{some} pair of quantum states with a small ontic overlap, but no guarantee is made as to \emph{which} pair. This may not seem problematic, but consider the following. If a theorem only requires that I incorporate at least one pair of quantum states with small ontic overlap, then I am not prevented from postulating a model where exactly one pair has small ontic overlap, but all other pairs overlap maximally. Such a model would be indistinguishable from a maximally $\psi$-epistemic model, since the probability of encountering exactly that pair of quantum states is zero. Such theorems, as stated, are therefore very weak restrictions on the types of ontological model that are compatible with quantum theory.

The above is a caricature of the loophole and by examining the proofs of the theorems in Refs.~\cite{Maroney12a,BarrettCavalcanti+14,Branciard14,Leifer14a,Ballentine14} it seems possible to use the unitary symmetry of quantum theory to extend them and obtain conclusions that require more pairs of quantum state to have less-than-maximal ontic overlap. However, this has not been done rigorously so the loophole remains, making them much weaker than they seem at face value. Even if such improvements were made, they would still not guarantee that any particular pair of quantum states has less-than-maximal overlap, which may leave significant room for the epistemic realist to explain quantum experiments using ontological uncertainty simply by assuming that the quantum states used in the experiment do have large ontic overlap.

The second shortcoming is less of a loophole and more of a limitation. The overlap bounds proved in Refs.~\cite{Maroney12a,BarrettCavalcanti+14,Branciard14,Leifer14a,Ballentine14} each approach zero in some limit (usually as $d\rightarrow \infty$). However, as noted in Ref.~\cite{Leifer14b} as one approaches these limits, the sets of states considered also approach orthogonality. Orthogonal quantum states are perfectly distinguishable and therefore trivially must have zero ontic overlap. This casts doubt on these results having any meaning at all in those limits where they at first appear to be most powerful. Even being close to the limit makes the results seem less impressive, it is easier to believe that quantum states that are close to orthogonality have very small ontic overlap than those that are nearly collinear.

These theorems are therefore less powerful than they initially may seem and far from conclusive. In particular, it would be much more convincing to have a theorem that could identify particular pairs of quantum states which have bounded ontic overlap and/or a theorem that identified ontic overlaps approaching zero for quantum states with fixed inner product.

\subsection{Chapter Overview} \label{sec:SO:overview}

In this chapter the epistemic realist view of quantum states will be interrogated, with a particular focus on the ontology of superpositions.

In Sec.~\ref{sec:SO:asymmetric-overlap} the asymmetric overlap and anti-distinguishable quantum states will be introduced. This is necessary mathematical groundwork for the results that follow. This will include proving several lemmas at a level of rigour that has not yet been achieved in the literature.

Section~\ref{sec:SO:superpositions-are-real} will discuss the ontology of quantum superposition states. A theorem will be presented showing that for a $d > 3$ dimensional quantum system, almost all quantum superpositions with respect to any given orthonormal basis must be ontic. That is, the epistemic realist must include superpositions in their ontology explicitly.

The techniques used will then be adapted in Sec.~\ref{sec:SO:specific-states-ontology} to address the shortcomings of current ontology theorems noted in Sec.~\ref{sec:SO:limitations}. In particular, theorems will be proved bounding ontic overlap for large numbers of specific pairs of quantum states and demonstrating bounds that approach zero overlap in the large-$d$ limit, even while the quantum states have fixed inner product. In order to adapt the method of the previous theorem, a very mild form of preparation non-contextuality will be assumed, which will be defended as a natural assumption in Sec.~\ref{sec:SO:justifying-prep-noncontextuality}. The feasibility of extending these results to be error-tolerant will be tackled in Sec.~\ref{sec:SO:error-tolerance}.

It is fairly common for ontology theorems to inspire results in quantum information and communication. In Sec.~\ref{sec:SO:communication} this example will be followed and a simple method for exponentially bounding the ability of classical systems to simulate quantum communication will be demonstrated.

Chapter~\ref{ch:MR} will continue directly from this work, applying the same techniques to the problem of ``macro-realism'' in quantum theory. As a result, Sec.~\ref{sec:MR:error-tolerant-argument} will present an error-tolerant variant theorem that can rule out epistemic superpositions. This result will have to be logically weaker than the theorem presented in this chapter to gain error tolerance. The results of both chapters will then be discussed together in Sec.~\ref{sec:MR:summary-and-discussion}.

\section{The Asymmetric Overlap} \label{sec:SO:asymmetric-overlap}

As discussed in Sec.~\ref{sec:SO:previous-theorems}, many ontology theorems proceed by bounding the ontic overlap between quantum states. When a quantum state is prepared, some ontic states can obtain and some cannot. Loosely speaking, the ontic overlap between two quantum states is made up of those ontic states that can obtain by preparing either. To precisely discuss and derive ontology theorems, it is necessary to quantify this notion in some way. 

In quantum theory, the overlap between any pair of $|\psi\rangle,|\phi\rangle\in\mathcal{P}(\mathcal{H})$ is quantified by the Born rule probability $|\langle\phi|\psi\rangle|^{2}$. That is, for a system prepared in state $|\psi\rangle$ the probability for it to behave (for all intents and purposes) like it was prepared in state $|\phi\rangle$ is $|\langle\phi|\psi\rangle|^{2}$.

Adapting this logic to an ontological model for the quantum system, consider the probability that a system prepared according to measure $\mu$ behaves like it was prepared according to $\nu$. That is, the probability that the ontic state obtained from $\mu$ could also have been obtained from $\nu$. This quantity is called the \emph{asymmetric
overlap} and is mathematically defined \cite{Ballentine14,LeiferMaroney13,Maroney12a}
\begin{equation} \label{eq:SO:asymmetric-definition}
\varpi(\nu\,|\,\mu)\eqdef\inf\{\mu(\Omega)\,:\,\Omega\in\Sigma,\,\nu(\Omega)=1\},
\end{equation}
recalling that the infimum of a subset of real numbers is the greatest lower bound of that set. This is because a preparation of $\nu$ has unit probability of producing a $\lambda$ from each measurable subset $\Omega\subseteq\Lambda$ that satisfies $\nu(\Omega)=1$. Therefore minimising $\mu(\Omega)$ with respect to $\Omega$ gives the desired probability. The notation used here deliberately borrows from conditional probabilities and $\varpi(\nu\,|\,\mu)$ may be read ``asymmetric overlap with $\nu$ \emph{given} a preparation of $\mu$''.

It is convenient to slightly overload the terminology and notation and define the asymmetric overlap for more general quantities. First, define the asymmetric overlap between a preparation measure $\mu$ and some \emph{quantum state} $|\phi\rangle$ as the probability that preparing $\mu$ will produce a $\lambda$ obtainable by preparing $|\phi\rangle$. This corresponds to 
\begin{equation} \label{eq:SO:asymmetric-state-definition}
\varpi(|\phi\rangle\,|\,\mu)\eqdef\inf\left\{ \mu(\Omega)\;:\;\Omega\in\Sigma,\;\nu(\Omega)=1,\;\forall\nu\in\Delta_{|\phi\rangle}\right\}.
\end{equation}

The next useful generalisation is the asymmetric overlap of some preparation measure $\mu$ with \emph{two} quantum states $|0\rangle,|\phi\rangle$. This can be thought of as the \emph{union} of the overlaps expressed by $\varpi(|\phi\rangle\,|\,\mu)$ and $\varpi(|0\rangle\,|\,\mu)$ and is mathematically defined as
\begin{equation}
\varpi(|0\rangle,|\phi\rangle\,|\,\mu)\eqdef\inf\left\{ \mu(\Omega)\;:\;\Omega\in\Sigma,\;\nu(\Omega)=\chi(\Omega)=1,\;\forall\nu\in\Delta_{|\phi\rangle},\chi\in\Delta_{|0\rangle}\right\} .
\end{equation}
So $\varpi(|0\rangle,|\phi\rangle\,|\,\mu)$ is the probability that sampling from $\mu$ produces a $\lambda$ obtainable by preparing either $|0\rangle$ or $|\phi\rangle$.

Finally, this can be extended to the asymmetric overlap of $\mu$ with a \emph{set} of quantum states $\mathcal{S}\subseteq\mathcal{P}(\mathcal{H})$ given preparation of $\mu$ in the obvious way
\begin{equation} \label{eq:SO:asymmetric-multipartite-definition}
\varpi(\mathcal{S}\,|\,\mu)\eqdef\inf\left\{ \mu(\Omega)\;:\;\Omega\in\Sigma,\;\nu(\Omega)=1,\;\forall\nu\in\Delta_{|\phi\rangle},\;\forall|\phi\rangle\in\mathcal{S}\right\} .
\end{equation}

Clearly, the asymmetric overlap is not the only sensible way to quantify ontic overlaps. Its main advantages are a natural motivation and ontological interpretation and it will be used extensively in what follows. Another popular choice is the symmetric overlap, which will be used briefly in Sec.~\ref{sec:SO:error-tolerance}.

The remainder of this section will flesh out the asymmetric overlap's properties. This will unfortunately be somewhat dry and mathematical but, once in place, will significantly reduce the complexity of the proofs later in the chapter.

\subsection{Properties of the Asymmetric Overlap} \label{sec:SO:asymmetric-properties}

Many of the properties of the asymmetric overlap derived here will not greatly surprise those familiar with using ontological models. However, they have not yet been achieved at this level of rigour using the measure-theoretic approach to ontological models, necessitating full proofs.

While this rigour is important, it can often obscure the more intuitive reasons that the results hold. To offset this, most proofs presented below will start with a rough-but-intuitive argument followed by the more cumbersome-but-correct measure theory.

It is convenient to begin with the following definition to simplify the notation for the rest of this section.

\begin{definition} \label{def:SO:k-bar}
For any measurable function $g:\Lambda \rightarrow [0,1]$ let
\begin{equation}
\k(g) \eqdef \ker(1 - g) = \{ \lambda\in\Lambda \,:\, g(\lambda) = 1 \} \in \Sigma.
\end{equation}
\end{definition}

Next, this technical lemma forms the basis of proofs of many of the following properties.

\begin{lemma} \label{lem:SO:measurable}
Given any measurable function $f:\Lambda\rightarrow[0,1]$ and preparation measure $\nu$ satisfying $\int_\Lambda \d\nu(\lambda)f(\lambda) = 1$, then $\nu(\k(f)) = 1$.
\end{lemma}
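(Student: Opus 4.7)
The plan is to show that the set where $f < 1$ has $\nu$-measure zero, using a standard Markov-style argument applied to the non-negative function $1 - f$.

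First I would observe that the set $\k(f) = \{\lambda \in \Lambda : f(\lambda) = 1\}$ is measurable as the preimage under $f$ of the closed set $\{1\}\subseteq[0,1]$, and that its complement can be exhausted by the countable union
\begin{equation}
\k(f)^c = \{\lambda\in\Lambda : f(\lambda) < 1\} = \bigcup_{n=1}^{\infty} A_n, \qquad A_n \eqdef \{\lambda\in\Lambda : f(\lambda) \leq 1 - 1/n\},
\end{equation}
where each $A_n$ is measurable for the same reason. Thus, by countable subadditivity, showing $\nu(A_n) = 0$ for every $n \in \mathbb{N}$ will yield $\nu(\k(f)^c) = 0$ and hence $\nu(\k(f)) = 1$, which is the conclusion.

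The key step is then to split the integral of $f$ across $A_n$ and its complement, using the $[0,1]$-valued nature of $f$:
\begin{equation}
1 = \int_\Lambda \d\nu(\lambda)\,f(\lambda) = \int_{A_n}\d\nu(\lambda)\,f(\lambda) + \int_{A_n^c}\d\nu(\lambda)\,f(\lambda) \leq \left(1 - \tfrac{1}{n}\right)\nu(A_n) + \nu(A_n^c) = 1 - \tfrac{1}{n}\nu(A_n).
\end{equation}
Rearranging gives $\nu(A_n) \leq 0$, and since $\nu$ is a probability measure, $\nu(A_n) = 0$ as required. Applying countable subadditivity then completes the argument.

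There is no real obstacle here — this is essentially Markov's inequality in disguise, together with the elementary fact that a non-negative integrable function with vanishing integral is zero almost everywhere. The only subtlety worth flagging is ensuring measurability of the level sets $A_n$ and $\k(f)$, which follows immediately from the assumed measurability of $f$, so the measure-theoretic bookkeeping is routine rather than deep.
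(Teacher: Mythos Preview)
Your proof is correct and follows essentially the same idea as the paper's: both show that the set $\{f<1\}$ carries no $\nu$-mass by exploiting that $f\leq 1$ forces the integral to drop below $1$ otherwise. The only cosmetic difference is that you slice the complement into the level sets $A_n=\{f\leq 1-1/n\}$ and bound each via a Markov-type estimate, whereas the paper works directly with the whole complement $\Lambda\setminus\k(f)$ and argues by contradiction; your version is arguably a touch more explicit about the strict-inequality step, but the substance is identical.
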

\begin{proof}
Roughly, this lemma simply affirms that if the average $f(\lambda)$ according to $\nu$ is unity, then the probability that $f(\lambda)=1$ according to $\nu$ is also unity.

If $\int_{\Lambda}\mathrm{d}\nu(\lambda)\,f(\lambda)=1$ then 
\begin{eqnarray} 
1 & = & \int_{\k(f)} \d\nu(\lambda) \,f(\lambda) + \int_{\Lambda\setminus\k(f)} \d\nu(\lambda) \,f(\lambda) \\
 & = & \nu(\ker\bar{f}) + \int_{\Lambda\setminus\k(f)} \d\nu(\lambda) \,f(\lambda) \label{eq:SO:lemma-measurable-proof-expanded}
\end{eqnarray}
since if $\lambda\in\k(f)$ then $f(\lambda)=1$. 

Suppose that the the second term in Eq.~(\ref{eq:SO:lemma-measurable-proof-expanded}) is non-zero. Since $f(\lambda) < 1$ for all $\lambda\in\Lambda\setminus\k(f)$, then $\int_{\Lambda\setminus\k(f)} \d\nu(\lambda) \,f(\lambda) < \nu(\Lambda\setminus\k(f))$. This further implies
\begin{eqnarray}
\nu(\k(f)) + \nu(\Lambda\setminus\k(f)) & > & 1 \\
\nu(\Lambda) & > & 1
\end{eqnarray}
which is a contradiction as $\nu(\Lambda)=1$ by definition. Therefore the second term in Eq.~(\ref{eq:SO:lemma-measurable-proof-expanded}) must be zero and Eq.~(\ref{eq:SO:lemma-measurable-proof-expanded}) implies $1 = \nu(\k(f))$ as desired.
\end{proof}

Having established this technical background, the following shows a fundamental property of the asymmetric overlap: it is upper-bounded by the Born rule probability.

\begin{lemma} \label{lem:SO:asymmetric-basic-bound}
For any pair of pure quantum states $|\psi\rangle,|\phi\rangle\in\mathcal{P}(\mathcal{H})$ and for any preparation $\mu\in\Delta_{|\psi\rangle}$, then
\begin{equation}
\varpi( |\phi\rangle \,|\, \mu ) \leq |\langle\phi | \psi\rangle |^2.
\end{equation}
From this it immediately follows that $\varpi(\nu\,|\,\mu) \leq |\langle\phi|\psi\rangle|^2$ for every $\nu\in\Delta_{|\phi\rangle}$.
\end{lemma}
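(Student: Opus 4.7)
The plan is to produce an explicit measurable witness $\Omega_0\in\Sigma$ that is feasible in the infimum defining $\varpi(|\phi\rangle\,|\,\mu)$ in Eq.~(\ref{eq:SO:asymmetric-state-definition}) and whose $\mu$-measure is directly pinned down by $|\langle\phi|\psi\rangle|^2$. The natural candidate comes from a measurement having $|\phi\rangle$ as an outcome: in any ontological model reproducing quantum statistics, the corresponding response function must average to one under every preparation of $|\phi\rangle$, and Lemma~\ref{lem:SO:measurable} then forces the response function to equal one on its $\k$-set almost surely with respect to every $\nu\in\Delta_{|\phi\rangle}$.

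Concretely, I would first complete $|\phi\rangle$ to an orthonormal basis $\mathcal{B}\ni|\phi\rangle$ of $\mathcal{H}$, pick any response function $\mathbb{P}_\mathcal{B}\in\Xi_\mathcal{B}$, and set $\Omega_0 \eqdef \k\!\left(\mathbb{P}_\mathcal{B}(|\phi\rangle\,|\,\cdot)\right)\in\Sigma$. For each $\nu\in\Delta_{|\phi\rangle}$, applying Eq.~(\ref{eq:IN:ontological-model-quantum-probability}) with $U=\mathbbm{1}$ gives $\int_\Lambda \d\nu(\lambda)\,\mathbb{P}_\mathcal{B}(|\phi\rangle\,|\,\lambda) = |\langle\phi|\phi\rangle|^2 = 1$, so Lemma~\ref{lem:SO:measurable} yields $\nu(\Omega_0)=1$. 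Hence $\Omega_0$ satisfies the constraint inside the infimum in Eq.~(\ref{eq:SO:asymmetric-state-definition}), making it a legitimate witness.

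The remaining step is to bound $\mu(\Omega_0)$ from above. Applying Eq.~(\ref{eq:IN:ontological-model-quantum-probability}) to $\mu\in\Delta_{|\psi\rangle}$ yields $\int_\Lambda \d\mu(\lambda)\,\mathbb{P}_\mathcal{B}(|\phi\rangle\,|\,\lambda) = |\langle\phi|\psi\rangle|^2$. Splitting this integral over $\Omega_0$ and $\Lambda\setminus\Omega_0$, the integrand is identically one on $\Omega_0$ so the first piece equals $\mu(\Omega_0)$, while the second piece is manifestly non-negative. Therefore $\mu(\Omega_0)\leq |\langle\phi|\psi\rangle|^2$, and Eq.~(\ref{eq:SO:asymmetric-state-definition}) delivers $\varpi(|\phi\rangle\,|\,\mu)\leq \mu(\Omega_0)\leq |\langle\phi|\psi\rangle|^2$. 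The immediate corollary for a single $\nu\in\Delta_{|\phi\rangle}$ follows because the same $\Omega_0$ still satisfies $\nu(\Omega_0)=1$ and is therefore admissible in the strictly weaker infimum defining $\varpi(\nu\,|\,\mu)$, so the identical upper bound carries over. No genuine obstacle is expected; the one subtle ingredient is that $\mathbb{P}_\mathcal{B}(|\phi\rangle\,|\,\cdot)$ is measurable as a function $\Lambda\to[0,1]$, a property built into the definition of the ontological models framework and therefore applicable without further work.
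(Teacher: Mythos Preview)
Your argument is correct and is essentially the same as the paper's: both pick a basis measurement containing $|\phi\rangle$, use Lemma~\ref{lem:SO:measurable} to show the $\k$-set of the response function $\mathbb{P}_\mathcal{B}(|\phi\rangle\,|\,\cdot)$ has full measure under every $\nu\in\Delta_{|\phi\rangle}$, and then bound $\mu$ of that set by restricting the integral $\int_\Lambda \d\mu\,\mathbb{P}_\mathcal{B}(|\phi\rangle\,|\,\lambda)=|\langle\phi|\psi\rangle|^2$ to it. The only cosmetic difference is that the paper writes the last step as $\int_\Lambda \geq \int_{\k(g)}$ rather than your equivalent ``split and drop the non-negative complement''.
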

\begin{proof}
Roughly, this holds because almost all ontic states obtained by preparing $|\phi\rangle$ must also return $|\phi\rangle$ in any measurement where that is an option. $|\langle\phi|\psi\rangle|^2$ is the probability of getting outcome $|\phi\rangle$ when $|\psi\rangle$ has been prepared. This must therefore occur at least as often as obtaining an ontic state accessible by $|\phi\rangle$ when preparing $|\psi\rangle$. This will now be properly proved.

Consider preparing $|\phi\rangle$ via any $\nu\in\Delta_{|\phi\rangle}$ and then performing some quantum measurement $M_\phi \ni |\phi\rangle$. Since the ontological model reproduces quantum probabilities [Eq.~(\ref{eq:IN:ontological-model-quantum-probability})] then
\begin{equation}
\int_\Lambda \d\nu(\lambda)\, \mathbb{P}_{M_\phi}(|\phi\rangle\,|\,\lambda) = 1.
\end{equation}
Letting $g(\lambda) \eqdef \mathbb{P}_{M_\phi}(|\phi\rangle\,|\,\lambda)$, Lem.~\ref{lem:SO:measurable} shows that $\nu(\k(g))=1$ for every $\nu\in\Delta_{|\phi\rangle}$.

Now consider preparing $|\psi\rangle$ via any $\mu\in\Delta_{|\psi\rangle}$ and then measuring with the same $M_\phi$. By Eq.~(\ref{eq:IN:ontological-model-quantum-probability})
\begin{equation}
|\langle\phi | \psi\rangle|^2 = \int_\Lambda \d\mu(\lambda)\,g(\lambda) \geq \int_{\k(g)} \d\mu(\lambda)\,g(\lambda) = \mu(\k(g)).
\end{equation}
having used that $g(\lambda\in\k(g))=1$ in the final step. Recalling that $\nu(\k(g))=1$ for every $\nu\in\Delta_{|\phi\rangle}$, then $\varpi(|\phi\rangle\,|\,\mu) \leq \mu(\k(g))$ by definition. Combining these inequalities gives the desired general result. Applying these inequalities to any particular $\nu\in\Delta_{|\phi\rangle}$ gives the specific case $\varpi(\nu\,|\,\mu) \leq |\langle\phi|\psi\rangle|^2$.
\end{proof}

The next property demonstrates that the asymmetric overlap is non-increasing under transformations.

\begin{lemma} \label{lem:SO:asymmetric-unitary}
Let unitary $U$ satisfy $U|0\rangle = |\phi\rangle$ and $\mu^\prime \transto{\gamma} \mu$ for some $\gamma\in\Gamma_U$, then
\begin{equation}
\varpi( |\phi\rangle \,|\, \mu ) \geq \varpi( |0\rangle \,|\, \mu^\prime ).
\end{equation}
\end{lemma}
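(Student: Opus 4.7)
The plan is to show that the asymmetric overlap is monotone under stochastic maps corresponding to unitaries by transporting witnessing sets backwards through $\gamma$. Intuitively, if $\mu'$ produces an ontic state accessible to some $\nu' \in \Delta_{|0\rangle}$, then after applying $\gamma \in \Gamma_U$ the resulting ontic state should be accessible to $\nu \in \Delta_{|\phi\rangle}$, since $\nu' \transto{\gamma} \nu$ is a valid way of preparing $|\phi\rangle = U|0\rangle$. So I will take an arbitrary $\Omega \in \Sigma$ witnessing $\varpi(|\phi\rangle \,|\, \mu)$ from above and build a corresponding $\Omega' \in \Sigma$ witnessing $\varpi(|0\rangle \,|\, \mu')$ from above with $\mu'(\Omega') \leq \mu(\Omega)$.

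First, fix any $\Omega \in \Sigma$ with $\nu(\Omega) = 1$ for every $\nu \in \Delta_{|\phi\rangle}$, and define
\begin{equation}
\Omega' \eqdef \k\bigl(\gamma(\Omega \,|\, \cdot)\bigr) = \{\lambda' \in \Lambda \,:\, \gamma(\Omega \,|\, \lambda') = 1\},
\end{equation}
which lies in $\Sigma$ since $\gamma(\Omega\,|\,\cdot)$ is a measurable function $\Lambda \to [0,1]$. Next, I need to verify that $\nu'(\Omega') = 1$ for every $\nu' \in \Delta_{|0\rangle}$. For such a $\nu'$, let $\nu' \transto{\gamma} \nu$; because $U|0\rangle = |\phi\rangle$, preparing $\nu'$ and applying $\gamma$ is a legitimate preparation procedure for $|\phi\rangle$, so $\nu \in \Delta_{|\phi\rangle}$ and therefore $\nu(\Omega) = 1$. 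Writing this out via Eq.~(\ref{eq:IN:ontological-model-transformation}),
\begin{equation}
1 = \nu(\Omega) = \int_\Lambda \d\nu'(\lambda')\, \gamma(\Omega \,|\, \lambda'),
\end{equation}
so Lem.~\ref{lem:SO:measurable} applied to $f = \gamma(\Omega \,|\, \cdot)$ gives $\nu'(\Omega') = 1$ as required.

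Second, I need to show $\mu'(\Omega') \leq \mu(\Omega)$. Using Eq.~(\ref{eq:IN:ontological-model-transformation}) for $\mu' \transto{\gamma} \mu$ and the fact that $\gamma(\Omega \,|\, \lambda') = 1$ whenever $\lambda' \in \Omega'$,
\begin{equation}
\mu(\Omega) = \int_\Lambda \d\mu'(\lambda')\, \gamma(\Omega \,|\, \lambda') \geq \int_{\Omega'} \d\mu'(\lambda')\, \gamma(\Omega \,|\, \lambda') = \mu'(\Omega').
\end{equation}
Combining these facts, $\Omega'$ is a candidate set in the infimum defining $\varpi(|0\rangle \,|\, \mu')$, so $\varpi(|0\rangle \,|\, \mu') \leq \mu'(\Omega') \leq \mu(\Omega)$. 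Taking the infimum over all admissible $\Omega$ on the right yields the desired inequality $\varpi(|0\rangle \,|\, \mu') \leq \varpi(|\phi\rangle \,|\, \mu)$.

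The main obstacle is a purely measure-theoretic one: ensuring that $\Omega'$ is actually in $\Sigma$ and that Lem.~\ref{lem:SO:measurable} applies cleanly. Both come down to the requirement (noted in the footnote to the definition of stochastic maps) that $\gamma(\Omega \,|\, \cdot)$ is a measurable $[0,1]$-valued function on $\Lambda$, which is why the construction $\k(\gamma(\Omega\,|\,\cdot))$ was defined precisely for measurable functions in Def.~\ref{def:SO:k-bar}. Everything else is straightforward manipulation of the stochastic-map equation.
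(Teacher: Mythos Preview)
Your proof is correct and follows essentially the same approach as the paper's: construct $\Omega' = \k(\gamma(\Omega\,|\,\cdot))$, use Lem.~\ref{lem:SO:measurable} to verify $\nu'(\Omega')=1$ for all $\nu'\in\Delta_{|0\rangle}$, and bound $\mu'(\Omega')\leq\mu(\Omega)$ via the transformation equation. The paper's version is slightly terser, but the logic and constructions are identical.
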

\begin{proof}
Roughly, this is because ontic states preparable by $\mu^\prime$ will be mapped onto ontic states preparable by $\mu$ by $\gamma$ (since preparing $\mu^\prime$ then applying $\gamma$ is equivalent to preparing $\mu$). Similarly, $\gamma$ maps ontic states preparable by $|0\rangle$ onto ontic states preparable by $|\phi\rangle$. Thus, ontic states in the overlap of $\mu^\prime$ and $|0\rangle$ will be mapped onto the overlap of $\mu$ and $|\phi\rangle$, implying that the overlap as measured by $\varpi$ cannot decrease under the action of $\gamma$.

It suffices to prove that for any measurable $\Omega \in \Sigma$ satisfying $\nu(\Omega) = 1,\forall \nu \in \Delta_{|\phi\rangle}$ there exists some measurable $\Omega^\prime \in \Sigma$ such that $\chi(\Omega^\prime)=1,\forall \chi\in\Delta_{|0\rangle}$ and $\mu(\Omega) \geq \mu^\prime(\Omega^\prime)$.

For any $\chi\in\Delta_{|0\rangle}$ there is some $\nu\in\Delta_{|\phi\rangle}$ such that $\chi \transto{\gamma} \nu$. So for any such $\Omega$ Eq.~(\ref{eq:IN:ontological-model-quantum-probability}) gives
\begin{equation}
1 = \nu(\Omega) = \int_\Lambda \d\chi(\lambda)\,\gamma(\Omega\,|\,\lambda).
\end{equation}
Letting $g(\lambda) \eqdef \gamma(\Omega\,|\,\lambda)$ then by Lem.~\ref{lem:SO:measurable} this implies $\chi(\k(g)) = 1$ for all $\chi\in\Delta_{|0\rangle}$. Therefore $\Omega^\prime \eqdef \k(g)$ is a valid choice of $\Omega^\prime$.

Similarly, therefore
\begin{equation}
\mu(\Omega) = \int_\Lambda \d\mu^\prime (\lambda)\,\gamma(\Omega\,|\,\lambda) = \int_\Lambda \d\mu^\prime(\lambda)\,g(\lambda) \geq \int_{\Omega^\prime} \d\mu^\prime(\lambda)\,g(\lambda) = \mu^\prime(\Omega^\prime)
\end{equation}
recalling that $g(\lambda\in\k(g)) = 1$ by definition.
\end{proof}

The following property is perhaps the easiest to both understand and prove. It relates arbitrary multipartite overlaps, Eq.~(\ref{eq:SO:asymmetric-multipartite-definition}), to the underlying individual overlaps.

\begin{lemma} \label{lem:SO:asymmetric-Boole}
For any finite set $\mathcal{S}\subset\mathcal{P}(\mathcal{H})$ of quantum states and any preparation measure $\mu$
\begin{equation} \label{eq:SO:lemma-asymmetric-Boole}
\sum_{|i\rangle\in\mathcal{S}} \varpi( |i\rangle \,|\, \mu ) \geq \varpi(\mathcal{S} \,|\, \mu ).
\end{equation}
\end{lemma}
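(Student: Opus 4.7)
The plan is to recognise this as a manifestation of Boole's inequality (the union bound) for measures, applied inside the infimum defining $\varpi$. Intuitively, if for each $|i\rangle\in\mathcal{S}$ we have a measurable set $\Omega_i$ that captures all the ontic support of $|i\rangle$, then their union $\bigcup_i \Omega_i$ captures the ontic support of every state in $\mathcal{S}$ simultaneously, and its $\mu$-measure is at most the sum of the individual $\mu(\Omega_i)$.

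More precisely, I would proceed as follows. Fix $\epsilon>0$. For each $|i\rangle\in\mathcal{S}$, by the definition~(\ref{eq:SO:asymmetric-state-definition}) of $\varpi(|i\rangle\,|\,\mu)$ as an infimum, choose a measurable set $\Omega_i\in\Sigma$ such that $\nu(\Omega_i)=1$ for every $\nu\in\Delta_{|i\rangle}$ and
\begin{equation}
\mu(\Omega_i) \leq \varpi(|i\rangle\,|\,\mu) + \frac{\epsilon}{|\mathcal{S}|}.
\end{equation}
Then set $\Omega \eqdef \bigcup_{|i\rangle\in\mathcal{S}} \Omega_i$, which lies in $\Sigma$ because $\mathcal{S}$ is finite. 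For any $|j\rangle\in\mathcal{S}$ and any $\nu\in\Delta_{|j\rangle}$, monotonicity of $\nu$ gives $\nu(\Omega) \geq \nu(\Omega_j) = 1$, so $\nu(\Omega)=1$. Hence $\Omega$ satisfies the constraint in the definition~(\ref{eq:SO:asymmetric-multipartite-definition}) of $\varpi(\mathcal{S}\,|\,\mu)$.

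Now apply finite subadditivity of $\mu$, which is immediate from the fact that $\mu$ is a probability measure, to obtain
\begin{equation}
\varpi(\mathcal{S}\,|\,\mu) \leq \mu(\Omega) \leq \sum_{|i\rangle\in\mathcal{S}} \mu(\Omega_i) \leq \sum_{|i\rangle\in\mathcal{S}} \varpi(|i\rangle\,|\,\mu) + \epsilon.
\end{equation}
Since $\epsilon>0$ was arbitrary, letting $\epsilon\to 0$ yields~(\ref{eq:SO:lemma-asymmetric-Boole}).

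There is no real obstacle here; the only point requiring a little care is the standard $\epsilon$-approximation argument for infima, and the observation that the union of the $\Omega_i$ genuinely qualifies as a candidate set in the definition of the joint asymmetric overlap. Finiteness of $\mathcal{S}$ is what keeps the union measurable without needing countable additivity beyond what is already guaranteed, although the same argument would extend verbatim to any countable $\mathcal{S}$ by reallocating the $\epsilon/|\mathcal{S}|$ budget as $\epsilon/2^n$.
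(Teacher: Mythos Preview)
Your proof is correct and takes essentially the same approach as the paper, which simply remarks that the result is ``a simple application of Boole's inequality'' on the grounds that each $\varpi(|i\rangle\,|\,\mu)$ is the probability of an event whose disjunction has probability $\varpi(\mathcal{S}\,|\,\mu)$. You have supplied the rigorous measure-theoretic unpacking of that one-line appeal---the $\epsilon$-approximation of the infima and the subadditivity of $\mu$ on the union---which the paper leaves implicit.
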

\begin{proof}
The overlap $\varpi(\mathcal{S} \,|\, \mu )$ is the probability of a disjunction of events and each $\varpi( |i\rangle \,|\, \mu )$ is a probability of one of those events. The result is therefore a simple application of Boole's inequality.
\end{proof}

This last property relates tripartite asymmetric overlaps to certain quantum measurements. It is a little arbitrary, but will be repeatedly used in the theorems that follow and is therefore useful to prove separately here.

\begin{lemma} \label{lem:SO:asymmetric-triple-measurement}
Consider quantum states $|\psi\rangle, |\phi\rangle, |0\rangle$ and orthonormal basis $\mathcal{B}\supset\{|a\rangle,|b\rangle\}$. Suppose that, when written as superpositions over $\mathcal{B}$, $|\psi\rangle$ and $|\phi\rangle$ only have common support on $|a\rangle$ and $|b\rangle$ (\emph{viz.} they are orthogonal except for components in the linear subspace spanned by $|a\rangle,|b\rangle$). Similarly, suppose that $|\psi\rangle$ and $|0\rangle$ only have common support on $|a\rangle$ and $|b\rangle$. Then for every $\mu\in\Delta_{|\psi\rangle}$
\begin{equation}
\varpi( |\phi\rangle, |0\rangle \,|\, \mu ) \leq \mathbb{P}_\mathcal{B} ( |a\rangle\vee|b\rangle \,|\, |\psi\rangle )
\end{equation}
for some basis measurement in $\mathcal{B}$.
\end{lemma}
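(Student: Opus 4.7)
The plan is to exhibit an explicit measurable set $\Omega\in\Sigma$ which has unit measure under every $\nu\in\Delta_{|\phi\rangle}$ and every $\chi\in\Delta_{|0\rangle}$, and whose $\mu$-measure is directly bounded above by $|\langle a|\psi\rangle|^{2}+|\langle b|\psi\rangle|^{2}=\mathbb{P}_{\mathcal{B}}(|a\rangle\vee|b\rangle\,|\,|\psi\rangle)$. Fix any response function $\mathbb{P}_{\mathcal{B}}\in\Xi_{\mathcal{B}}$ and let $S_{\psi},S_{\phi},S_{0}\subseteq\mathcal{B}$ denote the supports of $|\psi\rangle,|\phi\rangle,|0\rangle$ in the basis $\mathcal{B}$. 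The hypothesis translates precisely to $S_{\psi}\cap S_{\phi}\subseteq\{|a\rangle,|b\rangle\}$ and $S_{\psi}\cap S_{0}\subseteq\{|a\rangle,|b\rangle\}$. Define the measurable functions $g_{\phi}(\lambda)\eqdef\sum_{|i\rangle\in S_{\phi}}\mathbb{P}_{\mathcal{B}}(|i\rangle\,|\,\lambda)$, and analogously $g_{0}$ and $g_{\psi}$, each giving the total probability at $\lambda$ of yielding a $\mathcal{B}$-outcome lying in the corresponding support.

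Applying Eq.~(\ref{eq:IN:ontological-model-quantum-probability}) at $|\phi\rangle$, $|0\rangle$, and $|\psi\rangle$ gives $\int_{\Lambda}\d\nu(\lambda)\,g_{\phi}(\lambda)=\int_{\Lambda}\d\chi(\lambda)\,g_{0}(\lambda)=\int_{\Lambda}\d\mu(\lambda)\,g_{\psi}(\lambda)=1$, so Lem.~\ref{lem:SO:measurable} yields $\nu(\k(g_{\phi}))=1$, $\chi(\k(g_{0}))=1$, and $\mu(\k(g_{\psi}))=1$ for all such $\nu,\chi$. The candidate witness is then $\Omega\eqdef\k(g_{\phi})\cup\k(g_{0})\in\Sigma$, which trivially satisfies the unit-measure requirement of Eq.~(\ref{eq:SO:asymmetric-multipartite-definition}), so $\varpi(|\phi\rangle,|0\rangle\,|\,\mu)\leq\mu(\Omega)=\mu(\Omega\cap\k(g_{\psi}))$, discarding the $\mu$-null set $\Lambda\setminus\k(g_{\psi})$.

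The crucial step is the pointwise analysis on $\Omega\cap\k(g_{\psi})$. Any such $\lambda$ has $g_{\psi}(\lambda)=1$ together with $g_{\phi}(\lambda)=1$ or $g_{0}(\lambda)=1$. Since $\mathbb{P}_{\mathcal{B}}(\cdot\,|\,\lambda)$ is a probability distribution over $\mathcal{B}$, any two subsets of $\mathcal{B}$ whose response-function sums are both unity must both contain the entire support of $\mathbb{P}_{\mathcal{B}}(\cdot\,|\,\lambda)$. Consequently the support at $\lambda$ is contained in $S_{\psi}\cap(S_{\phi}\cup S_{0})\subseteq\{|a\rangle,|b\rangle\}$, forcing $\mathbb{P}_{\mathcal{B}}(|a\rangle\vee|b\rangle\,|\,\lambda)=1$ throughout $\Omega\cap\k(g_{\psi})$. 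Integrating against $\mu$ and enlarging to the full integral over $\Lambda$ then gives $\mu(\Omega\cap\k(g_{\psi}))\leq\int_{\Lambda}\d\mu(\lambda)\,\mathbb{P}_{\mathcal{B}}(|a\rangle\vee|b\rangle\,|\,\lambda)=\mathbb{P}_{\mathcal{B}}(|a\rangle\vee|b\rangle\,|\,|\psi\rangle)$, completing the bound.

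The only real obstacle is measurability bookkeeping: one must check that $g_{\phi},g_{0},g_{\psi}$ are measurable (immediate, since each is a finite sum of the measurable response functions), that $\k(\cdot)$ is defined on this class (granted by Def.~\ref{def:SO:k-bar}), and that finite unions and intersections preserve membership in $\Sigma$ (standard). Everything else is a compact combination of Lem.~\ref{lem:SO:measurable} with the elementary observation that two unit-sum subsets of a probability mass function must coincide on its support.
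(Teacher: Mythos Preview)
Your proof is correct and follows essentially the same intuition as the paper (ontic states compatible with two preparations can only return measurement outcomes consistent with both), but the technical execution differs in a meaningful way. The paper defines $f(\lambda)\eqdef\mathbb{P}_{\mathcal{B}}(|a\rangle\,|\,\lambda)+\mathbb{P}_{\mathcal{B}}(|b\rangle\,|\,\lambda)$ together with $g_{\phi},g_{0}$ covering only the \emph{remaining} supports (beyond $|a\rangle,|b\rangle$), takes a single witness set $\Omega=\k(f+g_{\phi}+g_{0})$, and bounds $\mu(\Omega)$ from above by writing $\int\d\mu\,f=\int\d\mu\,(f+g_{\phi}+g_{0})$ (using that $\int\d\mu\,g_{\phi}=\int\d\mu\,g_{0}=0$ by orthogonality) and restricting the domain of integration. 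You instead use the \emph{full} supports $S_{\psi},S_{\phi},S_{0}$, take $\Omega=\k(g_{\phi})\cup\k(g_{0})$, intersect with the $\mu$-full set $\k(g_{\psi})$, and run a clean pointwise support argument to force $\mathbb{P}_{\mathcal{B}}(|a\rangle\vee|b\rangle\,|\,\lambda)=1$ on that intersection. Your route makes the ``common support'' intuition more transparent and avoids the add-zero-then-restrict integral trick; the paper's route uses a single $\k$-set and sidesteps the explicit intersection with $\k(g_{\psi})$, which aligns slightly better with how the argument is later adapted to the $\epsilon$-asymmetric overlap in Sec.~\ref{sec:MR:error-tolerant-argument}.
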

\begin{proof}
The gist of the proof is that if ontic state $\lambda$ is accessible by preparing two quantum states then it may only return measurement results that are compatible with both preparations. So if $\lambda$ is accessible by preparing both $|\psi\rangle$ and $|\phi\rangle$, then it may only return $|a\rangle$ or $|b\rangle$ in the measurement of $\mathcal{B}$. Similarly, if $\lambda$ is accessible by preparing both $|\psi\rangle$ and $|0\rangle$, then it may only return $|a\rangle$ or $|b\rangle$ for a similar measurement. Thus, if one of these $\lambda$s is obtained in a preparation of $|\psi\rangle$ then the measurement result is necessarily either $|a\rangle$ or $|b\rangle$. This will now be fully fleshed out.

Consider a measurement $M_{\mathcal{B}}$ of the basis $\mathcal{B}$ and let $f(\lambda)\eqdef\mathbb{P}_{\mathcal{B}}(|a\rangle\,|\,\lambda)+\mathbb{P}_{\mathcal{B}}(|b\rangle\,|\,\lambda)$. Let $A\subset\mathcal{B}$ be the set of remaining basis states on which $|\phi\rangle$ has support (\emph{i.e.} $|\phi\rangle$ is a superposition over $|a\rangle$, $|b\rangle$, and $A$) and define $g_\phi (\lambda) \eqdef \mathbb{P}_{\mathcal{B}}(A\,|\,\lambda)$. Similarly define $g_0 (\lambda) \eqdef \mathbb{P}_\mathcal{B}(B\,|\,\lambda)$ for $B\subset\mathcal{B}$ the remaining basis states over which $|0\rangle$ has support.

By these definitions, the following quantum probabilities are known for any $\nu\in\Delta_{|\phi\rangle}$ and $\chi\in\Delta_{|0\rangle}$ by Eq.~(\ref{eq:IN:ontological-model-quantum-probability})
\begin{eqnarray}
\int_{\Lambda}\d\nu(\lambda)\left(f(\lambda)+g_\phi(\lambda)\right) & = & 1,\\
\int_{\Lambda}\d\chi(\lambda)\left(f(\lambda)+g_0(\lambda)\right) & = & 1.
\end{eqnarray}
Therefore by Lem.~\ref{lem:SO:measurable} it follows that $\nu(\k(f+g_\phi))=\chi(\k(f+g_0))=1$. Since $\k(f+g_\phi)\subseteq\k(f+g_\phi+g_0)\supseteq\k(f+g_0)$ it follows that $\Omega\eqdef\k(f+g_\phi+g_0)$ is a measurable subset of $\Lambda$ for which $\nu(\Omega)=\chi(\Omega)=1,\,\forall\nu\in\Delta_{|\phi\rangle},\chi\in\Delta_{|0\rangle}$.

The desired quantum probability is given by
\begin{equation}
\mathbb{P}_\mathcal{B}(|a\rangle,|b\rangle\,|\,|\psi\rangle)=\int_{\Lambda}\d\mu(\lambda)\,f(\lambda).
\end{equation}
Since $|a\rangle$ and $|b\rangle$ are the \emph{only} basis states where $|\psi\rangle$ and $|\phi\rangle$ have common support, then $\int_\Lambda\d\mu(\lambda)g_\phi(\lambda) = 0$. Similarly, $\int_\Lambda\d\mu(\lambda)g_0(\lambda) = 0$ and so it follows that
\begin{eqnarray}
\mathbb{P}_\mathcal{B}(|a\rangle,|b\rangle\,|\,|\psi\rangle) & = & \int_\Lambda\d\mu(\lambda)\left(f(\lambda)+g_\phi(\lambda)+g_0(\lambda)\right)\\
 & \geq & \int_{\k(f+g_\phi+g_0)}\d\mu(\lambda)\left(f(\lambda)+g_\phi(\lambda)+g_0(\lambda)\right)\\
 & = & \mu\left(\k(f+g_\phi+g_0)\right)=\mu(\Omega).
\end{eqnarray}
Since, by definition $\mu(\Omega)$ upper bounds $\varpi(|\phi\rangle,|0\rangle\,|\,\mu)$
this completes the proof.
\end{proof}

This does not quite conclude the properties of the asymmetric overlap required for this chapter. Before continuing to the main result it will be necessary to consider how asymmetric overlaps interact with so-called anti-distinguishable quantum states.

\subsection{Anti-distinguishability} \label{sec:SO:anti-distinguishability}

Quantum states are perfectly distinguishable---that is, there is a measurement telling them apart with certainty---if and only if they are mutually orthogonal. Distinguishable states must be also ontologically distinct in order to satisfy Eq.~(\ref{eq:IN:ontological-model-probability}): if the preparation measures non-trivially overlapped, then any ontic states in that overlap would fail to return \emph{any} consistent outcome in a distinguishing measurement. This makes distinguishable states too restrictive to be very helpful in ontology theorems. The opposite and more subtle concept of \emph{anti-distinguishability} is much more useful in discussions of ontic overlaps\footnote{Anti-distinguishability was introduced in Ref.~\cite{CavesFuchs+02a} under the name ``PP-incompatibility'' and was given the more informative name of anti-distinguishability in Ref.~\cite{Leifer14b}.}.

A finite set of quantum states $\{|\psi\rangle,|\phi\rangle,...\}\subset\mathcal{P}(\mathcal{H})$ is anti-distinguishable if and only if there exists a measurement $M=\{E_{\neg\psi},E_{\neg\phi},...\}$ such that 
\begin{equation} \label{eq:SO:anti-distinguishing-measurement}
\langle\psi|E_{\neg\psi}|\psi\rangle=\langle\phi|E_{\neg\phi}|\phi\rangle=...=0,
\end{equation}
\emph{i.e.} the measurement can tell, with certainty, one state from the set that was \emph{not} prepared. This is the ``opposite'' to distinguishable since for distinguishable sets there is a measurement that can tell, with certainty, one state which \emph{was} prepared. It has been proved\footnote{This result was proved in Ref.~\cite{CavesFuchs+02a} but Ref.~\cite{BarrettCavalcanti+14} points out and corrects a typographical error in their result (the original had the second inequality as a strict inequality, which is incorrect).} that if some inner products $a=|\langle\phi|\psi\rangle|^{2}$, $b=|\langle0|\psi\rangle|^{2}$, $c=|\langle0|\phi\rangle|^{2}$ satisfy
\begin{equation} \label{eq:SO:anti-distinguishable-criterion}
a+b+c<1,\quad(1-a-b-c)^{2}\geq4abc,
\end{equation}
then the triple $\{|\psi\rangle,|\phi\rangle,|0\rangle\}$ must be anti-distinguishable by a projective measurement.

For the purposes of ontological models, the main utility of anti-distinguishable sets is that they exclude intersections of certain overlaps. This can also be viewed in relation to Lem.~\ref{lem:SO:asymmetric-Boole}: anti-distinguishable triples guarantee that Eq.~(\ref{eq:SO:lemma-asymmetric-Boole}) holds with equality. This will now be proved, together with a rough-but-intuitive argument.

\begin{lemma} \label{lem:SO:anti-distinguishing-equality}
For any anti-distinguishable triple of quantum states $\{|\psi\rangle,|\phi\rangle,|0\rangle\}$ Lem.~\ref{lem:SO:asymmetric-Boole} holds with equality, that is for all $\mu\in\Delta_{|\psi\rangle}$
\begin{equation} \label{eq:SO:lemma-anti-distinguishing-equality}
\varpi( |0\rangle, |\phi\rangle\, |\, \mu ) = \varpi( |0\rangle\,|\,\mu ) + \varpi( |\phi\rangle\,|\,\mu ).
\end{equation}
\end{lemma}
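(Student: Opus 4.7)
The plan is to establish the claimed equality by proving each direction separately. The inequality $\varpi(|0\rangle,|\phi\rangle\,|\,\mu) \leq \varpi(|0\rangle\,|\,\mu) + \varpi(|\phi\rangle\,|\,\mu)$ is immediate from Lem.~\ref{lem:SO:asymmetric-Boole} applied to $\mathcal{S} = \{|0\rangle,|\phi\rangle\}$, so all the work lies in the reverse direction. My strategy is to use the anti-distinguishing measurement to construct, for each of the three states, a canonical ``compatibility'' set of ontic states; then exhibit that the sets attached to $|0\rangle$ and $|\phi\rangle$ are $\mu$-essentially disjoint; and finally slice an arbitrary admissible $\Omega$ from the definition of $\varpi(|0\rangle,|\phi\rangle\,|\,\mu)$ along these two sets.

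Concretely, let $M = \{E_{\neg\psi}, E_{\neg\phi}, E_{\neg 0}\}$ denote the anti-distinguishing POVM guaranteed by Eq.~(\ref{eq:SO:anti-distinguishing-measurement}), and define the measurable functions $f_x(\lambda) \eqdef 1 - \mathbb{P}_M(E_{\neg x}\,|\,\lambda)$ together with their canonical sets $N_x \eqdef \k(f_x)$ for $x \in \{\psi,\phi,0\}$. Applying Eq.~(\ref{eq:IN:ontological-model-quantum-probability}) to the measurement $M$ on each of the three preparations gives $\int_\Lambda \d\chi(\lambda)\,f_0(\lambda) = 1$ for every $\chi \in \Delta_{|0\rangle}$ (and analogously for $|\phi\rangle$ and $|\psi\rangle$), so Lem.~\ref{lem:SO:measurable} yields $\chi(N_0) = 1$ for all $\chi \in \Delta_{|0\rangle}$, $\nu(N_\phi) = 1$ for all $\nu \in \Delta_{|\phi\rangle}$, and $\mu(N_\psi) = 1$.

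Because the POVM elements sum to $\mathbbm{1}$, at every $\lambda$ the probabilities $\mathbb{P}_M(E_{\neg x}\,|\,\lambda)$ sum to one. Hence any $\lambda \in N_0 \cap N_\phi$ must satisfy $\mathbb{P}_M(E_{\neg\psi}\,|\,\lambda) = 1$, which forces $\lambda \notin N_\psi$. Combined with $\mu(N_\psi) = 1$ this gives the key disjointness statement $\mu(N_0 \cap N_\phi) = 0$. Now fix any measurable $\Omega$ admissible in the infimum defining $\varpi(|0\rangle,|\phi\rangle\,|\,\mu)$, i.e.\ with $\chi(\Omega) = \nu(\Omega) = 1$ for all $\chi \in \Delta_{|0\rangle}$ and $\nu \in \Delta_{|\phi\rangle}$. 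Then $\Omega \cap N_0$ has unit measure under every $\chi \in \Delta_{|0\rangle}$, so by definition $\mu(\Omega \cap N_0) \geq \varpi(|0\rangle\,|\,\mu)$, and similarly $\mu(\Omega \cap N_\phi) \geq \varpi(|\phi\rangle\,|\,\mu)$. Using $\mu(\Omega \cap N_0 \cap N_\phi) = 0$ and finite additivity,
\begin{equation}
\mu(\Omega) \;\geq\; \mu\bigl((\Omega \cap N_0) \cup (\Omega \cap N_\phi)\bigr) \;=\; \mu(\Omega \cap N_0) + \mu(\Omega \cap N_\phi) \;\geq\; \varpi(|0\rangle\,|\,\mu) + \varpi(|\phi\rangle\,|\,\mu),
\end{equation}
and taking the infimum over all admissible $\Omega$ delivers the remaining inequality.

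The main obstacle is just the bookkeeping needed to confirm that $\Omega \cap N_0$ and $\Omega \cap N_\phi$ really are admissible in the single-state infima; this is painless because $\chi(\Omega) = \chi(N_0) = 1$ immediately forces $\chi(\Omega \cap N_0) = 1$, and similarly for $\phi$. Beyond that, the argument is conceptually clean: anti-distinguishability is exactly the structural feature that forces the $|0\rangle$-supported and $|\phi\rangle$-supported ontic regions to be $\mu$-disjoint, which is precisely the condition under which Boole's inequality saturates.
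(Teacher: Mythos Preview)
Your proof is correct and takes essentially the same approach as the paper: both construct the same canonical sets (your $N_0,N_\phi$ coincide with the paper's $\k(g_0),\k(g_\phi)$, since $f_x = g_x$ by the normalisation $\sum_y \mathbb{P}_M(E_{\neg y}|\lambda)=1$), intersect an admissible $\Omega$ with them, and show the two pieces are admissible for the single-state infima. The only difference is in how the key inequality $\mu(\Omega)\geq\mu(\Omega\cap N_0)+\mu(\Omega\cap N_\phi)$ is obtained: the paper reaches it via an integral identity $\mu(\Omega)=\int_\Omega\d\mu\,(g_\phi+g_0-\mathbb{P}_M(E_{\neg\psi}|\cdot))$ followed by restricting integration domains, whereas you argue directly that $N_0\cap N_\phi\subseteq\Lambda\setminus N_\psi$ forces $\mu$-disjointness. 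Your route is a little more transparent, making explicit that anti-distinguishability is precisely the condition that the two supports cannot overlap on ontic states compatible with $|\psi\rangle$; the paper's route keeps everything at the level of integrals, which ties in more uniformly with the style of the surrounding lemmas.
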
 
\begin{proof}
Roughly speaking, this result is fairly easy to see. Probabilities $\varpi(|\phi\rangle\,|\,\mu)$ and $\varpi(|0\rangle\,|\,\mu)$ correspond to the events of preparing $|\psi\rangle$ via $\mu$ and getting an ontic state compatible with preparing $|\phi\rangle$ and $|0\rangle$ respectively. Suppose that these events are not mutually exclusive, \emph{viz.} suppose that $\mu$ can prepare an ontic state compatible with preparing \emph{both} $|\phi\rangle$ and $|0\rangle$ at the same time. Such an ontic state would not be able to return a consistent outcome for the anti-distinguishing measurement: being compatible with preparations of all three states rules out all measurement outcomes. This is a contradiction, so the events must indeed be mutually exclusive and therefore the probabilities sum to the probability of their disjunction, which is exactly what Eq.~(\ref{eq:SO:lemma-anti-distinguishing-equality}) requires. This will now be rigorously proved.

Recall that $\{|\psi\rangle,|\phi\rangle,|0\rangle\}$ is an anti-distinguishable triple if and only if there is some quantum measurement $M$ with three outcomes $E_{\neg\psi},E_{\neg\phi},E_{\neg0}$ such that the outcome of getting $E_{\neg\psi}$ from a system prepared in state $|\psi\rangle$ is zero and similarly for the other state/outcome pairs. By Eq.~(\ref{eq:IN:ontological-model-quantum-probability}) it therefore follows that for all $\mu\in\Delta_{|\psi\rangle},\nu\in\Delta_{|\phi\rangle},\chi\in\Delta_{|0\rangle}$
\begin{eqnarray}
\int_{\Lambda}\d\mu(\lambda)\,\mathbb{P}_{M}(E_{\neg\psi}|\,\lambda) & = & 0, \label{eq:SO:lemma-anti-distinguishable-measurement-1}\\
\int_{\Lambda}\d\nu(\lambda)\,\mathbb{P}_{M}(E_{\neg\phi}|\,\lambda) & = & 0, \label{eq:SO:lemma-anti-distinguishable-measurement-2}\\
\int_{\Lambda}\d\chi(\lambda)\,\mathbb{P}_{M}(E_{\neg0}|\,\lambda) & = & 0.\label{eq:SO:lemma-anti-distinguishable-measurement-3}
\end{eqnarray}

To prove that Lem.~\ref{lem:SO:asymmetric-Boole} holds with equality it suffices to show that given any $\Omega\in\Sigma$ for which $\nu(\Omega)=\chi(\Omega)=1$ for all $\nu\in\Delta_{|\phi\rangle},\chi\in\Delta_{|0\rangle}$, there exists some $\Omega^{\prime},\Omega^{\prime\prime}\in\Sigma$ for which $\nu(\Omega^{\prime})=\chi(\Omega^{\prime\prime})=1$ for all $\nu\in\Delta_{|\phi\rangle},\chi\in\Delta_{|0\rangle}$ and
\begin{equation}
\mu(\Omega)\geq\mu(\Omega^{\prime})+\mu(\Omega^{\prime\prime}).\label{eq:SO:lemma-anti-distinguishability-equality-to-prove}
\end{equation}
This, together with Lem.~\ref{lem:SO:asymmetric-Boole} itself, would prove the desired result since the right-hand side bounds $\varpi(|\phi\rangle\,|\,\mu)+\varpi(|0\rangle\,|\,\mu)$ from above.

To prove that Eq.~(\ref{eq:SO:lemma-anti-distinguishability-equality-to-prove}) holds define the following measurable functions from $\Lambda$ to $[0,1]$
\begin{eqnarray}
g_{\psi}(\lambda) & \eqdef & \mathbb{P}_{M}(E_{\neg\phi}|\,\lambda)+\mathbb{P}_{M}(E_{\neg0}|\,\lambda),\\
g_{\phi}(\lambda) & \eqdef & \mathbb{P}_{M}(E_{\neg\psi}|\,\lambda)+\mathbb{P}_{M}(E_{\neg0}|\,\lambda),\\
g_{0}(\lambda) & \eqdef & \mathbb{P}_{M}(E_{\neg\psi}|\,\lambda)+\mathbb{P}_{M}(E_{\neg\phi}|\,\lambda).
\end{eqnarray}
Using the fact that, for any $\lambda\in\Lambda$, the sum of probabilities of outcomes for any measurement must be unity it follows that $\mathbb{P}_{M}(E_{\neg\psi}|\lambda)=1-g_{\psi}(\lambda)$ and similarly for $|\phi\rangle$ and $|0\rangle$. Therefore Eqs.~(\ref{eq:SO:lemma-anti-distinguishable-measurement-1}, \ref{eq:SO:lemma-anti-distinguishable-measurement-2}, \ref{eq:SO:lemma-anti-distinguishable-measurement-3}) are equivalent to 
\begin{eqnarray}
\int_{\Lambda}\d\mu(\lambda)\,g_{\psi}(\lambda) & = & 1,\\
\int_{\Lambda}\d\nu(\lambda)\,g_{\phi}(\lambda) & = & 1,\\
\int_{\Lambda}\d\chi(\lambda)\,g_{0}(\lambda) & = & 1.
\end{eqnarray}
By Lem.~\ref{lem:SO:measurable} it immediately follows that $\nu(\k(g_{\phi}))=\chi(\k(g_{0}))=1$ where, recall,
$\nu$ and $\chi$ are arbitrary measures from $\Delta_{|\phi\rangle}$ and $\Delta_{|0\rangle}$ respectively.

With these definitions, consider $\mu(\Omega)$ for any $\Omega\in\Sigma$ satisfying $\nu(\Omega)=\chi(\Omega)=1$ for all $\nu\in\Delta_{|\phi\rangle},\chi\in\Delta_{|0\rangle}$.
\begin{eqnarray}
\mu(\Omega) & = & \int_{\Omega}\d\mu(\lambda)\\
 & = & \int_{\Omega}\d\mu(\lambda)\Bigl(g_\phi(\lambda) + g_0(\lambda) - \mathbb{P}_{M}(E_{\neg\psi}|\,\lambda)\Bigr)
\end{eqnarray}
follows by definition of $g_{\phi,0}$. The last term vanishes by Eq.~(\ref{eq:SO:lemma-anti-distinguishable-measurement-1}), so
\begin{equation}
\mu(\Omega) = \int_{\Omega}\d\mu(\lambda)\,g_{\phi}(\lambda)+\int_{\Omega}\d\mu(\lambda)\,g_{0}(\lambda).
\end{equation}
By restricting the domain of integration
\begin{eqnarray}
\mu(\Omega) & \geq & \int_{\Omega\cap\k(g_{\phi})}\d\mu(\lambda)\,g_{\phi}(\lambda)+\int_{\Omega\cap\k(g_{0})}\d\mu(\lambda)\,g_{0}(\lambda)\\
 & = & \mu\left(\Omega\cap\k(g_{\phi})\right)+\mu\left(\Omega\cap\k(g_{0})\right)
\end{eqnarray}
recalling that $\forall\lambda\in\k(g_{\phi})$, $g_{\phi}(\lambda)=1$ (and similarly for $g_{0}$). Note that as both $\Omega$ and $\k(g_{\phi})$ are measure-one according to any $\nu\in\Delta_{|\phi\rangle}$, it follows that their intersection also satisfies $\nu\left(\Omega\cap\k(g_{\phi})\right)=1$. Similarly, $\chi\left(\Omega\cap\k(g_{0})\right)=1$. Thus what has been proved is that given any $\Omega\in\Sigma$ such that $\nu(\Omega)=\chi(\Omega)=1$ for all $\nu\in\Delta_{|\phi\rangle},\chi\in\Delta_{|0\rangle}$ there exist measurable sets $\Omega^{\prime}\eqdef\Omega\cap\k(g_{\phi})$ and $\Omega^{\prime\prime}\eqdef\Omega\cap\k(g_{0})$ satisfying $\nu(\Omega^{\prime})=\chi(\Omega^{\prime\prime})=1$ for all $\nu\in\Delta_{|\phi\rangle},\chi\in\Delta_{|0\rangle}$ and
\begin{equation}
\mu(\Omega)\geq\mu(\Omega^{\prime})+\mu(\Omega^{\prime\prime}).
\end{equation}
This is exactly what was to be proved.
\end{proof}

Together, Lems.~\ref{lem:SO:asymmetric-basic-bound}--\ref{lem:SO:anti-distinguishing-equality} form the mathematical scaffolding necessary to prove the main results of this chapter that follow.

\section{Must Superpositions be Real?} \label{sec:SO:superpositions-are-real}

With the technical background of asymmetric overlaps and anti-distinguishable sets it is now possible to properly address the main question of this chapter. Is it possible to have an ontology for quantum systems where superpositions are simply statistical effects---where they are not real?

\subsection{Distinguishing Ontic Superpositions} \label{sec:SO:ontic-superpositions}

In order to even precisely formulate this question, one first needs to carefully define what is meant by ``real'' superpositions. This was touched on in Sec.~\ref{sec:SO:classifying-ontologies} but will now be done precisely.

Before even getting to the ontology, it is important to note that superpositions are defined with respect to some orthonormal basis. To be exact, $|\psi\rangle\in\mathcal{P}(\mathcal{H})$ is \emph{a superposition with respect to orthonormal basis $\mathcal{B}$} if and only if $|\psi\rangle$ has non-zero inner product with more than one state in $\mathcal{B}$. This is equivalent to saying that $|\psi\rangle\not\in\mathcal{B}$.

Now, suppose that Bob believes superpositions with respect to some particular basis $\mathcal{B}$ are just a statistical effect. That is, he believes that superpositions have no ontology of their own but are emergent effects from the statistics. In terms of ontological models, this means that an ontic state obtained by preparing $|\psi\rangle\not\in\mathcal{B}$ cannot be independent of $\mathcal{B}$. That is, Bob believes that there is no new ontology required to describe $|\psi\rangle$ that was not already present when describing $\mathcal{B}$. In this case, he is compelled to say that whenever he prepares $|\psi\rangle$ the probability of getting an ontic state obtainable by preparing some state from $\mathcal{B}$ is unity:
\begin{equation} \label{eq:SO:epistemic-superposition}
\varpi(\mathcal{B}\,|\,\mu) = 1,\;\forall\mu\in\Delta_{|\psi\rangle}.
\end{equation}

To be consistent with the language of Sec.~\ref{sec:SO:classifying-ontologies}, call $|\psi\rangle$ an \emph{epistemic superposition with respect to $\mathcal{B}$} if and only if Eq.~(\ref{eq:SO:epistemic-superposition}) holds. Conversely, if Eq.~(\ref{eq:SO:epistemic-superposition}) is violated then there is a finite probability that preparing $|\psi\rangle$ results in novel ontic states not accounted for by $\mathcal{B}$. Therefore call $|\psi\rangle$ an \emph{ontic superposition with respect to $\mathcal{B}$} if and only if it is not epistemic.

\subsection{Almost All Superpositions are Real} \label{sec:SO:superpositions-are-real-subsection}

Superpositions epistemic with respect to some basis would have considerable explanatory power. Any seemingly bizarre quantum effect based on using such a superposition would have a neat underlying explanation in terms of the ontology of basis states. So, are such things possible?

Perhaps unfortunately, no. Indeed, the no-go result proved below is just about the strongest rejection of epistemic superpositions one could think of. It is relatively easy, for example, to prove that ``not every quantum superposition is epistemic''. What is proved in the following two theorems is the much stronger statement that almost every superposition with respect to any given orthonormal basis is ontic.

This result is presented in two theorems. The first contains the main technical argument.

\begin{theorem}
\label{thm:SO:no-maximal-overlap-for-basis}
Consider a $d>3$ dimensional quantum system and any orthonormal basis $\mathcal{B}$ of $\mathcal{H}$. For any $|\psi\rangle\in\mathcal{P}(\mathcal{H})$ such that there is a $|0\rangle\in\mathcal{B}$ satisfying $|\langle0|\psi\rangle|^2\in(0,\frac{1}{2})$, then there exists some other $|i\rangle\in\mathcal{B}$ and $\mu\in\Delta_{|\psi\rangle}$ such that
\begin{equation}
\label{eq:SO:thm:no-maximal-overlap-for-basis}
\varpi(|i\rangle\,|\,\mu) \neq |\langle i|\psi\rangle|^2.
\end{equation}
\end{theorem}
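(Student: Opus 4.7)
I will argue by contradiction. Suppose that for every $|i\rangle\in\mathcal{B}\setminus\{|0\rangle\}$ and every $\mu\in\Delta_{|\psi\rangle}$ one has $\varpi(|i\rangle\,|\,\mu)=|\langle i|\psi\rangle|^{2}$; by Lem.~\ref{lem:SO:asymmetric-basic-bound} this just means that the basic upper bound is saturated everywhere except possibly at $|0\rangle$. Write $|\psi\rangle=\alpha|0\rangle+\beta|\chi\rangle$ with $|\chi\rangle\perp|0\rangle$; the hypothesis $|\alpha|^{2}\in(0,\tfrac{1}{2})$ is what will make the relevant anti-distinguishable triples available, by keeping the quantity $a+b+c$ in Eq.~(\ref{eq:SO:anti-distinguishable-criterion}) below one in the constructions below.

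The plan is to build an auxiliary state $|\phi\rangle$ together with a basis state $|b\rangle\in\mathcal{B}\setminus\{|0\rangle\}$ that simultaneously satisfies three properties: (i) $\{|\psi\rangle,|\phi\rangle,|0\rangle\}$ is anti-distinguishable; (ii) $|\phi\rangle$ is supported in $\mathcal{B}$ only on $\{|0\rangle,|b\rangle\}$, so that the hypotheses of Lem.~\ref{lem:SO:asymmetric-triple-measurement} hold with $|a\rangle=|0\rangle$, giving the upper bound $\varpi(|\phi\rangle,|0\rangle\,|\,\mu)\leq|\alpha|^{2}+|\langle b|\psi\rangle|^{2}$; and (iii) $|\langle\phi|\psi\rangle|^{2}=|\langle e|\psi\rangle|^{2}$ for some $|e\rangle\in\mathcal{B}\setminus\{|0\rangle\}$. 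Property (iii) matters because the stabiliser subgroup $\mathcal{S}_{|\psi\rangle}$ acts transitively on the pure states with any fixed inner-product modulus with $|\psi\rangle$, so (iii) guarantees a $U\in\mathcal{S}_{|\psi\rangle}$ with $U|e\rangle=|\phi\rangle$. The condition $d>3$ supplies the extra basis states required to choose $|b\rangle$ and $|e\rangle$ independently while honouring all three constraints. Given such a $|\phi\rangle$, Lem.~\ref{lem:SO:anti-distinguishing-equality} converts (i) into $\varpi(|\phi\rangle,|0\rangle\,|\,\mu)=\varpi(|\phi\rangle\,|\,\mu)+\varpi(|0\rangle\,|\,\mu)$, and Lem.~\ref{lem:SO:asymmetric-unitary} applied to $U$, coupled with the contradiction hypothesis at $|e\rangle$ for the pulled-back preparation $\mu^{\prime}\transto{\gamma}\mu$ (which lies in $\Delta_{|\psi\rangle}$ because $U$ stabilises $|\psi\rangle$), supplies the lower bound $\varpi(|\phi\rangle\,|\,\mu)\geq|\langle e|\psi\rangle|^{2}$. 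Chaining the three bounds yields the key inequality $|\langle e|\psi\rangle|^{2}+\varpi(|0\rangle\,|\,\mu)\leq|\alpha|^{2}+|\langle b|\psi\rangle|^{2}$.

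The hard part will be closing this to a contradiction with $\varpi(|0\rangle\,|\,\mu)\geq 0$. A single-triple application runs into the Cauchy--Schwarz ceiling $|\langle\phi|\psi\rangle|^{2}\leq|\alpha|^{2}+|\langle b|\psi\rangle|^{2}$ (since $|\phi\rangle\in\mathrm{span}\{|0\rangle,|b\rangle\}$), which only forces $\varpi(|0\rangle\,|\,\mu)\leq 0$ and stops short of the desired strict violation. To sharpen this, I would iterate the construction over several choices of $|b\rangle$ and also adjoin further anti-distinguishable triples of the form $\{|\psi\rangle,|\phi\rangle,|i\rangle\}$, combining them with the pairwise anti-distinguishability of the triples $\{|\psi\rangle,|i\rangle,|j\rangle\}$ for distinct $i,j\neq 0$; by Lem.~\ref{lem:SO:anti-distinguishing-equality} the latter forces the $|i\rangle$-supports inside $\mu$ to carve out $\mu$-disjoint pieces summing to $1-|\alpha|^{2}$, leaving only $|\alpha|^{2}$ of $\mu$-mass to share between $\Omega_{0}$ and $\Omega_{\phi}$. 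Squeezing these relations together is expected to push $\varpi(|0\rangle\,|\,\mu)$ strictly negative for an appropriate choice of preparation $\mu$, contradicting the assumption. Both refinements rely essentially on $d>3$ to provide enough independent triples and on $|\alpha|^{2}<\tfrac{1}{2}$ to keep the various anti-distinguishability criteria satisfiable.
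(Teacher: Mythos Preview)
Your skeleton is right---contradiction, an anti-distinguishable triple $\{|\psi\rangle,|\phi\rangle,|0\rangle\}$, a stabiliser unitary pulling $|\phi\rangle$ back to a state where the hypothesis applies, and Lem.~\ref{lem:SO:asymmetric-triple-measurement} to cap $\varpi(|\phi\rangle,|0\rangle\,|\,\mu)$---but the way you fill it in has a genuine gap, and the ``iteration'' you sketch at the end does not close it.

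The obstruction you yourself identify is real and fatal to your construction: by forcing $|\phi\rangle\in\mathrm{span}\{|0\rangle,|b\rangle\}$ with $|b\rangle\in\mathcal{B}$, Cauchy--Schwarz caps $|\langle\phi|\psi\rangle|^{2}\leq|\alpha|^{2}+|\langle b|\psi\rangle|^{2}$, which is \emph{exactly} the upper bound Lem.~\ref{lem:SO:asymmetric-triple-measurement} gives you. The lower and upper bounds then coincide at best, and no amount of stacking triples of that same shape will beat Cauchy--Schwarz. The paper avoids this by \emph{not} working in the original basis $\mathcal{B}$ for the support of $|\phi\rangle$. Instead it introduces an auxiliary orthonormal basis $\mathcal{B}'=\{|0\rangle,|1'\rangle,|2'\rangle,|3'\rangle,\ldots\}$ that shares only $|0\rangle$ with $\mathcal{B}$, and it \emph{chooses} the $\mathcal{B}'$-expansion of $|\psi\rangle$ so that the $|1'\rangle$-coefficient is $\beta=\sqrt{2}\,\alpha^{2}$. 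Then $|\phi\rangle$ is supported on $\{|0\rangle,|1'\rangle,|3'\rangle\}$ and $|\psi\rangle$ on $\{|0\rangle,|1'\rangle,|2'\rangle\}$, so Lem.~\ref{lem:SO:asymmetric-triple-measurement} (applied in $\mathcal{B}'$) gives the cap $\alpha^{2}+\beta^{2}=\alpha^{2}+2\alpha^{4}$.

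The second point you miss is the choice of pull-back target. You look for $|e\rangle\in\mathcal{B}\setminus\{|0\rangle\}$ with $|\langle e|\psi\rangle|^{2}=|\langle\phi|\psi\rangle|^{2}$; the paper instead arranges $|\langle\phi|\psi\rangle|^{2}=\alpha^{2}=|\langle 0|\psi\rangle|^{2}$, so that $U\in\mathcal{S}_{|\psi\rangle}$ takes $|0\rangle$ to $|\phi\rangle$, and the contradiction hypothesis is applied at $|0\rangle$ itself---twice, once for $\mu$ and once for the pulled-back $\mu'$. (Your reading of ``some other $|i\rangle$'' as excluding $|0\rangle$ is too strict; the paper's contradiction hypothesis ranges over all of $\mathcal{B}$.) This gives the lower bound $\varpi(|\phi\rangle,|0\rangle\,|\,\mu)\geq 2\alpha^{2}$. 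Comparing with the cap $\alpha^{2}+2\alpha^{4}$ yields $\alpha^{2}\geq\tfrac{1}{2}$, contradicting $\alpha^{2}<\tfrac{1}{2}$ cleanly---no iteration needed.
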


\begin{proof}
The proof proceeds by contradiction. To that end, assume that 
\begin{equation}
\label{eq:SO:proof-technical-background-contradiction-assumption}
\varpi(|j\rangle\,|\,\mu)=|\langle j|\psi\rangle|^2,\quad \forall|j\rangle\in\mathcal{B},\; \forall\mu\in\Delta_{|\psi\rangle}.
\end{equation}

Consider $|0\rangle\in\mathcal{B}$ as described in the theorem's statement. By choosing the global phase of $|\psi\rangle$ appropriately, another orthonormal basis $\mathcal{B}^\prime = \{|0\rangle\}\cup\{|i^\prime\rangle\}_{i=1}^{d-1}$ can be defined such that
\begin{equation} \label{eq:SO:proof-psi-no-maximal-overlap}
|\psi\rangle = \alpha|0\rangle + \beta|1^\prime\rangle + \tau|2^\prime\rangle,
\end{equation}
where $\alpha\in(0,\frac{1}{\sqrt{2}})$ and $\beta\eqdef\sqrt{2}\alpha^2$. Note that $\alpha = \langle 0|\phi\rangle$ can be taken to be real and positive without loss of generality, as states in $\mathcal{P}(\mathcal{H})$ are equivalent up to global phase factors. This is always possible since $|\alpha|^2 + |\beta|^2 = \alpha^2 (1 + 2\alpha^2) < 1$ for every such $\alpha$. With respect to the same $\mathcal{B}^\prime$ define
\begin{equation} \label{eq:SO:proof-phi-no-maximal-overlap}
|\phi\rangle \eqdef \delta|0\rangle + \eta|1^\prime\rangle + \kappa|3^\prime\rangle,
\end{equation}
where $\delta\eqdef 1 - 2\alpha^2$ and $\eta\eqdef\sqrt{2}\alpha$. This is always possible since $|\delta|^2 + |\eta|^2 = (1 - 2\alpha^2)^2 + 2\alpha^2 < 1$ for all $\alpha$ and thus an appropriate $\kappa$ always exists.

The above construction has been chosen such that 
\begin{itemize}
\item $|\langle 0|\psi\rangle|^2 = \alpha^2 = |\langle\phi | \psi\rangle|^2$ so that there exists a unitary operator $U$ for which $U|0\rangle=|\phi\rangle$ and $U|\psi\rangle=|\psi\rangle$; and
\item the inner products $|\langle 0|\psi\rangle|^2$, $|\langle \phi|\psi\rangle|^2$, $|\langle 0|\phi\rangle|^2$ satisfy Eq.~(\ref{eq:SO:anti-distinguishable-criterion}) and, therefore, the triple $\{|\psi\rangle,|\phi\rangle,|0\rangle\}$ is anti-distinguishable.
\end{itemize}

Choose any preparation measure $\mu^\prime\in\Delta_{|\psi\rangle}$ and any stochastic map $\gamma\in\Gamma_U$. Let $\mu\in\Delta_{U|\psi\rangle}=\Delta_{|\psi\rangle}$ be the preparation measure such that $\mu^\prime \transto{\gamma}\mu$. Then 
\begin{eqnarray}
\varpi(|\phi\rangle,|0\rangle\,|\,\mu) &= & \varpi(|\phi\rangle\,|\,\mu) + \varpi(|0\rangle\,|\,\mu) \\
&\geq & \varpi(|0\rangle\,|\,\mu^\prime) + \varpi(|0\rangle\,|\,\mu) \\
\label{eq:SO:proof-technical-background-overlap-to-alpha}
&= & 2|\langle 0|\psi\rangle|^2 = 2\alpha^2
\end{eqnarray}
where the first line follows from anti-distinguishability and Lem.~\ref{lem:SO:anti-distinguishing-equality}, the second from Lem.~\ref{lem:SO:asymmetric-unitary}, and the third from Eq.~(\ref{eq:SO:proof-technical-background-contradiction-assumption}).

Now consider that in any basis measurement $M$ of $\mathcal{B}^\prime$, $|0\rangle$ and $|1^\prime\rangle$ are the only outcomes compatible with preparations of either both $|\psi\rangle$ \& $|\phi\rangle$ or both $|\psi\rangle$ \& $|0\rangle$. So by Lem.~\ref{lem:SO:asymmetric-triple-measurement}, $\mathbb{P}_M (|0\rangle\vee|1^\prime\rangle\,|\,|\psi\rangle) \geq \varpi(|0\rangle,|\phi\rangle\,|\,\mu)$. Therefore, in order to reproduce quantum predictions and satisfy Eq.~(\ref{eq:IN:ontological-model-quantum-probability}),
\begin{equation}
\label{eq:SO:proof-technical-background-measurement-to-alpha}
\varpi(|0\rangle,|\phi\rangle\,|\,\mu) \leq |\langle0|\phi\rangle|^2 + |\langle1^\prime|\psi\rangle|^2 = \alpha^2 + 2\alpha^4.
\end{equation}

Combining Eqs.~(\ref{eq:SO:proof-technical-background-overlap-to-alpha}, \ref{eq:SO:proof-technical-background-measurement-to-alpha}) one finds $\alpha\geq\frac{1}{\sqrt{2}}$. This is a contradiction, since $\alpha\in(0,\frac{1}{\sqrt{2}})$ by construction. Therefore, Eq.~(\ref{eq:SO:proof-technical-background-contradiction-assumption}) must be false, implying that there exists some $|i\rangle\in\mathcal{B}$ for which $\varpi(|i\rangle\,|\,\mu)\neq|\langle i |\psi\rangle|^2$ for some $\mu\in\Delta_{|\psi\rangle}$.
\end{proof}

Theorem~\ref{thm:SO:no-maximal-overlap-for-basis} is the central idea that allows the no-go proof for epistemic superpositions. Comparing it to Lem.~\ref{lem:SO:asymmetric-basic-bound}, Thm.~\ref{thm:SO:no-maximal-overlap-for-basis} states that the ontic overlap between $|i\rangle\in\mathcal{B}$ and $|\psi\rangle\not\in\mathcal{B}$ cannot be maximal. That is, like many of the results discussed in Sec.~\ref{sec:SO:previous-theorems}, this is a bound on ontic overlaps at heart.

Using Thm.~\ref{thm:SO:no-maximal-overlap-for-basis} the main result of this chapter can be proved fairly easily.

\begin{theorem}
\label{thm:SO:superpositions-are-ontic}
Consider a quantum system of dimension $d>3$ and define superpositions with respect to any orthonormal basis $\mathcal{B}$. Almost all quantum superposition states $|\psi\rangle\not\in\mathcal{B}$ are ontic.
\end{theorem}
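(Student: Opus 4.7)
The plan is to derive the theorem as a fairly short consequence of Thm.~\ref{thm:SO:no-maximal-overlap-for-basis} together with Lems.~\ref{lem:SO:asymmetric-basic-bound} and~\ref{lem:SO:asymmetric-Boole}, and then to argue that the precondition of Thm.~\ref{thm:SO:no-maximal-overlap-for-basis} holds for all but a measure-zero subset of $\mathcal{P}(\mathcal{H}) \setminus \mathcal{B}$.

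First I would observe that if $|\psi\rangle \not\in \mathcal{B}$ is epistemic with respect to $\mathcal{B}$, then Eq.~(\ref{eq:SO:epistemic-superposition}) gives $\varpi(\mathcal{B}\,|\,\mu) = 1$ for every $\mu \in \Delta_{|\psi\rangle}$. Combining Lem.~\ref{lem:SO:asymmetric-Boole} with Lem.~\ref{lem:SO:asymmetric-basic-bound} then yields
\begin{equation}
1 \;=\; \varpi(\mathcal{B}\,|\,\mu) \;\leq\; \sum_{|i\rangle \in \mathcal{B}} \varpi(|i\rangle\,|\,\mu) \;\leq\; \sum_{|i\rangle \in \mathcal{B}} |\langle i|\psi\rangle|^2 \;=\; 1.
\end{equation}
Since both inequalities are saturated and every summand on the right is individually bounded above by $|\langle i|\psi\rangle|^2$, this forces $\varpi(|i\rangle\,|\,\mu) = |\langle i|\psi\rangle|^2$ for every $|i\rangle \in \mathcal{B}$ and every $\mu \in \Delta_{|\psi\rangle}$. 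This is precisely the conclusion that Thm.~\ref{thm:SO:no-maximal-overlap-for-basis} forbids whenever there exists some $|0\rangle \in \mathcal{B}$ with $|\langle 0|\psi\rangle|^2 \in (0,\tfrac{1}{2})$. So any $|\psi\rangle \not\in \mathcal{B}$ meeting that mild precondition must be ontic.

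The remaining task is to identify the set $\mathcal{N} \subset \mathcal{P}(\mathcal{H}) \setminus \mathcal{B}$ of superpositions for which no such $|0\rangle \in \mathcal{B}$ exists, and show it has measure zero under the unitarily-invariant (Fubini--Study/Haar) measure on $\mathcal{P}(\mathcal{H})$. A state $|\psi\rangle \not\in \mathcal{B}$ fails the precondition exactly when every basis coefficient satisfies either $|\langle i|\psi\rangle|^2 = 0$ or $|\langle i|\psi\rangle|^2 \geq \tfrac{1}{2}$. Since at least two coefficients are non-zero (as $|\psi\rangle \not\in \mathcal{B}$) and the squared moduli must sum to one, this forces exactly two non-zero coefficients each of squared modulus precisely $\tfrac{1}{2}$. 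Thus $\mathcal{N}$ is the union, over unordered pairs $\{i,j\}$ from $\mathcal{B}$, of the circles $\{(|i\rangle + e^{\mathrm{i}\theta}|j\rangle)/\sqrt{2} : \theta \in \mathbb{R}\}$, which is a finite union of one-(real-)dimensional submanifolds sitting inside the $2(d-1)$-dimensional manifold $\mathcal{P}(\mathcal{H})$; for $d > 3$ (indeed $d \geq 2$) this is clearly measure zero.

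The main subtlety is merely ensuring the equality $\varpi(\mathcal{B}\,|\,\mu) = \sum_i \varpi(|i\rangle\,|\,\mu)$ forced above really contradicts Thm.~\ref{thm:SO:no-maximal-overlap-for-basis}, which requires the existence in the ontological model of the particular $\mu \in \Delta_{|\psi\rangle}$ witnessing the inequality; this is immediate since being epistemic is a claim about \emph{all} $\mu \in \Delta_{|\psi\rangle}$, whereas Thm.~\ref{thm:SO:no-maximal-overlap-for-basis} produces such a $\mu$ and a specific $|i\rangle$ for which equality fails. The only other care needed is specifying which measure on $\mathcal{P}(\mathcal{H})$ underlies ``almost all''; the unitarily-invariant measure is the natural and essentially canonical choice, and any other reasonable absolutely-continuous measure would give the same conclusion since $\mathcal{N}$ is a finite union of proper submanifolds.
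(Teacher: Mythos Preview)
Your proof is correct and is essentially the contrapositive of the paper's own argument: the paper starts from Thm.~\ref{thm:SO:no-maximal-overlap-for-basis} to obtain a strict inequality $\sum_{|j\rangle}\varpi(|j\rangle\,|\,\mu)<1$ and then invokes Lem.~\ref{lem:SO:asymmetric-Boole}, whereas you assume the epistemic condition, squeeze the chain of inequalities into equalities, and contradict Thm.~\ref{thm:SO:no-maximal-overlap-for-basis}. Your treatment of the exceptional set $\mathcal{N}$ (the exact $50{:}50$ two-term superpositions) is somewhat more explicit than the paper's, which simply asserts the same characterisation without the dimension count.
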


\begin{proof}
Let $|\psi\rangle\not\in\mathcal{B}$ be any superposition state with respect to $\mathcal{B}$, such that $\exists|0\rangle\in\mathcal{B}$ for which $|\langle 0|\psi\rangle|^2 \in (0,\frac{1}{2})$. By Thm~\ref{thm:SO:no-maximal-overlap-for-basis} it follows that $\varpi(|i\rangle\,|\,\mu) \neq |\langle i|\psi\rangle|^2$ for some $|i\rangle\in\mathcal{B}$ and some $\mu\in\Delta_{|\psi\rangle}$ and further by Lem.~\ref{lem:SO:asymmetric-basic-bound} that $\varpi(|i\rangle\,|\,\mu) < |\langle i |\psi\rangle |^2$. Using this, with Lem.~\ref{lem:SO:asymmetric-basic-bound} again, also gives
\begin{equation}
\sum_{|j\rangle\in\mathcal{B}} \varpi( |j\rangle \,|\, \mu ) < \sum_{|j\rangle\neq|i\rangle} |\langle j|\psi\rangle |^2 + |\langle i |\psi\rangle |^2 = 1.
\end{equation}

Finally, note that Lem.~\ref{lem:SO:asymmetric-Boole} requires that $\varpi( \mathcal{B}\,|\,\mu ) \leq \sum_{|j\rangle\in\mathcal{B}} \varpi( |j\rangle\,|\,\mu)$ and therefore it is found that
\begin{equation}
\varpi(\mathcal{B}\,|\,\mu) < 1.
\end{equation}
This directly contradicts Eq.~(\ref{eq:SO:epistemic-superposition}) and therefore shows that $|\psi\rangle$ is an ontic superposition with respect to $\mathcal{B}$. However, $|\psi\rangle$ could be any superposition so long as $|\langle 0|\psi\rangle|^2 \in (0,\frac{1}{2})$ for some $|0\rangle\in\mathcal{B}$. This is true for all that are not exact $50:50$ superpositions over two states of $\mathcal{B}$. Indeed, it is true of almost all states in $\mathcal{P}(\mathcal{H})$. This completes the proof.
\end{proof}

This is a powerful result. It applies to any superposition state over any basis, so long as it is not an exact $50:50$ superposition and $d>3$. It therefore shows that quantum theory is grossly logically incompatible with epistemic superpositions. 

This is the primary result of this chapter. In the sections that follow, it will be seen how variations on this theorem form other strong restrictions on the character of ontology of quantum systems, as well as how they powerfully constrain the ability of classical resources to simulate quantum channels.

\section{Must any Specific States be Real?} \label{sec:SO:specific-states-ontology}

\subsection{State-Specific Ontology} \label{sec:SO:specific-states-ontology-definitions}

The primary categories of realist ontologies for quantum systems were briefly introduced in Sec.~\ref{sec:SO:classifying-ontologies}. These are: $\psi$-ontic, $\psi$-epistemic, and maximally $\psi$-epistemic (a small subset of $\psi$-epistemic). Now with the machinery of asymmetric overlaps, these can be defined formally.

$\psi$-ontic ontological models were introduced by saying that the ontic state uniquely identifies the quantum state that was prepared---there is no ontic overlap. Clearly, this means that when preparing any $|\psi\rangle$ there is zero probability of getting an ontic state obtainable by preparing any other $|\phi\rangle\neq|\psi\rangle$. In the language of asymmetric overlaps, $\psi$-ontic models can therefore be precisely defined as those satisfying
\begin{equation} \label{eq:SO:psi-ontic}
\varpi(\nu\,|\,\mu) = 0, \quad \forall\nu\in\Delta_{|\phi\rangle},\mu\in\Delta_{|\psi\rangle},|\psi\rangle\neq|\phi\rangle,
\end{equation}
or equivalently $\varpi(|\phi\rangle\,|\,\mu)=0$ for all $\mu\in\Delta_{|\psi\rangle}$ and $|\psi\rangle\neq|\phi\rangle$. Since $\psi$-epistemic models are precisely those which are not $\psi$-ontic, this also mathematically defines $\psi$-epistemic models.

Maximally $\psi$-epistemic models can be formalised similarly. Section~\ref{sec:SO:classifying-ontologies} introduced maximally $\psi$-epistemic models as those where the ontic overlap accounts for all of the Born rule overlap. As noted in Sec.~\ref{sec:SO:asymmetric-overlap}, the Born rule overlap $|\langle\phi|\psi\rangle|^2$ measures the probability that $|\psi\rangle$ will act like $|\phi\rangle$ and the asymmetric overlap does the same for ontological preparations. Clearly then, a model is maximally $\psi$-epistemic if and only if 
\begin{equation} \label{eq:SO:maximally-psi-epistemic}
\varpi(\nu\,|\,\mu) = |\langle\phi|\psi\rangle|^2, \quad \forall\mu\in\Delta_{|\psi\rangle},\,\nu\in\Delta_{|\phi\rangle}.
\end{equation}
By Lem.~\ref{lem:SO:asymmetric-basic-bound}, this means that all asymmetric overlaps must be maximal, hence the name.

In Secs.~\ref{sec:SO:classifying-ontologies}--\ref{sec:SO:limitations} it was discussed how these mark the extremes of possible ontological models for quantum systems. On the face of it, one might expect \emph{some} $\psi$-epistemic ontological model (such as in Ref.~\cite{BarrettCavalcanti+14}) to exist, since Eq.~(\ref{eq:SO:psi-ontic}) is such a strong condition. Similarly, it is relatively easy to rule out all maximally $\psi$-epistemic models for $d>2$ since Eq.~(\ref{eq:SO:maximally-psi-epistemic}) is so restrictive. The ontology theorems that bound overlaps---discussed in Secs.~\ref{sec:SO:previous-theorems}, \ref{sec:SO:limitations}---attempt to bridge the gap between these extremes by deriving quantitative bounds on overlap measures like the asymmetric overlap. But different papers often use different measures and they all suffer from shortcomings discussed in Sec.~\ref{sec:SO:limitations}.

Another way to reach a more nuanced discussion is to apply the ontic/epistemic dichotomy not just to the ontological models as a whole, but to the quantum states themselves too. That is, a \emph{quantum state $|\psi\rangle$ is $\psi$-ontic} if and only if it has no ontic overlap with any other quantum state $|\phi\rangle\neq|\psi\rangle$: $\varpi(|\phi\rangle\,|\,\mu) = 0$ for all $\mu\in\Delta_{|\psi\rangle}$. Correspondingly, \emph{$|\psi\rangle$ is $\psi$-epistemic} if and only if it is not $\psi$-ontic. Clearly, these are much more fine-grained requirements than the equivalents for the entire ontological model.

A specific state can similarly be defined as maximally $\psi$-epistemic in two ways. First, \emph{$|\psi\rangle$ is maximally $\psi$-epistemic with respect to $|\phi\rangle$} if and only if $\varpi(\nu\,|\,\mu)=|\langle\phi|\psi\rangle|^2$ for all $\nu\in\Delta_{|\phi\rangle}$ and $\mu\in\Delta_{|\psi\rangle}$. This is very fine-grained, identifying exactly which pair of quantum states overlap maximally. Second, \emph{$|\psi\rangle$ is maximally $\psi$-epistemic in the ontological model} if and only if it is maximally $\psi$-epistemic with respect to all states in $\mathcal{P}(\mathcal{H})$. This is clearly less fine-grained, identifying a point between a maximally $\psi$-epistemic model and a state that is maximally $\psi$-epistemic with respect to only one other.

Contextuality was also briefly introduced in Sec.~\ref{sec:SO:classifying-ontologies}. Reference~\cite{Spekkens05} shows how many types of contextuality can be precisely defined using the ontological models framework. For this thesis, only preparation contextuality needs to be considered in detail.

Preparation contextual models are those that describe operationally equivalent preparations differently. Therefore, for a quantum system, preparation non-contextuality requires that all preparations for the same quantum state $|\psi\rangle$ (all of which are operationally equivalent) are described with the same preparation measure; \emph{i.e.} every $\Delta_{|\psi\rangle}$ is a singleton. Recall that preparation non-contextuality is often simply (and implicitly) assumed for pure states, while preparation contextuality is necessary for mixed states [Sec.~\ref{sec:SO:previous-theorems}].

Similarly to the $\psi$-ontic models, the definition of preparation non-contextual models is very strong but can be used as a starting point for more subtle forms of contextuality. For example, one can consider preparations that are (non)-contextual with respect to certain sets of preparations. One such form of preparation contextuality is pure state \emph{preparation (non)-contextuality with respect to stabiliser unitaries} of some given state $|\psi\rangle$.

For any quantum state $|\psi\rangle\in\mathcal{P}(\mathcal{H})$, the set of \emph{stabiliser unitaries}, $\mathcal{S}_{|\psi\rangle}$, is formed of those $U$ on $\mathcal{H}$ which leave $|\psi\rangle$ unaffected: $U|\psi\rangle=|\psi\rangle$. That is, $\mathcal{S}_{|\psi\rangle}$ is the stabiliser subgroup of the unitary group with respect to $|\psi\rangle$.

Preparations of $|\psi\rangle$ are non-contextual with respect to stabiliser unitaries of $|\psi\rangle$ if and only if preparations that differ only by the action of a $U\in\mathcal{S}_{|\psi\rangle}$ are identical. In other words, $\mu\transto{\gamma}\mu$ for every $\mu\in\Delta_{|\psi\rangle}$ and $\gamma\in\Gamma_U$. Therefore, to be preparation non-contextual with respect to stabiliser unitaries simply requires that these unitaries cannot affect the distribution over $\Lambda$ for a preparation of $|\psi\rangle$. This may seem an oddly specific thing to assume. However Sec.~\ref{sec:SO:justifying-prep-noncontextuality} will argue that it is quite a natural assumption. Moreover, pure state preparation non-contextuality is often assumed wholesale without question and this is a much weaker assumption.

These definitions start to fill in the $\psi$-epistemic gulf between $\psi$-ontic models and maximally $\psi$-epistemic models. Using them as waypoints facilitates a cleaner discussion of exactly which types of $\psi$-epistemic models are and are not possible. In particular, the theorems that follow will use them to address the shortcomings discussed in Sec.~\ref{sec:SO:limitations}.

\subsection{No States Can Be \texorpdfstring{$\psi$}{psi}-Epistemic} \label{sec:SO:specific-states-ontology-results}

The first shortcoming of previous ontology theorems, as noted in Sec.~\ref{sec:SO:limitations}, is that they only prove the \emph{existence} of a single pair of quantum states that are not maximally $\psi$-epistemic, without being able to identify those states. Even if the theorems were extended to prove that more than one pair is not maximally $\psi$-epistemic, they would still not identify which states fail to be maximally $\psi$-epistemic. This is a loophole for the epistemic realist, who can still postulate ontological models where the vast majority of states of interest are maximally $\psi$-epistemic without violating these theorems.

The following theorem addresses this loophole directly. By making the additional assumption of preparation non-contextuality with respect to stabiliser unitaries it proves that any given $|\psi\rangle\in\mathcal{P}(\mathcal{H})$ is not maximally $\psi$-epistemic with respect to very many other states. In particular, no individual state can be maximally $\psi$-epistemic at all.

\begin{theorem} \label{thm:SO:no-maximally-epistemic}
Consider a $d>3$ dimensional quantum system and any pair of quantum states $|\psi\rangle,|0\rangle\in\mathcal{P}(\mathcal{H})$ that satisfy $|\langle 0|\psi\rangle|^2 \in (0,\frac{1}{2})$. Assume that preparations of $|\psi\rangle$ are non-contextual with respect to stabiliser unitaries of $|\psi\rangle$. For any preparation measure $\mu\in\Delta_{|\psi\rangle}$, the asymmetric overlap is bounded by
\begin{equation} \label{eq:SO:thm:no-maximally-epistemic}
\varpi(|0\rangle\,|\,\mu)\leq|\langle0|\psi\rangle|^{2}\left(\frac{1}{2}+|\langle0|\psi\rangle|^{2}\right)<|\langle0|\psi\rangle|^{2}.
\end{equation}
In particular, this implies that the asymmetric overlap $\varpi(|0\rangle\,|\,\mu)$ is strictly less than maximal. Therefore $|\psi\rangle$ is not maximally $\psi$-epistemic with respect to $|0 \rangle$ for any such pair of states in $d>3$ dimensions satisfying $|\langle 0|\psi\rangle|^2\in(0,\frac{1}{2})$.
\end{theorem}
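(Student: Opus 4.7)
The plan is to mirror the construction used in the proof of Theorem~\ref{thm:SO:no-maximal-overlap-for-basis}, but instead of deriving a contradiction from the denial of Eq.~(\ref{eq:SO:thm:no-maximal-overlap-for-basis}), to use the preparation non-contextuality assumption to turn the chain of inequalities into a direct upper bound on $\varpi(|0\rangle\,|\,\mu)$. To this end, fix $\mu \in \Delta_{|\psi\rangle}$ and set $\alpha \eqdef |\langle 0|\psi\rangle| \in (0, 1/\sqrt{2})$. Choose a global phase so that $\langle 0|\psi\rangle = \alpha$ is real and positive, then extend $\{|0\rangle\}$ to an orthonormal basis $\mathcal{B}' = \{|0\rangle, |1'\rangle, |2'\rangle, |3'\rangle, \ldots\}$ (available since $d > 3$) in which $|\psi\rangle = \alpha|0\rangle + \beta|1'\rangle + \tau|2'\rangle$ with $\beta \eqdef \sqrt{2}\,\alpha^2$, and define $|\phi\rangle \eqdef \delta|0\rangle + \eta|1'\rangle + \kappa|3'\rangle$ with $\delta \eqdef 1 - 2\alpha^2$ and $\eta \eqdef \sqrt{2}\alpha$, exactly as in Eqs.~(\ref{eq:SO:proof-psi-no-maximal-overlap}, \ref{eq:SO:proof-phi-no-maximal-overlap}).

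This construction has two properties that I would then invoke. First, $|\langle 0|\psi\rangle|^2 = |\langle \phi|\psi\rangle|^2 = \alpha^2$, so there exists a unitary $U$ with $U|0\rangle = |\phi\rangle$ and $U|\psi\rangle = |\psi\rangle$; that is, $U \in \mathcal{S}_{|\psi\rangle}$. Second, the three inner products $|\langle 0|\psi\rangle|^2$, $|\langle \phi|\psi\rangle|^2$, $|\langle 0|\phi\rangle|^2$ satisfy the criterion of Eq.~(\ref{eq:SO:anti-distinguishable-criterion}), so $\{|\psi\rangle, |\phi\rangle, |0\rangle\}$ is anti-distinguishable. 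The key new ingredient is the assumption of preparation non-contextuality with respect to stabiliser unitaries of $|\psi\rangle$: for any $\gamma \in \Gamma_U$, $\mu \transto{\gamma} \mu$. Applying Lem.~\ref{lem:SO:asymmetric-unitary} to the pair $(\mu, \mu)$ and the transformation from $|0\rangle$ to $|\phi\rangle$ then yields $\varpi(|\phi\rangle\,|\,\mu) \geq \varpi(|0\rangle\,|\,\mu)$.

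Combining this with Lem.~\ref{lem:SO:anti-distinguishing-equality} (which applies because the triple is anti-distinguishable) gives
\begin{equation}
\varpi(|\phi\rangle, |0\rangle\,|\,\mu) = \varpi(|\phi\rangle\,|\,\mu) + \varpi(|0\rangle\,|\,\mu) \geq 2\,\varpi(|0\rangle\,|\,\mu).
\end{equation}
On the other hand, since $|\psi\rangle$ and $|\phi\rangle$ (and $|\psi\rangle$ and $|0\rangle$) have common support in $\mathcal{B}'$ only on $|0\rangle$ and $|1'\rangle$, Lem.~\ref{lem:SO:asymmetric-triple-measurement} applied to a basis measurement of $\mathcal{B}'$ gives
\begin{equation}
\varpi(|\phi\rangle, |0\rangle\,|\,\mu) \leq |\langle 0|\psi\rangle|^2 + |\langle 1'|\psi\rangle|^2 = \alpha^2 + 2\alpha^4.
\end{equation}
Chaining these two inequalities and dividing by two yields $\varpi(|0\rangle\,|\,\mu) \leq \alpha^2(\tfrac{1}{2} + \alpha^2)$, which is exactly Eq.~(\ref{eq:SO:thm:no-maximally-epistemic}); the strict inequality to $|\langle 0|\psi\rangle|^2$ follows immediately from $\alpha^2 < 1/2$.

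The main obstacle is really only a conceptual subtlety rather than a technical one: making sure that the preparation non-contextuality assumption is strong enough to produce the needed fixed-point $\mu \transto{\gamma} \mu$ for \emph{every} $\gamma \in \Gamma_U$ with $U \in \mathcal{S}_{|\psi\rangle}$, so that Lem.~\ref{lem:SO:asymmetric-unitary} converts $\varpi(|0\rangle\,|\,\mu) \mapsto \varpi(|\phi\rangle\,|\,\mu)$ with the same $\mu$ on both sides; this is precisely what the stabiliser-unitary non-contextuality hypothesis buys us, and it is the sole place where the new assumption is used. Everything else is a direct reuse of the algebraic construction and lemmas already established in Sec.~\ref{sec:SO:asymmetric-overlap} and the proof of Thm.~\ref{thm:SO:no-maximal-overlap-for-basis}.
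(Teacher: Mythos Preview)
Your proposal is correct and matches the paper's approach essentially exactly: the paper states that the proof ``closely follows that of Thm.~\ref{thm:SO:no-maximal-overlap-for-basis},'' and your adaptation---replacing the contradiction hypothesis with the stabiliser non-contextuality fixed point $\mu \transto{\gamma} \mu$ so that Lem.~\ref{lem:SO:asymmetric-unitary} gives $\varpi(|\phi\rangle\,|\,\mu) \geq \varpi(|0\rangle\,|\,\mu)$ for the \emph{same} $\mu$, then chaining Lems.~\ref{lem:SO:anti-distinguishing-equality} and \ref{lem:SO:asymmetric-triple-measurement} into a direct bound---is precisely that adaptation.
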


The proof closely follows that of Thm.~\ref{thm:SO:no-maximal-overlap-for-basis} though it must be proved separately due to the difference in assumptions. The full proof can be found in Appendix~\ref{app:proof:thm:no-maximally-epistemic}. An immediate corollary is that no quantum state $|\psi\rangle\in\mathcal{P}(\mathcal{H})$ can be $\psi$-epistemic for $d>3$.

The epistemic realist must therefore either accept preparation contextuality with respect to stabiliser unitaries, or accept that reality is a long way from being maximally $\psi$-epistemic. The extra non-contextuality assumption is very mild. As noted above, it is much weaker than general pure-state preparation contextuality, which is very often assumed. Moreover, Sec.~\ref{sec:SO:justifying-prep-noncontextuality} will argue that it is a natural assumption for any minimally realistic ontological model.

Theorem~\ref{thm:SO:no-maximally-epistemic} improves upon the previous results in as far as ruling out maximally $\psi$-epistemic ontologies and closing loopholes for ontologies that are close to maximally $\psi$-epistemic. However, as a bound on the ontic overlap, Eq.~(\ref{eq:SO:thm:no-maximally-epistemic}) is rather weak. The next theorem aims instead to establish a restrictive quantitative bound on ontic overlaps by a similar method.

\begin{theorem} \label{thm:SO:no-overlap-large-d}
Consider a $d>3$ dimensional quantum system and any pair $|\psi\rangle,|0\rangle\in\mathcal{P}(\mathcal{H})$ that satisfy $\alpha \eqdef |\langle 0|\psi\rangle| \in (0,\frac{1}{4})$. Assume that preparations of $|\psi\rangle$ are non-contextual with respect to stabiliser unitaries of $|\psi\rangle$. For any preparation measure $\mu\in\Delta_{|\psi\rangle}$, the asymmetric overlap must satisfy
\begin{eqnarray} \label{eq:SO:thm:no-overlap-large-d}
\varpi(|0\rangle\,|\,\mu) & \leq & \alpha^{2}\left(\frac{1+2\alpha}{d-2}\right)\\
\lim_{d\rightarrow\infty}\varpi(|0\rangle\,|\,\mu) & = & 0
\end{eqnarray}
and so becomes arbitrarily small as $d$ increases even as $\alpha$ is held constant.
\end{theorem}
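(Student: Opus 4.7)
The plan is to follow the same overall architecture as Thm~\ref{thm:SO:no-maximally-epistemic}, but to replace the single auxiliary state $|\phi\rangle$ (which gave the modest $\frac{1}{2}+\alpha^{2}$ factor) with a family of $d-2$ stabiliser-conjugate partner states, all of which contribute to an asymmetric-overlap lower bound while sharing a small common support with $|\psi\rangle$ on the measurement side. The preparation non-contextuality assumption with respect to $\mathcal{S}_{|\psi\rangle}$ is exactly what will allow us to transport the overlap with $|0\rangle$ onto each partner without loss.

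Concretely, I would first choose an orthonormal basis $\mathcal{B}^\prime=\{|0\rangle,|1^\prime\rangle,\ldots,|(d-1)^\prime\rangle\}$ in which $|\psi\rangle$ has support only on $|0\rangle$ and $|1^\prime\rangle$ (up to a global phase so that $\langle 0|\psi\rangle=\alpha$ is real). Then for each $i\in\{2,\ldots,d-1\}$ I would define a partner state of the form
\begin{equation}
|\phi_i\rangle \eqdef \delta |0\rangle + \eta |1^\prime\rangle + \kappa |i^\prime\rangle,
\end{equation}
with $\delta,\eta,\kappa$ the coefficients already chosen in the proof of Thm~\ref{thm:SO:no-maximal-overlap-for-basis} (so $\delta=1-2\alpha^{2}$, $\eta=\sqrt{2}\alpha$, and $\kappa$ fixed by normalisation). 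By construction each $|\phi_i\rangle$ has the same inner products with $|\psi\rangle$ and with $|0\rangle$, so by the criterion in Eq.~(\ref{eq:SO:anti-distinguishable-criterion}) every triple $\{|\psi\rangle,|\phi_i\rangle,|0\rangle\}$ is anti-distinguishable for $\alpha\in(0,\tfrac{1}{4})$, and there is a stabiliser unitary $U_i\in\mathcal{S}_{|\psi\rangle}$ carrying $|\phi_2\rangle$ to $|\phi_i\rangle$ (it acts trivially on the $|0\rangle,|1^\prime\rangle$ plane containing $|\psi\rangle$ and as a suitable permutation on the orthogonal complement).

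Next I would combine the three main lemmas as in Thm~\ref{thm:SO:no-maximal-overlap-for-basis}, but iteratively. For each $\mu\in\Delta_{|\psi\rangle}$, non-contextuality with respect to $\mathcal{S}_{|\psi\rangle}$ gives $\mu\transto{\gamma_i}\mu$ for any $\gamma_i\in\Gamma_{U_i}$, so by Lem.~\ref{lem:SO:asymmetric-unitary} each partner satisfies $\varpi(|\phi_i\rangle\,|\,\mu)\geq\varpi(|\phi_2\rangle\,|\,\mu)$, and an analogous stabiliser argument yields $\varpi(|\phi_2\rangle\,|\,\mu)\geq\varpi(|0\rangle\,|\,\mu)$. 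Iterating Lem.~\ref{lem:SO:anti-distinguishing-equality} across the mutually anti-distinguishable family $\{|0\rangle,|\phi_2\rangle,\ldots,|\phi_{d-1}\rangle\}$ then yields
\begin{equation}
\varpi\bigl(\{|0\rangle,|\phi_2\rangle,\ldots,|\phi_{d-1}\rangle\}\,\big|\,\mu\bigr) \;\geq\; (d-1)\,\varpi(|0\rangle\,|\,\mu).
\end{equation}
On the other hand, every pair $\{|\psi\rangle,|\phi_i\rangle\}$ and the pair $\{|\psi\rangle,|0\rangle\}$ share common support in $\mathcal{B}^\prime$ only on $|0\rangle$ and $|1^\prime\rangle$, so a multi-state extension of Lem.~\ref{lem:SO:asymmetric-triple-measurement} applied to the basis measurement $\mathcal{B}^\prime$ bounds the left-hand side by $\mathbb{P}_{\mathcal{B}^\prime}(|0\rangle\vee|1^\prime\rangle\,|\,|\psi\rangle)=1$ corrected by the contributions from $|0\rangle$ on $|\psi\rangle$: a careful bookkeeping of the Born probabilities on $|1^\prime\rangle$ and $|0\rangle$ yields an upper bound of the form $\alpha^{2}(1+2\alpha)$. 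Rearranging gives Eq.~(\ref{eq:SO:thm:no-overlap-large-d}), and the $d\to\infty$ limit is immediate.

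The main obstacle is the last step: verifying that the multi-state variant of Lem.~\ref{lem:SO:asymmetric-triple-measurement} indeed yields the clean $\alpha^{2}(1+2\alpha)$ numerator, and that iterating Lem.~\ref{lem:SO:anti-distinguishing-equality} on a family whose triples, but not necessarily all $k$-wise unions, are anti-distinguishable still preserves the additivity we need. The cleanest route is probably to prove a direct generalisation of Lem.~\ref{lem:SO:anti-distinguishing-equality} for mutually-pairwise-anti-distinguishable sets (or to observe that the joint measurement assembling all the anti-distinguishing projectors onto the $|i^\prime\rangle$ axes $i\geq 2$ works as a single common anti-distinguishing POVM), at which point the counting argument yields $(d-1)$ mutually exclusive ontic regions of mass $\varpi(|0\rangle\,|\,\mu)$ each, all forced to fit inside a measurable set of $\mu$-mass at most $\alpha^{2}(1+2\alpha)$, from which the stated inequality follows.
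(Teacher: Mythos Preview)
Your overall architecture is right---use many stabiliser-conjugate partners $|\phi_i\rangle$, add up their overlaps via an anti-distinguishability argument, and bound the sum by a Born-rule probability from Lem.~\ref{lem:SO:asymmetric-triple-measurement}. But the construction you chose kills the crucial step. By putting $|\psi\rangle$ entirely on $\mathrm{span}\{|0\rangle,|1'\rangle\}$, the common support of $|\psi\rangle$ with every $|\phi_i\rangle$ (and with $|0\rangle$) is all of $\{|0\rangle,|1'\rangle\}$, so Lem.~\ref{lem:SO:asymmetric-triple-measurement} yields
\[
\varpi(\{|0\rangle,|\phi_2\rangle,\ldots\}\,|\,\mu)\;\leq\;\mathbb{P}_{\mathcal{B}'}(|0\rangle\vee|1'\rangle\,|\,|\psi\rangle)=\alpha^{2}+(1-\alpha^{2})=1,
\]
which is the trivial bound. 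There is no ``careful bookkeeping'' that repairs this: the lemma literally returns the Born weight of $|\psi\rangle$ on the common-support subspace, and you have put \emph{all} of $|\psi\rangle$'s weight there. (A related casualty: reusing the exact coefficients $\delta=1-2\alpha^{2}$, $\eta=\sqrt{2}\alpha$ from Thm.~\ref{thm:SO:no-maximal-overlap-for-basis} no longer gives $|\langle\phi_i|\psi\rangle|=\alpha$, since those were tuned to that theorem's $\beta=\sqrt{2}\alpha^{2}$, not to your $\sqrt{1-\alpha^{2}}$; so the stabiliser unitary mapping $|0\rangle\mapsto|\phi_i\rangle$ need not exist.)

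The fix, and what the paper does, is to write $|\psi\rangle=\alpha|0\rangle+\beta|1'\rangle+\tau|2'\rangle$ with $\beta$ \emph{small} (so $\tau$ carries the bulk of the norm), and then place the partners $|\phi_i\rangle$ on $|0\rangle,|1'\rangle,|i'\rangle$ for $i\geq 3$. Now the common support of $|\psi\rangle$ with each $|\phi_i\rangle$ is still $\{|0\rangle,|1'\rangle\}$, but the measurement bound becomes $\alpha^{2}+\beta^{2}$, which is genuinely small. Choosing $\beta$ so that $\alpha^{2}+\beta^{2}=\alpha^{2}(1+2\alpha)$ and re-tuning $\delta,\eta$ so that $|\langle\phi_i|\psi\rangle|=\alpha$ (and so that the anti-distinguishability criterion Eq.~(\ref{eq:SO:anti-distinguishable-criterion}) holds for $\alpha\in(0,\tfrac14)$) then yields exactly the numerator you want. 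Note this costs one basis direction to house $\tau|2'\rangle$, so you get $d-3$ partners plus $|0\rangle$, i.e.\ $d-2$ terms on the additive side---which is why the denominator is $d-2$ and not $d-1$.
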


The proof strategy here is similar to that of Thm.~\ref{thm:SO:no-maximally-epistemic} but modified to make use of the higher available dimensions. It can also be found in Appendix~\ref{app:proof:thm:no-maximally-epistemic}. Note that this modification necessarily weakens the bound compared to Thm.~\ref{thm:SO:no-maximally-epistemic} in low dimensions.

While the $1/d$ scaling of Eq.~(\ref{eq:SO:thm:no-overlap-large-d}) is relatively weak compared to previous overlap bounds (some of which scale exponentially \cite{Leifer14a,Branciard14}), Thm.~\ref{thm:SO:no-overlap-large-d} overcomes both limitations noted in Sec.~\ref{sec:SO:limitations}. First, Thm.~\ref{thm:SO:no-overlap-large-d} bounds the overlap for many specific pairs of quantum states, just like Thm.~\ref{thm:SO:no-maximally-epistemic}. Second, as the dimension increases, the bound tightens even while the inner products of the states remains constant. In particular then, this implies that in large-dimensional systems many specific pairs of states can only barely overlap at all.

\subsection{Towards Error-Tolerance and Experiments} \label{sec:SO:error-tolerance}

Thus far, it has been assumed that quantum statistics must be exactly reproduced by a valid ontological model [Eq.~(\ref{eq:IN:ontological-model-quantum-probability})]. However, it is impossible to exactly verify this. At most, experiments can demonstrate that quantum probabilities hold to within some finite additive error $\epsilon\in(0,1]$, as in Eq.~(\ref{eq:IN:ontological-model-approx-quantum-probability}). It is therefore necessary to consider \emph{error-tolerant } versions of the above theorems.

Unfortunately, the asymmetric overlap is an error-intolerant quantity. That is, for any quantum system there is an ontological model that satisfies Eq.~(\ref{eq:IN:ontological-model-approx-quantum-probability}) for any given $\epsilon\in(0,1]$ but for which every asymmetric overlap is unity. 

One can construct such an approximate model by modifying the Beltrametti-Bugajski model \cite{BeltramettiBugajski95} (that is, quantum theory re-phrased as an ontological model, Sec.~\ref{sec:SO:desired-ontologies}). Simply adjust the preparation measures so that each has probability $\epsilon$ of preparing a completely random state (according to, for example, the Haar measure \cite{ZyczkowskiSommers01}) and probability $1-\epsilon$ of acting as usual. Thus, the model will differ from quantum predictions with probability at most $\epsilon$. However, now every preparation measure can prepare any state with finite probability, so the asymmetric overlaps are all unity.

There is an alternative overlap measure, the symmetric overlap $\omega(|\psi\rangle,|\phi\rangle)$ \cite{Maroney12a,BarrettCavalcanti+14,Leifer14a,Branciard14,Leifer14b}, that does not have this problem---it is robust to small errors. The symmetric overlap is based on distinguishability of measures.

Suppose you are given some $\lambda\in\Lambda$ obtained by sampling from either $\mu$ or $\nu$ (each with equal \emph{a priori} probability). Consider using the optimal strategy to guess which of $\mu,\nu$ was used. The \emph{symmetric overlap of the measures $\mu$ and $\nu$} is defined as twice the probability of guessing incorrectly (and therefore takes values in $[0,1]$). This is known to correspond to \cite{Maroney12a,BarrettCavalcanti+14,Leifer14b}
\begin{equation}
\omega(\mu,\nu)\eqdef\inf\left\{ \mu(\Omega) + \nu(\Lambda\setminus \Omega)\,:\,\Omega\in\Sigma \right\}.
\end{equation}
As the names suggest, $\omega(\mu,\nu)$ is necessarily symmetric in its arguments and $\varpi(\nu\,|\,\mu)$ is not.

Just as with the asymmetric overlap, it is useful to slightly overload the notation and define the \emph{symmetric overlap of the states $|\psi\rangle$ and $|\phi\rangle$} as
\begin{equation} \label{eq:SO:quantum-state-symmetric}
\omega(|\psi\rangle,|\phi\rangle)\eqdef\sup_{\mu\in\Delta_{|\psi\rangle},\nu\in\Delta_{|\phi\rangle}}\omega(\mu,\nu).
\end{equation}

Quantum theory provides an upper bound on the symmetric overlap, since any quantum procedure for distinguishing $|\psi\rangle,|\phi\rangle$ is also a method for distinguishing $\mu\in\Delta_{|\psi\rangle},\nu\in\Delta_{|\phi\rangle}$ in an ontological model. As $\frac{1}{2}\left(1-\sqrt{1-|\langle\phi|\psi\rangle|^{2}}\right)$ is the minimum average error probability when distinguishing $|\psi\rangle,|\phi\rangle$ within quantum theory\footnote{By using the Helstrom measurement \cite{WaldherrDada+12,BarrettCavalcanti+14}.} it follows that $\omega(\mu,\nu)\leq1-\sqrt{1-|\langle\phi|\psi\rangle|^{2}}$ for every $\mu\in\Delta_{|\psi\rangle},\nu\in\Delta_{|\phi\rangle}$ and so 
\begin{equation} \label{eq:SO:symmetric-state-basic-bound}
\omega(|\psi\rangle,|\phi\rangle)\leq1-\sqrt{1-|\langle\phi|\psi\rangle|^{2}}.
\end{equation}
This is analogous to Lem.~\ref{lem:SO:asymmetric-basic-bound} for the asymmetric overlap.

With this new machinery, Thm.~\ref{thm:SO:no-overlap-large-d} can be modified to bound the symmetric overlap while only assuming that quantum probabilities are reproduced to within some finite additive error.

\begin{theorem} \label{thm:SO:small-symmetric-overlap-large-d}
Consider the assumptions of Thm.~\ref{thm:SO:no-overlap-large-d}, but only assume that the ontological model reproduces quantum probabilities to within $\pm\epsilon$ for some $\epsilon\in(0,1]$, as in Eq.~(\ref{eq:IN:ontological-model-approx-quantum-probability}). The symmetric overlap must satisfy
\begin{equation} \label{eq:SO:thm:small-symmetric-overlap-large-d}
\omega(|0\rangle,|\psi\rangle) \leq \alpha^2 \left( \frac{1 + 2\alpha}{d - 2} \right) + \frac{(3d^2 - 7d)}{2(d-2)}\epsilon.
\end{equation}
This bound is tighter than Eq.~(\ref{eq:SO:symmetric-state-basic-bound}) for $d>5$ and sufficiently small $\epsilon$.
\end{theorem}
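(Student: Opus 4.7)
The plan is to mimic the argument of Thm.~\ref{thm:SO:no-overlap-large-d} while (i) replacing the asymmetric overlap by the symmetric overlap wherever necessary and (ii) tracking how each invocation of the exact quantum prediction Eq.~(\ref{eq:IN:ontological-model-quantum-probability}) incurs an additive $\epsilon$ penalty when replaced by the approximate Eq.~(\ref{eq:IN:ontological-model-approx-quantum-probability}). The first step is to derive error-tolerant analogues of the key technical results of Sec.~\ref{sec:SO:asymmetric-overlap}: each statement that originally equated or bounded an asymmetric overlap by a Born-rule probability gets relaxed to an inequality with an additive $\epsilon$ (or a small multiple thereof), while Lem.~\ref{lem:SO:measurable} must be weakened to a statement about sets being of almost-full measure rather than exactly full measure. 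Where a step involves summing response functions over a basis measurement, one $\epsilon$ is accrued per outcome, which is where most of the $d$-dependence of the error term will originate.

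Next, I would relate the symmetric overlap back to the asymmetric quantities that appear naturally in the chain of reasoning. Concretely, for any measurable $\Omega \in \Sigma$ one has $\omega(\mu,\nu) \leq \nu(\Omega) + \mu(\Lambda \setminus \Omega)$, so the witnessing set $\Omega$ produced by the Thm.~\ref{thm:SO:no-overlap-large-d}-style argument (one that is nearly full measure for all $\nu \in \Delta_{|0\rangle}$ while being small under $\mu \in \Delta_{|\psi\rangle}$) immediately converts an asymmetric bound into a symmetric one, up to the $\epsilon$ slack already accumulated. The structural skeleton then mirrors Thm.~\ref{thm:SO:no-overlap-large-d} exactly: construct $d-2$ stabiliser-unitary images $|i\rangle = U_i |0\rangle$ lying in the subspace orthogonal to $|\psi\rangle$, use the preparation non-contextuality assumption to conclude that each image has the same ontic overlap with $\mu$ as $|0\rangle$ does, invoke an approximate anti-distinguishability argument (via a variant of Lem.~\ref{lem:SO:anti-distinguishing-equality}) on appropriate triples, and sum via the approximate Boole-type inequality inherited from Lem.~\ref{lem:SO:asymmetric-Boole}. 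Since the summation runs over $O(d)$ states, each using a measurement with $O(d)$ outcomes, the error in the summed inequality scales as $O(d^{2})\epsilon$; dividing through by $d-2$ after isolating $\varpi(|0\rangle\,|\,\mu)$ should yield the stated coefficient $(3d^{2} - 7d)/(2(d-2))$.

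The main obstacle will be the careful bookkeeping of $\epsilon$-errors so that the final coefficient matches precisely rather than merely being $O(d\epsilon)$. The trickiest step is the error-tolerant version of Lem.~\ref{lem:SO:anti-distinguishing-equality}: anti-distinguishability relies on three separate vanishing probabilities, each of which becomes $\leq \epsilon$ in the approximate setting, and converting the resulting slackness into an additive correction for the symmetric (rather than asymmetric) overlap requires care in constructing the witnessing measurable sets $\Omega^{\prime}, \Omega^{\prime\prime}$. A secondary subtlety is that $\omega(|0\rangle,|\psi\rangle)$ as defined by Eq.~(\ref{eq:SO:quantum-state-symmetric}) is a supremum over all preparation pairs, so the argument must hold for every $\mu \in \Delta_{|\psi\rangle}$ and $\nu \in \Delta_{|0\rangle}$ uniformly; this is handled automatically by arguing about the witnessing $\Omega$ before any particular $\mu,\nu$ is fixed. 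Once the bound is in hand, comparison with Eq.~(\ref{eq:SO:symmetric-state-basic-bound}) for $d > 5$ and sufficiently small $\epsilon$ reduces to a routine numerical inequality in $\alpha$ and $d$.
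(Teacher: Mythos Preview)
Your proposal is essentially the same approach the paper takes: adapt the proof of Thm.~\ref{thm:SO:no-overlap-large-d} to the symmetric overlap, deriving error-tolerant analogues of the key lemmas and carefully tracking the accumulated $\epsilon$-penalties (the paper even remarks that the resulting proof is ``long, ugly'' and drops to the non-measure-theoretic formulation to keep it manageable). One factual slip to correct before you execute the plan: the stabiliser-unitary images $|\phi_i\rangle = U_i|0\rangle$ do \emph{not} lie in the subspace orthogonal to $|\psi\rangle$; by construction each satisfies $|\langle\phi_i|\psi\rangle|^2 = |\langle 0|\psi\rangle|^2 = \alpha^2$, which is precisely what allows $U_i\in\mathcal{S}_{|\psi\rangle}$ to exist. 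The rest of your structural outline (anti-distinguishable triples, the Boole-type sum over $O(d)$ states, division by $d-2$ to isolate the overlap) matches the paper's route.
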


The proof strategy of Thm.~\ref{thm:SO:no-overlap-large-d} is closely related to the properties of the asymmetric overlap. Since Thm.~\ref{thm:SO:small-symmetric-overlap-large-d} adapts this method to the symmetric overlap, the fit between method and overlap measure is much less comfortable. This also makes the proof itself---provided in Appendix~\ref{app:proof:thm:small-symmetric-overlap-large-d}---unfortunately long, ugly\footnote{It should be noted that the proof in Appendix~\ref{app:proof:thm:small-symmetric-overlap-large-d} uses the simpler non-measure-theoretic version of ontological models in order to keep it as short and legible as possible, the measure-theoretic version would be even longer and uglier.}, and resulting in a looser bound [Eq.~(\ref{eq:SO:thm:small-symmetric-overlap-large-d})] than its asymmetric counterpart Eq.~(\ref{eq:SO:thm:no-overlap-large-d}).

The significance of Thm.~\ref{thm:SO:small-symmetric-overlap-large-d} is twofold. First, it allows many of the same conclusions as Thm.~\ref{thm:SO:no-overlap-large-d} but in a context with finite error: it demonstrates that many specific pairs of quantum states cannot be maximally $\psi$-epistemic and must have small ontic overlap in the large-$d$ limit (without those states approaching orthogonality) for sufficiently small error. Being error-tolerant, it opens this conclusion up to experimental investigation. Second, it is as a first-step and proof-of-concept for error-tolerance for Thms.~\ref{thm:SO:superpositions-are-ontic}, \ref{thm:SO:no-maximally-epistemic}. It does not immediately imply that almost all superpositions are real, but by demonstrating how Thm.~\ref{thm:SO:no-overlap-large-d}'s arguments can be made robust against small error it suggests that error-tolerant versions of the other theorems of this chapter should also be possible. It also proves that no pure state can be individually maximally $\psi$-epistemic in an error-tolerant way.

Even so, an error-tolerant version of Thm.~\ref{thm:SO:superpositions-are-ontic} would require the definition of ``ontic superposition'' to be modified, since it is currently defined in terms of the asymmetric overlap. This is tackled in Sec.~\ref{sec:MR:error-tolerant-superpositions-and-mr}, where such a re-definition is provided and error-tolerant variations on Thms.~\ref{thm:SO:superpositions-are-ontic}, \ref{thm:MR:no-ESMR-EMMR} are presented.

Theorem~\ref{thm:SO:small-symmetric-overlap-large-d} is probably most valuable as a proof-of-concept for experimental applicability rather than forming the basis of a concrete experimental proposal itself. The mismatch between the proof strategy and the symmetric overlap probably makes the result very non-optimal. For example, take the case where $\alpha = 0.245$ and $d=6$. This makes the bound of Eq.~(\ref{eq:SO:thm:small-symmetric-overlap-large-d}) $\approx 0.0224 + 8.25\epsilon$, while the bound of Eq.~(\ref{eq:SO:symmetric-state-basic-bound}) $\approx 0.0305$. In this case, Eq.~(\ref{eq:SO:thm:small-symmetric-overlap-large-d}) is an improved bound for $\epsilon \lesssim 0.0009$. Such experimental accuracy does not seem completely infeasible with current technology \cite{RingbauerDuffus+15}, but this is nevertheless a challenging experiment to simply improve on the easy bound of Eq.~(\ref{eq:SO:symmetric-state-basic-bound}). Of course, as $d$ increases precise experiments become more challenging, meaning that the error term of Eq.~(\ref{eq:SO:thm:small-symmetric-overlap-large-d}) is likely to increase super-linearly with $d$.

\subsection{Justifying Preparation Contextuality for Stabiliser Unitaries} \label{sec:SO:justifying-prep-noncontextuality}

Theorems~\ref{thm:SO:no-maximally-epistemic}--\ref{thm:SO:small-symmetric-overlap-large-d} all assume that preparations are non-contextual with respect to stabiliser unitaries for given states, as defined in Sec.~\ref{sec:SO:specific-states-ontology-definitions}. This is an extra assumption beyond the bare ontological models framework. Can such an assumption be justified? What follows is a heuristic argument aiming to do exactly that.

The ontological models framework combines postulated fundamental objective ontology with operational notions. The fundamental ontology is reflected in the idea that ontic states represent actual states of affairs, independently of any other theories an observer might use to describe the same system. On the other hand, the only way to reason about this largely-unspecified ontological level is operationally: how does it respond to preparations, transformations, and measurements that can actually be performed? 

An assumption of non-contextuality is an assumption about these operational bridges between our capabilities and the ontology. With this perspective, extra assumptions of non-contextuality can be justified by arguing that they are part of any sensible operational understanding of ontological models.

Any specific operational method for preparing some state $|\psi\rangle\in\mathcal{P}(\mathcal{H})$ may be thought of as a black box which the system is fed into. When the system is fed out of the box, it is promised that the box has prepared the system in state $|\psi\rangle$ according to some specific method. In terms of ontological models, the preparation method corresponds to a measure $\mu\in\Delta_{|\psi\rangle}$ and any such method $\mu$ can be considered in terms of such a box.

Suppose you design some experiment which involves preparing $|\psi\rangle$ via a method corresponding to $\mu$. Scientists implementing that experiment would acquire the corresponding black box to be sure that the method is indeed used. Once prepared, the system will need to be presented to other pieces of apparatus. However, there will always be variation in how the system is treated between preparation and the action of any other apparatus, any amount of motion or passage of time or other (seemingly innocuous) treatment amounts to applying some unitary $U$ to the system. Each scientist will, no doubt, be careful to ensure that the system is not disturbed from its preparation state, so it can be safely assumed that any such $U$ is a stabiliser unitary $U\in\mathcal{S}_{|\psi\rangle}$. However, the point remains that some unknown $U\in\mathcal{S}_{|\psi\rangle}$ is inevitably applied to the system after preparation via $\mu$, and this can never be perfectly accounted for.

Therefore, to analyse the result of the experiment, you have to allow for some unknown $U\in\mathcal{S}_{|\psi\rangle}$ to by applied (via some unknown $\gamma\in\Gamma_{U}$) after preparation of $|\psi\rangle$ via $\mu$. As a result, on this minimally realistic operational level, an arbitrary preparation distribution $\mu$ can never be prepared unscathed; you have to account for the inevitable, unknown, subsequent stabiliser unitary. It is therefore prudent to have the effective preparation measure that you use to describe the experiment be one that is non-contextual with respect to such transformations, allowing the experiment to still be analysed despite the application of an unknown $U\in\mathcal{S}_{|\psi\rangle}$.

One must be careful to only consider operational features that are not, even in principle, impossible to reliably perform. Since the sets of preparation measures for any given quantum states are, in the end, operational in character, one may safely restrict to preparation measures that satisfy certain sensible realistic requirements. The above heuristic argument aims to establish pure state preparation non-contextuality with respect to stabiliser unitaries as such a realistic requirement. It is, however, only a heuristic argument and is not rigorous. In particular, no strong reason has been given for including \emph{all} stabiliser unitaries.

\section{Communication Bounds from Ontology Results} \label{sec:SO:communication}

Foundational ontology results often suggest broader implications in many areas. The most obvious example is, of course, Bell's theorem, which has found applications across all areas of quantum theory as well as inspiring many related results is disparate places \cite{BertlmannZeilinger02,BertlmannZeilinger17}. Not all foundations results can claim quite such outstanding success, of course. Most commonly they can find applications in quantum information, particularly with relation to communications tasks \cite{Montina12,BarrettHardy+05,PerryJain+15,LiuPerry+16,Montina13,Montina15,MontinaWolf16,Leifer14b}. In this section that pattern will be repeated, with the techniques used above to find ontology results applied to a communication task in quantum information.

\subsection{Relating Ontology and Communication} \label{sec:SO:communication-from-ontology}

The framework of ontological models has a very natural links to communication tasks in quantum information \cite{PerryJain+15,Montina15}. Indeed, in certain cases direct parallels are known between ontological models and communication protocols \cite{Montina12}. When seeking to apply the techniques of new ontology results to quantum information, communication tasks are therefore a natural place to start.

One of the simplest communication tasks with links to ontological models is the \emph{finite communication} (FC) protocol where Alice and Bob simulate a noiseless $n_q$-qubit quantum channel with a finite $n_c$-bit noiseless classical channel. In this task, Alice is given a description of some quantum state $|\psi\rangle$ from a $d$-dimensional system and Bob is given a description of a quantum measurement $M$ on the same system. The task is for Bob to output some outcomes $E\in M$ with probabilities compatible with quantum theory over many runs, with possibly different $|\psi\rangle$ and $M$ each time.

Clearly, this task can be achieved with one-way communication from Alice to Bob using a noiseless \emph{quantum} channel of enough qubits $n_q$---Alice sends $|\psi\rangle$ to Bob, who simply measures it. To simulate this with \emph{classical} resources, Alice and Bob are given access to a shared random value $r\in\mathcal{R}$ (taking a new value each run according to some probability measure $\varrho(r)$) and noiseless channel with which Alice can send classical messages $c\in\mathcal{C}$ to Bob from a finite set $|\mathcal{C}| < \infty$. The question is: what quantities of classical resources $|\mathcal{C}|$ and $|\mathcal{R}|$ are required for exact simulation of the $n_q$-qubit quantum channel? Equivalently, how many classical bits $n_c \geq \log|\mathcal{C}|$, $n_r \geq \log |\mathcal{R}|$ are required to simulate a given number of qubits $n_q \geq \log d$? This protocol is schematically illustrated in Fig.~\ref{fig:SO:FC-protocol}.

\begin{figure}
\begin{centering}
\includegraphics[scale=0.35,angle=0]{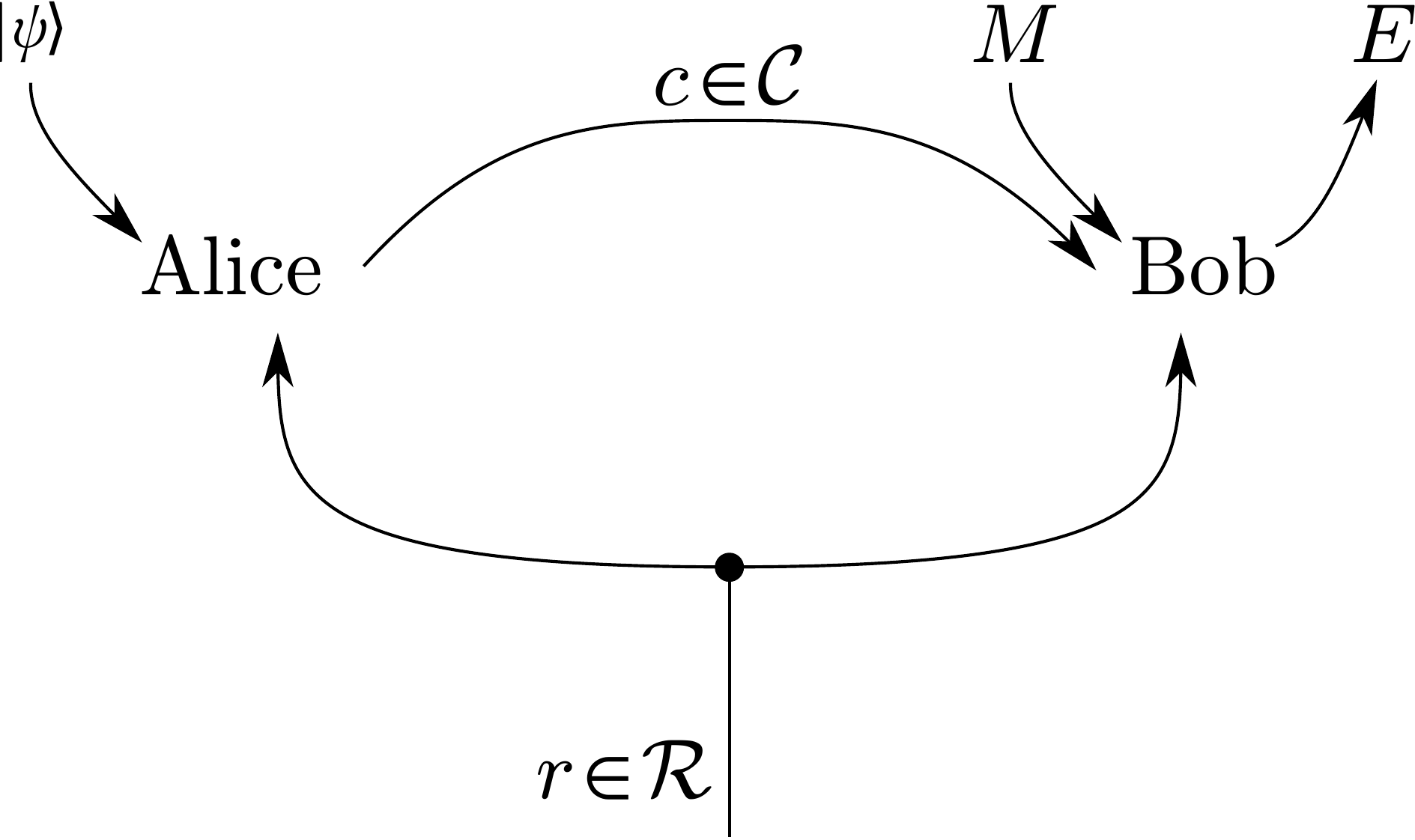}
\par\end{centering}
\protect\caption{Schematic representation of the finite communication (FC) protocol for simulating a quantum channel with classical resources. Alice recieves a description of quantum state $|\psi\rangle$ and Bob recieves a description of quantum measurement $M$. They share a random value $r$ from a set of possibilities $\mathcal{R}$. Alice sends a classical message $c\in\mathcal{C}$ to Bob through a noiseless channel, who then outputs measurement outcome $E\in M$ based on $c$ and $r$. }\label{fig:SO:FC-protocol}
\end{figure}

It is easy to see that such a protocol implies a basic ontological model for the same quantum system. The ontic state space is $\Lambda = \mathcal{C}\times\mathcal{R}$ with ontic states $\lambda = (c,r)$. If Alice sends message $c$ with probability given by $\mathbb{P}(c\,|\,|\psi\rangle,r)$ then the corresponding preparation measure for $|\psi\rangle$ is $\mu_{|\psi\rangle}(\lambda) = \mathbb{P}(c\,|\,|\psi\rangle,r)\varrho(r)$. If Bob outputs measurement outcome $E\in M$ with probability $\mathbb{P}(E\,|\,c,r)$ then this is exactly the corresponding response function $\mathbb{P}_M (E\,|\,\lambda)$. This ontological model is missing a description of transformations, which could be provided by extending the FC protocol to include another agent who performs a transformation, but such an extension will not be required here.

A couple of properties of FC protocols can be stated immediately. First, the excess baggage theorem requires ontological models (even those missing transformations) to have infinite ontic state spaces in order to exactly reproduce quantum predictions \cite{Hardy04}. Therefore, there must be an infinite amount of shared random data $|\mathcal{R}| \geq \infty$ for any such protocol. As a result, the focus is normally on how many classical bits $n_c \geq \log |\mathcal{C}|$ are required to exactly simulate an $n_q$-qubit quantum system given arbitrary shared random data $\mathcal{R}$. This approach to FC protocols will be used here.

Given this simplification, one can always assume that all randomness in the protocol comes from the shared random data. That is, any FC protocol for $n_q$ qubits with $n_c$ bits where Alice and Bob can act stochastically implies the existence of an equivalent FC protocol for the same $n_q$ and $n_c$ where Alice and Bob act deterministically given $|\psi\rangle$, $M$, and $r\in\mathcal{R}$. This is because Alice and Bob can simply obtain any required randomness from $r$, as $\mathcal{R}$ is arbitrarily large.

Alice's task can therefore be quite simple in an FC protocol. She simply assigns quantum states to classical messages $c\in\mathcal{C}$ based on $r$ in a deterministic way. Her strategy becomes that of an infinite look-up table. Bob's strategy can be similarly described.

Such simulation tasks are not just interesting from a foundational view, but are also important in quantum information \cite{BuhrmanCleve+10}. Considering distributed computing, for example, it is useful to know what sort of advantages quantum channels can offer over classical channels. More generally, an optimal classical FC protocol would provide a natural measure of the power of quantum channels \cite{Montina11b}.

The best known example of an FC protocol is from Ref.~\cite{TonerBacon03}. There, an explicit FC protocol is given that simulates a $n_q = 1$ qubit channel with subsequent projective measurement using exactly $n_c = 2$ bits of classical communication (of course, with infinite shared random data). However, no similarly general protocols are known for any quantum dimension $d > 2$. Some partial protocols are known to exist---such as in Ref.~\cite{Montina13}, which demonstrates how to construct an FC protocol for arbitrary $d$ if Bob can only perform two-outcome projective measurements---but it is not known whether such a protocol is possible for arbitrary projective (or, more generally POVM measurements) even for $d=3$.

What is known, however, are certain lower-bounds for the required $n_c$ as $n_q$ increases. These will be briefly reviewed in the next section, before a new bound is given with a simple proof based on the results from earlier in the chapter.

\subsection{A Simple Exponential Bound}	\label{sec:SO:communication-bound}

Several lower-bounds on the classical bits $n_c$ required to simulate an $n_q$-qubit quantum channel are known. These require $n_c$ to scale exponentially with $n_q$ and may therefore be seen as ``anti-Holevo'' results: while the Holevo bound states that a $n_q$-qubit quantum state can store at most $n_q$ classical bits, these results effectively show that to store $n_q$ qubits one requires at least $\mathcal{O}(2^{n_q})$ classical bits\footnote{This observation is taken from Ref.~\cite{Montanaro16}, where the ``anti-Holevo'' moniker is attributed to Tony Short.}.

The first example is from Ref.~\cite{BrassardCleve+99}, following the work of Ref.~\cite{BuhrmanCleve+98}, where it was proved that $n_c \geq c2^{n_q}$ classical bits are required to exactly simulate $n_q$ qubits for some constant $c \approx 0.01$. This was improved in Ref.~\cite{Montina11b}, where the same asymptotic lower bound was obtained with $c = 0.293$ and a postulated improvement (based on a plausible but unproved conjecture) was presented for a $2^{n_q} - 1$ lower bound.

Taking a somewhat different approach, Ref.~\cite{Montanaro16} proved an $\mathcal{O}(2^{n_q})$ asymptotic lower bound. This has the key advantage of bounding approximate, as well as exact, simulations and also applying to two-way classical communication between Alice and Bob. No precise constant factors for the bound are given, however.

The following theorem shows how the methods used to prove the ontology results in this chapter can also provide a comparable lower bound for the exact FC protocol. A key advantage of this theorem is its simplicity. Even a cursory glance at Refs.~\cite{BrassardCleve+99,Montina11b,Montanaro16} will show that their proofs are often very mathematically involved. It is comparatively simple to prove the following exponential bound.

\begin{theorem} \label{thm:SO:communication-bound}
For any $n\in\mathbb{Z}^+$, there is a quantum system of $n_q = \mathcal{O}(\log n)$ qubits such that any FC protocol exactly simulating an $n_q$-qubit channel requires at least $n_c$ classical bits of communication bounded by
\begin{equation}
n_c \geq 2^{ n_q + \mathcal{O}(1) } - 1.
\end{equation}
\end{theorem}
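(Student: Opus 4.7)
The plan is to reduce finite communication to the ontological models framework and apply the overlap bounds from earlier in the chapter. Given any FC protocol with message set $\mathcal{C}$ and shared randomness $\mathcal{R}$, Sec.~\ref{sec:SO:communication-from-ontology} explicitly constructs an induced (transformation-free) ontological model on $\Lambda = \mathcal{C}\times\mathcal{R}$, with $\mu_{|\psi\rangle}(c,r) = \varrho(r)\,\mathbbm{1}[c = f_r(|\psi\rangle)]$ for Alice's (without loss of generality, deterministic) strategy $f_r$. Since $n_c \geq \log |\mathcal{C}|$, it suffices to exhibit, for a suitable $d$-dimensional quantum system, a large family of quantum states whose induced preparation measures force $|\mathcal{C}|\geq 2^{n_q+\mathcal{O}(1)} - 1$.

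The main construction I would use is a set $\mathcal{S}\subset\mathcal{P}(\mathcal{H})$ of quantum states coming from a good classical error-correcting code. Specifically, take a binary code $\mathcal{K}\subseteq\{0,1\}^m$ of large minimum distance and exponentially many codewords, and embed each $x\in\mathcal{K}$ as a quantum state $|\psi_x\rangle\in\mathcal{H}$ whose pairwise inner products $|\langle\psi_x|\psi_y\rangle|$ are controlled by the Hamming distance and uniformly bounded by some $\alpha<1/4$. By choosing $n_q = \Theta(\log n)$ (so that $d = \Theta(n)$) and the code parameters appropriately, $|\mathcal{S}|$ grows like $2^{\Theta(d)}$.

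Applying Thm.~\ref{thm:SO:no-overlap-large-d} (noting that preparation non-contextuality with respect to stabiliser unitaries is very natural in the deterministic FC setting) yields pairwise asymmetric overlaps $\varpi(\mu_y\,|\,\mu_x)\leq \alpha^2(1+2\alpha)/(d-2)$ for all distinct $|\psi_x\rangle,|\psi_y\rangle\in\mathcal{S}$. A counting/packing argument on $\{\mu_x\}_{x\in\mathcal{K}}$ would then give the lower bound: each $\mu_x$ has a non-trivial ``exclusive'' support in $\mathcal{C}$ disjoint from the supports of all other $\mu_y$, and since these exclusive supports must themselves be disjoint, $|\mathcal{C}|$ is at least as large as the code.

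The hard part will be aligning the strengths of the two ingredients. The $O(1/d)$ overlap scaling from Thm.~\ref{thm:SO:no-overlap-large-d} applied naively to an exponential-size family supports only $|\mathcal{C}|=O(d)$, far short of the target. To bridge this gap, I would refine the counting argument using the structure of the code---for instance, by iteratively applying the overlap bound to nested sub-families so that the code's rate-distance properties feed directly into the lower bound on $|\mathcal{C}|$, or by replacing the pair-wise overlap bound with a joint overlap bound on an anti-distinguishable sub-tuple supplied by the code. The author's remark that the proof is readily improved by plugging in better classical error-correcting codes suggests precisely this style of argument, where the exponent in the final bound is controlled by the code rate rather than by the overlap constant alone.
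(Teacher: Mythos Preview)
Your proposal has a genuine gap and takes a detour that does not close. The paper's proof does \emph{not} use the asymmetric overlap bounds of Thms.~\ref{thm:SO:no-maximally-epistemic}--\ref{thm:SO:no-overlap-large-d} at all. Instead it argues directly from anti-distinguishability: take the fingerprinting family $\mathcal{F}$ of $N=2^n$ states on $n_q=\log n+\mathcal{O}(1)$ qubits with pairwise Born-rule overlap $\leq 1/4$, so that every triple from $\mathcal{F}$ is anti-distinguishable by Eq.~(\ref{eq:SO:anti-distinguishable-criterion}). For each fixed $r$, Alice's deterministic map $f_r:\mathcal{F}\to\mathcal{C}$ can assign at most \emph{two} states of $\mathcal{F}$ to any single message $c$; if three states mapped to the same $c$, Bob (knowing only $(c,r)$) could not reproduce the anti-distinguishing measurement, which must assign zero probability to one specific outcome depending on which of the three was prepared. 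Hence $|\mathcal{C}|\geq N/2$ and $n_c\geq n-1=2^{n_q+\mathcal{O}(1)}-1$.

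Your route through Thm.~\ref{thm:SO:no-overlap-large-d} fails for two independent reasons. First, that theorem relies on stochastic maps for stabiliser unitaries, but the FC-induced ontological model of Sec.~\ref{sec:SO:communication-from-ontology} is transformation-free; there is nothing forcing the preparation non-contextuality hypothesis, and without it the bound need not hold. Second, even granting the $O(1/d)$ pairwise overlap, your packing argument does not yield an exponential $|\mathcal{C}|$: small pairwise asymmetric overlap does not imply each $\mu_x$ has a disjoint ``exclusive'' region of $\mathcal{C}$ once there are exponentially many states, and you yourself flag that the numbers only support $|\mathcal{C}|=O(d)$. You do mention anti-distinguishable sub-tuples as one possible fix at the very end---that is in fact the whole argument, and it replaces rather than refines the overlap route. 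The point is purely combinatorial (a pigeonhole on Alice's encoding), not a statement about ontic overlaps.
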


\begin{proof}
This results follows quickly by borrowing a result from the study of quantum fingerprinting \cite{BuhrmanCleve+01}. In quantum fingerprinting, the aim is to relate $n$-bit classical bit strings to quantum states that have bounded Born rule overlap in such a way that $n$ scales exponentially in the number of qubits required.

The exact result required here is from Ref.~\cite[Thm.~2]{BuhrmanCleve+01}. This establishes that, for any $n\in\mathbb{Z}^+$ there is a set of $N \eqdef 2^n$ quantum states $\mathcal{F} = \{|F_x\rangle\}_x$ of $n_q = \log n + \mathcal{O}(1)$ qubits such that $|\langle F_x |F_y\rangle|^2 \leq \frac{1}{4}$ whenever $x\neq y$. It is simple to see that these Born rule overlaps ensure that every triple $\{|F_x\rangle, |F_y\rangle, |F_z\rangle\}$ of unequal states from this set is anti-distinguishable by Eq.~(\ref{eq:SO:anti-distinguishable-criterion}).

Consider an FC protocol for this system of $\log n + \mathcal{O}(1)$ qubits. For any value $r\in\mathcal{R}$, Alice can assign a maximum of two states from $\mathcal{F}$ to each message $c\in\mathcal{C}$. If she were to assign three states from $\mathcal{F}$ to the same message $c\in\mathcal{C}$ then Bob would not be able to correctly simulate the corresponding anti-distinguishable measurement. Therefore, the number of messages required for the protocol must be $|\mathcal{C}| \geq N/2$.

Since $N = 2^n$ is a power of two, the number of required classical bits is $n_c \geq \log |\mathcal{C}| = n - 1$. It therefore immediately follows that $n_c \geq 2^{n_q + \mathcal{O}(1)} - 1$.
\end{proof}

It is clear that this bound comes from an easily-understood property of quantum states. That is, there exist sets of quantum states with bounded Born-rule overlaps that are exponential in the Hilbert space dimension (that is, $\mathcal{F}$ from the theorem). It is this fact that directly prevents storing quantum states in a small number of classical bits even with shared random data.

This proof also has methodological advantages over the previous bounds found in Refs.~\cite{BrassardCleve+99,Montina11b,Montanaro16}, which will now be discussed.

\subsection{Generating Further Bounds} \label{sec:SO:further-communication-bounds}

The proof of Thm.~\ref{thm:SO:communication-bound} is very easy to understand compared to its peers, while giving a comparable bound. But more than this, careful examination of the proof reveals a general method for deriving such bounds from classical error correction codes. This suggests that the proof of Thm.~\ref{thm:SO:communication-bound} may be more important than the result itself, as simply by finding appropriate classical codes this proof will generate more precise bounds.

As noted above, the key component in the proof is a set $\mathcal{F}$ of quantum states such that $N = |\mathcal{F}|$ is exponential in the Hilbert space dimension $d$ and yet all pairs of states from $\mathcal{F}$ have bounded inner product. Existence of an appropriate such set was taken from quantum fingerprinting, specifically Ref.~\cite[Thm. 2]{BuhrmanCleve+01}.

This points to the first way this proof can generate more bounds. If one finds such a quantum fingerprinting set $\mathcal{F}$ where Born-rule overlaps are bounded $\leq\frac{1}{4}$ and with good scaling of $N$ with $d$ then this yields a bound exactly as in Thm.~\ref{thm:SO:communication-bound}. I am, however, unaware of any better sets for this purpose than those used above.

The second way that the proof can generate more bounds is perhaps more interesting. Existence of the $\mathcal{F}$ used in Thm.~\ref{thm:SO:communication-bound} is known due the existence of certain classical error-correction codes. Specifically, any code $E:\{0,1\}^n \rightarrow \{0,1\}^m$ with $m = \mathcal{O}(n)$ such that the Hamming distance between unequal code words is greater than $m/4$ yields a quantum fingerprinting set $\mathcal{F}$ with the properties required by Thm.~\ref{thm:SO:communication-bound}. The states of this set are in a $(d = m)$-dimensional Hilbert space and of the form 
\begin{equation} \label{eq:SO:fingerprinting-states}
\frac{1}{\sqrt{m}} \sum_{\alpha = 1}^m (-1)^{E_\alpha (x)}|\alpha\rangle,\quad\forall x\in\{0,1\}^n
\end{equation}
where $\{|\alpha\rangle\}_\alpha$ is an arbitrary orthonormal basis of the system and $E_\alpha (x)$ is the $\alpha$th bit of the code word $E(x)$ \cite{BuhrmanCleve+01}. Clearly, there are $2^n = \mathcal{O}(2^m) = \mathcal{O}(2^d)$ states in this set and they can be verified to have Born rule overlap $\leq \frac{1}{4}$.

Any classical error correcting code of this form (or a sufficiently similar form) will generate an exponential communication bound via the proof of Thm.~\ref{thm:SO:communication-bound}. Reference~\cite{BuhrmanCleve+01} notes an existence proof for such codes, but specific examples would give specific numerical FC protocol bounds.

This means that Thm.~\ref{thm:SO:communication-bound} has the potential to address one of the shortcomings of previous exponential bounds for FC protocols. That is, those bounds are not particularly useful in low-dimensional settings. For example, the $n_c \geq 0.293 \times 2^{n_q}$ bound only starts to exceed the trivial bound of $n_c \geq n_q$ at $d = 2^{n_q} = 16$ dimensions. Such low-dimensional cases are important for potential experiments, where it is currently unfeasible to test very large-dimensional systems.

So although Thm.~\ref{thm:SO:communication-bound} does not currently provide an exact bound in any dimension, by finding appropriate classical error-correction codes exact bounds will immediately follow. This is in marked contrast to the proofs of existing bounds. In Ref.~\cite{Montanaro16} only an asymptotic bound is given without a precise form, while in Refs.~\cite{BrassardCleve+99,Montina11b} the proofs do not appear to admit easy modification to provide tighter bounds.

\section{Summary} \label{sec:SO:summary}

This chapter tackled the ontology of quantum states directly. By considering what types of realist ontologies are possible, one is naturally led to using the ontological models framework, introduced in Sec.~\ref{sec:IN:ontological-models}. Broadly, there are two types of realist ontology for quantum states: those where the ontology is similar to the description in quantum theory and those where ontological uncertainty enables potentially more elegant ontologies. The second type is preferred by the ``epistemic realist'' [Sec.~\ref{sec:SO:desired-ontologies}] and raises the question as to what exactly can be gained through ontological uncertainty in terms of elegance and explanatory power.

The family of results that attempt to answer these questions are known as ``ontology theorems'' [Sec.~\ref{sec:SO:previous-theorems}]. Due to the restrictions from Bell's theorem and the PBR theorem, modern ontology theorems typically aim to apply to any single-system ontological model, without specifying how individuals combine into multipartite systems. Within this scope, $\psi$-epistemic models that reproduce quantum theory have been exhibited but maximally $\psi$-epistemic models have been proved to be impossible for $d>2$ dimensions. This leaves the question as to exactly how close to maximally $\psi$-epistemic ontological models can get before they need to violate quantum predictions. The ``overlap theorems'' [Sec.~\ref{sec:SO:previous-theorems}] that aim to restrict epistemic ontological models severely all share some shortcomings which appear to leave the epistemic realist plenty of room to consider ontologies that are close to maximally $\psi$-epistemic, though not exactly [Sec.~\ref{sec:SO:limitations}].

One aspect of quantum state ontology that had not been considered in detail is the ontology of superposition states. Since quantum superpositions inherit all of their properties from their underlying basis states, it is natural to ask whether their ontology can similarly only depend on ontic states accessible to the basis. This question was considered in detail in Sec.~\ref{sec:SO:superpositions-are-real}, where Thm.~\ref{thm:SO:superpositions-are-ontic} proves that, for $d>3$, almost every superposition is necessarily ontic. That is, any epistemic realist account of quantum theory must include ontic features corresponding to superposition states and the unfortunate cat cannot be put out of its misery.

By adapting the methods of Thm.~\ref{thm:SO:superpositions-are-ontic}, it was possible to obtain more general overlap ontology theorems in Sec.~\ref{sec:SO:specific-states-ontology}. By making a very mild extra non-contextuality assumption, Thms.~\ref{thm:SO:no-maximally-epistemic}, \ref{thm:SO:no-overlap-large-d} were proved---between them addressing the shortcomings of other overlap theorems noted in Sec.~\ref{sec:SO:limitations}. In making explicit use of an assumption beyond the bare ontological models framework, they are technically weaker results than some previous ontology theorems. However, the assumption is weak and also arguably natural [Sec.~\ref{sec:SO:justifying-prep-noncontextuality}]. In particular, it is much weaker than an assumption that is very often implicitly made. Between them, Thms.~\ref{thm:SO:no-maximally-epistemic}, \ref{thm:SO:no-overlap-large-d} prove that quantum theory is incompatible with very many pairs of states being maximally $\psi$-epistemic (in particular, any given quantum state cannot be individually maximally $\psi$-epistemic) and that in large-dimensional systems many pairs of states cannot have any substantial ontic overlap at all.

Theorems~\ref{thm:SO:superpositions-are-ontic}--\ref{thm:SO:no-overlap-large-d} were all based on the asymmetric overlap, introduced in detail in Sec.~\ref{sec:SO:asymmetric-overlap}. This way of quantifying ontic overlaps is easily understood and has a clear motivation, but is unfortunately intolerant to error. The problem of how to adapt these results to be robust to experimental error was therefore considered in Sec.~\ref{sec:SO:error-tolerance} and Thm.~\ref{thm:SO:small-symmetric-overlap-large-d}, which adapts the methods of Thm.~\ref{thm:SO:no-overlap-large-d} to apply instead to the error-tolerant symmetric overlap. The resulting bound is somewhat weaker, due to the mismatch between methodology and overlap measure, but gains error tolerance. It still, however, rules out any given quantum state from being maximally $\psi$-epistemic in $d>5$ dimensions and provides a bound on ontic overlaps that approaches zero in large dimensions without the quantum states needing to approach orthogonality (for small enough error).

Often, new results in the foundations of quantum theory can be used to obtain parallel results in quantum information. In Sec.~\ref{sec:SO:communication}, the question of how the methods used in this chapter might affect communication abilities was considered. The result, Thm.~\ref{thm:SO:communication-bound}, was a simple argument proving that an exponentially large classical channel is required to perfectly simulate a quantum channel, even when arbitrary pre-shared random data is available. While this result does not yet improve on the best known such bounds \cite{Montanaro16,Montina11b} (asymptotically, they are equivalent), it does provide a general recipe from which these bounds can be generated, given appropriate classical error-correction channels.

This chapter proved some powerful restrictions on the types of ontological model that can reproduce quantum predictions, paying particular attention to the reality of quantum superpositions. In the next, these ideas will be applied to the concept of macro-realism, which will also lead to an error tolerant variation on Thm.~\ref{thm:SO:superpositions-are-ontic} in Sec.~\ref{sec:MR:error-tolerant-argument}. A full discussion of the meaning and impact of these results will therefore be deferred until Sec.~\ref{sec:MR:summary-and-discussion}.

\chapter{Ontology and Macro-realism} \label{ch:MR}

Chapter~\ref{ch:SO} began to address the ontology of quantum states using ontological models. The aim was to derive ontology theorems in the tradition of Bell and PBR---exactly the task the ontological models framework was designed for (being a refinement of Bell's original approach to ``hidden variable'' models). This chapter will use those same methods to tackle the issue of macro-realism, which (despite the name) is not normally considered using ontological models. The main result of this will be a no-go theorem for macro-realism in quantum theory that closes loopholes in the original approach due to Leggett and Garg.

\section{The Meaning of Macro-realism} \label{sec:MR:intro}

The concept of macro-realism was introduced to the study of quantum theory by Leggett \& Garg alongside their eponymous inequalities \cite{LeggettGarg85}. The \emph{Leggett-Garg inequalities} (LGIs) are inequalities on observed measurement statistics that are derived by assuming a particular form of macro-realism and can be violated by measurements on quantum systems. The purpose of the LGIs is therefore to prove that quantum theory and macro-realism are incompatible. However, since its introduction the exact meaning of ``macro-realism'' has been the subject of debate \cite{Ballentine87,LeggettGarg87,Leggett88,Leggett02a,Leggett02b,KoflerBrukner13,MaroneyTimpson17}. The purpose of this section is to clarify the meaning of macro-realism, though for the sake of brevity some details will have to be omitted. A more thorough account can be found in Ref.~\cite{MaroneyTimpson17}.

\subsection{Introducing Macro-realism} \label{sec:MR:introducing-MR}

Macro-realism is an ontological position. Loosely, macro-realism is the philosophical requirement that certain ``macroscopic'' quantities always possess definite values. As such, macro-realism is a great candidate for analysis with ontological models.

By using ontological models it is possible to illuminate and classify various definitions of macro-realism precisely (Sec.~\ref{sec:MR:sub-classes}, following Ref.~\cite{MaroneyTimpson17}). This analysis will reveal some fundamental loopholes in the Leggett-Garg argument for the incompatibility of quantum theory and macro-realism [Sec.~\ref{sec:MR:loopholes}]. In particular, it will show that violation of the LGIs serves only to rule out one sub-category of macro-realist models and that there are other macro-realist models of quantum theory which are compatible with the LGIs. These loopholes are not experimental but logical; the only way to close them is to fundamentally change the argument.

The main result of this chapter is a stronger theorem for the incompatibility between quantum theory and macro-realism [Sec.~\ref{sec:MR:main-theorem}]. This theorem closes a loophole in the Leggett-Garg argument by establishing that quantum theory is incompatible with a larger subset of macro-realist models. It does not prove incompatibility of quantum theory with \emph{all} macro-realist models since that is impossible due to existing counter-examples [Sec.~\ref{sec:MR:loopholes}]. The theorem proceeds in a very different manner than the Leggett-Garg argument and is related to Thm.~\ref{thm:SO:superpositions-are-ontic}. It thereby circumvents many of the controversies of the original Leggett-Garg approach.

Macro-realism is of interest to experimentalists as well as theorists. There has been a surge of recent work on experimental verification of LGI violation \cite{WangKnee+17,KneeKakuyanagi+16,HuffmanMizel16,ZhouHuelga+15,KneeSimmons+12} and in particular on noise-tolerance and closing experimental loopholes. At face value, the main theorem presented in Sec.~\ref{sec:MR:main-theorem} will not be suitable for experimental investigation, but Sec.~\ref{sec:MR:error-tolerant-argument} will follow one route to error-tolerance for experiments. This will also enable an error-tolerant variation of Thm.~\ref{thm:SO:superpositions-are-ontic}. Further discussion of the experimental relevance of these results will be deferred until Sec.~\ref{sec:MR:summary-and-discussion}.

It should be noted that mathematically there is no meaning to the stipulation that macro-realism is about ``macroscopic'' quantities, as opposed to other physical quantities that aren't ``macroscopic''. Philosophically, however, it is easy to understand the desire for macro-realism applying to ``macroscopic'' quantities. The types of physical quantity that humans experience are all considered macroscopic and they certainly appear to possess definite values. On the other hand, it is much easier to imagine that microscopic quantities that aren't directly observed behave in radically different ways. So while there is nothing in the structure of quantum theory to pick-out ``macroscopic'' versus ``microscopic'', the motivation for considering macro-realism does come from considering macroscopic quantities, hence the name.

\subsection{Defining Macro-realism} \label{sec:MR:definition}

Exactly what is meant by ``macro-realism'' has been a subject of contention ever since its introduction alongside the LGIs. This controversy has fed into more recent work on understanding the violation of the LGIs \cite{ClementeKofler15,ClementeKofler16,Bacciagaluppi15}. In Ref.~\cite{MaroneyTimpson17}, uses of the term ``macro-realism'' are analysed and the concept is illuminated using ontological models. One result of that paper is that the ``macro-realism'' intended by Leggett and Garg, as well as many subsequent authors, can be made precise in a reasonable way with the definition:

\begin{quote}
``A macroscopically observable property with two or more distinguishable values available to it will at all times determinately possess one or other of those values.'' \cite{MaroneyTimpson17}
\end{quote}

Throughout this chapter, ``distinguishable'' will be taken to mean ``in principle perfectly distinguishable by a single measurement in the noiseless case''. Note that macro-realism is defined with respect to some specific property $Q$. A macro-realist model generally will be macro-realist for some properties and not others. This property will have values $\{q\}$ and to be ``observable'' must correspond to at least one measurement $M_{Q}$ with corresponding outcomes $E_{q}$ which faithfully reveals the underlying macro-realist value $q$.

Reference~\cite{MaroneyTimpson17} fleshes out this definition using ontological models and as a result describes three sub-categories of macro-realism. In order to discuss these it will be necessary to first define an \emph{operational eigenstate} in ontological models.

An operational eigenstate $Q_{q}$ of any value $q$ of an observable property $Q$ is a set of preparation procedures $\{P_{q}\}$. This set is defined so that immediately following any $P_{q}$ with any measurement of the quantity $Q$ will result in the outcome $E_{q}$ with certainty. In other words, an operational eigenstate is simply an extension of the concept of a quantum eigenstate to ontological models: the preparations which, when appropriately measured, always return a particular value of a particular property. Note that if two values $q,q^{\prime}$ have operational eigenstates then they can sensibly be called ``distinguishable'', since any system prepared in a corresponding operational eigenstate can be identified to have one value and not the other with certainty.

\section{The Leggett-Garg Inequalities} \label{sec:MR:LGIs}

The LGIs are inequalities on the outcomes of certain experiments. The aim of the Leggett-Garg argument is to derive them by assuming macro-realism so that if measurements in quantum theory violate these inequalities then quantum theory must be incompatible with macro-realism. Quantum theory certainly predicts measurements that violate the LGIs and so must be incompatible with at least one of the assumptions needed to derive them. Whether or not the Leggett-Garg argument proves the incompatibility of quantum theory and macro-realism therefore rests on exactly what assumptions are required to derive the LGIs.

\subsection{Outline of the Argument} \label{sec:MR:LGI-outline}

The Leggett-Garg argument has a similar structure to Bell's theorem and the LGIs themselves also bear striking resemblance to some Bell inequalities. Operationally, however, the approaches of Bell and Leggett-Garg are quite different. The LGIs only require a single system that is measured several times in sequence. A thorough and complete discussion of LGI derivations is inappropriate here. What follows is rather a sketch sufficient to give context to the remarks later in the chapter. A more extensive discussion can be found in, \emph{e.g.}, Refs.~\cite{MaroneyTimpson17,EmaryLambert+14}.

Consider measuring a two-valued property $Q$ for a system at three times $t_1 < t_2 < t_3$ in sequence. Label the outcome values $Q^{[123]}_{1,2,3} \in \{-1,+1\}$ for the first, second, and third measurements respectively. The superscript $[123]$ labels this as an experiment where measurements are performed at all three times. On any run of this experiment, it is simple to verify that $Q^{[123]}_1 Q^{[123]}_2 + Q^{[123]}_1 Q^{[123]}_3 + Q^{[123]}_2 Q^{[123]}_3$ can only equal $-1$ or $3$. Clearly, taking the average over many runs gives
\begin{equation}
-1 \leq \langle Q^{[123]}_1 Q^{[123]}_2 \rangle + \langle Q^{[123]}_1 Q^{[123]}_3 \rangle + \langle Q^{[123]}_2 Q^{[123]}_3 \rangle \leq 3.
\end{equation}

Now assume that $Q$ is a macro-realist quantity. This means that $Q$ always has some value, whether or not it is measured. The same argument can therefore also be run where no measurements are actually made. Letting $Q_{1,2,3}^\ast$ be the underlying values of $Q$ in an experiment where no measurements are made, then $Q^\ast_1 Q^\ast_2 + Q^\ast_1 Q^\ast_3 + Q^\ast_2 Q^\ast_3$ can only equal $-1$ or $3$.

Since the underlying macro-realist values of $Q$ can always be revealed faithfully by a measurement, it follows that $Q^\ast_1 = Q^{[123]}_1$. However, it does not immediately follow that $Q^\ast_2 = Q^{[123]}_2$, while it does follow that $Q^\ast_2 = Q^{[23]}_2$ (\emph{i.e.}, where no measurement is performed at $t_1$).  The reason for this is simple, the very act of measuring at $t_1$ could change the underlying value of $Q$ at $t_2$ compared to not having measured at $t_1$.  Similar comments hold for $t_3$. The underlying value of $Q$ revealed by a measurement will generally depend on whether any measurements have occurred before.

So consider a final assumption: \emph{non-invasive measurability}. Suppose that measurements of $Q$ do not affect subsequent underlying values of $Q$. Now it follows, for example, that $Q^\ast_2 = Q^{[12]}_2$ since the measurement, or not, at $t_1$ has been assumed not to affect the underlying value at $t_2$ revealed by $Q^{[12]}_2$. With this non-invasive measurability, one obtains the LGIs
\begin{equation} \label{eq:MR:LGIs}
-1 \leq \langle Q^{[12]}_1 Q^{[12]}_2 \rangle + \langle Q^{[13]}_1 Q^{[13]}_3 \rangle + \langle Q^{[23]}_2 Q^{[23]}_3 \rangle \leq 3.
\end{equation}

To recap: the LGIs of Eq.~(\ref{eq:MR:LGIs}) are inequalities on the outcomes of three different types of experiments labelled $[12]$, $[13]$, and $[23]$. For experiment $[13]$, measurements of $Q$ are made at $t_1$ and $t_3$ only and similarly for $[12]$ and $[23]$. Equation~(\ref{eq:MR:LGIs}) has been derived by assuming both macro-realism and non-invasive measurability of $Q$. There are many quantum experiments of this form that can violate Eq.~(\ref{eq:MR:LGIs}) \cite{LeggettGarg85} so quantum theory must be incompatible with either macro-realism or non-invasive measurability for $Q$.

That non-invasive measurability is required to derive LGIs has been known since their introduction \cite{LeggettGarg85}. Some go so far as to include non-invasive measurability as part of their definition of macro-realism to avoid having to deal with it explicitly, relegating definitions like that in Sec.~\ref{sec:MR:definition} to ``macro-realism \emph{per se}''. The final part of the Leggett-Garg argument has always been to contend that this non-invasive measurability is a necessary consequence of macro-realism.

While this concludes the sketch of the traditional Leggett-Garg argument, the problem of deriving non-invasive measurability from macro-realism will be returned to in Sec.~\ref{sec:MR:loopholes}.

\subsection{Sub-classes of Macro-realism} \label{sec:MR:sub-classes}

To clearly discuss exactly what is ruled out by the Leggett-Garg argument---and how it might be improved---it is necessary to understand three sub-categories of macro-realism. These were identified in Ref.~\cite{MaroneyTimpson17} by considering the definition of Sec.~\ref{sec:MR:definition} in terms of ontological models for the system. The resulting sub-categories are then categories of ontological models with particular properties.

The three sub-categories of macro-realism for some quantity $Q$ are:
\begin{enumerate}

\item \emph{Operational eigenstate mixture macro-realism (EMMR) --} The only preparations in the model are operational eigenstates of $Q$ or statistical mixtures of operational eigenstates. That is, every preparation measure can be written in the form $\nu=\sum_{q}\sum_{i}c_{q,i}\mu_{q,i}$ where each $\mu_{q,i}$ is a preparation measure for an operational eigenstate for $q$ and $\{c_{q,i}\}$ are positive reals summing to unity. Note that this means the space of ontic states $\Lambda$ need only include ontic states accessible by preparing some operational eigenstate of $Q$, as no other ontic states can ever be prepared. 

\item \emph{Operational eigenstate support macro-realism (ESMR) --} Like EMMR, every ontic state $\lambda\in\Lambda$ is accessible by preparing some operational eigenstate but, unlike EMMR, there are preparation measures in the model that are not statistical mixtures of operational eigenstate preparations for $Q$. That is, if $\Omega\in\Sigma$ satisfies $\mu_{q}(\Omega)=1$ for every operational eigenstate preparation $\mu_{q}$ of $Q$, then every preparation measure $\nu$ in the model also satisfies $\nu(\Omega)=1$. Moreover, the model has at least one preparation measure not in the mixture form required by EMMR. In other words, if you're certain to prepare an ontic state from some subset $\Omega$ when preparing operational eigenstates, then you're also certain to prepare an ontic state from $\Omega$ from any other preparation measure in the model.

\item \emph{Supra eigenstate support macro-realism (SSMR) --} Every ontic state $\lambda$ in the model will produce some specific value $q_{\lambda}$ of $Q$ when a measurement of $Q$ is made, but some of those ontic states are not accessible by preparing any operational eigenstates of $Q$. That is, for every $\lambda\in\Lambda$ there is some value $q_{\lambda}$ of $Q$ such that $\mathbb{P}_{M}(E_{q_{\lambda}}\,|\,\lambda)=1$ whenever $Q$ is measured. Moreover, there exists some $\Omega\in\Sigma$ and preparation measure $\nu$ such that $\nu(\Omega)>0$ while $\mu_{q}(\Omega)=0$ for every operational eigenstate preparation measure $\mu_{q}$.

\end{enumerate}
To help unpack these definitions, they are illustrated in Fig.~\ref{fig:MR:definitions}.

\begin{figure}
\includegraphics[width=1\columnwidth]{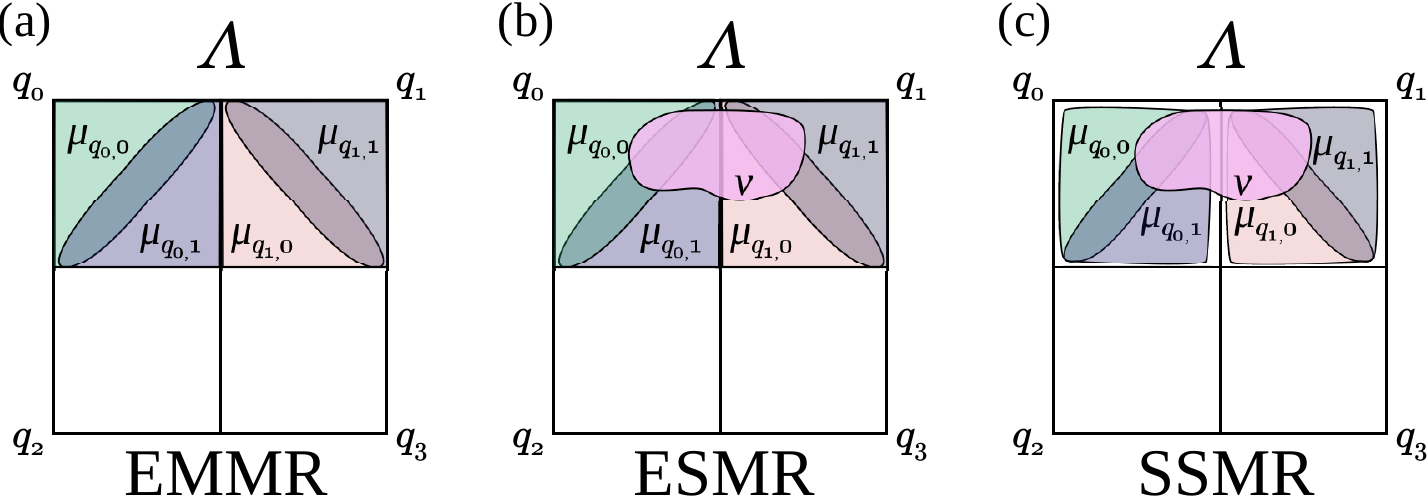}

\protect\caption{Illustration of the three sub-categories of macro-realism as defined in the text. In each case the large square represents the whole ontic state space $\Lambda$, the four smaller squares indicate those subspaces of ontic states associated with each value $q_{0..3}$ of some quantity $Q$, and the shaded regions represent those ontic states accessible by preparing some select preparation measures. Only a few preparation measures are shown while more would exist in a real model, in particular measures have been left out of the $q_2$ and $q_3$ boxes to avoid over-cluttering the figures.\protect \\
(a) illustrates EMMR, where the squares for each $q_{i}$ contain all ontic states preparable via some operational eigenstate preparation $\mu_{q_{i},j}$ and all other allowed preparation measures are simply statistical mixtures of these, \emph{e.g. $\nu=\frac{1}{3}\left(\mu_{q_{0},0}+\mu_{q_{0},1}+\mu_{q_{1},0}\right)$} is permissible.\protect \\
(b) illustrates ESMR, where the state space is exactly as in EMMR, but now more general preparation measures, such as the $\nu$ illustrated, are permitted.\protect \\
(c) illustrates SSMR, where now every $\lambda$ in the box for $q_{i}$ must produce outcome $q_{i}$ in any appropriate measurement of $Q$, but the operational eigenstates no longer fill these boxes. That is, there are ontic states that lie outside the preparations for operational eigenstates. General preparation measures over the boxes, like $\nu$, are still permitted.}
\label{fig:MR:definitions}

\end{figure}

In each of these cases, every ontic state $\lambda$ (up to possible measure-zero sets of exceptions) is associated with a specific value $q_{\lambda}$ of $Q$, such that it can be sensibly said that $\lambda$ ``possesses'' $q_{\lambda}$. This is why they are all considered types of macro-realism. Consider this for each case in turn.

In an EMMR model, every preparation can be read as a probabilistic choice between operational eigenstate preparations. Depending on which preparation is chosen, the resulting ontic state $\lambda$ therefore ``possesses'' the value $q$ for the operational eigenstate.

In an ESMR model, every ontic state $\lambda$ can be prepared by an operational eigenstate of exactly one value of $Q$ (up to measure-zero sets of exceptions). Similarly, therefore, each ontic state ``possesses'' the corresponding value of $Q$.

In SSMR models the link between each $\lambda$ and the corresponding $q_{\lambda}$ is explicit. Each $\lambda$ ``possesses'' the value $q_\lambda$ for which $\mathbb{P}_M(E_{q_\lambda}\,|\,\lambda) = 1$, as required by the definition. That is, $\lambda$ ``possesses'' the value which it must return with certainty in any appropriate measurement.

Note that these three sub-categories of macro-realism are defined such that they are mutually exclusive, but they still have a natural hierarchy to them. ESMR can be seen as a less restrictive variation on EMMR, since you can make an EMMR model into an ESMR model simply by including a single preparation measure that is not a statistical mixture of operational eigenstate preparations (the ontic state space and everything else can remain unchanged). Similarly, SSMR can be seen as a less restrictive variation on ESMR. In ESMR, every ontic state $\lambda$ can be obtained by preparing an operational eigenstate preparation for a value of $Q$. By definition of operational eigenstate it follows that a measurement of $Q$ will therefore return some specific value for each ontic state (up to measure-zero sets of exceptions), which is the primary requirement on the ontic states for SSMR.

\subsection{Macro-realism for Quantum Systems} \label{sec:MR:quantum-MR}

Just like ontological models, the concept of macro-realism is---and always should be---logically independent of quantum theory. So far this section has presented macro-realism in this general way. In order to proceed to a precise discussion of the loopholes in the Leggett-Garg argument, it is now necessary to bring macro-realism, ontological models, and quantum theory together.

To do this, consider what can count as a ``macroscopically observable'' quantity $Q$. To be observable $Q$ must correspond to some quantum measurement $M_{Q}$. Therefore, there is some orthonormal basis $\mathcal{B}_{Q}$ so that for each value $q$ of $Q$ the corresponding outcome of $M_{Q}$ is a state in $\mathcal{B}_{Q}$. In order to make sense of the definitions $Q$ must also have operational eigenstates for each value $q$ of $Q$. Fortunately this is straightforward in quantum theory: the states in $\mathcal{B}_{Q}$ are exactly the operational eigenstates of $Q$. Moreover, because the elements of $\mathcal{B}_{Q}$ are orthogonal it follows that preparations corresponding to different values $q,q^{\prime}$ of $Q$ are distinguishable.

So in quantum theory, macro-realism for quantity $Q$ simply means that there is a basis $\mathcal{B}_Q$ each element of which is an (operational) eigenstate of $Q$. This is gained simply by applying the definition of macro-realism from Sec.~\ref{sec:MR:definition} to quantum theory, the three different sub-categories of macro-realism noted in Sec.~\ref{sec:MR:sub-classes} correspondingly define three possible sub-categories of macro-realism in quantum theory.

\subsection{Loopholes in the Leggett-Garg Argument} \label{sec:MR:loopholes}

The aim of the LGIs has always been to rule out macro-realist ontologies for quantum theory, where the inequalities are violated. However, in light of the above precise definition of macro-realism some loopholes in the argument can be identified. These loopholes all have to do with the attempt to derive non-invasive measurability from macro-realism as noted in Sec.~\ref{sec:MR:LGI-outline}.

The first loophole is that violation of the LGIs cannot rule out SSMR models of quantum systems. Indeed, no argument that rests on compatibility with quantum predictions can completely rule out SSMR models since there exists a well-known SSMR model for quantum systems that reproduces all quantum predictions: Bohmian mechanics \cite{Bohm52a,Bohm52b,deBroglie27}.

To see that Bohmian mechanics implies an SSMR ontological model consider, for example, the Bohmian description of a single spinless point particle in three-dimensional space (the argument for more general systems is analogous). Bohmian mechanics has the ontic state as a pair $\lambda=(\vec{r},|\psi\rangle)\in\mathbb{R}^{3}\times\mathcal{P}(\mathcal{H})$ where $\vec{r}$ is the actual position of the particle and $|\psi\rangle$ is the quantum state (or ``pilot wave''). Note that the quantum state is part of the ontology here. The ``macroscopically observable property'' is the position of the particle, $\vec{r}$, and any sharp measurement of position will reveal the true value of $\vec{r}$ with certainty. Thus, for any ontic state $\lambda$ there is some value of the macroscopically observable property (that is, $\vec{r}$) which is obtained with certainty from any appropriate measurement. Thus, Bohmian mechanics provides an SSMR ontological model. 

The second loophole is that LGI violation is unable to rule out ESMR ontological models. This also has a counter-example in the form of the Kochen-Specker model for the qubit \cite{KochenSpecker67}, which is an ontological model satisfying ESMR\footnote{Strictly speaking, the Kochen-Specker model was not defined with a post-measurement update rule and so cannot deal with sequences of measurements (and therefore Leggett-Garg experiments). However, it is simple to append the obvious update rule ``prepare a new state corresponding to the measurement outcome'' and this fixes the issue.}. The Kochen-Specker model exactly reproduces quantum predictions for $d=2$ dimensional systems. As the LGIs can be written in $d=2$ the Kochen-Specker model must therefore violate them.

The key point is \emph{why} these counter-examples evade the Leggett-Garg argument. As discussed in Sec.~\ref{sec:MR:LGI-outline}, to derive the LGIs one needs to assume non-invasive measurability. If it is not possible to derive or demonstrate non-invasive measurability as a consequence of some type of macro-realism, then it is not possible to derive the LGIs from that type of macro-realism and the Leggett-Garg argument does not apply. It turns out that there is no way to do this by assuming either SSMR or ESMR and therefore Leggett-Garg arguments generically have loopholes for these types of macro-realism \cite{MaroneyTimpson17}. Bohmian mechanics and the Kochen-Specker model are both examples: they contain measurement disturbance that violates the non-invasive measurability assumption, while still satisfying SSMR and ESMR respectively. So the crux is that both SSMR and ESMR models can include measurements that don't disturb the distribution over $\Lambda$ if the system is prepared in an operational eigenstate, but still disturb the distribution over $\Lambda$ for systems prepared in other ways.

On the other hand, EMMR requires that all preparations are statistical mixtures of operational eigenstates. If one can demonstrate experimentally that operational eigenstates are not disturbed by some measurement, then no preparations can be disturbed by that measurement in an EMMR model. It is this which prevents EMMR models from violating the LGIs. A more extensive discussion of this point can be found in Ref.~\cite{MaroneyTimpson17}.

Recent experiments \cite{KneeKakuyanagi+16,HuffmanMizel16} following Ref.~\cite{WildeMizel12} have sought to address the ``clumsiness'' loophole in the Leggett-Garg argument by dropping the assumption of non-invasive measurability, replacing it with control experiments that experimentally serve the same purpose. These approaches follow the Leggett-Garg argument quite closely and show that the disturbance on a general preparation cannot be explained in terms of the disturbances on a statistical mixture of operational eigenstates. As a result, they are still only capable of ruling out EMMR models, as noted in Ref.~\cite{KneeKakuyanagi+16}.

So the Leggett-Garg proof only rules out EMMR macro-realism and leaves loopholes for SSMR and ESMR. Moreover, the loophole for SSMR models cannot be fully plugged by any proof because Bohmian mechanics exists as a counter-example. Similarly, the loophole for ESMR cannot be fully plugged in $d=2$ dimensions since the Kochen-Specker model exists as a counter-example. This leaves a clear question: can the ESMR loophole can be closed by another theorem for any $d>2$? Answering this question requires a different approach to the Leggett-Garg argument, one that doesn't use any assumptions about measurement disturbance.

\section{A Stronger Theorem Against Macro-realism} \label{sec:MR:main-theorem}

The Leggett-Garg argument, in its usual form, does not make use of anything like ontological models. Macro-realism is, however, an ontological position. The success of ontological models in investigating other such positions (such as excess baggage, locality, and $\psi$-ontology) suggests that it may also be useful when investigating macro-realism. Moreover, the three sub-categories of macro-realism of Sec.~\ref{sec:MR:sub-classes} were identified in Ref.~\cite{MaroneyTimpson17} by using an ontological model approach. Therefore, this section will take the ontological techniques developed in Chap.~\ref{ch:SO} and apply them to macro-realism to obtain a no-go result in quantum theory that is stronger than the traditional Leggett-Garg argument.

\subsection{Theorem}

As discussed in Sec.~\ref{sec:MR:loopholes}, the Leggett-Garg argument can rule out EMMR macro-realist models for quantities $Q$ with $n>1$ distinguishable values. The following theorem improves on this by ruling out both EMMR and ESMR macro-realist models for $n>3$.

\begin{theorem} \label{thm:MR:no-ESMR-EMMR}
Quantum theory is incompatible with ESMR or EMMR macro-realist models for quantities $Q$ with $n>3$ distinguishable values.
\end{theorem}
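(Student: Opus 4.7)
The plan is to reduce Theorem~\ref{thm:MR:no-ESMR-EMMR} to Theorem~\ref{thm:SO:superpositions-are-ontic} by showing that both ESMR and EMMR force every preparation measure in the model to meet the epistemic-superposition condition (\ref{eq:SO:epistemic-superposition}) with respect to $\mathcal{B}_Q$. Since the $n$ distinguishable values of $Q$ correspond to a basis measurement in $\mathcal{B}_Q$, the Hilbert space has $d \geq n > 3$ dimensions, so Theorem~\ref{thm:SO:superpositions-are-ontic} applies and contradicts the claim that every superposition over $\mathcal{B}_Q$ is epistemic.

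First I would show that in any ESMR model for $Q$, every preparation measure $\nu$ satisfies $\varpi(\mathcal{B}_Q\,|\,\nu)=1$. Unpacking Eq.~(\ref{eq:SO:asymmetric-multipartite-definition}), this asymmetric overlap is the infimum of $\nu(\Omega)$ over all $\Omega\in\Sigma$ with $\mu_q(\Omega)=1$ for every $\mu_q\in\Delta_{|q\rangle}$ and every $|q\rangle\in\mathcal{B}_Q$. In quantum theory these are exactly the operational eigenstate preparations of $Q$, so by the defining clause of ESMR every $\nu$ assigns such $\Omega$ unit measure, forcing $\varpi(\mathcal{B}_Q\,|\,\nu)=1$. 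The EMMR case is even more immediate: if $\nu=\sum_{q,i}c_{q,i}\mu_{q,i}$ is a convex combination of operational eigenstate preparations then $\nu(\Omega)=\sum_{q,i}c_{q,i}\mu_{q,i}(\Omega)=1$ on any such $\Omega$, again giving $\varpi(\mathcal{B}_Q\,|\,\nu)=1$.

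Next, any $|\psi\rangle\not\in\mathcal{B}_Q$ that is a superposition with respect to $\mathcal{B}_Q$ must still admit a preparation in a model reproducing quantum statistics, and the above then forces every $\mu\in\Delta_{|\psi\rangle}$ to satisfy $\varpi(\mathcal{B}_Q\,|\,\mu)=1$. Comparing with Eq.~(\ref{eq:SO:epistemic-superposition}), every such superposition is therefore epistemic with respect to $\mathcal{B}_Q$. But $d\geq n>3$ places us in the regime of Theorem~\ref{thm:SO:superpositions-are-ontic}, which forces almost every superposition with respect to $\mathcal{B}_Q$ to instead be ontic. This is the desired contradiction, ruling out both ESMR and EMMR models of the quantum system for quantities $Q$ with $n>3$ distinguishable values.

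The main obstacle is really just the faithful translation of the ESMR support clause into the language of asymmetric overlaps; once the identification $\varpi(\mathcal{B}_Q\,|\,\nu)=1$ is made, the contradiction with Theorem~\ref{thm:SO:superpositions-are-ontic} is automatic and no further dimensional or anti-distinguishability arguments need to be reproduced. A minor subtlety is that I use $\mathcal{B}_Q$ itself as the set of operational eigenstates of $Q$, which is the natural reading in quantum theory when the $n$ distinguishable values of $Q$ correspond to the outcomes of a basis measurement.
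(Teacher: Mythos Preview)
Your proposal is correct and follows essentially the same route as the paper: both ESMR and EMMR force $\varpi(\mathcal{B}_Q\,|\,\mu)=1$ for every preparation, making every superposition over $\mathcal{B}_Q$ epistemic, which then contradicts Theorem~\ref{thm:SO:superpositions-are-ontic} since $d\geq n>3$. Your separate, explicit verification of the ESMR and EMMR cases via the definition of $\varpi(\mathcal{B}_Q\,|\,\cdot)$ is slightly more detailed than the paper's one-line claim, but the logic is identical.
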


\begin{proof}
In ESMR and EMMR ontological models, every ontic state in the model can be obtained through preparing some operational eigenstate of $Q$. From this it follows that, when preparing any quantum state $|\psi\rangle$ through any $\mu\in\Delta_{|\psi\rangle}$, the probability of getting an ontic state the could have been prepared by preparing a state in $\mathcal{B}_Q$ is unity. Mathematically, $\varpi(\mathcal{B}_Q\,|\,\mu) = 1$. Comparing this to Eq.~(\ref{eq:SO:epistemic-superposition}) it is seen that every $|\psi\rangle\not\in\mathcal{B}_Q$ must therefore be an epistemic superposition over $\mathcal{B}_Q$.

Clearly, if $Q$ has $n>3$ distinguishable values then $d = |\mathcal{B}_Q| > 3$. So by using Thm.~\ref{thm:SO:superpositions-are-ontic}, one reaches a contradiction.

The assumptions that went in to reaching this contradiction were:
\begin{enumerate}[label=(\alph*)]
\item the ontology satisfies ESMR or EMMR;

\item the ``macroscopically observable quantity'' $Q$ has $n>3$ distinguishable values;
 
\item the states preparations and measurements used in the proof of Thm.~\ref{thm:SO:superpositions-are-ontic} are possible; and

\item the ontological model reproduces quantum measurement predictions, as in Eq.~(\ref{eq:IN:ontological-model-quantum-probability}).
\end{enumerate}
Assumptions (a--b) are about the underlying ontological model, whereas assumptions (c--d) are implications of standard quantum theory. The conclusion must therefore be that both ESMR and EMMR ontologies for $n>3$ are impossible, or quantum theory is incorrect.
\end{proof}

Since the Leggett-Garg argument is only able to prove incompatibility between EMMR and quantum theory, this is a strict improvement for $n>3$.

\subsection{Relation to Other Ontology Results} \label{sec:MR:relation-between-results}

The proof of this stronger no-go theorem directly used Thm.~\ref{thm:SO:superpositions-are-ontic} to get its result. Some comments are therefore in order about the exact relationship between Thms.~\ref{thm:SO:superpositions-are-ontic}, \ref{thm:MR:no-ESMR-EMMR}.

The obvious difference between the two is in the quantifiers. To prove Thm.~\ref{thm:SO:superpositions-are-ontic} a contradiction is required for almost all superposition states. Comparing this to the proof of Thm.~\ref{thm:MR:no-ESMR-EMMR}, one sees that the latter only requires a contradiction for a single superposition state. From this point of view, Thm.~\ref{thm:MR:no-ESMR-EMMR} is weaker than Thm.~\ref{thm:SO:superpositions-are-ontic} which is why the latter is used to prove the former rather than the other way around. Indeed, once it has been established that ESMR/EMMR ontologies imply epistemic superpositions, it is almost a corollary.

Clearly, the two theorems, in their current forms, are getting at a similar point. It even appears that ESMR/EMMR macro-realism might be simply be subsumed under the banner of epistemic superpositions. But there are also divergences. Foremost of these is that ``macro-realism'' can (and should) be defined independently from quantum theory (as in Secs.~\ref{sec:MR:definition}--\ref{sec:MR:sub-classes}) while the very quantum notion of ``superpositions'' is required to even talk about ``epistemic superpositions''. The two concepts are of very different character.

This difference in character becomes very important when considering extensions beyond Thms.~\ref{thm:SO:superpositions-are-ontic}, \ref{thm:MR:no-ESMR-EMMR}, especially to statements amenable to experimental tests. Both results in their current forms concentrate on properties of pure states. This is entirely natural for Thm.~\ref{thm:SO:superpositions-are-ontic}, since superpositions are pure states. But as macro-realism is defined independently from quantum theory, it would be appropriate to extend Thm.~\ref{thm:MR:no-ESMR-EMMR} using mixed state preparations and more general quantum measurements. More discussion on possible appropriate extensions to each of the theorems will be in Sec.~\ref{sec:MR:summary-and-discussion}.

Neither of these theorems are intended to be final, but rather starting points that point to new lines of research. The most obvious next direction is establishing error-tolerant variations that allow for experimental investigation. One method for doing this will follow in the next section.

\section{An Error-Tolerant Argument} \label{sec:MR:error-tolerant-argument}

Theorems~\ref{thm:SO:superpositions-are-ontic}, \ref{thm:MR:no-ESMR-EMMR} are the main results of their respective chapters. However, both proofs rely on requiring ontological models to exactly reproduce quantum theory, that is Eq.~(\ref{eq:IN:ontological-model-quantum-probability}). This means that, in their current form, neither is amenable to experimental investigation. There is always noise and error in experimental results. Experimental results against ESMR/EMMR or epistemic superpositions would carry much more weight, so it is prudent to seek error-tolerant variations of these theorems.

In Sec.~\ref{sec:SO:error-tolerance} an error-tolerant variation on Thm.~\ref{thm:SO:no-overlap-large-d} was presented and used as a proof-of-concept for error-tolerance of the other theorems of Chap.~\ref{ch:SO}. This is a valuable result (and provides scope for experimental tests in its own right) but by too closely following the same proof strategy it does not address the core hurdle to error tolerance for the main results.

As noted in Sec.~\ref{sec:SO:error-tolerance}, the primary reason that these results are not error-tolerant is that they are based on the asymmetric overlap, which becomes useless when there is finite error in reproducing quantum statistics. Worse, the mathematical formalisation of ``epistemic superposition'', Eq.~(\ref{eq:SO:epistemic-superposition}), is the key fact used in proofs of both Thms.~\ref{thm:SO:superpositions-are-ontic} and \ref{thm:MR:no-ESMR-EMMR} and is itself in terms of the asymmetric overlap. Thus, to properly find error-tolerant variations on Thms.~\ref{thm:SO:superpositions-are-ontic} and \ref{thm:MR:no-ESMR-EMMR}, this formalisation needs to be generalised to be error-tolerant.

\subsection{Error-Tolerance and the \texorpdfstring{$\epsilon$}{epsilon}-Asymmetric Overlap} \label{sec:MR:error-tolerant-superpositions-and-mr}

In order to generalise the formalisation of epistemic superpositions and ESMR/EMMR macro-realism, it is necessary to first generalise the asymmetric overlap.

Consider the definitions of the asymmetric overlap in Eqs.~(\ref{eq:SO:asymmetric-definition}, \ref{eq:SO:asymmetric-state-definition}, \ref{eq:SO:asymmetric-multipartite-definition}). Each definition extremises over measurable subsets $\Omega\in\Sigma$ that satisfy $\nu(\Omega)=1$ for some preparation measure(s) $\nu$. This condition can be relaxed by considering \emph{$\epsilon$-typical subsets}. For any $\epsilon\in[0,1)$, $\Omega\in\Sigma$ is $\epsilon$-typical for preparation measure $\nu$ if and only if $\nu(\Omega)>1-\epsilon$. That is, the probability of $\nu$ preparing an ontic state \emph{outside} any given $\epsilon$-typical subset $\Omega$ is $\epsilon$. Despite the name, a given $\epsilon$-typical subset $\Omega$ does not identify the ontic states inside it as typical as such, but rather it identifies the ontic states \emph{outside} it as \emph{atypical} (where (a)typicality is quantified by $\epsilon$).

$\epsilon$-typical subsets can be used to generalise the asymmetric overlap $\varpi$ to the \emph{$\epsilon$-asymmetric overlap} $\varpi_\epsilon$, defined for preparation measures as
\begin{equation}
\varpi_\epsilon (\nu\,|\,\mu) \eqdef \inf\{ \mu(\Omega)\,:\,\Omega\in\Sigma,\;\nu(\Omega)>1-\epsilon\}.
\end{equation}
That is, $\varpi_\epsilon(\nu\,|\,\mu)$ is a lower bound probability that $\mu$ will produce an ontic state that is in some $\epsilon$-typical subset for $\nu$.  Clearly, by taking $\epsilon = 0$ the $0$-typical asymmetric overlap is the asymmetric overlap of Eq.~(\ref{eq:SO:asymmetric-definition}).

How should $\varpi_\epsilon$ be interpreted? Recalling that all ontic states outside an $\epsilon$-typical subset $\Omega$ are atypical for $\nu$, one sees that $1 - \mu(\Omega)$ is a probability that $\mu$ will produce an ontic state that is definitely atypical for $\nu$. Therefore, $1 - \varpi_\epsilon (\nu\,|\,\mu)$ is the upper bound probability that $\mu$ will produce an ontic state that is definitely atypical for $\nu$. This double-negative interpretation of the $\epsilon$-asymmetric overlap is the precise one, but is also clumsy. In effect, it says that $\varpi_\epsilon (\nu\,|\,\mu)$ is the probability that $\mu$ will produce an ontic state that is typical for $\nu$ (where typicality is quantified by $\epsilon$). However, the nature of measure theory means that this must be understood as a double-negative.

When describing a quantum system, Eq.~(\ref{eq:SO:asymmetric-state-definition}) defines the ordinary asymmetric overlap with a quantum state. The same can be done for the $\epsilon$-asymmetric overlap:
\begin{equation}
\varpi_\epsilon (|\phi\rangle\,|\,\mu) \eqdef \inf\{ \mu(\Omega)\,:\,\Omega\in\Sigma,\;\nu(\Omega)>1-\epsilon,\;\forall\nu\in\Delta_{|\phi\rangle}\}.
\end{equation}
Finally, the definition can also be extended to overlapping with a \emph{set} of quantum states $\mathcal{S}\subseteq\mathcal{P}(\mathcal{H})$, as in Eq.~(\ref{eq:SO:asymmetric-multipartite-definition}),
\begin{equation}
\varpi_\epsilon (\mathcal{S}\,|\,\mu)\eqdef\inf\left\{ \mu(\Omega)\;:\;\Omega\in\Sigma,\;\nu(\Omega)>1-\epsilon,\;\forall\nu\in\Delta_{|\phi\rangle},\;\forall|\phi\rangle\in\mathcal{S}\right\}.
\end{equation}

Now defined, the $\epsilon$-asymmetric overlap can be used to identify an error-tolerant formalisation of epistemic superpositions.

As discussed in Sec.~\ref{sec:SO:ontic-superpositions}, an epistemic superposition is one that is just a statistical effect, one where the ontology is all accounted for by the underlying basis. That is, somebody who believes $|\psi\rangle$ is an epistemic superposition over $\mathcal{B}$ should believe that any ontic state obtained by preparing $|\psi\rangle$ could also be obtained by preparing some state from $\mathcal{B}$. In the case of finite error this is clearly an inconsequential belief as there is always some finite probability of preparing any outlandish ontic state in any preparation. An appropriate modified belief would be that any preparation of $|\psi\rangle$ will, with high probability, produce an ontic state that is \emph{typical} for at least one state in $\mathcal{B}$. That is, the vast majority of the time the resulting ontic state will be one that can typically be obtained by at least one state in $\mathcal{B}$. Mathematically, this can be characterised with the $\epsilon$-asymmetric overlap:
\begin{equation} \label{eq:MR:typical-epistemic-superposition}
\varpi_\eta ( \mathcal{B}\,|\,\mu ) > 1 - \tau, \quad \forall\mu\in\Delta_{|\psi\rangle}
\end{equation}
for small values $\eta$ and $\tau$ which quantify the strength of the belief. Clearly, taking $\tau=\eta=0$ this reduces to the noiseless case of Eq.~(\ref{eq:SO:epistemic-superposition}).

A similar argument can be made for ESMR/EMMR macro-realism for a quantity $Q$, generalising the proof of Thm.~\ref{thm:MR:no-ESMR-EMMR}. The ESMR/EMMR macro-realist believes that every ontic state preparable by any $|\psi\rangle\in\mathcal{P}(\mathcal{H})$ can also be obtained by preparing a state from a basis $\mathcal{B}_Q$. In the case of finite error this is a trivial belief, since there's a finite probability of preparing any ontic state from any quantum state. A reasonable macro-realist would therefore rather believe that when preparing any $|\psi\rangle$ there's a high probability that the ontic state obtained will be \emph{typical} for at least one state in $\mathcal{B}_Q$. Mathematically, this implies Eq.~(\ref{eq:MR:typical-epistemic-superposition}) again, with $\mathcal{B} = \mathcal{B}_Q$.
	
\subsection{Properties of the \texorpdfstring{$\epsilon$}{epsilon}-Asymmetric Overlap}

Just as with the asymmetric overlap, it will be necessary to prove some properties of the $\epsilon$-asymmetric overlap before proceeding to the main theorem. The properties here are mostly generalisations of those proved in Sec.~\ref{sec:SO:asymmetric-properties}. The reader may find it easier to first read the theorem in the next section, referring back here only when the properties are referenced.

For notational convenience, it is useful to generalise Def.~\ref{def:SO:k-bar} in the following way.

\begin{definition} \label{def:MR:k-bar-e}
For any measurable function $g:\Lambda\rightarrow[0,1]$ let 
\begin{equation}
\k_\epsilon (g) \eqdef \{ \lambda \in\Lambda\, :\, g(\lambda)\geq 1 - \epsilon \} \in\Sigma
\end{equation}
for any $\epsilon\in[0,1)$.
\end{definition}

Next, this lemma generalises Lem.~\ref{lem:SO:measurable} and will similarly provide a link between integrals over a measure and measurable subsets.

\begin{lemma} \label{lem:MR:meas-function-to-subset}
Given any measurable function $f:\Lambda\rightarrow[0,1]$ and preparation measure $\nu$ satisfying $\int_\Lambda \d\nu(\lambda)f(\lambda) \geq 1 - \delta$ then 
\begin{equation}
\nu(\k_\kappa(f)) > 1 - \frac{\delta}{\kappa}
\end{equation}
for any $0 \leq \delta \leq \kappa \leq 1$.
\end{lemma}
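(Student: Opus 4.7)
The plan is to mimic the style of Lem.~\ref{lem:SO:measurable}: split the integral of $f$ against $\nu$ into contributions from $\k_\kappa(f)$ and its complement, bound each, and solve the resulting linear inequality for $\nu(\k_\kappa(f))$. This is essentially a Markov-type argument adapted to the error-tolerant setting.

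First I would write
\begin{equation}
\int_{\Lambda} \d\nu(\lambda)\, f(\lambda) = \int_{\k_\kappa(f)} \d\nu(\lambda)\, f(\lambda) + \int_{\Lambda\setminus\k_\kappa(f)} \d\nu(\lambda)\, f(\lambda).
\end{equation}
On $\k_\kappa(f)$ the integrand satisfies $f(\lambda) \leq 1$, so the first term is at most $\nu(\k_\kappa(f))$. By the very definition of $\k_\kappa(f)$, every $\lambda \in \Lambda\setminus\k_\kappa(f)$ satisfies $f(\lambda) < 1-\kappa$, so the second term is strictly less than $(1-\kappa)\bigl(1 - \nu(\k_\kappa(f))\bigr)$ whenever $\nu(\Lambda\setminus\k_\kappa(f)) > 0$ (and is trivially zero otherwise).

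Combining these bounds with the hypothesis $\int_\Lambda \d\nu(\lambda)\, f(\lambda) \geq 1-\delta$ would then give
\begin{equation}
1 - \delta \;<\; \nu(\k_\kappa(f)) + (1-\kappa)\bigl(1 - \nu(\k_\kappa(f))\bigr) \;=\; (1-\kappa) + \kappa\,\nu(\k_\kappa(f)),
\end{equation}
which rearranges (using $\kappa > 0$, guaranteed since $\delta \leq \kappa$ and the non-trivial case has $\delta > 0$) to the desired $\nu(\k_\kappa(f)) > 1 - \delta/\kappa$.

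The only subtle step is the strictness of the inequality, which I expect to be the main obstacle in making the argument airtight. If $\nu(\Lambda\setminus\k_\kappa(f)) = 0$ then $\nu(\k_\kappa(f)) = 1$, and the claim $1 > 1 - \delta/\kappa$ must still be verified; this reduces to the harmless observation that the conclusion is only needed when $\delta/\kappa$ is strictly positive, otherwise no $\epsilon$-typical subset can be extracted from the hypothesis at all. Otherwise $\nu(\Lambda\setminus\k_\kappa(f)) > 0$ and the pointwise strict inequality $f(\lambda) < 1-\kappa$ on that set yields a strict integral inequality in the standard way, handing us the result.
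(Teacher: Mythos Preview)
Your approach is essentially identical to the paper's: split the integral over $\k_\kappa(f)$ and its complement, bound the first piece by $\nu(\k_\kappa(f))$ and the second strictly by $(1-\kappa)\nu(\Lambda\setminus\k_\kappa(f))$, then rearrange. If anything, you are more careful than the paper, which passes silently over the strictness issue when $\nu(\Lambda\setminus\k_\kappa(f))=0$ that you flag explicitly.
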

\begin{proof}
For any $\kappa\in[0,1)$ then by assumption
\begin{equation}
1 - \delta \leq \int_{\k_\kappa(f)} \d\nu(\lambda)\,f(\lambda) + \int_{\Lambda\setminus\k_\kappa(f)} \d\nu(\lambda)\,f(\lambda).
\end{equation}
Considering the first term, clearly $\int_{\k_\kappa(f)}\d\nu(\lambda)\,f(\lambda) \leq \nu(\k_\kappa(f))$ as $f(\lambda)\leq 1$. For the second term, note $f(\lambda) < 1 - \kappa$ when $\lambda\in\Lambda\setminus\k_\kappa(f)$, giving
\begin{equation}
1 - \delta < \nu(\k_\kappa(f)) + (1 - \kappa)\nu(\Lambda\setminus\k_\kappa(f))
\end{equation}
which, recalling $\nu(\Lambda)=1$, implies the desired result.
\end{proof}

This first property does not have an analogue for the asymmetric overlap. It simply notes that some $\epsilon$-asymmetric overlaps are more restrictive than others (depending on the values of $\epsilon$).

\begin{lemma} \label{lem:MR:e-asymmetric-less-restrictive}
For any set $\mathcal{S}$ of quantum states and any preparation measure $\mu$,
\begin{equation}
\varpi_p (\mathcal{S}\,|\,\mu) \geq \varpi_q (\mathcal{S}\,|\,\mu)
\end{equation}
for any $0 \leq p \leq q < 1$.
\end{lemma}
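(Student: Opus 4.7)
The plan is to prove this directly from the definition of $\varpi_\epsilon$, since the claim is essentially a monotonicity statement about an infimum taken over a family of sets indexed by $\epsilon$. There is no real obstacle here; the result should follow in a few lines and serves mainly as a bookkeeping lemma for later arguments.

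First I would write down the definition
\begin{equation}
\varpi_\epsilon(\mathcal{S}\,|\,\mu) \eqdef \inf\left\{ \mu(\Omega) \;:\; \Omega\in\Sigma,\; \nu(\Omega) > 1-\epsilon,\; \forall\nu\in\Delta_{|\phi\rangle},\;\forall|\phi\rangle\in\mathcal{S} \right\}
\end{equation}
and define, for each $\epsilon\in[0,1)$, the family of admissible subsets
\begin{equation}
\mathcal{A}_\epsilon \eqdef \left\{ \Omega\in\Sigma \;:\; \nu(\Omega) > 1-\epsilon,\; \forall\nu\in\Delta_{|\phi\rangle},\;\forall|\phi\rangle\in\mathcal{S} \right\},
\end{equation}
so that $\varpi_\epsilon(\mathcal{S}\,|\,\mu) = \inf\{\mu(\Omega) : \Omega\in\mathcal{A}_\epsilon\}$.

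Next I would observe the key monotonicity: whenever $0\leq p \leq q < 1$, one has $1-p \geq 1-q$, so any $\Omega$ satisfying $\nu(\Omega) > 1-p$ automatically satisfies $\nu(\Omega) > 1-q$. Hence $\mathcal{A}_p \subseteq \mathcal{A}_q$. Taking the infimum of $\mu(\Omega)$ over the smaller set $\mathcal{A}_p$ can only be greater than or equal to the infimum over the larger set $\mathcal{A}_q$, yielding $\varpi_p(\mathcal{S}\,|\,\mu) \geq \varpi_q(\mathcal{S}\,|\,\mu)$ as desired. The same argument applies verbatim to the preparation-measure and single-state versions of the $\epsilon$-asymmetric overlap, since they differ only in the conditions placed on admissible $\Omega$ but share the monotone structure in $\epsilon$.
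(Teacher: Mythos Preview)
Your proof is correct and follows the same approach as the paper's: both show that every $p$-typical subset is automatically $q$-typical when $p\leq q$, so the infimum defining $\varpi_p$ is taken over a smaller family and is therefore at least as large.
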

\begin{proof}
For any $\Omega_p$ that is $p$-typical for $\mathcal{S}$ then
\begin{equation}
\nu_i(\Omega_p) > 1 - p \geq 1 - q, \quad \forall\nu_i\in\Delta_{|i\rangle},\;\forall|i\rangle\in\mathcal{S}
\end{equation}
so $\Omega_p$ is also $q$-typical for $\mathcal{S}$ and by definition $\varpi_p (\mathcal{S}\,|\,\mu) \geq \varpi_q (\mathcal{S}\,|\,\mu)$.
\end{proof}

With this technical background, the following property generalises Lem.~\ref{lem:SO:asymmetric-basic-bound}, showing how measurement probabilities simply bound the $\epsilon$-asymmetric overlap.

\begin{lemma} \label{lem:MR:e-asymmetric-to-probability}
For any state $|\phi\rangle\in\mathcal{P}(\mathcal{H})$ and measurement $M$ with $|\phi\rangle$ as an outcome, if $\mathbb{P}_M(|\phi\rangle\,|\,\nu) \geq 1 - \sigma$ for some $\sigma\in[0,1)$ and all $\nu\in\Delta_{|\phi\rangle}$ then 
\begin{equation}
\varpi_\eta (|\phi\rangle\,|\,\mu) \leq \frac{\mathbb{P}_M(|\phi\rangle\,|\,\mu)}{1 - \sigma/\eta},\quad \forall\eta\in(\sigma,1)
\end{equation}
for any preparation measure $\mu$, where $\mathbb{P}_M(|\phi\rangle\,|\,\mu)$ is obtained from the ontological model via Eq.~(\ref{eq:IN:ontological-model-probability}).
\end{lemma}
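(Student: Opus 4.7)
The plan is to mimic the structure of the proof of Lem.~\ref{lem:SO:asymmetric-basic-bound}, but using the tolerance parameters provided by $\varpi_\eta$ and the generalised technical lemma Lem.~\ref{lem:MR:meas-function-to-subset}. First, I would set $g(\lambda) \eqdef \mathbb{P}_M(|\phi\rangle\,|\,\lambda)$, which is a measurable function $\Lambda\rightarrow[0,1]$. The hypothesis on $M$ says exactly that, for every $\nu\in\Delta_{|\phi\rangle}$, Eq.~(\ref{eq:IN:ontological-model-probability}) gives $\int_\Lambda \d\nu(\lambda)\,g(\lambda) \geq 1-\sigma$.

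Next I would exhibit an explicit $\eta$-typical subset for $|\phi\rangle$ of the form $\Omega \eqdef \k_{\sigma/\eta}(g)$. The key observation is the right choice of the level parameter in Def.~\ref{def:MR:k-bar-e}: applying Lem.~\ref{lem:MR:meas-function-to-subset} with $\delta = \sigma$ and $\kappa = \sigma/\eta$ (valid because $\sigma\leq\sigma/\eta\leq 1$ whenever $\eta\in(\sigma,1)$) yields
\begin{equation}
\nu(\k_{\sigma/\eta}(g)) > 1 - \frac{\sigma}{\sigma/\eta} = 1 - \eta,\quad \forall\nu\in\Delta_{|\phi\rangle}.
\end{equation}
Hence $\Omega$ is $\eta$-typical for $|\phi\rangle$, and by the definition of $\varpi_\eta$ it follows that $\varpi_\eta(|\phi\rangle\,|\,\mu) \leq \mu(\Omega)$.

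The final step is to bound $\mu(\Omega)$ in terms of the measurement probability under $\mu$. Since $g(\lambda)\geq 1 - \sigma/\eta$ for every $\lambda\in\Omega$, restricting the integral to $\Omega$ yields
\begin{equation}
\mathbb{P}_M(|\phi\rangle\,|\,\mu) = \int_\Lambda \d\mu(\lambda)\,g(\lambda) \geq \int_\Omega \d\mu(\lambda)\,g(\lambda) \geq \left(1 - \frac{\sigma}{\eta}\right)\mu(\Omega).
\end{equation}
Rearranging (permitted because $\eta>\sigma$ makes $1-\sigma/\eta$ strictly positive) and chaining with the previous inequality produces the claimed bound.

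I do not expect any genuine obstacle here; the result is essentially Lem.~\ref{lem:SO:asymmetric-basic-bound} with the small-error slack absorbed by the parameter choice $\kappa=\sigma/\eta$. The only subtle point, and the one that deserves care in writing, is ensuring that this choice lies in the valid range $[\delta,1]$ required by Lem.~\ref{lem:MR:meas-function-to-subset} and that the denominator $1-\sigma/\eta$ remains strictly positive — both of which are secured precisely by the hypothesis $\eta\in(\sigma,1)$.
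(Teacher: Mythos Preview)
Your proposal is correct and essentially identical to the paper's own proof: both define $\Omega=\k_{\sigma/\eta}(g)$ for $g(\lambda)=\mathbb{P}_M(|\phi\rangle\,|\,\lambda)$, invoke Lem.~\ref{lem:MR:meas-function-to-subset} with $\kappa=\sigma/\eta$ to show $\Omega$ is $\eta$-typical for $|\phi\rangle$, and then bound $\mu(\Omega)$ via the pointwise lower bound $g\geq 1-\sigma/\eta$ on $\Omega$. Your explicit verification that $\sigma\leq\sigma/\eta\leq 1$ holds under the hypothesis $\eta\in(\sigma,1)$ is a welcome addition that the paper leaves implicit.
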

\begin{proof}
For some $\eta\in(\sigma,1)$, let $\Omega\eqdef \k_{\sigma/\eta}(\mathbb{P}_M(|\phi\rangle\,|\,\lambda))$. For all $\nu\in\Delta_{|\phi\rangle}$,
\begin{equation}
1 - \sigma \leq \mathbb{P}_M(|0\rangle\,|\,\nu) = \int_\Lambda \d\nu(\lambda)\,\mathbb{P}_M(|\phi\rangle\,|\,\lambda)
\end{equation}
implies that $\nu(\Omega) > 1 - \eta$ by Lem.~\ref{lem:MR:meas-function-to-subset}. Therefore, $\Omega$ is $\eta$-typical for $|\phi\rangle$, giving
\begin{eqnarray}
\varpi_\eta(|\phi\rangle\,|\,\mu) & \leq & \mu(\Omega) \\
 & \leq & \frac{ \int_\Omega \d\mu(\lambda)\,\mathbb{P}(|\phi\rangle\,|\,\lambda) }{ 1 - \sigma/\eta } \\
 & \leq & \frac{ \mathbb{P}_M( |\phi\rangle\,|\,\mu ) }{ 1 - \sigma/\eta }.
\end{eqnarray}
The second line follows as $\mathbb{P}_M(|\phi\rangle\,|\,\lambda) \geq (1 - \sigma/\eta)$ for $\lambda\in\Omega$ and the third by expanding the range of integration.
\end{proof}

Lemma~\ref{lem:SO:asymmetric-unitary} establishes how the asymmetric overlap changes under a unitary transformation, this next lemma does the same for the $\epsilon$-asymmetric overlap.

\begin{lemma} \label{lem:MR:e-asymmetric-transformation}
Let unitary $U$ satisfy $U|0\rangle = |\phi\rangle$ and $\mu^\prime\transto{\gamma}\mu$ for some $\gamma\in\Gamma_U$, then 
\begin{equation}
\varpi_\epsilon (|\phi\rangle\,|\,\mu) \geq \left( 1 - \frac{\epsilon}{\delta} \right) \varpi_\delta (|0\rangle\,|\,\mu^\prime)
\end{equation}
for any $0 \leq \epsilon \leq \delta < 1$.
\end{lemma}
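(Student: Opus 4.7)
The plan is to generalize the proof of Lemma~\ref{lem:SO:asymmetric-unitary} to the $\epsilon$-tolerant setting by constructing, from any $\epsilon$-typical subset for $|\phi\rangle$, an appropriate $\delta$-typical subset for $|0\rangle$ via the stochastic map $\gamma$. Specifically, I will show that given any $\Omega \in \Sigma$ with $\nu(\Omega) > 1 - \epsilon$ for all $\nu \in \Delta_{|\phi\rangle}$, there exists some $\Omega^\prime \in \Sigma$ with $\chi(\Omega^\prime) > 1 - \delta$ for all $\chi \in \Delta_{|0\rangle}$ and $\mu(\Omega) \geq (1 - \epsilon/\delta)\,\mu^\prime(\Omega^\prime)$. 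Taking the infimum over $\epsilon$-typical $\Omega$ and bounding $\mu^\prime(\Omega^\prime) \geq \varpi_\delta(|0\rangle \,|\, \mu^\prime)$ then yields the claimed inequality.

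First, for any $\chi \in \Delta_{|0\rangle}$, since $U|0\rangle = |\phi\rangle$ there is some $\nu \in \Delta_{|\phi\rangle}$ with $\chi \transto{\gamma} \nu$, giving
\begin{equation}
1 - \epsilon < \nu(\Omega) = \int_\Lambda \d\chi(\lambda)\,\gamma(\Omega\,|\,\lambda).
\end{equation}
Setting $g(\lambda) \eqdef \gamma(\Omega\,|\,\lambda)$ and applying Lem.~\ref{lem:MR:meas-function-to-subset} with the ``$\delta$'' of that lemma equal to $\epsilon$ and its ``$\kappa$'' equal to $\epsilon/\delta$ (this is the key choice: $\epsilon/\delta \in [\epsilon, 1]$ since $\epsilon \leq \delta < 1$), one obtains $\chi(\k_{\epsilon/\delta}(g)) > 1 - \delta$ for every $\chi \in \Delta_{|0\rangle}$. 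Thus $\Omega^\prime \eqdef \k_{\epsilon/\delta}(g)$ is $\delta$-typical for $|0\rangle$.

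Next, using $\mu^\prime \transto{\gamma} \mu$ together with the definition of $\k_{\epsilon/\delta}(g)$,
\begin{equation}
\mu(\Omega) = \int_\Lambda \d\mu^\prime(\lambda)\,g(\lambda) \geq \int_{\Omega^\prime} \d\mu^\prime(\lambda)\,g(\lambda) \geq \left(1 - \frac{\epsilon}{\delta}\right) \mu^\prime(\Omega^\prime),
\end{equation}
since $g(\lambda) \geq 1 - \epsilon/\delta$ on $\Omega^\prime$. Combining the two displays gives $\mu(\Omega) \geq (1 - \epsilon/\delta)\,\varpi_\delta(|0\rangle \,|\, \mu^\prime)$, and taking the infimum over all $\epsilon$-typical $\Omega$ for $|\phi\rangle$ completes the proof.

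The main obstacle is really just finding the correct interplay between the two tolerance parameters, \emph{viz.}\ choosing $\kappa = \epsilon/\delta$ so that Lem.~\ref{lem:MR:meas-function-to-subset} simultaneously delivers a set that is $\delta$-typical (the threshold $1 - \epsilon/\kappa$ becomes $1 - \delta$) and for which $g$ stays large enough (at least $1 - \epsilon/\delta$) on the set to produce the factor $(1 - \epsilon/\delta)$ in the final bound. Everything else is structurally parallel to the $\epsilon = 0$ case, and the sanity check $\epsilon \to 0$ recovers Lem.~\ref{lem:SO:asymmetric-unitary} exactly, as then $\Omega^\prime$ is measure-one under all $\chi \in \Delta_{|0\rangle}$ and the prefactor becomes unity.
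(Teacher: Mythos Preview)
Your proof is correct and follows essentially the same route as the paper's. The only cosmetic difference is that the paper introduces a free parameter $\kappa\in[\epsilon,1]$, obtains $\chi(\k_\kappa(g))>1-\epsilon/\kappa$ and $\mu(\Omega)\geq(1-\kappa)\mu'(\Omega')$, and then sets $\delta\eqdef\epsilon/\kappa$ at the end, whereas you substitute $\kappa=\epsilon/\delta$ from the outset; the two are identical up to this reparametrisation.
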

\begin{proof}
Since $\varpi_\epsilon$ is defined as a greatest lower bound, it suffices to prove that for every $\epsilon$-typical $\Omega\in\Sigma$ for $|\phi\rangle$ there exists a $\delta$-typical $\Omega^\prime\in\Sigma$ for $|0\rangle$ such that
\begin{equation}
\mu(\Omega) \geq \left( 1 - \frac{\epsilon}{\delta} \right) \mu^\prime(\Omega^\prime).
\end{equation}

For any $\chi\in\Delta_{|0\rangle}$, consider the $\nu\in\Delta_{|\phi\rangle}$ such that $\chi\transto{\gamma}\nu$. Then for any $\epsilon$-typical $\Omega\in\Sigma$ for $|\phi\rangle$ it follows that
\begin{equation}
1 - \epsilon < \nu(\Omega) = \int_\Lambda \d\chi(\lambda)\,\gamma(\Omega\,|\,\lambda).
\end{equation}
Therefore, by Lem.~\ref{lem:MR:meas-function-to-subset}, $\chi( \k_\kappa(\gamma(\Omega\,|\,\lambda)) ) > 1 - \frac{\epsilon}{\kappa}$ for any $\kappa\in[\epsilon,1]$. Recalling that $\chi\in\Delta_{|0\rangle}$ is arbitrary, this means that $\Omega^\prime \eqdef \k_\kappa(\gamma(\Omega\,|\,\lambda))$ is $(\epsilon/\kappa)$-typical for $|0\rangle$.

Considering then $\mu(\Omega)$,
\begin{eqnarray}
\mu(\Omega) &= & \int_\Lambda \d\mu^\prime(\lambda)\,\gamma(\Omega\,|\,\lambda) \geq \int_{\Omega^\prime} \d\mu^\prime(\lambda)\,\gamma(\Omega\,|\,\lambda) \\
&\geq & (1 - \kappa)\mu^\prime(\Omega^\prime)
\end{eqnarray}
having noted that $\gamma(\Omega\,|\,\lambda)\geq(1 - \kappa)$ for $\lambda\in\Omega^\prime$. Letting $\delta \eqdef \epsilon/\kappa$ completes the proof.
\end{proof}

This next property generalises Lem.~\ref{lem:SO:asymmetric-Boole} by providing a relationship between the overlap with a set of quantum states and the individual overlaps with each state.

\begin{lemma} \label{lem:MR:multi-partite-e-asymmetric-breakup}
For any finite set $\mathcal{S}\subset\mathcal{P}(\mathcal{H})$ of quantum states, any preparation measure $\mu$, and any $\eta\in(0,1)$
\begin{equation}
\sum_{|i\rangle\in\mathcal{S}} \varpi_\eta (|i\rangle\,|\,\mu) \geq \varpi_\eta (\mathcal{S}\,|\,\mu).
\end{equation}
\end{lemma}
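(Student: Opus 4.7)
The plan is to mirror the proof of Lem.~\ref{lem:SO:asymmetric-Boole}, lifting it to the $\epsilon$-typical setting by observing that a finite union of $\eta$-typical sets is still $\eta$-typical for every member of the set and then applying finite subadditivity of the measure $\mu$.

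Concretely, I would proceed as follows. For each $|i\rangle\in\mathcal{S}$, pick an arbitrary measurable set $\Omega_i\in\Sigma$ that is $\eta$-typical for $|i\rangle$, meaning $\nu_i(\Omega_i)>1-\eta$ for every $\nu_i\in\Delta_{|i\rangle}$. Define $\Omega\eqdef\bigcup_{|i\rangle\in\mathcal{S}}\Omega_i$, which is measurable since $\mathcal{S}$ is finite. For any $|j\rangle\in\mathcal{S}$ and any $\nu_j\in\Delta_{|j\rangle}$, monotonicity of measures gives $\nu_j(\Omega)\geq\nu_j(\Omega_j)>1-\eta$, so $\Omega$ is simultaneously $\eta$-typical for every state in $\mathcal{S}$ and is therefore admissible in the infimum defining $\varpi_\eta(\mathcal{S}\,|\,\mu)$.

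Next I would apply finite subadditivity (Boole's inequality) of $\mu$ to obtain
\begin{equation}
\varpi_\eta(\mathcal{S}\,|\,\mu)\leq\mu(\Omega)=\mu\Bigl(\bigcup_{|i\rangle\in\mathcal{S}}\Omega_i\Bigr)\leq\sum_{|i\rangle\in\mathcal{S}}\mu(\Omega_i).
\end{equation}
Since each $\Omega_i$ was an arbitrary $\eta$-typical set for $|i\rangle$, taking the infimum independently over each choice on the right-hand side yields $\sum_{|i\rangle\in\mathcal{S}}\varpi_\eta(|i\rangle\,|\,\mu)$ as the greatest lower bound, establishing the desired inequality.

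There is no serious obstacle here, as the argument is essentially the same Boole's inequality used in Lem.~\ref{lem:SO:asymmetric-Boole}; the only thing to watch is that the $\eta$-typicality threshold is preserved when taking unions, which follows immediately from monotonicity. The finiteness of $\mathcal{S}$ is important since it is what guarantees both measurability of $\Omega$ and the validity of finite subadditivity; this matches how the underlying definition of $\varpi_\eta(\mathcal{S}\,|\,\mu)$ is itself written for finite sets.
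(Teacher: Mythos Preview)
Your proposal is correct and matches the paper's proof essentially line for line: choose arbitrary $\eta$-typical $\Omega_i$, form their union, note it is $\eta$-typical for every $|i\rangle\in\mathcal{S}$ by monotonicity, bound via Boole's inequality, and then take the infimum over the $\Omega_i$. The only addition you make is the explicit remark about finiteness of $\mathcal{S}$ ensuring measurability and subadditivity, which is harmless elaboration.
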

\begin{proof}
For each $|i\rangle\in\mathcal{S}$ let $\Omega_i \in\Sigma$ be an $\eta$-typical subset for $|i\rangle$. Consider $\Omega\eqdef\cup_{|i\rangle\in\mathcal{S}}\Omega_i$ which must also be $\eta$-typical for each $|i\rangle\in\mathcal{S}$ as $\nu_i(\Omega)\geq\nu_i(\Omega_i)$. Therefore by definition $\varpi_\eta (\mathcal{S}\,|\,\mu) \leq \mu(\Omega)$. By Boole's inequality,
\begin{equation}
\varpi_\eta (\mathcal{S}\,|\,\mu) \leq \mu(\Omega) \leq \sum_{|i\rangle} \mu(\Omega_i)
\end{equation}
from which the result follows by recalling that all $\Omega_i$ are arbitrary $\eta$-typical subsets.
\end{proof}

Finally, in Sec.~\ref{sec:SO:anti-distinguishability} anti-distinguishable sets of quantum states were defined. Lemma~\ref{lem:SO:anti-distinguishing-equality} then showed how anti-distinguishable triples affect asymmetric overlaps. In the following lemma, the corresponding statement is proved for the $\epsilon$-asymmetric overlap and \emph{approximately} anti-distinguishable sets (that is, the ``anti-distinguishing'' measurement must be accurate to within $\pm\epsilon$).

\begin{lemma} \label{lem:MR:antisymmetric-asymmetric-overlap}
Consider a quantum measurement $M = \{E_{\neg\psi}, E_{\neg\phi}, E_{\neg0} \}$ such that for some $\epsilon\in[0,1)$
\begin{equation}
\int_\Lambda \d\mu(\lambda)\,\mathbb{P}_M(E_{\neg\psi}\,|\,\lambda) \leq \epsilon,\quad \forall\mu\in\Delta_{|\psi\rangle}
\end{equation}
and similarly for $|\phi\rangle$ and $|0\rangle$. Then for any $\mu\in\Delta_{|\psi\rangle}$ and $\kappa\in(2\epsilon,1)$,
\begin{equation}
\varpi_{\epsilon/\kappa} (|0\rangle,|\phi\rangle\,|\,\mu) \geq (1 - \kappa)\left( \varpi_{2\epsilon/\kappa}(|0\rangle\,|\,\mu) + \varpi_{2\epsilon/\kappa}(|\phi\rangle\,|\,\mu) \right) - \epsilon.
\end{equation}
\end{lemma}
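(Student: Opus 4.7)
The plan is to adapt the proof of Lemma~\ref{lem:SO:anti-distinguishing-equality} (the exact anti-distinguishability equality) into an inequality that tolerates the approximate anti-distinguishability assumption, using the $\epsilon$-asymmetric overlap machinery built up in the preceding lemmas. The central algebraic identity $g_\phi + g_0 - \mathbb{P}_M(E_{\neg\psi}) = 1$, where $g_\phi(\lambda) \eqdef \mathbb{P}_M(E_{\neg\psi}\,|\,\lambda) + \mathbb{P}_M(E_{\neg 0}\,|\,\lambda)$ and $g_0(\lambda) \eqdef \mathbb{P}_M(E_{\neg\psi}\,|\,\lambda) + \mathbb{P}_M(E_{\neg\phi}\,|\,\lambda)$, will still be used to decompose $\mu(\Omega)$, but with two important differences: the term containing $\mathbb{P}_M(E_{\neg\psi})$ now contributes a non-zero $\epsilon$ correction rather than vanishing, and the sets on which $g_\phi,g_0$ are essentially unity will only be $\epsilon$-typical rather than measure-one.

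First I would fix any $\kappa\in(2\epsilon,1)$ and an arbitrary subset $\Omega\in\Sigma$ that is $(\epsilon/\kappa)$-typical for both $|0\rangle$ and $|\phi\rangle$, and aim to prove the stated lower bound on $\mu(\Omega)$; the lemma will then follow by taking the infimum over such $\Omega$. The near-anti-distinguishability hypothesis gives $\int_\Lambda \d\nu(\lambda)\,g_\phi(\lambda) \geq 1 - \epsilon$ for every $\nu\in\Delta_{|\phi\rangle}$, so Lemma~\ref{lem:MR:meas-function-to-subset} (applied with $\delta=\epsilon$ and parameter $\kappa$) yields $\nu(\k_\kappa(g_\phi)) > 1 - \epsilon/\kappa$, \emph{viz.} $\k_\kappa(g_\phi)$ is $(\epsilon/\kappa)$-typical for $|\phi\rangle$, and similarly $\k_\kappa(g_0)$ is $(\epsilon/\kappa)$-typical for $|0\rangle$. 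A union-bound argument then shows that $\Omega\cap\k_\kappa(g_\phi)$ is $(2\epsilon/\kappa)$-typical for $|\phi\rangle$, so by definition $\mu(\Omega\cap\k_\kappa(g_\phi)) \geq \varpi_{2\epsilon/\kappa}(|\phi\rangle\,|\,\mu)$, and analogously for $|0\rangle$.

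Next I would expand $\mu(\Omega) = \int_\Omega \d\mu(\lambda)\,\bigl(g_\phi(\lambda) + g_0(\lambda) - \mathbb{P}_M(E_{\neg\psi}\,|\,\lambda)\bigr)$ using the identity above, bounding the negative contribution by $-\int_\Omega \d\mu\,\mathbb{P}_M(E_{\neg\psi}\,|\,\lambda) \geq -\epsilon$ via the assumption. The two remaining positive integrals are then restricted to the typical sets $\Omega\cap\k_\kappa(g_\phi)$ and $\Omega\cap\k_\kappa(g_0)$, on which $g_\phi$ and $g_0$ are each at least $1-\kappa$ pointwise; this produces the factor $(1-\kappa)$ multiplying $\mu(\Omega\cap\k_\kappa(g_\phi)) + \mu(\Omega\cap\k_\kappa(g_0))$, which by the previous step is bounded below by $(1-\kappa)\bigl(\varpi_{2\epsilon/\kappa}(|\phi\rangle\,|\,\mu) + \varpi_{2\epsilon/\kappa}(|0\rangle\,|\,\mu)\bigr)$. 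Chaining these bounds gives the desired inequality for any admissible $\Omega$, and taking the infimum over $\Omega$ concludes the proof.

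The main obstacle, more bookkeeping than conceptual, will be verifying that the various typicality parameters propagate consistently through the union-bound step: the $\epsilon/\kappa$ typicality of $\Omega$ must combine with the $\epsilon/\kappa$ typicality of $\k_\kappa(g_\phi)$ to yield exactly $2\epsilon/\kappa$ typicality of their intersection, which is what fixes the $2\epsilon/\kappa$ subscripts appearing on the right-hand side of the lemma. The hypothesis $\kappa>2\epsilon$ is precisely what keeps this intersection argument non-trivial (in particular ensuring $2\epsilon/\kappa < 1$ so that the $\varpi_{2\epsilon/\kappa}$ factors are well-defined), while the condition $\kappa\geq\epsilon$ required for Lemma~\ref{lem:MR:meas-function-to-subset} is automatic. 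Once the $\epsilon$-typical replacements are in place, everything else reduces cleanly to the exact-case structure of Lemma~\ref{lem:SO:anti-distinguishing-equality}'s proof.
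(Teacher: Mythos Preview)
Your proposal is correct and follows essentially the same approach as the paper: define $g_\phi,g_0$ from the measurement response functions, use Lemma~\ref{lem:MR:meas-function-to-subset} to show $\k_\kappa(g_\phi)$ and $\k_\kappa(g_0)$ are $(\epsilon/\kappa)$-typical, intersect with an arbitrary $(\epsilon/\kappa)$-typical $\Omega$ to obtain $(2\epsilon/\kappa)$-typical sets via the union bound, then expand $\mu(\Omega)$ using the identity $g_\phi+g_0-\mathbb{P}_M(E_{\neg\psi})=1$ and restrict each positive integral to the corresponding typical set where the integrand is at least $1-\kappa$. The paper's proof is organised identically, differing only in that it phrases the goal as exhibiting specific $\Omega',\Omega''$ rather than immediately invoking the definition of $\varpi_{2\epsilon/\kappa}$.
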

\begin{proof}
Since $\varpi_\epsilon$ is defined as a greatest lower bound, it suffices to prove that for every $\Omega\in\Sigma$ that is $(\epsilon/\kappa)$-typical for both $|0\rangle$ and $|\phi\rangle$ there exists $\Omega^\prime\in\Sigma$ that is $(2\epsilon/\kappa)$-typical for $|\phi\rangle$ and $\Omega^{\prime\prime}$ that is $(2\epsilon/\kappa)$-typical for $|0\rangle$ such that
\begin{equation}
\mu(\Omega) \geq (1-\kappa)\left( \mu(\Omega^\prime) + \mu(\Omega^{\prime\prime}) \right) - \epsilon.
\end{equation}

Start by defining the measurable function $g_\phi (\lambda) \eqdef \mathbb{P}_M(E_{\neg\psi}|\lambda) + \mathbb{P}_M(E_{\neg0}|\lambda)$. Define $g_0$ similarly. 

By assumption, $\int_\Lambda \d\nu(\lambda)\,g_\phi(\lambda) \geq 1 - \epsilon$ for all $\nu\in\Delta_{|\phi\rangle}$ so, by Lem.~\ref{lem:MR:meas-function-to-subset}, $\k_\kappa(g_\phi)$ is $(\epsilon/\kappa)$-typical for $|\phi\rangle$. Similarly, $\k_\kappa(g_0)$ is $(\epsilon/\kappa)$-typical for $|0\rangle$.

So consider any $\Omega\in\Sigma$ that is $(\epsilon/\kappa)$-typical for both $|\phi\rangle$ and $|0\rangle$. For every $\nu\in\Delta_{|\phi\rangle}$ then
\begin{equation}
\nu(\Omega \cap \k_\kappa(g_\phi)) \geq \nu(\Omega) + \nu(\k_\kappa(g_\phi)) - 1 > 1 - \frac{2\epsilon}{\kappa}.
\end{equation}
Therefore $\Omega^\prime \eqdef \Omega \cap \k_\kappa(g_\phi)$ is $(2\epsilon/\kappa)$-typical for $|\phi\rangle$. By a similar argument $\Omega^{\prime\prime} \eqdef \Omega \cap \k_\kappa ( g_0 )$ is $(2\epsilon/\kappa)$-typical for $|0\rangle$.

Finally, it therefore follows that
\begin{eqnarray}
\mu(\Omega) &= & \int_\Omega \d\mu(\lambda) \left( g_\phi(\lambda) + g_0(\lambda) - \mathbb{P}_M(E_{\neg\psi}\,|\,\lambda) \right) \\
&\geq & \int_{\Omega^\prime} \d\mu(\lambda)\,g_\phi(\lambda) + \int_{\Omega^{\prime\prime}} \d\mu(\lambda)\,g_0(\lambda) - \epsilon \\
&\geq & (1-\kappa)\left( \mu(\Omega^\prime) + \mu(\Omega^{\prime\prime}) \right) - \epsilon.
\end{eqnarray}
The second line follows from the upper bound on $\int_\Lambda \d\mu(\lambda)\,\mathbb{P}_M(E_{\neg\psi}|\lambda)$ and by restricting the range of the other integrals. The third line follows by noting that $g_\phi$ and $g_0$ are lower-bounded in $\Omega^\prime$ and $\Omega^{\prime\prime}$ respectively. This completes the proof.
\end{proof}

\subsection{Error-Tolerant Results} \label{sec:MR:error-tolerance-results}

With the technical background of $\epsilon$-asymmetric overlaps, the proof of Thm.~\ref{thm:SO:no-maximal-overlap-for-basis} can be adapted to obtain an error-tolerant variant. Since the proofs of Thm.~\ref{thm:SO:superpositions-are-ontic} and Thm.~\ref{thm:MR:no-ESMR-EMMR} use Thm.~\ref{thm:SO:no-maximal-overlap-for-basis}, this will provide an error-tolerant argument against both epistemic superpositions and ESMR/EMMR macro-realism in quantum theory.

\begin{theorem}
\label{thm:MR:no-large-overlap-for-basis}
Consider a $d>3$ dimensional quantum system described by some ontological model. Assume one can experimentally demonstrate quantum probabilities to within some $\pm\epsilon\in(0,1]$ as in Eq.~(\ref{eq:IN:ontological-model-approx-quantum-probability}). Let $\mathcal{B}$ be any orthonormal basis of $\mathcal{H}$ and $|\psi\rangle\in\mathcal{P}(\mathcal{H})$ be any pure state such that $|\langle 0|\psi\rangle| = \alpha \in (0,\frac{1}{\sqrt{2}})$ for some $|0\rangle\in\mathcal{B}$. For any choices of $0 < \tau \leq \eta < \frac{1}{16}$ there exist choices of $\epsilon > 0$ such that
\begin{equation}
\varpi_\eta (\mathcal{B}\,|\,\mu) \leq 1 - \tau
\end{equation}
for some $\mu\in\Delta_{|\psi\rangle}$, for a finite range of $\alpha\in(0,\frac{1}{\sqrt{2}})$.
\end{theorem}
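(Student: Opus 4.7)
The plan is to adapt the contradiction argument of Thm.~\ref{thm:SO:no-maximal-overlap-for-basis} by substituting each asymmetric-overlap lemma with its $\epsilon$-asymmetric counterpart from Sec.~\ref{sec:MR:error-tolerance-results}. First I would suppose, for contradiction, that $\varpi_\eta(\mathcal{B}\,|\,\mu)>1-\tau$ for every $\mu\in\Delta_{|\psi\rangle}$. By Lem.~\ref{lem:MR:multi-partite-e-asymmetric-breakup} this gives $\sum_{|j\rangle\in\mathcal{B}}\varpi_\eta(|j\rangle\,|\,\mu)>1-\tau$, and Lem.~\ref{lem:MR:e-asymmetric-to-probability} applied to the basis measurement of $\mathcal{B}$ with $\sigma=\epsilon$ upper-bounds each summand by $(|\langle j|\psi\rangle|^2+\epsilon)/(1-\epsilon/\eta)$; isolating the $|0\rangle$ term then produces the crucial lower bound $\varpi_\eta(|0\rangle\,|\,\mu)\geq\alpha^{2}-\tau-\mathcal{O}(d\epsilon/\eta)$. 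This plays the role that $\varpi(|0\rangle\,|\,\mu)=\alpha^{2}$ plays in the exact proof.

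Next, I would reuse the state construction of Thm.~\ref{thm:SO:no-maximal-overlap-for-basis}: a basis $\mathcal{B}^\prime\supset\{|0\rangle\}$ with $|\psi\rangle=\alpha|0\rangle+\sqrt{2}\alpha^{2}|1^\prime\rangle+c_{2}|2^\prime\rangle$, a companion $|\phi\rangle=(1-2\alpha^{2})|0\rangle+\sqrt{2}\alpha|1^\prime\rangle+c_{3}|3^\prime\rangle$, and a unitary $U$ with $U|0\rangle=|\phi\rangle$, $U|\psi\rangle=|\psi\rangle$. The triple $\{|\psi\rangle,|\phi\rangle,|0\rangle\}$ is exactly anti-distinguishable in ideal quantum theory, so under the $\pm\epsilon$ tolerance the same projective measurement satisfies the hypothesis of Lem.~\ref{lem:MR:antisymmetric-asymmetric-overlap} with parameter $\epsilon$. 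Choosing any $\mu^\prime\in\Delta_{|\psi\rangle}$ and $\gamma\in\Gamma_U$ with $\mu^\prime\transto{\gamma}\mu$, Lem.~\ref{lem:MR:e-asymmetric-transformation} (with $\epsilon_\text{lem}=\eta/2$, $\delta=\eta$) converts the bound on $\varpi_\eta(|0\rangle\,|\,\mu^\prime)$ into $\varpi_{\eta/2}(|\phi\rangle\,|\,\mu)\geq\tfrac{1}{2}\bigl(\alpha^{2}-\tau-\mathcal{O}(d\epsilon/\eta)\bigr)$. Using $\varpi_{\eta/2}(|0\rangle\,|\,\mu)\geq\varpi_\eta(|0\rangle\,|\,\mu)$ from Lem.~\ref{lem:MR:e-asymmetric-less-restrictive} and feeding both lower bounds into Lem.~\ref{lem:MR:antisymmetric-asymmetric-overlap} with $\kappa=4\epsilon/\eta$ (so that $2\epsilon/\kappa=\eta/2$ and the LHS sits at $\epsilon/\kappa=\eta/4$) yields the compound lower bound $\varpi_{\eta/4}(|0\rangle,|\phi\rangle\,|\,\mu)\geq(1-4\epsilon/\eta)\cdot\tfrac{3}{2}(\alpha^{2}-\tau)-\mathcal{O}(\epsilon)$.

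For the matching upper bound, I would derive an $\epsilon$-asymmetric analogue of Lem.~\ref{lem:SO:asymmetric-triple-measurement}: applying Lem.~\ref{lem:MR:meas-function-to-subset} to the same measurable function $f+g_\phi+g_0$ used in the original proof produces a subset that is simultaneously $\eta/4$-typical for $|0\rangle$ and $|\phi\rangle$ whose $\mu$-measure is bounded by $\mathbb{P}_{\mathcal{B}^\prime}(|0\rangle\vee|1^\prime\rangle\,|\,|\psi\rangle)+\mathcal{O}(\epsilon)=\alpha^{2}+2\alpha^{4}+\mathcal{O}(\epsilon)$. Combining the two bounds gives the schematic inequality $\alpha^{2}(1-4\alpha^{2})\leq 3\tau+\mathcal{O}(\epsilon/\eta)$, which fails whenever $\alpha$ is held in a finite sub-interval of $(0,\tfrac{1}{2})\subset(0,\tfrac{1}{\sqrt{2}})$ bounded away from both endpoints, provided $\tau$ and $\epsilon$ are taken small enough.

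The main obstacle is the parameter bookkeeping. Each of Lems.~\ref{lem:MR:e-asymmetric-to-probability}, \ref{lem:MR:e-asymmetric-transformation} and \ref{lem:MR:antisymmetric-asymmetric-overlap} carries a side-condition (``$\eta>\sigma$'', ``$\epsilon\leq\delta$'', ``$\kappa\in(2\epsilon,1)$'') and introduces either a multiplicative loss of the form $1-\epsilon/\eta$ or $1-\kappa$ or a typicality shrinkage $\eta\mapsto\eta/2\mapsto\eta/4$; the same underlying $\eta$ must feed all three lemmas consistently while the additive $\mathcal{O}(d\epsilon/\eta)$ term from the sum over basis elements remains controlled. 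The explicit threshold $\eta<1/16$ in the statement is exactly the point at which these accumulated losses, together with the slack $\tau\leq\eta$, can simultaneously be balanced against the maximum $\alpha^{2}(1-4\alpha^{2})\leq 1/16$ so that a finite range of $\alpha$ survives; making those numerics tight, rather than any single lemma, is the delicate part of the proof.
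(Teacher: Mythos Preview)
Your overall architecture is correct and matches the paper's: assume $\varpi_\eta(\mathcal{B}\,|\,\mu)>1-\tau$ for contradiction, reuse the $|\psi\rangle,|\phi\rangle,|0\rangle,U$ construction of Thm.~\ref{thm:SO:no-maximal-overlap-for-basis}, and chain Lems.~\ref{lem:MR:e-asymmetric-to-probability}, \ref{lem:MR:e-asymmetric-transformation}, \ref{lem:MR:multi-partite-e-asymmetric-breakup}, \ref{lem:MR:antisymmetric-asymmetric-overlap} together with a measurement upper bound on the bipartite overlap. You also correctly flag the parameter bookkeeping as the crux.

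However, your specific parameter choices do not recover the threshold $\eta<\tfrac{1}{16}$ in the statement. By fixing $\kappa=4\epsilon/\eta$ you force the typicality level after Lem.~\ref{lem:MR:antisymmetric-asymmetric-overlap} to $2\epsilon/\kappa=\eta/2$, and then Lem.~\ref{lem:MR:e-asymmetric-transformation} with $\epsilon_{\text{lem}}=\eta/2$, $\delta=\eta$ incurs the fixed multiplicative loss $1-\tfrac{\eta/2}{\eta}=\tfrac{1}{2}$. This loss does not vanish as $\epsilon\to 0$, so in the limit your lower bound is $\tfrac{3}{2}(\alpha^2-\tau)$ rather than $2(\alpha^2-\tau)$, yielding $\alpha^2(1-4\alpha^2)\leq 3\tau$. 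Since $\max_\alpha \alpha^2(1-4\alpha^2)=\tfrac{1}{16}$, a contradiction requires $\tau<\tfrac{1}{48}$, not $\tau\leq\eta<\tfrac{1}{16}$; your reading of the $\tfrac{1}{16}$ threshold is therefore off by the factor of $3$ you introduced.

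The paper avoids this by leaving $\kappa$ free throughout and only at the end setting $\kappa=\sqrt{\epsilon}$, so that \emph{both} $\kappa\to 0$ \emph{and} $\epsilon/\kappa\to 0$. Then the typicality level $2\epsilon/\kappa=2\sqrt{\epsilon}$ fed into Lem.~\ref{lem:MR:e-asymmetric-transformation} tends to zero, the loss factor $1-\tfrac{2\epsilon}{\kappa\eta}=1-\tfrac{2\sqrt{\epsilon}}{\eta}$ tends to $1$, and the limiting inequality becomes $\alpha^2(1-2\alpha^2)<2\eta$. The left side now peaks at $\tfrac{1}{8}$, which is exactly what gives the stated threshold $\eta<\tfrac{1}{16}$. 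In short: the missing idea is to let $\kappa$ scale with $\epsilon$ at an intermediate rate (e.g.\ $\sqrt{\epsilon}$) rather than linearly, so that no multiplicative loss survives the $\epsilon\to 0$ limit.
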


\begin{proof}
The proof proceeds by contradiction. To this end, assume that 
\begin{equation} \label{eq:MR:proof-no-large-overlap-contradiction-assumption}
\varpi_\eta (\mathcal{B}\,|\,\mu) > 1 - \tau
\end{equation}
for all $\mu\in\Delta_{|\psi\rangle}$ where $0<\tau\leq\eta<\frac{1}{16}$.

Consider the construction of $|\psi\rangle,|\phi\rangle,|0\rangle$, $\mathcal{B}^\prime\ni|0\rangle$, and $U$ used in the proof of Thm.~\ref{thm:SO:no-maximal-overlap-for-basis} [Eqs.~(\ref{eq:SO:proof-psi-no-maximal-overlap}, \ref{eq:SO:proof-phi-no-maximal-overlap})]. Let $M_A = \{E_{\neg\psi},E_{\neg\phi},E_{\neg 0}\}$ be a quantum anti-distinguishing measurement for $\{|\psi\rangle,|\phi\rangle,|0\rangle\}$ (\emph{i.e.} the measurement that would be anti-distinguishing in the absence of error), $M$ be a basis measurement for $\mathcal{B}$, and let $M^\prime$ be a basis measurement for $\mathcal{B}^\prime$. Recall that $\langle0|\psi\rangle = \alpha \in (0,\frac{1}{\sqrt{2}})$.

The assumption that quantum probabilities are reproduced to within $\pm\epsilon$ variously implies the following.
\begin{eqnarray}
\mathbb{P}_{M_A}(E_{\neg\psi}\,|\,\mu) \leq & \epsilon, \quad & \forall\mu\in\Delta_{|\psi\rangle} \\
\mathbb{P}_{M_A}(E_{\neg\phi}\,|\,\nu) \leq & \epsilon, \quad & \forall\nu\in\Delta_{|\phi\rangle} \\
\mathbb{P}_{M_A}(E_{\neg0}\,|\,\chi) \leq & \epsilon, \quad & \forall\chi\in\Delta_{|0\rangle} \\
\mathbb{P}_{M^\prime}(|0\rangle\vee|1^\prime\rangle\,|\,\chi) \geq & 1 - \epsilon, \quad & \forall\chi\in\Delta_{|0\rangle} \\
\mathbb{P}_{M^\prime}(|0\rangle\vee|1^\prime\rangle\vee|3^\prime\rangle\,|\,\nu) \geq & 1 - \epsilon, \quad & \forall\nu\in\Delta_{|\phi\rangle} \\
\mathbb{P}_{M^\prime}(|3^\prime\rangle\,|\,\mu) \leq & \epsilon, \quad & \forall\mu\in\Delta_{|\psi\rangle} \label{eq:MR:proof-no-large-overlap-measurement-psi-3} \\
\mathbb{P}_M(|i\rangle\,|\,\chi_i) \geq & 1 - \epsilon, \quad & \forall\chi_i\in\Delta_{|i\rangle},\;\forall|i\rangle\in\mathcal{B}
\end{eqnarray}

Now proceed to derive the contradiction. Choosing any $\mu\in\Delta_{|\psi\rangle}$, Lem.~\ref{lem:MR:antisymmetric-asymmetric-overlap} implies that
\begin{equation} \label{eq:MR:proof-no-large-overlap-antisymmetric-bit}
\varpi_{\epsilon/\kappa}(|0\rangle,|\phi\rangle\,|\,\mu) \geq (1 - \kappa)\left( \varpi_{2\epsilon/\kappa}(|\phi\rangle\,|\,\mu) + \varpi_{2\epsilon/\kappa}(|0\rangle\,|\,\mu) \right) - \epsilon
\end{equation}
for any $\kappa\in(2\epsilon,1)$. Let $\delta\in[2\epsilon/\kappa,1)$, then Lem.~\ref{lem:MR:e-asymmetric-less-restrictive} gives $\varpi_{2\epsilon/\kappa}(|0\rangle\,|\,\mu) \geq \varpi_\delta(|0\rangle\,|\,\mu)$. This, together with Lem.~\ref{lem:MR:e-asymmetric-transformation}, gives
\begin{equation} \label{eq:MR:proof-no-large-overlap-transformation-bit}
\varpi_{\epsilon/\kappa}(|0\rangle,|\phi\rangle\,|\,\mu) \geq (1 - \kappa)\left( \left( 1 - \frac{2\epsilon}{\kappa\delta} \right)\varpi_{\delta}(|0\rangle\,|\,\mu^\prime) + \varpi_{\delta}(|0\rangle\,|\,\mu) \right) - \epsilon
\end{equation}
where $\mu^\prime \transto{\gamma} \mu$ for some choice of $\gamma\in\Gamma_U$. Note that for the first term to contribute non-trivially, it is required that $\delta > 2\epsilon/\kappa$.

The next step is to relate the $\varpi_\delta$ terms on the right hand side to their quantum probabilities. For any $\bar{\mu}\in\Delta_{|\psi\rangle}$, Lem.~\ref{lem:MR:multi-partite-e-asymmetric-breakup} and Eq.~(\ref{eq:MR:proof-no-large-overlap-contradiction-assumption}) imply
\begin{equation}
\sum_{|i\rangle\in\mathcal{B}} \varpi_\eta(|i\rangle\,|\,\bar{\mu}) \geq \varpi_\eta(\mathcal{B}\,|\,\bar{\mu}) > 1 - \tau.
\end{equation}
Using Lem.~\ref{lem:MR:e-asymmetric-to-probability} and recalling that $|0\rangle\in\mathcal{B}$ this further implies that
\begin{equation}
\varpi_\eta(|0\rangle\,|\,\bar{\mu}) > 1 - \tau - \frac{\sum_{|i\rangle\neq|0\rangle}\mathbb{P}_M (|i\rangle\,|\,\bar{\mu})}{1 - \epsilon/\eta}
\end{equation}
so long as $\eta > \epsilon$. Lastly, noting that $\alpha^2 - \epsilon \leq \mathbb{P}_M(|0\rangle\,|\,\bar{\mu}) = 1 - \sum_{|i\rangle\neq|0\rangle}\mathbb{P}_M (|i\rangle\,|\,\bar{\mu})$ gives
\begin{equation} \label{eq:MR:proof-no-large-overlap-single-measurement-bit}
\varpi_\eta(|0\rangle\,|\,\bar{\mu}) > \frac{ \alpha^2 - \epsilon - \tau - \frac{\epsilon}{\eta}(1 - \tau) }{ 1 - \epsilon/\eta }
\end{equation}
for all $\bar{\mu}\in\Delta_{|\psi\rangle}$. In particular, this holds both when $\bar{\mu}=\mu$ and $\bar{\mu}=\mu^\prime$ from Eq.~(\ref{eq:MR:proof-no-large-overlap-transformation-bit}).

The final step before putting everything together is to relate the bipartite $(\epsilon/\kappa)$-asymmetric overlap (left hand side of Eq.~(\ref{eq:MR:proof-no-large-overlap-transformation-bit})) to a corresponding quantum measurement probability. Let $f(\lambda) \eqdef \mathbb{P}_{M^\prime}(|0\rangle\,|\,\lambda) + \mathbb{P}_{M^\prime}(|1^\prime\rangle\,|\,\lambda)$ and $g(\lambda) \eqdef \mathbb{P}_{M^\prime}(|3^\prime\rangle\,|\,\lambda)$. From the above measurement probabilities, Lem.~\ref{lem:MR:meas-function-to-subset} gives that
\begin{eqnarray}
\chi(\k_\kappa(f)) > & 1 - \frac{\epsilon}{\kappa}, \quad & \forall\chi\in\Delta_{|0\rangle} \\
\nu(\k_\kappa(f+g)) > & 1 - \frac{\epsilon}{\kappa}, \quad & \forall\nu\in\Delta_{|\phi\rangle} 
\end{eqnarray}
since it is already required that $\kappa>2\epsilon>\epsilon$. Because $\k_\kappa(f) \subseteq \k_\kappa(f+g)$ it follows that $\Omega\eqdef \k_\kappa(f+g)$ is a measurable subset of $\Lambda$ which is $(\epsilon/\kappa)$-typical for both $|\phi\rangle$ and $|0\rangle$. Consider then
\begin{equation}
\mathbb{P}_{M^\prime}(|0\rangle\vee|1^\prime\rangle\,|\,\mu) = \int_\Lambda \d\mu(\lambda)f(\lambda).
\end{equation}
By Eq.~(\ref{eq:MR:proof-no-large-overlap-measurement-psi-3}) it follows that $\int_\Lambda \d\mu(\lambda)\,g(\lambda) \leq \epsilon$ so
\begin{eqnarray}
\mathbb{P}_{M^\prime}(|0\rangle\vee|1^\prime\rangle\,|\,\mu) & \geq & \int_\Lambda \d\mu(\lambda)\left( f(\lambda) + g(\lambda) \right) - \epsilon \\
 & \geq & \int_{\Omega} \d\mu(\lambda)\left( f(\lambda) + g(\lambda) \right) - \epsilon \\
 & \geq & ( 1 - \sigma )\mu(\Omega) - \epsilon \\
 & \geq & ( 1 - \sigma )\varpi_{\epsilon/\kappa}(|\phi\rangle,|0\rangle\,|\,\mu) - \epsilon.
\end{eqnarray}
Where the final line follows as $\Omega$ is $(\epsilon/\kappa)$-typical for both $|\phi\rangle$ and $|0\rangle$. Using that quantum probabilities are reproduced to within $\pm\epsilon$ and noting the construction of Eq.~(\ref{eq:SO:proof-psi-no-maximal-overlap}) gives
\begin{equation} \label{eq:MR:proof-no-large-overlap-multi-measurement-bit}
\alpha^2( 1 + 2\alpha^2 ) + 2\epsilon \geq (1 - \kappa)\varpi_{\epsilon/\kappa}(|0\rangle,|\phi\rangle\,|\,\mu).
\end{equation}

Combining Eqs.~(\ref{eq:MR:proof-no-large-overlap-antisymmetric-bit}, \ref{eq:MR:proof-no-large-overlap-single-measurement-bit}, \ref{eq:MR:proof-no-large-overlap-multi-measurement-bit}) shows that the assumption-towards-contradiction Eq.~(\ref{eq:MR:proof-no-large-overlap-contradiction-assumption}) implies
\begin{equation} \label{eq:MR:proof-no-large-overlap-full-inequality}
\alpha^2( 1 + 2\alpha^2 ) + \epsilon( 3 - \kappa ) > \frac{2(1 - \kappa)^2(1 - \frac{\epsilon}{\kappa\eta})}{1 - \epsilon/\eta}\left( \alpha^2 - \epsilon - \tau - \frac{\epsilon}{\eta}(1 - \tau) \right)
\end{equation}
where $\delta = \eta$ has been taken. Recall that the derivation of this inequality required that $\eta > 2\epsilon/\kappa$ and $\eta > \epsilon$, and that $\kappa > 2\epsilon$ is chosen.

To obtain a contradiction and complete the proof, it must be demonstrated that the inequality of Eq.~(\ref{eq:MR:proof-no-large-overlap-full-inequality}) is violated for appropriate choices of parameters. This is much easier if one first simplifies Eq.~(\ref{eq:MR:proof-no-large-overlap-full-inequality}).

To simplify, first assume that $\eta \geq \tau$, so that Eq.~(\ref{eq:MR:proof-no-large-overlap-contradiction-assumption}) implies $\varpi_\eta(\mathcal{B}\,|\,\mu) > 1-\eta$. Therefore, Eq.~(\ref{eq:MR:proof-no-large-overlap-full-inequality}) holds with $\tau=\eta$. Another simplifying assumption comes by taking $\kappa = \sqrt{\epsilon}$. Under these assumptions, Eq.~(\ref{eq:MR:proof-no-large-overlap-full-inequality}) becomes
\begin{equation}
\alpha^2( 1 + 2\alpha^2 ) + \kappa^2( 3 - \kappa ) > \frac{2(1 - \kappa)^2(\eta - \kappa)}{\eta - \kappa^2}\left( \alpha^2 - \eta - \frac{\kappa^2}{\eta} \right)
\end{equation}
where $\eta > 2\kappa$, $\eta > \kappa^2$ and $\kappa < \frac{1}{2}$.

Observe that for any fixed value of $\eta$, a value of $\epsilon = \kappa^2$ may be chosen to be arbitrarily small. In the limit of $\epsilon\rightarrow 0$ the inequality becomes
\begin{equation}
\alpha^2( 2\alpha^2 - 1 ) + 2\eta > 0,
\end{equation}
which is violated by a finite range of $\alpha^2\in(0,\frac{1}{2})$ for every value of $\eta\in(0,\frac{1}{16})$.

Consider turning this logic around. For any value of $\eta\in(0,\frac{1}{16})$, if one assumes Eq.~(\ref{eq:MR:proof-no-large-overlap-contradiction-assumption}) with $\eta \geq \tau$ then there exists a value of $\epsilon > 0$ such that one can reach a contradiction by violating Eq.~(\ref{eq:MR:proof-no-large-overlap-full-inequality}). This completes the proof.
\end{proof}

As stated, Thm.~\ref{thm:MR:no-large-overlap-for-basis} is somewhat formal and opaque. A little unpacking is required to see how it provides error-tolerant statements against both epistemic superpositions and ESMR/EMMR macro-realism.

First, consider Bob who believes that some superposition $|\psi\rangle\in\mathcal{P}(\mathcal{H})$ should be epistemic with respect to some orthonormal basis $\mathcal{B}\not\ni|\psi\rangle$ of a $d > 3$ dimensional system. As discussed in Sec.~\ref{sec:MR:error-tolerant-superpositions-and-mr}, Bob should be compelled to make a statement along the lines of: ``any preparation of $|\psi\rangle$ will produce an ontic state that is $\eta$-typical for at least one state in $\mathcal{B}$ with probability at least $1 - \tau$''. The values of $\eta$ and $\tau$ are left up to Bob and quantify exactly how macro-realist his belief is (relative to the noise in available experiments), but they should be small if $|\psi\rangle$ is to be approximately epistemic. In summary, Bob believes that
\begin{equation}
\varpi_\eta (\mathcal{B}\,|\,\mu) > 1 - \tau, \quad \forall\mu\in\Delta_{|\psi\rangle}.
\end{equation}

Alice, who has access to a lab, can contradict Bob's belief in the following way. Suppose that $0 < \tau \leq \eta < \frac{1}{16}$ and Alice can demonstrate that quantum probabilities are correct to within $\pm\epsilon\in(0,1)$. By Thm.~\ref{thm:MR:no-large-overlap-for-basis} then for sufficiently small $\epsilon$ there is some continuous region $\mathcal{R}_\epsilon \subset (0,1)$ such that Bob is contradicted if $|\langle 0|\psi\rangle|^2 \in \mathcal{R}_\epsilon$ for any $|0\rangle\in\mathcal{B}$.

In fact, given specific values for $\eta$, $\tau$, and $\alpha$, Alice can probably do better. Looking at the proof of Thm.~\ref{thm:MR:no-large-overlap-for-basis}, Alice simply needs to find an appropriate $\epsilon$ such that Eq.~(\ref{eq:MR:proof-no-large-overlap-full-inequality}) can be violated (with valid choices for the other parameters in the inequality) and then demonstrate that quantum probabilities hold to within $\pm\epsilon$. Note that, in this more general case, Alice doesn't necessarily have to assume that $\tau \leq \eta$.

So why bother with the first, less general, statement if violation of Eq.~(\ref{eq:MR:proof-no-large-overlap-full-inequality}) is the more powerful result? Simply put, it is not clear \emph{a priori} that Eq.~(\ref{eq:MR:proof-no-large-overlap-full-inequality}) can be violated for reasonable choices of the parameters. The first statement serves to demonstrate that the contradiction is possible, while the second is strictly more general.

Theorem~\ref{thm:MR:no-large-overlap-for-basis} can be used in a similar way to rule out ESMR or EMMR ontologies in an error-tolerant way.

Consider Clare, who believes in ESMR or EMMR macro-realism for some quantity $Q$ of a quantum system with $n > 3$ distinguishable values. Similarly to Bob---and as discussed in Sec.~\ref{sec:MR:error-tolerant-superpositions-and-mr}---Clare should be compelled to say ``any preparation of any $|\psi\rangle\in\mathcal{P}(\mathcal{H})$ will produce an ontic state that is $\eta$-typical for at least one operational eigenstate of $Q$ with probability at least $1 - \tau$.'' As with Bob, the exact values of $\eta$ and $\tau$ are up to Clare, but should be small if Clare seriously believes in ESMR/EMMR macro-realism.

As noted in Sec.~\ref{sec:MR:quantum-MR}, the operational eigenstates of $Q$ in a quantum system form an orthonormal basis $\mathcal{B}_Q$. This means that Alice believes that
\begin{equation} \label{eq:MR:Clares-belief}
\varpi_\eta (\mathcal{B}_Q\,|\,\mu) > 1 - \tau, \quad \forall\mu\in\Delta_{|\psi\rangle},\;\forall|\psi\rangle\in\mathcal{P}(\mathcal{H}).
\end{equation}

It is clear that Alice can contradict Clare's belief in a similar way to how she contradicted Bob. Specifically, if $0 < \tau \leq \eta < \frac{1}{16}$ then there is an $\epsilon\in(0,1)$ such that if Alice can demonstrate that quantum probabilities hold to within $\pm\epsilon$ then Clare is proved wrong by Thm.~\ref{thm:MR:no-large-overlap-for-basis}. More generally given $\tau$ and $\eta$, if Alice can violate Eq.~(\ref{eq:MR:proof-no-large-overlap-full-inequality}) for any valid values of $\alpha$, $\kappa$, and $\epsilon$, then demonstrating that quantum probabilities hold to within $\pm\epsilon$ will contradict Clare.

Note that, since Clare must believe Eq.~(\ref{eq:MR:Clares-belief}) for all pure states $|\psi\rangle\in\mathcal{P}(\mathcal{H})$, it is easier for Alice to contradict Clare than Bob, whose choice of $|\psi\rangle$ is arbitrary.

In this way, Thm.~\ref{thm:MR:no-large-overlap-for-basis} can be used as the basis for error-tolerant variations on both Thm.~\ref{thm:SO:superpositions-are-ontic} and Thm.~\ref{thm:MR:no-ESMR-EMMR}. This has been achieved by generalising the asymmetric overlap to the $\epsilon$-asymmetric overlap in a natural way by using $\epsilon$-typical subsets. The proof (while rather involved) is still arguably much simpler than that of Thm.~\ref{thm:SO:small-symmetric-overlap-large-d}, which established an error-tolerant version of Thm.~\ref{thm:SO:no-overlap-large-d} by using the symmetric overlap. This is because the asymmetric overlap has much more in common with the $\epsilon$-asymmetric overlap than the symmetric overlap, so the proof strategy can more more easily adapted. The main disadvantage of working with the $\epsilon$-asymmetric overlap is a proliferation of free parameters, so rather than getting a clean contradiction, one finds an inequality Eq.~(\ref{eq:MR:proof-no-large-overlap-full-inequality}) with five parameters which must be violated to obtain a contradiction. The proof of Thm.~\ref{thm:MR:no-large-overlap-for-basis} demonstrates that this is possible at least in some regime---specifically, when  $0 < \tau \leq \eta < \frac{1}{16}$ for some ranges of $\epsilon$ and $\alpha$---but violations of Eq.~(\ref{eq:MR:proof-no-large-overlap-full-inequality}) may also be possible outside of that regime.

\section{Summary and Discussion} \label{sec:MR:summary-and-discussion}

This chapter addressed the incompatibility of macro-realism and quantum theory by using ontological models and building on the work and techniques of Chap.~\ref{ch:SO}. After a quick summary of this chapter, the results and outstanding questions from these two chapters will be discussed below.

The exact meaning of ``macro-realism'' has always been somewhat controversial. It was therefore appropriate to start by discussing what macro-realism means and identifying an appropriate specific definition in Sec.~\ref{sec:MR:intro}. The history of macro-realism is inseparable from the Leggett-Garg inequalities and their use in the Leggett-Garg argument for the incompatibility of quantum theory and macro-realism. This argument was outlined briefly in Sec.~\ref{sec:MR:LGI-outline}, with a focus on the assumptions required to derive the inequalities.

The definition of macro-realism used in Sec.~\ref{sec:MR:definition} is from Ref.~\cite{MaroneyTimpson17}, where the history of ``macro-realism'' is analysed and the definition clarified using ontological models. This ontological model analysis led to the identification of three sub-categories of macro-realism presented in Sec.~\ref{sec:MR:sub-classes}: EMMR, ESMR, and SSMR. By applying these definitions to quantum theory, one finds fundamental loopholes in the Leggett-Garg argument. In particular, Leggett-Garg is only able to prove incompatibility of EMMR models with quantum theory, leaving loopholes for ESMR and SSMR as discussed in Sec.~\ref{sec:MR:loopholes}. Moreover, the existence of counter-examples in the form of Bohmian mechanics and the Kochen-Specker model show that no theorem can rule out all SSMR models or all ESMR models for $d=2$ dimensions respectively.

Papers on the Leggett-Garg argument, including those addressing the clumsiness loophole \cite{WildeMizel12,KneeKakuyanagi+16,HuffmanMizel16}, have concentrated on $d=2$ dimensional systems. As a result of closely following the Leggett-Garg assumptions, they are still unable to rule out any models outside of EMMR.

With this background, Sec.~\ref{sec:MR:main-theorem} proceeded to apply the methods of Chap.~\ref{ch:SO} to macro-realism, obtaining Thm.~\ref{thm:MR:no-ESMR-EMMR} as the main result of this chapter. This proved the incompatibility of quantum theory with all ESMR and EMMR models for observables with $n>3$ distinguishable values---a result which is therefore stronger than the Leggett-Garg argument. 

Finally, Sec.~\ref{sec:MR:error-tolerant-argument} approached the problem of finding error-tolerant variations of Thms.~\ref{thm:SO:superpositions-are-ontic}, \ref{thm:MR:no-ESMR-EMMR} with a view towards experimental investigations. Any such variation requires a generalisation of the mathematical formalisations of epistemic superpositions and ESMR/EMMR macro-realism. This was done by introducing the $\epsilon$-asymmetric overlap: a generalisation of the asymmetric overlap that is error-tolerant. This enabled a proof of Thm.~\ref{thm:MR:no-large-overlap-for-basis}, which can be used to obtain error-tolerant variations of both Thm.~\ref{thm:SO:superpositions-are-ontic} and Thm.~\ref{thm:MR:no-ESMR-EMMR}.

Between Thm.~\ref{thm:MR:no-ESMR-EMMR} and the Leggett-Garg argument, the only possibilities for macro-realism compatible with quantum theory that remain are SSMR models (such as Bohmian mechanics) or ESMR models for $d=2,3$ dimensions (such as the Kochen-Specker model). One clear question for further investigation is whether it is possible to rule out any subsets of these remaining models. For example, Bohmian mechanics is a $\psi$-ontic theory and it may therefore be possible prove the incompatibility of quantum theory and all SSMR ontologies that aren't $\psi$-ontic. This, together with the result presented here, would essentially say that to be macro-realist you must have an ontology consisting of the full quantum state plus extra information. Many would consider this a very strong argument against macro-realism.

Theorems~\ref{thm:SO:superpositions-are-ontic}--\ref{thm:SO:no-overlap-large-d} paint a similarly grim outlook for the epistemic realist, especially one who seeks to explain any superposition-based phenomena using ontological uncertainty. In particular, Thm.~\ref{thm:SO:superpositions-are-ontic} proves that, for $d>3$, almost all superpositions defined with respect to any given basis $\mathcal{B}$ must be real. Therefore, any epistemic realist account of quantum theory must include ontic features corresponding to superposition states and the unfortunate cat cannot be put out of its misery. Theorems~\ref{thm:SO:no-maximally-epistemic}, \ref{thm:SO:no-overlap-large-d} then proceed to provide new bound on ontic overlaps for largely arbitrary pairs of quantum states. In particular from Thm.~\ref{thm:SO:no-maximally-epistemic} one can conclude that almost no quantum states can be $\psi$-epistemic. From Thm.~\ref{thm:SO:no-overlap-large-d}, one can conclude that in any moderately large system a large number of pairs of non-orthogonal states cannot overlap significantly, making it unlikely that such overlaps can satisfactorily explain quantum features. These are both much stronger statements than previous results in the same vein have been able to achieve, circumventing the shortcomings noted in Sec.~\ref{sec:MR:loopholes}. However, it should be noted that Thms.~\ref{thm:SO:no-maximally-epistemic}, \ref{thm:SO:no-overlap-large-d} do depend on an extra assumption, though one that is mild.

The next stage for all of these results will be to develop experimental tests, which require detailed error-tolerant analyses. Theorem~\ref{thm:SO:small-symmetric-overlap-large-d} begins to develop such an analysis based on Thm.~\ref{thm:SO:no-overlap-large-d}. An experiment based on this would require demonstration of small errors in probabilities for a wide range of measurements on a $d>5$ dimensional system.

The error-tolerant variations on Thms.~\ref{thm:SO:superpositions-are-ontic}, \ref{thm:MR:no-ESMR-EMMR} provided by Thm.~\ref{thm:MR:no-large-overlap-for-basis} also open the door for experimental investigation. While this variation uses the $\epsilon$-asymmetric overlap, there are potentially other routes to error-tolerant variations, including one currently in development \cite{HermensMaroney17}. This demonstrates how Thms.~\ref{thm:SO:superpositions-are-ontic}, \ref{thm:MR:no-ESMR-EMMR} are just the first steps towards new ways of investigating epistemic superpositions and macro-realism respectively.

It is interesting to note that experiments based on this result will be an entirely new avenue for tests of macro-realism. Experimental tests based on the Leggett-Garg argument will always have certain features and difficulties in common (such as the clumsiness loophole briefly mentioned in section \ref{sec:MR:loopholes}). However, since the approach of this chapter is so different in character one can expect the resulting experiments to be similarly different, hopefully avoiding many of the difficulties common to Leggett-Garg while requiring challenging new high-precision tests of quantum theory in $d>3$ Hilbert spaces.

One should note that in this chapter the ``macro'' quantity $Q$ was taken to correspond to a measurement in basis $\mathcal{B}_{Q}$ in the quantum case. A more general approach might allow $Q$ to correspond to a POVM measurement instead. That is, for each value $q$ of $Q$ there would be some POVM element $E_{q}$ and the operational eigenstates $|\psi\rangle$ of $q$ would be those satisfying $\langle\psi|E_{q}|\psi\rangle=1$. It does not seem unlikely that the results presented here could be fairly directly extended to such a case and this would be another interesting avenue for further work. Such an extension would likely add significant complexity to the proofs without changing the fundamental ideas, however.

Extensions of Thm.~\ref{thm:MR:no-ESMR-EMMR} along these lines would likely start to further separate out the work on macro-realism from that on epistemic superpositions. As noted in Sec.~\ref{sec:MR:relation-between-results}, superpositions are pure states, so extensions looking for new results there will likely stay within the realm of pure-state quantum theory. This is in contrast to macro-realism, which is defined independently from quantum theory, so extensions in that direction may well lead to mixed quantum theory and POVM measurements.

The methods used for the foundational results of Chap.~\ref{ch:SO} were also adapted to prove communication bounds in quantum information in Sec.~\ref{sec:SO:communication}. The result was Thm.~\ref{thm:SO:communication-bound} which provides a bound on the ability for classical resources to perfectly simulate quantum channels which asymptotically matches the best known results from the literature \cite{Montanaro16,Montina11b,BrassardCleve+99} while using a substantially simpler argument. In particular, it was proved that at least $2^{n_q + \mathcal{O}(1)} - 1$ bits of classical communication is required for the simulation of a noiseless $n_q$-qubit quantum channel, even when using arbitrary shared random data. Beyond its simplicity, the proof of Thm.~\ref{thm:SO:communication-bound} has two key advantages over other proofs for similar bounds. First, it identifies an underlying reason for the result: it is a consequence of the existence of exponentially large sets of quantum states that have small mutual Born-rule overlaps. Second, the proof identifies a general strategy for deriving such bounds. If one found different sets of fingerprinting states and/or classical error-correction codes one could simply re-run the proof to potentially produce more powerful results [Sec.~\ref{sec:SO:further-communication-bounds}].

It is worth noting that Thm.~\ref{thm:SO:communication-bound} was developed independently from the exponential bound for the same problem presented in Ref.~\cite{Montanaro16} and preceded the publication of that bound. However, as it stands the bound of Ref.~\cite{Montanaro16} has two advantages over Thm.~\ref{thm:SO:communication-bound}. First, it applies to approximate simulations as well as perfect simulations of quantum channels. Second, it also bounds simulations using two-way classical communication. The problem of extending the proof of Thm.~\ref{thm:SO:communication-bound} to apply to approximate simulations is a topic of ongoing research which is currently unready for publication but remains very promising nonetheless. Simulations using two-way communication are less related to ontological models and therefore of less relevance to this thesis, but it is an interesting open problem as to whether the techniques of Thm.~\ref{thm:SO:communication-bound} could also be applied to this more general setting.

As a final note on Thm.~\ref{thm:SO:communication-bound}, one can understand the simplicity and power of the proof as coming from the combination of two facts. First, the existence of exponentially large sets of quantum states with bounded Born-rule overlap discussed above. Second, the property of anti-distinguishability and its implications for ontological models and simulations [Sec.~\ref{sec:SO:anti-distinguishability}]. The result is the existence of exponentially large sets of quantum states such that all triple subsets are anti-distinguishable. It seems likely that this fact can be fruitfully used to prove results in many areas of quantum information, potentially including quantum cryptography and quantum algorithms.

Finally, it has been assumed throughout Chaps.~\ref{ch:SO}, \ref{ch:MR} that all quantum systems are finite-dimensional for clarity. If such extensions are needed, it should not be conceptually difficult to extend any of these results to the infinite case. In particular, Thm.~\ref{thm:SO:no-overlap-large-d} strongly suggests that in the infinite case one would find zero overlap between many pairs of quantum states, which would be a very powerful restriction on many types of ontological models.

\chapter{Quantum Conditional Independence and Quantum Common Causes} \label{ch:CI}

\section{The Need for Quantum Conditional Independence}

The ontology of quantum theory is normally discussed in terms of quantum states, as in Chaps.~\ref{ch:SO}, \ref{ch:MR}. However, scientists also typically consider causal influences to be real. If quantum theory challenges our notions of ontology of the states of systems, then it is natural to ask how it might also challenge our causal notions. The simplest scenario where this becomes apparent is in the case of complete common causes and in the related concept of conditional independence. This chapter analyses these fully in a quantum universe. Much of the work presented here has been published in Ref.~\cite{AllenBarrett+17}.

\subsection{The Reality of Causal Influences}

It is quite typical to think of causal influences as being ``real'' (without wishing to go into specifics as to what that might mean). Indeed, it is common to hear that the business of science is primarily to discover what these real causal influences are \cite{Pearl09,SpirtesGlymour+01}. In particular, the whole science of causal discovery algorithms is predicated on the idea that there exist objective causes that can be discovered. If even the possibility of ontological causal influences is taken seriously in a quantum universe, then there needs to be a framework in which to discuss quantum causal influences.

It is important, especially for this chapter and the next, to distinguish causal notions from probabilistic or statistical ones. Causation is about influences between events, while probability and statistics describe if and how often events (or combinations of events) occur.

\emph{Reichenbach's principle} \cite{Reichenbach56} might be viewed as a cornerstone of causal reasoning. It is often paraphrased as ``no correlation without causation''. Thus providing a link between the probabilistic/statistical idea of ``correlation'' and that of causation. Reichenbach's principle is so well ingrained in both everyday and scientific thought that it is rarely specifically discussed outside of philosophy.

To illustrate, if the dog barks every time the post arrives people naturally assume that it is the arrival of the post (the sound it makes, the smell of the deliverer, \emph{etc.}) that causes the dog to bark, rather than a mere coincidence. Similarly, if every time the cat runs away the dog is scolded it is likely that both events are caused by the dog barking rather than either a coincidence or one event being a cause for the other. More seriously, without some general commitment to Reichenbach's principle (or something sufficiently similar) it would be impossible to regard two experiments as independent due to a lack of causal mechanism between them. This would be catastrophic for science in general.

So Reichenbach's principle calls for causal links between any correlated events $A$ and $B$. What are these links? They must be at least one of the following: $A$ is a cause for $B$, $B$ is a cause for $A$, or there is some common cause for both $A$ and $B$. The last of these is the most interesting. Reichenbach's principle requires a quantitative restriction on the probabilities if the link is solely via a common cause: conditional independence. A full statement and discussion of Reichenbach's principle is deferred until Sec.~\ref{sec:CI:Reichenbach}.

\subsection{Causation in Bell Experiments} \label{sec:CI:introduction-Bell}

Despite the central role of causal explanations in science and life, certain quantum experiments seem to elude such explanations. The best known of these are Bell experiments---\emph{viz.} experiments of the type seen in Bell's theorem \cite{Bell66}. In such experiments, Alice prepares a pair of systems that are distributed to Bob and Clare who perform independent measurements on those systems. These experiments are purposefully constructed to suggest the only reasonable causal link between the measurement outcomes of Bob and Clare is via a common cause (presumably involving Alice). Bell's theorem proceeds to show that certain quantum experiments of this form can produce statistics which violate Reichenbach's principle for any conceivable common cause. The findings of Bell's theorem have been thoroughly experimentally verified \cite{HensenBernien+15,ShalmMeyer-Scott+15,GiustinaVersteegh+15}.

Bell's theorem is most often discussed in terms of non-locality, however recent interpretations have put its causal implications front-and-centre \cite{WoodSpekkens15,CavalcantiLal14,HensonLal+14,PienaarBrukner15}. In particular, Ref.~\cite{WoodSpekkens15} considers the statistics observed in Bell experiments in terms of classical causal models. It is shown that no classical causal model---even those with exotic properties such as retrocausality and superluminal signalling---can account for these statistics without undesirable fine tuning. Fine tuning is typically considered enough to rule out a proposed causal explanation \cite{Pearl09}. Note that, at least from a causal perspective, this is strictly stronger than Bell's theorem. Bell's theorem assumes that all causal links except common causes are ruled out by the experimental structure, while in Ref.~\cite{WoodSpekkens15} all causal structures are considered.

The conclusion must be that quantum theory demands a revision of our ideas of causality---either to allow and explain fine tuning or to revise what is permitted as a common cause. At very least, Bell experiments demand a causal explanation. It has been suggested \cite{CavalcantiLal14,WoodSpekkens15} that an appropriate revision could preserve a form of Reichenbach's principle while rejecting fine tuning.

In this chapter, such a revised quantum Reichenbach's principle is motivated and presented. Specifically, this chapter mostly deals with the case where quantum systems $B$ and $C$ are in the causal future of system $A$. The aim is to characterise the quantum channels from $A$ to $BC$ that can occur when $A$ is a complete common cause for $B$ and $C$. Since Reichenbach's principle uses conditional independence to characterise common causes [Sec.~\ref{sec:CI:Reichenbach}], this is done by generalising to a natural quantum conditional independence. Because of this limited scope, the quantum conditional independence thus defined only needs to make sense when $A$ is in the causal past of $BC$. Compare this with classical conditional independence, which can hold between random variables with any causal relationships. Quantum conditional independence can be defined in several equivalent ways, each of which naturally generalises a corresponding expression for classical conditional independence. Moreover, both quantum conditional independence and quantum Reichenbach presented here reduce to the classical cases in appropriate limits. These properties strongly suggest that the definitions here are the correct way of generalising these classical concepts to natural quantum analogues.

The equivalence of four ways to define quantum conditional independence will be proved in Sec.~\ref{sec:CI:proving-QCI-conditions}. A generalisation to channels from $A$ to $k>2$ outputs will then be given in Sec.~\ref{sec:CI:k-outputs-generalisation}.

Of course, there is much more to the classical Reichenbach's principle and classical conditional independence than channels from $A$ to $BC$. Indeed, there is a whole framework of classical causal models that generalises these concepts. The problem of making the corresponding generalisations for quantum Reichenbach's principle and quantum conditional independence will be tackled in Chap.~\ref{ch:CM}.

For clarity of presentation in this chapter and the next, all random variables and graphs will be taken to be finite and all quantum systems finite-dimensional. It is not anticipated that any conceptual changes would be needed to extend the results to the infinite cases.

\subsection{Metaphysically-Neutral Quantum Conditional Independence}

Naturally, the subject matter of this chapter and the next involves some discussion of metaphysical concepts. However, the results themselves can be viewed somewhat independently from the metaphysics. Indeed, an aim in writing these chapters is to make them as metaphysically-neutral as the results allow.

As in the rest of this thesis, it is suspected that the results presented here can inform discussions of causation and ontology in metaphysics, but this thesis does not pretend to contain those discussions. In particular, no metaphysical positions need to be subscribed to in order to derive and understand the results presented here.

For example, the above discussion used the assumption that causal influences are fundamentally real to motivate the work in this chapter. However, it is likely that a philosopher who argues against this position would still find value in the definition of quantum conditional independence and quantum common cause derived here. Even for that philosopher, Bell's theorem provides strong reasons for quantum common causes.

In short, the metaphysical content of this chapter and the next (while necessary) is always secondary to the physical and mathematical content.

\section{Classical Conditional Independence and Reichenbach's Principle} \label{sec:CI:Reichenbach}

\subsection{Reichenbach's Principle in Two Parts} \label{sec:CI:Reichenbach-bipartite}

Reichenbach's principle \cite{Reichenbach56} can be usefully thought of as comprising two parts (as identified in Ref.~\cite{CavalcantiLal14}). The first, the \emph{qualitative part}, is the claim ``no correlation without causation'' most usually associated with Reichenbach. This is supplemented by the \emph{quantitative part} which characterises common causes and provides the link to the probabilistic notion of conditional independence.
 
In full, the qualitative part of Reichenbach's principle is: if two physical variables $Y$ and $Z$ are statistically dependent, then there should be a causal explanation of this fact such that at least one of the following is true: 
\begin{enumerate}[label=(\alph*)]
\item $Y$ is a cause of $Z$;
\item $Z$ is a cause of $Y$; or
\item there is a third variable $X$ which is a common cause of both $Y$ and $Z$.
\end{enumerate}
These causal influences may be indirect (mediated by other variables) and each of these cases may be true many times over (multiple chains of causation or multiple common causes). If none of these causal influences exist, then $Y$ and $Z$ are \emph{ancestrally independent}: they have no causal ancestor in common. In this way, an alternative phrasing for the qualitative part is: ancestral independence of $Y$ and $Z$ implies statistical independence $\mathbb{P}(Y,Z) = \mathbb{P}(Y)\mathbb{P}(Z)$. 

The quantitative part of Reichenbach's principle applies to the case where $X$ and $Y$ share a common cause but neither is a direct cause of the other (that is, only case (c) from the qualitative part is true). $X$ is called the \emph{complete common cause} for $Y$ and $Z$ when $X$ is the union of all common causes and there is no cause (direct or indirect) from $Y$ to $Z$ or vice versa. The quantitative part states that in such a case $Y$ and $Z$ are conditionally independent given their complete common cause $X$, that is
\begin{equation}
\mathbb{P}(Y,Z|X) = \mathbb{P}(Y|X) \mathbb{P}(Z|X).
\end{equation}

Note that Reichenbach's principle therefore forms a two-way link between causal and statistical notions. First, if a statistical condition is met (statistical dependence), then at least one causal condition must be met (the qualitative part). Second, in the case of a particular causal condition being met (simple common cause), a further statistical condition must be met for consistency (conditional independence).

\subsection{Justifying the Quantitative Part} \label{sec:CI:justifying-classical-Reichenbach}

Unsurprisingly, justification for Reichenbach's principle is a delicate problem in philosophy. Basic questions from across the metaphysics of causation and probability play into this issue. Nonetheless, it is useful to present one way to justify the quantitative part given the qualitative part, as this can then be used to motivate its quantum counterpart. Really, this ``justification'' is just for illustrative and motivational purposes. It will involve temporarily making a quite strong metaphysical assumption. However, Reichenbach's principle (and its quantum generalisation in Sec.~\ref{sec:CI:quantum-Reichenbach}) stand apart from such assumptions.

For the sake of definiteness, suppose the position of a determinist is taken. So uncertainty is understood as arising from ignorance about deterministic dynamics or from initial conditions. The task is to use this assumption and the qualitative part of Reichenbach's principle to show that if $X$ is the complete common cause for $Y$ and $Z$ then $\mathbb{P}(Y,Z|X) = \mathbb{P}(Y|X)\mathbb{P}(Z|X)$.

A classical channel describing the influence of random variable $X$ on $Y$ is given by a probability distribution $\mathbb{P}(Y|X)$. The determinist will always view this as the result of a deterministic \emph{function} $f:X \times \Lambda \rightarrow Y$ for some unknown variable $\Lambda$. Any such channel can always be dilated in this way.

\begin{definition}
A classical deterministic \emph{dilation} of a classical channel $\mathbb{P}(Y|X)$ is given by some function $f:X\times\Lambda\rightarrow Y$ for random variable $\Lambda$ taking values $\lambda$ with probability distribution $\mathbb{P}(\Lambda)$ such that
\begin{equation}
\mathbb{P}(Y | X = x) = \sum_\lambda \mathbb{P}( \lambda ) \delta(Y,f(x,\lambda))
\end{equation}
where $\delta(Y,y^\prime) = 1$ for $Y=y^\prime$ and $0$ otherwise.
\end{definition}

Consider the case where $X$ is a complete common cause for $Y$ and $Z$, illustrated in Fig.~\ref{fig:CI:X-to-YZ}. Let $f = (f_Y, f_Z)$ be a classical dilation of $\mathbb{P}(Y,Z|X)$ for functions $f_Y : \Lambda\times X\rightarrow Y$ and $f_Z : X\times\Lambda\rightarrow Z$. Since $X$ is a \emph{complete} common cause, $\Lambda$ must split into a pair of variables $\Lambda = \Lambda_Y \times \Lambda_Z$ such that $\Lambda_Y$ only influences $Y$ and $\Lambda_Z$ only influences $Z$. If $\Lambda$ were not of this form then it would be a new common cause for $Y$ and $Z$ which is not screened through $X$, violating the assumption that $X$ is a complete common cause. Therefore, the dilation takes the form of $f_Y : \Lambda_Y \times X  \rightarrow Y$ and $f_Z : X \times \Lambda_Z \rightarrow Z$ such that
\begin{equation} \label{eq:CI:justifying-classical-Reichenbach-dilation}
\mathbb{P}(Y, Z | X = x) = \sum_{\lambda_Y, \lambda_Z} \mathbb{P}( \lambda_Y, \lambda_Z ) \delta(Y,f_Y(\lambda_Y,x)) \delta(Z,f_Z(x,\lambda_Z))
\end{equation}
as illustrated in Fig.~\ref{fig:CI:X-to-YZ-dilated}.

\begin{figure}
\begin{centering}
\includegraphics[scale=0.35,angle=0]{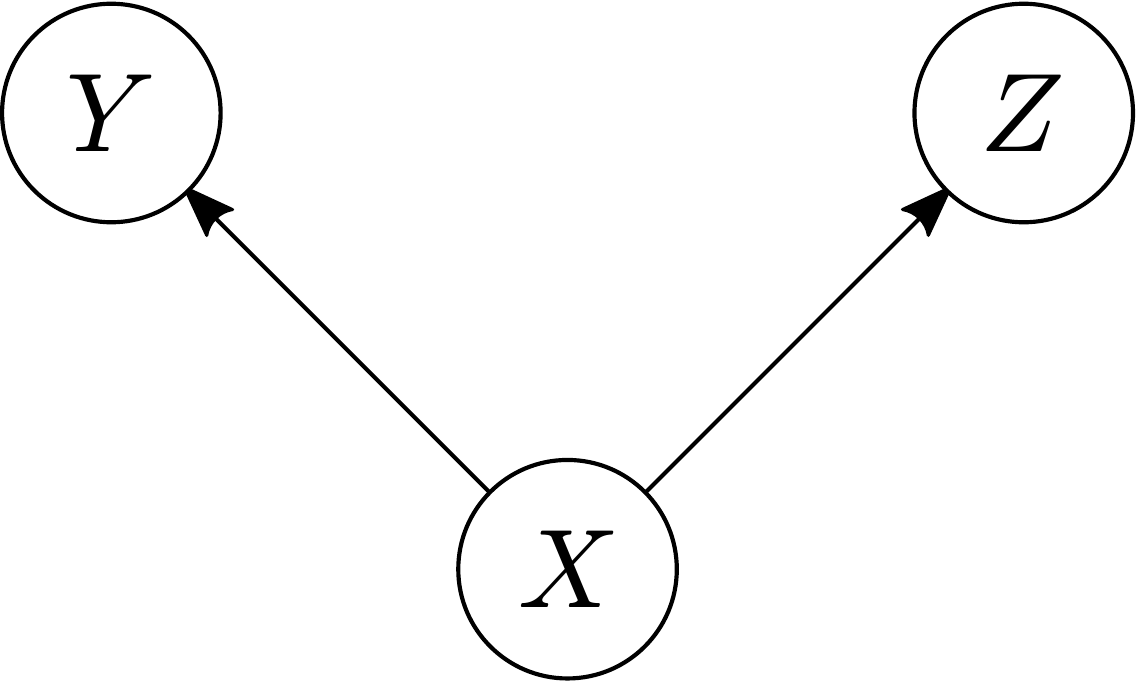}
\par\end{centering}
\protect\caption{A causal structure represented as a directed acyclic graph depicting that $X$ is the complete common cause of $Y$ and $Z$.}\label{fig:CI:X-to-YZ}
\end{figure}

\begin{figure}
\begin{centering}
\includegraphics[scale=0.35,angle=0]{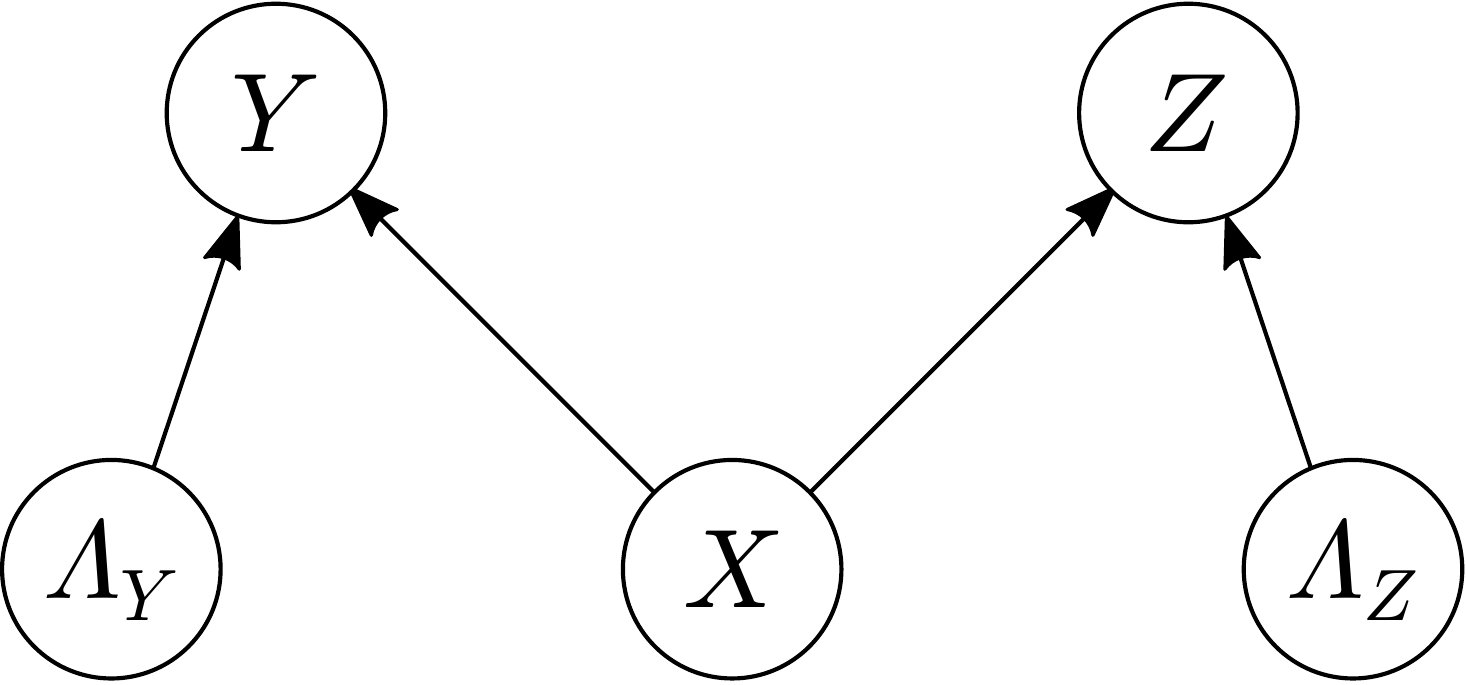}
\par\end{centering}
\protect\caption{The causal structure of Fig.~\ref{fig:CI:X-to-YZ}, expanded so that $Y$ and $Z$ each has a latent variable as a causal parent in addition to $X$ so that both $Y$ and $Z$ can be made to depend functionally on their parents.}\label{fig:CI:X-to-YZ-dilated}
\end{figure}

Using the qualitative part of Reichenbach's principle, ancestral independence of $\Lambda_Y$ and $\Lambda_Z$ implies their statistical independence $\mathbb{P}(\Lambda_Y,\Lambda_Z) = \mathbb{P}(\Lambda_Y)\mathbb{P}(\Lambda_Z)$. Equation~(\ref{eq:CI:justifying-classical-Reichenbach-dilation}) then immediately implies $\mathbb{P}(Y,Z|X) = \mathbb{P}(Y|X)\mathbb{P}(Z|X)$.

A well-known converse statement is also worth noting: any classical channel $\mathbb{P}(Y,Z|X)$ satisfying $\mathbb{P}(Y,Z|X) = \mathbb{P}(Y|X)\mathbb{P}(Z|X)$ admits a deterministic dilation where $X$ is the complete common cause of $Y$ and $Z$ \cite{Pearl09}.

This completes the illustrative justification for the quantitative part of Reichenbach's principle. Note that the mathematics of classical deterministic dilations is independent of the metaphysical assumptions. In fact, just by acknowledging the logical possibility of a deterministic dilation of $\mathbb{P}(Y,Z|X)$ one can say the following.

\begin{definition} \label{def:CI:compatible-classical-dilation}
A classical channel $\mathbb{P}(Y,Z|X)$ is \emph{compatible with $X$ being the deterministic common cause for $Y$ and $Z$} if and only if there is a deterministic dilation of the channel of the form
\begin{equation} \label{eq:CI:def:compatible-classical-dilation}
\mathbb{P}(Y, Z | X = x) = \sum_{\lambda_Y, \lambda_Z} \mathbb{P}( \lambda_Y ) \mathbb{P}( \lambda_Z ) \delta(Y,f_Y(\lambda_Y,x)) \delta(Z,f_Z(x,\lambda_Z)).
\end{equation}
\end{definition}

With this definition, the following can immediately be stated.

\begin{theorem} \label{thm:CI:classical-compatibility}
Given a classical channel $\mathbb{P}(Y,Z|X)$, the following are equivalent:
\begin{enumerate}
\item $\mathbb{P}(Y,Z|X)$ is compatible with $X$ being the deterministic common cause for $Y$ and $Z$.
\item $\mathbb{P}(Y,Z|X) = \mathbb{P}(Y|X)\mathbb{P}(Z|X)$.
\end{enumerate}
\end{theorem}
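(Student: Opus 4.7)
The plan is to establish the equivalence by proving each direction separately. Direction $(1)\Rightarrow(2)$ is essentially an algebraic factorisation that follows immediately from the product form $\mathbb{P}(\lambda_Y)\mathbb{P}(\lambda_Z)$ in Eq.~(\ref{eq:CI:def:compatible-classical-dilation}). Direction $(2)\Rightarrow(1)$ requires an explicit construction of latent variables $\Lambda_Y,\Lambda_Z$ and deterministic functions $f_Y,f_Z$, together with a verification that the resulting dilation reproduces the given channel. Because everything is finite and discrete, both directions reduce to bookkeeping rather than anything delicate.

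For $(1)\Rightarrow(2)$, I would start from Eq.~(\ref{eq:CI:def:compatible-classical-dilation}) and use the product form of the joint distribution over $(\Lambda_Y,\Lambda_Z)$ to split the double sum into a product of single sums:
\begin{equation}
\mathbb{P}(Y,Z|X=x)=\Bigl(\sum_{\lambda_Y}\mathbb{P}(\lambda_Y)\delta(Y,f_Y(\lambda_Y,x))\Bigr)\Bigl(\sum_{\lambda_Z}\mathbb{P}(\lambda_Z)\delta(Z,f_Z(x,\lambda_Z))\Bigr).
\end{equation}
Marginalising the left-hand side over $Z$ collapses the $\lambda_Z$-sum to unity, identifying the first bracket as $\mathbb{P}(Y|X=x)$; symmetrically for the second bracket, giving the conditional independence claim.

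For $(2)\Rightarrow(1)$, the key step is to exhibit a concrete dilation. I would take $\Lambda_Y$ to be the set of all functions $X\to Y$ with product distribution $\mathbb{P}(\lambda_Y)=\prod_{x}\mathbb{P}(Y=\lambda_Y(x)\,|\,X=x)$, and define $f_Y(\lambda_Y,x)\eqdef\lambda_Y(x)$; construct $\Lambda_Z$ and $f_Z$ analogously. A direct calculation then shows $\sum_{\lambda_Y}\mathbb{P}(\lambda_Y)\delta(y,f_Y(\lambda_Y,x))=\mathbb{P}(Y=y\,|\,X=x)$, because fixing the value at $x$ freezes one factor in the product while the remaining factors sum to one. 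Using the assumed factorisation $\mathbb{P}(Y,Z|X)=\mathbb{P}(Y|X)\mathbb{P}(Z|X)$ and independence of $\Lambda_Y,\Lambda_Z$ by construction, the product of the two resulting sums equals $\mathbb{P}(Y,Z|X=x)$, exactly the form required by Def.~\ref{def:CI:compatible-classical-dilation}.

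The main obstacle is conceptual rather than technical: one must take care that the construction in $(2)\Rightarrow(1)$ genuinely yields \emph{independent} latent variables whose influences on $Y$ and $Z$ respectively are mediated only through $f_Y$ and $f_Z$, so that the dilation really is of the common-cause form rather than sneaking in a hidden link between $Y$ and $Z$. The product-measure construction above handles this cleanly, and the discreteness/finiteness assumption on all variables means no measure-theoretic subtleties arise. Alternative constructions (e.g.\ an inverse-CDF style one based on uniform variables on $[0,1]$) would work equally well but require a trivial discretisation to fit the finite setting used throughout the chapter.
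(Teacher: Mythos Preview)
Your proposal is correct and aligns with the paper's treatment: the paper likewise notes that $(1)\Rightarrow(2)$ is immediate from Eq.~(\ref{eq:CI:def:compatible-classical-dilation}), and for $(2)\Rightarrow(1)$ simply cites Pearl's textbook, whereas you spell out the standard functional-model construction (latent variable as the space of functions $X\to Y$ with product measure) that one finds there. The only minor difference is presentational---you provide the explicit dilation rather than deferring to the literature---and your verification is sound.
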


That (1) $\Rightarrow$ (2) follows immediately from Eq.~(\ref{eq:CI:def:compatible-classical-dilation}). That (2) $\Rightarrow$ (1) is proved in Ref.~\cite{Pearl09}, as noted above.

The reverse implication, (2) $\Rightarrow$ (1), shows that common cause is a possible causal explanation of conditional independence. It is important to note that common cause is only one \emph{possible} causal explanation, not the \emph{only} causal explanation. However, in the case where $Y$ and $Z$ are in the causal future of $X$ the other causal explanations will involve fine tuning, suggesting that a complete common cause explanation is probably the best explanation.

Theorem~\ref{thm:CI:classical-compatibility} can be read in two ways: as a summary of the justification presented above, or simply as a pair of equivalent definitions for conditional independence. The reading depends on how condition (1) is understood. If condition (1) is taken as a causal statement, then Thm.~\ref{thm:CI:classical-compatibility} summarises the above justification. However, if condition (1) is taken as simply a formal probabilistic statement it can be read as a definition of conditional independence.

\section{Quantum Conditional Independence and Quantum Reichenbach} \label{sec:CI:quantum-Reichenbach}

Having thoroughly introduced Reichenbach's principle classically, these ideas can be used to motivate a natural quantum generalisation. Before proceeding to the results some useful notation for quantum channels should be introduced.

\subsection{The Choi-Jamio\l{}kowski Isomorphism} \label{sec:CI:cj-isomorphism}

A quantum channel from system $A$ to $B$ is normally given as a completely positive trace preserving (CPTP) map from density operators on $A$ to density operators on $B$, written as $\mathcal{E}_{B|A}:\mathcal{D}(\mathcal{H}_A) \rightarrow \mathcal{D}(\mathcal{H}_B)$. Alternatively, the same channel can equivalently be expressed as a density operator in $\mathcal{D}(\mathcal{H}_B \otimes \mathcal{H}_A)$ using the Choi-Jamio\l{}kowski isomorphism \cite{Jamiolkowski72,Choi75}. There are several variations on the exact mathematical conventions of this isomorphism \cite{LeiferSpekkens13,Leifer11}, though morally they're more-or-less interchangeable. The notation convention used in this chapter and the next is
\begin{equation} \label{eq:CI:cj-isomorphism}
\rho_{B|A} \eqdef \sum_{i,j} \mathcal{E}_{B|A} \left( |i\rangle_A \langle j| \right) \otimes |i\rangle_{A^\ast} \langle j|
\end{equation}
where $\{|i\rangle_A\}_i$ is some orthonormal basis for $\mathcal{H}_A$ and $\{|i\rangle_{A^\ast}\}_i$ is the dual basis in the dual space $\mathcal{H}_{A^\ast}$. Strictly, $\rho_{B|A}$ is therefore an operator on $\mathcal{H}_B \otimes \mathcal{H}_{A^\ast}$ in this particular version of the isomorphism. This choice of definition has two key advantages over the alternatives: $\rho_{B|A}$ is a positive operator and the same regardless of the choice of basis used to define it. Note also that it is normalised such that $\Tr_B \rho_{B|A} = \mathbbm{1}_{A^\ast}$ and $\Tr\rho_{B|A} = d_A$.

As $\rho_{B|A}$ is an equivalent way to express the channel $\mathcal{E}_{B|A}$, it follows that there must be a way to express $\rho_B = \mathcal{E}_{B|A}(\rho_A)$ in terms of $\rho_{B|A}$ and $\rho_A$. This is done by defining the \emph{linking operator}
\begin{equation} \label{eq:CI:linking-operator}
\tau^\id_A \eqdef \sum_{i,j} |i\rangle_{A^\ast}\langle j| \otimes |i\rangle_A \langle j|
\end{equation}
on $\mathcal{H}_{A^\ast}\otimes\mathcal{H}_A$ where, again, $\{|i\rangle_A\}_i$ is an arbitrary orthonormal basis and $\{|i\rangle_{A^\ast}\}_i$ is its dual. This will also be an important operator for general quantum causal models [Sec.~\ref{sec:CM:predictions}]. Using this, it is easy to verify that
\begin{equation}
\rho_B = \mathcal{E}_{B|A}(\rho_A) = \Tr_{A^\ast A} \left( \rho_{B|A} \tau^\id_A \rho_A \right).
\end{equation}

Note the similarities between $\rho_{B|A}$ for a quantum channel and $\mathbb{P}(Y|X)$ for a classical channel \cite{LeiferSpekkens13}. In particular, note that in a classical decohering limit where the quantum systems and channels are all diagonal in some particular choices of bases, then the main diagonal of $\rho_{B|A}$ is exactly $\mathbb{P}(Y|X)$ and all expressions reduce to their classical counterparts.

As in the rest of the thesis, a density operator written with missing system subscripts indicates the partial trace over those systems. So, for example, given a channel from $AB$ to $CD$, $\rho_{CD|\cdot} \eqdef \Tr_{AB}\rho_{CD|AB}$ and, by definition, $\rho_{\cdot|AB} = \mathbbm{1}_{AB}$. It will also often be convenient to renormalise a channel $\rho_{B|A}$ to unity. This is written with a caret, $\hat{\rho}_{B|A} \eqdef \rho_{B|A}/d_A$.

\subsection{Justifying Quantum Reichenbach} \label{sec:CI:justifying-quantum-Reichenbach}

This subsection follows the classical example of Sec.~\ref{sec:CI:justifying-classical-Reichenbach} to develop a natural quantum version of Reichenbach's principle via a definition for quantum conditional independence.

Following the example of Ref.~\cite{CavalcantiLal14}, the qualitative part of Reichenbach's principle can be applied to quantum theory almost unchanged: if quantum systems $B$ and $C$ are correlated then this implies a causal connection of at least one of the three forms listed in Sec.~\ref{sec:CI:Reichenbach-bipartite}. The only difference is to clarify that ``correlated'' here means that there exist independent bipartite measurements on $B$ and $C$ that have correlated statistics. Alternatively, this can be phrased as: ancestral independence of $B$ and $C$ implies no correlated local statistics.

The challenge, as noted in Sec.~\ref{sec:CI:introduction-Bell}, is to identify quantum conditional independence in the $A\rightarrow BC$ scenario and use it to define a quantum quantitative part for Reichenbach's principle. That is, if $A$ is a complete common cause for $B$ and $C$, as illustrated in Fig.~\ref{fig:CI:A-to-BC}, what restrictions should hold for the channel $\rho_{BC|A}$, corresponding to $\mathbb{P}(Y,Z|X) = \mathbb{P}(Y|X)\mathbb{P}(Z|X)$?

\begin{figure}
\begin{centering}
\includegraphics[scale=0.35,angle=0]{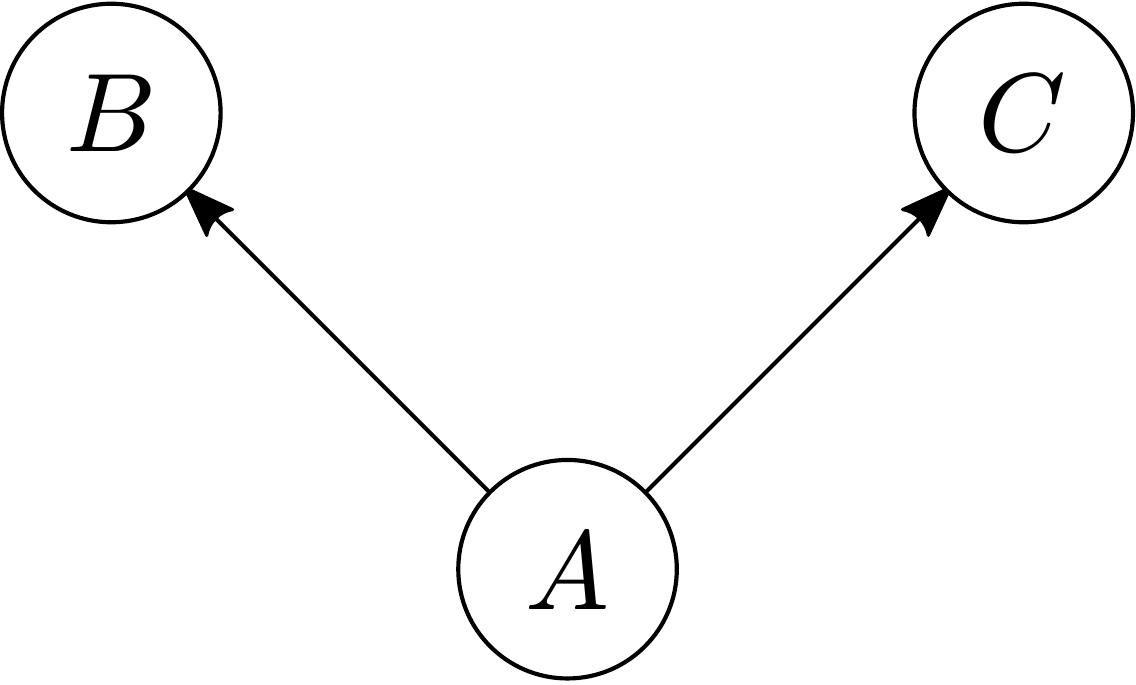}
\par\end{centering}
\caption{A causal structure depicting three quantum systems with $A$ the complete common cause of $B$ and $C$.}\label{fig:CI:A-to-BC}
\end{figure}

It may be tempting to first seek a quantum analogue of the joint distribution $\mathbb{P}(X,Y,Z)$. However, textbook quantum theory does not provide such an analogue when some systems are causally dependent on others and there are serious reasons to believe that such an analogue might be impossible \cite{HorsmanHeunen+16}. This is the reason for concentrating on the channels $\mathbb{P}(Y,Z|X)$ and $\rho_{BC|A}$ and seeking conditions that apply to them directly.

In Sec.~\ref{sec:CI:justifying-classical-Reichenbach} a justification of the quantitative part of Reichenbach's principle from the qualitative part was given by temporarily assuming that classical dynamics is fundamentally deterministic. Suppose, by analogy, that quantum dynamics is taken to be fundamentally unitary. What characterisation of $\rho_{BC|A}$ in Fig.~\ref{fig:CI:A-to-BC} follows from that? Just as in Sec.~\ref{sec:CI:justifying-classical-Reichenbach}, this is a temporary assumption in order to motivate the final result, which will stand apart from it.

It is well known that any quantum channel $\rho_{B|A}$ (equivalently, $\mathcal{E}_{B|A}$) can be viewed as arising from underlying unitary dynamics via the Stinespring dilation \cite{NielsenChuang00}.

\begin{definition}
A \emph{unitary dilation} of a quantum channel $\mathcal{E}_{B|A}$ is given by some unitary $U$ on $\mathcal{H}_B \otimes \mathcal{H}_F \cong \mathcal{H}_A \otimes \mathcal{H}_L$ for some ancillary system $L$ in state $\rho_L$ such that
\begin{equation}
\mathcal{E}_{B|A}( \cdot ) = \Tr_F \left( U(\cdot \otimes \rho_L ) U^\dagger \right),
\end{equation}
where $F$ is some system of dimension $d_F = d_A d_L / d_B$.
\end{definition}

Applying this to the common cause situation of Fig.~\ref{fig:CI:A-to-BC} gives
\begin{equation}
\rho_{BC|A} = \Tr_{FLL^\ast} \left( \rho^U_{BFC|AL} \tau^\id_L \rho_L \right)
\end{equation}
where $\rho^U_{BFC|AL}$ is the Choi-Jamio\l{}kowski operator for the unitary and $\tau_L$ is defined in Eq.~(\ref{eq:CI:linking-operator}). Note that $F$ is required so that the input and output dimensions match, but is unimportant and will normally be traced out.

Classically, in Sec.~\ref{sec:CI:justifying-classical-Reichenbach}, it was argued that the ancilla variable $\Lambda$ must split into two independent variables to preserve $X$ as a complete common cause. This implicitly uses the obvious idea that if a deterministic causal relationship $f:X\times \bar{X} \rightarrow Y$ cannot be equivalently written as some other function $f^\prime :X\rightarrow Y$, then $\bar{X}$ must have a causal influence on $Y$. How could it not? Its value non-trivially affects $Y$.

In unitary quantum theory the corresponding condition is slightly less obvious, but can be made precise as follows.

\begin{definition} \label{def:CI:quantum-no-causal-influence}
For a unitary channel $\rho_{B\bar{B}|A\bar{A}}$, system $\bar{A}$ has \emph{no causal influence} on system $B$ if and only if the marginal output state at $B$ is independent of any extra operations applied to the input $\bar{A}$ system before applying $\rho_{B\bar{B}|A\bar{A}}$.
\end{definition}

This definition captures the idea that interventions on $\bar{A}$ (including preparing some specific input state) cannot affect the local output at $B$ and therefore is a sensible notion for ``no causal influence''. Similar properties of unitary channels have been studied before in different contexts, in particular in Ref.~\cite{SchumacherWestmoreland05} under the guise of ``non-signalling'' and in Refs.~\cite{BeckmanGottesman+01,EggelingSchlingemann+02} for ``semi-causal'' unitaries. An equivalent definition, more directly useful to the treatment here, is that $\bar{A}$ has no causal influence on $B$ if and only if the partial trace of the channel satisfies\footnote{To see this, consider the marginal output at $B$ gained by preceding $\rho_{B\bar{B}|A\bar{A}}$ with a channel that discards the input at $\bar{A}$ and replaces it with $\mathbbm{1}/d_{\bar{A}}$. This must be the same as the marginal output from the original channel, so Eq.~(\ref{eq:CI:cj-isomorphism}) gives the stated result.} $\rho_{B|A\bar{A}} = \rho_{B|A} \otimes \mathbbm{1}_{\bar{A}^\ast}$.

Just as in Sec.~\ref{sec:CI:justifying-classical-Reichenbach}, the assumption that $A$ is a \emph{complete} common cause for $B$ and $C$ therefore implies that $L$ must factorise into ancestrally independent $L_B$ and $L_C$ where $L_B$ has no causal influence on $C$ and $L_C$ has no causal influence on $B$. It follows that the unitary channel $U$ (followed by tracing out $F$) is of the form illustrated in Fig.~\ref{fig:CI:A-to-BC-dilated}. Using the qualitative part of Reichenbach's principle, ancestral independence of $L_B$, $A$, and $L_C$ implies that the input state to $\rho^U_{BFC|L_B A L_C}$ factorises as $\rho_{L_B A L_C} = \rho_{L_B} \rho_A \rho_{L_C}$.

\begin{figure}
\begin{centering}
\includegraphics[scale=0.35,angle=0]{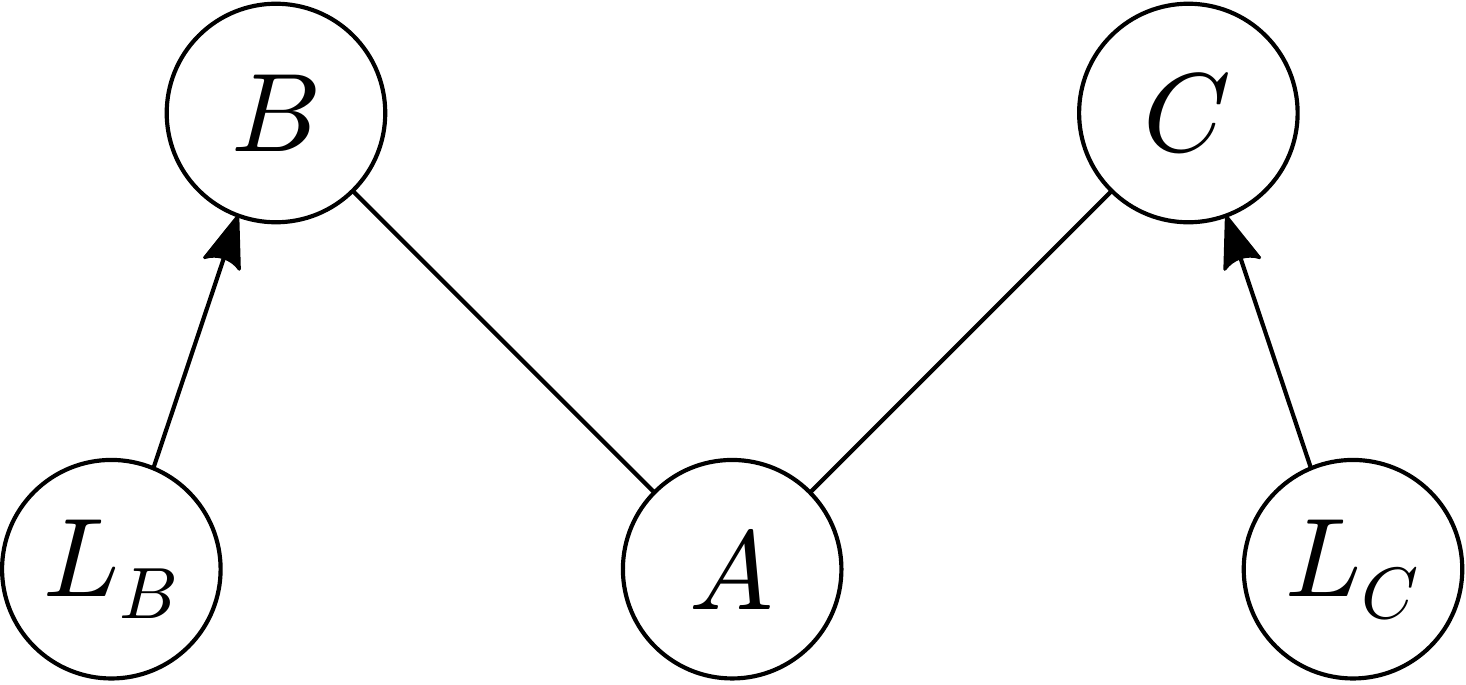}
\par\end{centering}
\caption{The causal structure of Fig.~\ref{fig:CI:A-to-BC}, dilated so that $B$ and $C$ each has a latent system as a causal parent in addition to $A$. $B$ and $C$ (together with $F$, not shown) depend unitarily on their parents $L_B$, $A$, and $L_C$.}\label{fig:CI:A-to-BC-dilated}
\end{figure}

Similarly to the classical case, temporarily assuming that quantum dynamics is unitary has led to the following suggestion for characterising quantum common cause channels.

\begin{definition} \label{def:CI:compatible-unitary-common-cause}
A quantum channel $\rho_{BC|A}$ is \emph{compatible with $A$ being the unitary common cause for $B$ and $C$} if and only if there is a unitary dilation $U$ of the channel of the form
\begin{equation}
\rho_{BC|A} = \Tr_{FL_B L^\ast_B L_C L^\ast_C} \left( \rho^U_{BFC|L_B AL_C} \tau^\id_{L_B} \tau^\id_{L_C} \rho_{L_B} \rho_{L_C} \right)
\end{equation}
where $L_B$ has no causal influence on $C$ and $L_C$ has no causal influence on $B$ in $U$.
\end{definition}

Despite the specific way it was reached, this is a very natural quantum condition. It strongly suggests the following quantum analogue of Thm.~\ref{thm:CI:classical-compatibility}.

\begin{proposition} \label{prop:CI:quantum-conditional-independence}
The following are equivalent:
\begin{enumerate}
\item $\rho_{BC|A}$ is compatible with $A$ being the unitary complete common cause for $B$ and $C$.
\item $\rho_{BC|A} = \rho_{B|A} \rho_{C|A}$.
\end{enumerate}
\end{proposition}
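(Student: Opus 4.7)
The plan is to prove the equivalence by handling the two implications separately. For (1) $\Rightarrow$ (2), I start with a unitary dilation as in Def.~\ref{def:CI:compatible-unitary-common-cause}. The no-causal-influence hypothesis that $L_C$ cannot affect $B$ is equivalent, via the alternative characterisation noted after Def.~\ref{def:CI:quantum-no-causal-influence}, to the identity $\rho^U_{B|L_BAL_C} = \rho^U_{B|L_BA} \otimes \mathbbm{1}_{L_C^\ast}$. This causes the $L_C$ ancilla to integrate out trivially when computing the marginal at $B$, giving $\rho_{B|A} = \Tr_{L_BL_B^\ast}(\rho^U_{B|L_BA}\,\tau^\id_{L_B}\,\rho_{L_B})$; the symmetric statement holds for $\rho_{C|A}$. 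I would then expand $\rho_{BC|A}$ directly from the dilation, use the product form $\rho_{L_B}\otimes\rho_{L_C}$ of the ancilla input together with the factorisation $\tau^\id_{L_B}\tau^\id_{L_C}$ of the linking operators, and verify by direct computation -- carefully respecting the convention that $\rho_{B|A}\rho_{C|A}$ denotes operator multiplication on the shared $A^\ast$ subsystem, which need not be commutative -- that the resulting expression agrees with $\rho_{B|A}\rho_{C|A}$.

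The direction (2) $\Rightarrow$ (1) is the more substantive one. Given $\rho_{BC|A} = \rho_{B|A}\rho_{C|A}$, I would first apply Stinespring separately to the marginal channels, obtaining isometries $V_B:\mathcal{H}_A\to\mathcal{H}_B\otimes\mathcal{H}_{F_B}$ and $V_C:\mathcal{H}_A\to\mathcal{H}_C\otimes\mathcal{H}_{F_C}$ realising $\rho_{B|A}$ and $\rho_{C|A}$. The task is then to assemble these into a single unitary $U:\mathcal{H}_{L_B}\otimes\mathcal{H}_A\otimes\mathcal{H}_{L_C}\to\mathcal{H}_B\otimes\mathcal{H}_F\otimes\mathcal{H}_C$ that implements the joint channel $\rho_{BC|A}$ and satisfies both no-causal-influence conditions by construction. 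Since $A$ can be consumed only once, the two sub-dilations cannot simply be run in parallel on the nose; instead the conditional independence condition should force a decomposition of $\mathcal{H}_A$ as a direct sum of tensor products (in the spirit of the Hayden--Jozsa--Petz--Winter structure theorem for states of vanishing quantum conditional mutual information) on which $V_B$ and $V_C$ act on complementary factors within each summand. With such a sectorisation in hand, the combined unitary is built by running $V_B$ on one factor with ancilla $L_B$, $V_C$ on the other with ancilla $L_C$, and packaging the excess systems into $F$; the two no-causal-influence conditions then follow because each ancilla acts only on one of the two independent sub-channels.

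The main obstacle is clearly this structural decomposition step in (2) $\Rightarrow$ (1). The non-trivial content is that $\rho_{B|A}\rho_{C|A}$ is a legitimate channel -- in particular that this non-commutative product is positive semi-definite and correctly normalised on $BC$ -- and it is precisely this algebraic fact that I expect to exploit to extract the required block structure of $\mathcal{H}_A$. A plausible technical lemma to isolate is that the range of $\rho_{BC|A}$ factorises through a direct-sum-of-tensors decomposition of $A^\ast$ induced by the commutant structure of the two marginal channels. Alternatively, and perhaps more economically, I would defer the heavy lifting to Sec.~\ref{sec:CI:proving-QCI-conditions}, where further equivalent formulations of quantum conditional independence are promised: one of those reformulations (for instance one phrased in terms of commuting sub-algebras acting on $A^\ast$) may make the required dilation essentially transparent and circumvent the need for a direct decomposition argument altogether.
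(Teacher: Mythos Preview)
Your plan for (2) $\Rightarrow$ (1) is essentially what the paper does: it routes through the intermediate conditions (3) and (4) of Thm.~\ref{thm:CI:QCI-full}, using the Hayden--Jozsa--Petz--Winter structure theorem to obtain the sectorisation $\mathcal{H}_A = \bigoplus_i \mathcal{H}_{A_i^L}\otimes\mathcal{H}_{A_i^R}$ and then building the dilation sector-by-sector exactly as you describe. Your instinct to defer to Sec.~\ref{sec:CI:proving-QCI-conditions} is correct.

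The gap is in your (1) $\Rightarrow$ (2) plan. The no-causal-influence hypotheses give you control only over the \emph{marginals} $\rho^U_{B|L_BAL_C}$ and $\rho^U_{C|L_BAL_C}$; they say nothing directly about the joint $\rho^U_{BC|L_BAL_C}$, which is what you need to compute $\rho_{BC|A}$. Knowing that $L_C$ drops out of the $B$-marginal and $L_B$ drops out of the $C$-marginal does not let you factor the joint by ``direct computation''---that passage from marginal constraints to joint structure is precisely the non-trivial content of the implication. The paper handles this by an entropic argument on the pure Choi state $\hat\rho^U_{BFC|L_BAL_C}$: it shows that several conditional mutual informations vanish (using purity and the two no-influence conditions), then chains these via the semi-graphoid axioms to obtain $I(BL_B:CL_C|A)=0$, and finally invokes the structure theorem to get condition (4), from which (2) follows by a short calculation. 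So (1) $\Rightarrow$ (2) also passes through (4), not around it.

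A small side remark: you write that the product $\rho_{B|A}\rho_{C|A}$ ``need not be commutative''. In fact, as the paper notes just after the proposition, if this product equals the Hermitian operator $\rho_{BC|A}$ then the two factors necessarily commute.
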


This proposition is proved as part of the more general Thm.~\ref{thm:CI:QCI-full} in Sec.~\ref{sec:CI:proving-QCI-conditions}. Note that condition (2) has no ordering ambiguity as Hermiticity guarantees that $[\rho_{B|A},\rho_{C|A}] = 0$. It is not difficult to verify that both conditions of Prop.~\ref{prop:CI:quantum-conditional-independence} reduce to their corresponding classical statements in any decohering limit (where all states and channels are diagonal in some choices of ``classical'' bases).

That Prop.~\ref{prop:CI:quantum-conditional-independence} holds [Sec.~\ref{sec:CI:proving-QCI-conditions}] and is strongly analogous to Thm.~\ref{thm:CI:classical-compatibility} suggests the following definition for quantum conditional independence (at least where $A$ is in the causal past of $B$ and $C$).

\begin{definition} \label{def:CI:quantum-conditional-independence}
For systems $A$, $B$, and $C$ related by quantum channel $\rho_{BC|A}$, the outputs are \emph{quantum conditionally independent} given the input if and only if $\rho_{BC|A} = \rho_{B|A} \rho_{C|A}$.
\end{definition}

Finally, with these definitions in hand, it is easy to state the quantitative part of quantum Reichenbach's principle: in the case where $B$ and $C$ are correlated due exclusively to some complete common cause $A$, then $B$ and $C$ are quantum conditionally independent given $A$ for the channel $\rho_{BC|A}$. Combined with the qualitative part discussed above, this completes the definition of quantum Reichenbach's principle.

So in Prop.~\ref{prop:CI:quantum-conditional-independence}, the quantitative part of quantum Reichenbach's principle is motivated by the (1) $\Rightarrow$ (2) statement. The converse, (2) $\Rightarrow$ (1), should be useful for causal inference, just as in Thm.~\ref{thm:CI:classical-compatibility}

These definitions will form the motivation for most of the work in Chap.~\ref{ch:CM}. For the remainder of this chapter, these definitions will be expanded and applied in various ways.

\subsection{Alternative Expressions for Quantum Conditional Independence} \label{sec:CI:alternative-QCI}

There are many equivalent ways to define conditional independence of a channel $\mathbb{P}(Y,Z|X)$ classically. Two of these are given in Thm.~\ref{thm:CI:classical-compatibility}: there exists a deterministic dilation that factorises between $Y$ and $Z$; and $\mathbb{P}(Y,Z|X) = \mathbb{P}(Y|X)\mathbb{P}(Z|X)$. Two other useful definitions are as follows.

The first is that for every input $\mathbb{P}(X)$ to the channel, the joint distribution over input and output values $\mathbb{P}(Y=y,Z=z|X = x)\mathbb{P}(X = x)$ has vanishing conditional mutual information $I(Y:Z|X) = 0$. Equivalently, this need only hold when the input distribution is uniform $\mathbb{P}(X=x) = 1/|X|$, so that $I(Y:Z|X) = 0$ on $\hat{\mathbb{P}}(Y,Z,X) \eqdef \mathbb{P}(Y,Z|X)/|X|$, where $|X|$ is the cardinality of $X$.

The second definition is that $\mathbb{P}(Y,Z|X)$ is mathematically equivalent to the channel obtained by doing the following: (i) copy the input $X$, (ii) apply channel $\mathbb{P}(Y|X)$ to only one copy of $X$, (iii) apply channel $\mathbb{P}(Z|X)$ to only the other copy.

That these are both equivalent definitions to $\mathbb{P}(Y,Z|X) = \mathbb{P}(Y|X)\mathbb{P}(Z|X)$ are standard results and easily verified. Mathematically, there is no reason to consider one definition as ``more fundamental'' than the others (that depends on your philosophical bent). The definitions used in Thm.~\ref{thm:CI:classical-compatibility} were only given first because they are the closest to the motivational narrative being used. Another approach might have led to presenting other definitions first.

Like the conditions of Thm.~\ref{thm:CI:classical-compatibility}, these two definitions also have natural quantum counterparts. They provide two new ways to define quantum conditional independence for a channel, equivalent to those in Prop.~\ref{prop:CI:quantum-conditional-independence}. Just as in the classical case, none of these definitions need be considered more fundamental than any other.

\begin{proposition} \label{prop:CI:QCI-equivalents}
The following are equivalent to the conditions of Prop.~\ref{prop:CI:quantum-conditional-independence} and to each other:
\begin{enumerate}
\item[3.] $I(B:C|A) = 0$ for the quantum conditional mutual information evaluated on the (positive, trace-one) operator $\hat{\rho}_{BC|A}$.
\item[4.] The Hilbert space  of $A$ has a decomposition $\mathcal{H}_A = \bigoplus_i \mathcal{H}_{A_i^L} \otimes \mathcal{H}_{A_i^R}$ for which $\rho_{BC|A} = \sum_i \left( \rho_{B|A_i^L} \otimes \rho_{C|A_i^R} \right)$, where for each $i$, $\rho_{B|A_i^L}$ is a quantum channel from $A_i^L$ to $B$ and $\rho_{C|A_i^R}$ is a channel from $A_i^R$ to $C$.
\end{enumerate}
\end{proposition}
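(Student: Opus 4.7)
The plan is to close the chain of equivalences by proving (3)$\Leftrightarrow$(4), (4)$\Leftrightarrow$(2), and (4)$\Leftrightarrow$(1); combined with Prop.~\ref{prop:CI:quantum-conditional-independence} this yields mutual equivalence of all four conditions. The essential observation that makes the whole argument possible is that $\hat{\rho}_{BC|A}$ is a genuine density operator on $\mathcal{H}_B \otimes \mathcal{H}_C \otimes \mathcal{H}_{A^\ast}$ whose reduced state on $A^\ast$ is $\mathbbm{1}_{A^\ast}/d_A$, because $\Tr_{BC}\rho_{BC|A} = \mathbbm{1}_{A^\ast}$ by the normalisation convention of Sec.~\ref{sec:CI:cj-isomorphism}. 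This reframes everything in terms of a tripartite quantum state, allowing standard information-theoretic machinery to be deployed.

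For (3)$\Leftrightarrow$(4), condition (3) is the statement that the saturated strong subadditivity $I(B:C|A^\ast)_{\hat{\rho}} = 0$ holds on $\hat{\rho}_{BCA^\ast}$. I would invoke the Hayden--Jozsa--Petz--Winter structure theorem, which states that this happens iff $\mathcal{H}_{A^\ast}$ admits a decomposition $\mathcal{H}_{A^\ast} = \bigoplus_i \mathcal{H}_{A_i^{L\ast}} \otimes \mathcal{H}_{A_i^{R\ast}}$ with $\hat{\rho}_{BCA^\ast} = \sum_i p_i\, \sigma^i_{B A_i^{L\ast}} \otimes \tau^i_{C A_i^{R\ast}}$ for probabilities $p_i$ and normalised states $\sigma^i, \tau^i$. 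The dual isomorphism identifies this with a decomposition of $\mathcal{H}_A$, and absorbing the weights $p_i$ and the factor $d_A$ into the sub-normalised Choi operators recovers exactly the form of condition (4).

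For (4)$\Leftrightarrow$(2), the forward direction is a direct calculation. From (4), marginalising gives $\rho_{B|A} = \sum_i \rho_{B|A_i^L} \otimes \mathbbm{1}_{A_i^{R\ast}}$ and $\rho_{C|A} = \sum_i \mathbbm{1}_{A_i^{L\ast}} \otimes \rho_{C|A_i^R}$, using $\Tr_B \rho_{B|A_i^L} = \mathbbm{1}_{A_i^{L\ast}}$ and analogously for $C$. Because operators supported on distinct blocks are mutually orthogonal, the product $\rho_{B|A}\rho_{C|A}$ collapses block by block to $\sum_i \rho_{B|A_i^L} \otimes \rho_{C|A_i^R} = \rho_{BC|A}$. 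The converse (2)$\Rightarrow$(4) I would route through (3) using the Petz operator characterisation of saturated strong subadditivity: $I(B:C|A^\ast)_{\hat{\rho}} = 0$ iff $\log \hat{\rho}_{BCA^\ast} + \log \hat{\rho}_{A^\ast} = \log \hat{\rho}_{BA^\ast} + \log \hat{\rho}_{CA^\ast}$ on the common support. Condition (2) forces $\rho_{B|A}$ and $\rho_{C|A}$ to commute (their product is positive Hermitian) so $\log \rho_{BC|A} = \log \rho_{B|A} + \log \rho_{C|A}$; coupled with $\hat{\rho}_{A^\ast} = \mathbbm{1}/d_A$ this is exactly Petz's equation, yielding (3) and then (4) by the structure theorem already invoked.

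For (4)$\Leftrightarrow$(1) I would construct a unitary dilation that explicitly respects the block structure. Within each block $i$, Stinespring-dilate $\rho_{B|A_i^L}$ to an isometry $V_i^B : \mathcal{H}_{A_i^L} \otimes \mathcal{H}_{L_B^i} \rightarrow \mathcal{H}_B \otimes \mathcal{H}_{F_B^i}$ using a pure ancilla state $|\psi^i_{L_B}\rangle$, and analogously define $V_i^C$ for $\rho_{C|A_i^R}$ with ancilla $|\psi^i_{L_C}\rangle$; take $L_B$ to be the (padded if necessary) direct sum of the $L_B^i$ and likewise $L_C$, with product input state $\rho_{L_B} \rho_{L_C}$. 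The global action $\bigoplus_i (V_i^B \otimes V_i^C)$ on $\mathcal{H}_A \otimes \mathcal{H}_{L_B} \otimes \mathcal{H}_{L_C}$ extends to a unitary $U$. By construction $B$ is produced only by the $V_i^B$'s, so the marginal on $B$ is insensitive to any intervention on $L_C$ and symmetrically for $C$ and $L_B$, giving the no-causal-influence conditions of Def.~\ref{def:CI:quantum-no-causal-influence}. Conversely, given (1), the no-influence conditions imply that in $U$ one has $\rho^U_{B|L_BAL_C} = \rho^U_{B|L_BA} \otimes \mathbbm{1}_{L_C^\ast}$ and $\rho^U_{C|L_BAL_C} = \mathbbm{1}_{L_B^\ast} \otimes \rho^U_{C|AL_C}$; these commute, and tracing out the independent latents $\rho_{L_B}\rho_{L_C}$ separates their contributions to give $\rho_{BC|A} = \rho_{B|A}\rho_{C|A}$.

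The main obstacle will be the (4)$\Rightarrow$(1) construction, specifically making the block-indexed Stinespring isometries fit into a single unitary on a fixed ambient Hilbert space with uniform ancillas $L_B$ and $L_C$, and carefully verifying that the non-signalling marginalisation identities of Def.~\ref{def:CI:quantum-no-causal-influence} genuinely hold for the extended unitary rather than merely for the isometric core. A persistent but more routine subtlety is the normalisation bookkeeping between sub-normalised Choi operators (with traces $d_A$, $d_{A_i^L}$, etc.) and the properly normalised tripartite states needed when invoking the HJPW and Petz theorems.
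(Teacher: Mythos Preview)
Your (3)$\Leftrightarrow$(4), (4)$\Rightarrow$(2), and (2)$\Rightarrow$(3) steps are essentially the paper's own arguments (Hayden--Jozsa--Petz--Winter for the first, a direct block-diagonal computation for the second, and the log-identity/Ruskai characterisation for the third). Your (4)$\Rightarrow$(1) construction is also close to the paper's, though the paper takes a single ancilla $L_B$ large enough to serve every block with a common state $\rho_{L_B}$ rather than a direct sum $\bigoplus_i L_B^i$; your version needs more care to produce a \emph{single} input state $\rho_{L_B}$ that is independent of $i$ yet feeds the correct Stinespring dilation in each block.

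The genuine gap is in (1)$\Rightarrow$(2). From the no-influence hypotheses you correctly obtain
\[
\rho^U_{B|L_BAL_C}=\rho^U_{B|L_BA}\otimes\mathbbm{1}_{L_C^\ast},\qquad
\rho^U_{C|L_BAL_C}=\mathbbm{1}_{L_B^\ast}\otimes\rho^U_{C|AL_C},
\]
but you then assert that these commute and that their product equals $\rho^U_{BC|L_BAL_C}$. Neither statement is automatic: both operators act nontrivially on $A^\ast$, and for a generic channel $X\to YZ$ the marginals $\rho_{Y|X}$ and $\rho_{Z|X}$ need not commute---that is precisely the content of condition~(2) for the channel $L_BAL_C\to BC$, which you are supposed to be proving. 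The missing ingredient is unitarity of $U$: because $\hat\rho^U_{BFC|L_BAL_C}$ is \emph{pure}, one has $I(B:FC|L_BAL_C)=0$ on it, and the paper then uses the semi-graphoid axioms together with the no-signalling entropic identities $I(B:L_C|L_BA)=I(C:L_B|AL_C)=0$ to deduce $I(BL_B:CL_C|A)=0$, from which the factorisation follows via Hayden--Jozsa--Petz--Winter. Without invoking purity of the Choi state somewhere, the no-signalling marginals alone do not force the desired factorisation.
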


The proof of this, with Prop.~\ref{prop:CI:quantum-conditional-independence}, is in Sec.~\ref{sec:CI:proving-QCI-conditions}. It is easy to verify that condition (3) reduces to the classical case in a decohering limit when everything diagonalises in some choice of basis. Recall that $\hat{\rho}_{BC|A}$ is just $\rho_{BC|A}$ re-normalised to be trace-one.

Condition (4) deserves significantly more discussion, provided in Secs.~\ref{sec:CI:circuits}, \ref{sec:CM:summary-and-discussion}. For now, it suffices to note that any classical channel which copies the input and applies independent channels to each output will take the form of condition (4) when written as a quantum channel. It therefore generalises that alternative definition of conditional independence.

\subsection{Proving Propositions \ref{prop:CI:quantum-conditional-independence} and \ref{prop:CI:QCI-equivalents}} \label{sec:CI:proving-QCI-conditions}

Propositions~\ref{prop:CI:quantum-conditional-independence} and \ref{prop:CI:QCI-equivalents} are more naturally taken together as a single theorem. All four conditions given are equally valid ways to define quantum conditional independence for a channel and combining them into one theorem reflects this. Proving their equivalence together also requires fewer steps. The full theorem is restated here for convenience.

\begin{theorem} \label{thm:CI:QCI-full}
Given a quantum channel $\rho_{BC|A}$, the following are equivalent:
\begin{enumerate}
\item $\rho_{BC|A}$ is compatible with $A$ being the unitary complete common cause for $B$ and $C$.
\item $\rho_{BC|A} = \rho_{B|A}\rho_{C|A}$.
\item $I(B:C|A) = 0$ for the quantum conditional mutual information evaluated on the (positive, trace-one) operator $\hat{\rho}_{BC|A}$.
\item The Hilbert space  of $A$ has a decomposition $\mathcal{H}_A = \bigoplus_i \mathcal{H}_{A_i^L} \otimes \mathcal{H}_{A_i^R}$ for which $\rho_{BC|A} = \sum_i \left( \rho_{B|A_i^L} \otimes \rho_{C|A_i^R} \right)$, where for each $i$, $\rho_{B|A_i^L}$ is a quantum channel from $A_i^L$ to $B$ and $\rho_{C|A_i^R}$ is a channel from $A_i^R$ to $C$.
\end{enumerate}
Each of these conditions is an equivalent definition for when \emph{$B$ and $C$ are quantum conditionally independent given $A$} in a channel $\rho_{BC|A}$.
\end{theorem}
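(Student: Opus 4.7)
The plan is to prove Theorem~\ref{thm:CI:QCI-full} by establishing a cycle of implications $(1) \Rightarrow (2) \Rightarrow (3) \Rightarrow (4) \Rightarrow (1)$. The cleanest route exploits the fact that $(3) \Leftrightarrow (4)$ is essentially the Hayden--Jozsa--Petz--Winter (HJPW) structure theorem for quantum Markov chains, while the unitary-dilation condition (1) fits naturally as the endpoint of the cycle since dilations can be constructed summand-by-summand from the decomposition in (4).

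For $(1) \Rightarrow (2)$, I would start from a unitary dilation $U: L_B A L_C \to BFC$ in which $L_B$ has no causal influence on $C$ and $L_C$ has no causal influence on $B$. Computing the marginal $\rho_{B|A}$ by tracing out $F$ and $C$ and using the no-influence property of $L_C$ on $B$ (which, as noted after Def.~\ref{def:CI:quantum-no-causal-influence}, forces the $L_C$ input to factor as an identity in the reduced Choi), one sees that $\rho_{B|A}$ depends only on the ``$B$-side'' half of $U$ and $\rho_{L_B}$; analogously for $\rho_{C|A}$. The joint channel $\rho_{BC|A}$ then coincides with the product $\rho_{B|A}\rho_{C|A}$ because the two halves act on disjoint ancilla systems and share only the classical-like common input $A$. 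For $(2) \Rightarrow (3)$, the identity $\rho_{BC|A} = \rho_{B|A}\rho_{C|A}$ (with the implicit identities of Sec.~\ref{sec:IN:ontological-models}) makes $\hat{\rho}_{BC|A}$ a product of two commuting positive operators sharing only the $A^\ast$ subsystem, which is exactly the quantum Markov-chain condition on $B$--$A^\ast$--$C$; a standard calculation (e.g.\ writing $\log \hat{\rho}_{BC|A} = \log \hat{\rho}_{B|A} + \log \hat{\rho}_{C|A} - \log \hat{\rho}_{\cdot|A}$) then yields $I(B:C|A) = 0$.

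For $(3) \Rightarrow (4)$, I would appeal to the HJPW theorem applied to the tripartite state $\hat{\rho}_{BC|A}$ on $\mathcal{H}_B \otimes \mathcal{H}_{A^\ast} \otimes \mathcal{H}_C$: vanishing conditional mutual information forces a decomposition $\mathcal{H}_{A^\ast} = \bigoplus_i \mathcal{H}_{A_i^{\ast L}} \otimes \mathcal{H}_{A_i^{\ast R}}$ with $\hat{\rho}_{BC|A} = \sum_i p_i\, \sigma^{(i)}_{B A_i^{\ast L}} \otimes \sigma^{(i)}_{A_i^{\ast R} C}$. Translating this decomposition of $\mathcal{H}_{A^\ast}$ back to $\mathcal{H}_A$ via the Choi--Jamio\l{}kowski isomorphism of Eq.~(\ref{eq:CI:cj-isomorphism}), absorbing $p_i$ into the normalisation, and checking that each summand is individually a valid channel marginal gives the form in (4). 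Closing the cycle $(4) \Rightarrow (1)$ is by explicit construction: Stinespring-dilate each $\rho_{B|A_i^L}$ with an ancilla $L_B^{(i)}$ and each $\rho_{C|A_i^R}$ with an ancilla $L_C^{(i)}$, then assemble a single unitary $U$ on $L_B \otimes A \otimes L_C$ with $L_B = \bigoplus_i L_B^{(i)}$ and similarly for $L_C$, using the direct-sum structure of $\mathcal{H}_A$ to select which pair of isometries acts coherently. By construction $B$ depends only on $(L_B, A)$ and $C$ only on $(A, L_C)$, which immediately gives both no-causal-influence conditions.

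The main obstacle will be $(3) \Rightarrow (4)$: although HJPW is a standard black-box, applying it here requires careful bookkeeping of the Choi--Jamio\l{}kowski conventions (so that a decomposition of $\mathcal{H}_{A^\ast}$ corresponds canonically to one of $\mathcal{H}_A$, with $\rho_{B|A_i^L}$ genuinely being a CPTP channel rather than a sub-normalised piece) and of the factors of $d_A$ separating $\hat{\rho}_{BC|A}$ from $\rho_{BC|A}$. A secondary subtlety is in $(4) \Rightarrow (1)$, where one must verify that the block-diagonal implementation on $\mathcal{H}_A$ does not smuggle in hidden causal influence from $L_B$ to $C$ via coherent superpositions across the summands; restricting the assembly so that the ``block index'' $i$ is read coherently from $A$ and used only to gate local unitaries on each side handles this cleanly.
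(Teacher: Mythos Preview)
Your steps $(2)\Rightarrow(3)$, $(3)\Rightarrow(4)$, and $(4)\Rightarrow(1)$ are essentially correct and match the paper's approach (the paper uses HJPW for $(3)\Leftrightarrow(4)$, the log/Ruskai argument for $(2)\Rightarrow(3)$, and a block-by-block Stinespring construction for $(4)\Rightarrow(1)$, just as you sketch). The gap is in your $(1)\Rightarrow(2)$ step.

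You write that once the no-influence conditions give $\rho_{B|A}$ depending only on the ``$B$-side half of $U$'' and $\rho_{L_B}$ (and analogously for $C$), ``the joint channel $\rho_{BC|A}$ then coincides with the product $\rho_{B|A}\rho_{C|A}$ because the two halves act on disjoint ancilla systems and share only the classical-like common input $A$.'' But there are no two halves of $U$: the unitary is a single map on $L_B A L_C$, and the no-causal-influence conditions are statements about the \emph{marginals} $\rho^U_{B|L_B A L_C}$ and $\rho^U_{C|L_B A L_C}$, not about the joint $\rho^U_{BC|L_B A L_C}$. Knowing that the $B$-marginal is independent of $L_C$ and the $C$-marginal is independent of $L_B$ does not by itself force the joint Choi operator to factor through $A$ as a commuting product. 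Calling $A$ ``classical-like'' is precisely begging the question; $A$ is a quantum system, and the content of the theorem is that a genuinely quantum $A$ can still mediate all the $B$--$C$ correlations in this situation.

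The paper does not attempt $(1)\Rightarrow(2)$ directly. Instead it proves $(1)\Rightarrow(4)$ by an entropic argument on the pure Choi state $\hat{\rho}^U_{BFC|L_B A L_C}$: purity gives $I(B:FC|L_B A L_C)=0$; the product input gives $I(L_B:L_C|A)=0$; and the two no-influence conditions give $I(B:L_C|L_B A)=0$ and $I(C:L_B|A L_C)=0$. One then chains these four identities through the semi-graphoid axioms (which hold for quantum CMI by Ref.~\cite{LeiferPoulin08}) to deduce $I(BL_B : CL_C | A)=0$, and only then applies HJPW to obtain the factorisation. This is exactly the work your sketch elides. If you want to keep your cycle order, you should replace the handwave in $(1)\Rightarrow(2)$ with this entropic/semi-graphoid argument (which really gives $(1)\Rightarrow(3)$), and then recover $(2)$ from $(4)$ as the paper does.
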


This theorem will now be proved in parts. First, consider the following lemma proved in Ref.~\cite{HaydenJozsa+04}.

\begin{lemma}[{\cite[Thm.~6]{HaydenJozsa+04}}] \label{lem:CI:hayden-jozsa}
For any tripartite quantum state $\rho_{ABC}$, the quantum conditional mutual information evaluated on that state vanishes $I(B:C|A)=0$ if and only if the Hilbert space of the $A$ system decomposes as $\mathcal{H}_{A} = \bigoplus_i \mathcal{H}_{A_i^L}\otimes\mathcal{H}_{A_i^R}$, such that
\begin{equation}
\rho_{ABC} = \sum_i p_i \left(\rho_{BA_i^L}\otimes\rho_{CA_i^R}\right),\quad p_i\geq0,\quad\sum_i p_i=1,
\end{equation}
where for each $i$, $\rho_{BA_i^L}$ is a quantum state on $\mathcal{H}_B\otimes \mathcal{H}_{A_i^L}$ and  $\rho_{CA_i^R}$ is a quantum state on $\mathcal{H}_C\otimes \mathcal{H}_{A_i^R}$.
\end{lemma}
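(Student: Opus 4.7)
The plan is to prove the two directions of the biconditional separately, since the easy ``if'' direction is a direct calculation while the ``only if'' direction is the genuinely non-trivial claim.

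For the easy direction, suppose $\rho_{ABC}=\sum_i p_i(\rho_{BA_i^L}\otimes\rho_{CA_i^R})$. The subspaces $\mathcal{H}_{A_i^L}\otimes\mathcal{H}_{A_i^R}$ are mutually orthogonal in $\mathcal{H}_A$, so $\rho_{ABC}$ and each of its marginals $\rho_{AB},\rho_{AC},\rho_A$ are block-diagonal with blocks labelled by $i$, and within block $i$ each state has a product form across the appropriate bipartition. Using the standard block-entropy identity $S(\bigoplus_i q_i\sigma_i)=H(q)+\sum_i q_i S(\sigma_i)$ together with additivity of $S$ on product states, I would compute all four entropies, e.g.\ $S(ABC)=H(p)+\sum_i p_i\bigl[S(\rho_{BA_i^L})+S(\rho_{CA_i^R})\bigr]$ and $S(AB)=H(p)+\sum_i p_i\bigl[S(\rho_{BA_i^L})+S(\rho_{A_i^R})\bigr]$, with analogous formulas for $S(AC)$ and $S(A)$. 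Substituting into $I(B:C|A)=S(AB)+S(AC)-S(A)-S(ABC)$ the Shannon terms $H(p)$ and all weighted local entropies cancel in pairs, yielding $I(B:C|A)=0$.

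For the hard direction, my plan is to route through the Petz characterisation of equality in strong subadditivity. Since $I(B:C|A)\geq 0$ is strong subadditivity, equality is equivalent to the existence of a recovery map acting only on $A$ that reconstructs $\rho_{ABC}$ from $\rho_{AB}$. Concretely, on the support of $\rho_A$ equality holds iff
\begin{equation}
\rho_{ABC}=\rho_{AC}^{1/2}\rho_A^{-1/2}\rho_{AB}\rho_A^{-1/2}\rho_{AC}^{1/2},
\end{equation}
equivalently $\log\rho_{ABC}-\log\rho_{AB}-\log\rho_{AC}+\log\rho_A=0$ on that support. I would establish this operator identity first, either by tracking the equality case through the Lieb--Ruskai proof of strong subadditivity, or by deriving it from the Petz equality condition for monotonicity of relative entropy applied to the partial trace $\operatorname{Tr}_C$.

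From this identity I would then extract the direct-sum decomposition of $\mathcal{H}_A$ by an operator-algebraic argument. Consider the abelian C*-algebra generated in $\mathcal{B}(\mathcal{H}_A)$ by the spectral projectors of the marginal operators appearing in the recovery equation; its centre induces a block decomposition $\mathcal{H}_A=\bigoplus_i\mathcal{H}_{A_i^L}\otimes\mathcal{H}_{A_i^R}$. Within each block the recovery identity forces $\rho_{AB}$ to act trivially on the right factor $\mathcal{H}_{A_i^R}$ and $\rho_{AC}$ to act trivially on the left factor $\mathcal{H}_{A_i^L}$, because those are the only factorisations consistent with the commutation structure imposed by the identity above. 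Substituting the resulting forms of $\rho_{AB}$ and $\rho_{AC}$ back into the recovery equation reconstructs $\rho_{ABC}$ in the required $\sum_i p_i(\rho_{BA_i^L}\otimes\rho_{CA_i^R})$ form, with weights $p_i$ equal to the marginal probability of $\rho_A$ in block $i$.

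The main obstacle is this last operator-algebraic step: showing that the recovery equation genuinely forces a \emph{common} block decomposition of $\mathcal{H}_A$ in which $\rho_{AB}$ and $\rho_{AC}$ live on the ``left'' and ``right'' tensor factors of each block respectively. A naive spectral argument produces commuting marginals but not the tensor-product structure; the finite-dimensional structure theorem for C*-subalgebras of $\mathcal{B}(\mathcal{H}_A)$ must be applied to the image of the recovery map carefully to obtain both factors simultaneously, and this is where I would expect the proof to require the most care.
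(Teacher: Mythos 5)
First, a point of orientation: the paper does not prove this lemma at all. It is stated with a citation and imported wholesale as Theorem~6 of Hayden, Jozsa, Petz, and Winter, so there is no in-paper argument to measure your attempt against; the relevant benchmark is the cited source, and your plan is essentially a sketch of that source's proof. Against that benchmark, your ``if'' direction is complete and correct: the orthogonality of the blocks in $\mathcal{H}_A$, the block-entropy identity, and additivity on product states give exactly the four entropy expressions you write, and the cancellation yields $I(B:C|A)=0$.

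Your ``only if'' direction, however, is a plan rather than a proof, and the step you defer is precisely where the entire content of the theorem lives. You correctly identify the route: $I(B:C|A)$ is the gap in monotonicity of relative entropy under $\Tr_C$ applied to the pair $(\rho_{ABC},\,\rho_{AC}\otimes\rho_B)$, so its vanishing is equivalent, by Petz's theorem, to the recovery identity you display, and the decomposition of $\mathcal{H}_A$ is then extracted via the structure theorem for finite-dimensional $*$-subalgebras of $\mathcal{B}(\mathcal{H}_A)$. But you do not specify \emph{which} subalgebra to apply that structure theorem to (the cited proof works with an algebra generated by modular-flow-type operators built from $\rho_{AB}$ and $\rho_A$, not merely with spectral projectors of the marginals), nor do you show that the recovery identity forces $\rho_{AB}$ to act trivially on the right factor and $\rho_{AC}$ on the left factor of each resulting block --- you assert that ``those are the only factorisations consistent with the commutation structure,'' which is the theorem restated, not an argument. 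You flag this honestly, but it means the hard direction is not established. One further imprecision: you present the Petz recovery identity and the operator equation $\log\rho_{ABC}-\log\rho_{AB}-\log\rho_{AC}+\log\rho_A=0$ as ``equivalently'' interchangeable; both are indeed equivalent to $I(B:C|A)=0$, but neither follows from the other by elementary algebra, and each equivalence is itself a nontrivial theorem (Petz's, and Ruskai's, respectively), so you should pick one starting point and commit to it. As it stands the proposal reproduces the known strategy without constituting a proof; for the purposes of this thesis, citing the result as the author does is the appropriate move.
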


This provides the first first part of the proof of Thm.~\ref{thm:CI:QCI-full}.

\begin{proof}[Proof: (3) $\Leftrightarrow$ (4)]
Applying Lem.~\ref{lem:CI:hayden-jozsa} to $\hat{\rho}_{BC|A}$ immediately proves that (4) $\Rightarrow$ (3) and gets most of the way to proving (3) $\Rightarrow$ (4). To complete the proof, note that $\Tr_{BC}\hat{\rho}_{BC|A} = \mathbbm{1}_A/d_A$. It follows that each $\hat{\rho}_{B|A_i^L}$ given by Lem.~\ref{lem:CI:hayden-jozsa} satisfies $\Tr_B \hat{\rho}_{B|A_i^L} = \mathbbm{1}_{A_i^L}/d_{A_i^L}$ and similarly for $\hat{\rho}_{C|A_i^R}$. Therefore each $\hat{\rho}_{B|A_i^L}$ and each $\hat{\rho}_{C|A_i^R}$ are appropriate quantum channels, completing the proof.
\end{proof}

The next piece of the proof is perhaps the most involved as it demonstrates that the unitary dilation implied by condition (1) has the structure required by condition (4).

\begin{proof}[Proof: (1) $\Rightarrow$ (4)]
Let $\rho^U_{BFC|L_B A L_C}$ be the Choi-Jamio\l{}kowski operator for the unitary $U$ as in the Sec.~\ref{sec:CI:justifying-quantum-Reichenbach}. Note that $\rho^U_{BC|A}\neq\rho_{BC|A}$ in general (the latter depends on particular choices for inputs at $L_B$ and $L_C$---\emph{cf.} Def.~\ref{def:CI:compatible-unitary-common-cause}). The proof proceeds by demonstrating that several conditional mutual informations evaluated on $\hat{\rho}^U_{BFC|L_B A L_C} \eqdef \rho^U_{BFC|L_B A L_C}/(d_{L_B}d_{A}d_{L_C})$ vanish. 

The first conditional mutual information of $\hat{\rho}^U_{BFC|L_B A L_C}$ that vanishes is
\begin{equation} \label{eq:CI:proof-1to3-first-entopy-relation}
I(B:FC | L_B A L_C )=0.	
\end{equation}
This follows by expanding in terms of von Neumann entropies
\begin{multline}
I(B:FC | L_B A L_C ) = S(\hat{\rho}^U_{B| L_B A L_C  }) + S(\hat{\rho}^U_{FC| L_B A L_C }) \\ - S(\hat{\rho}^U_{ BFC | L_B A L_C}) - S(\hat{\rho}^U_{\cdot | L_B A L_C }).
\end{multline}
The third term is zero, since $\hat{\rho}^U_{BFC|L_B A L_C}$ is pure by unitarity. The final term is equal to $\log(d_{L_B} d_{A} d_{L_C})$, since $\hat{\rho}^U_{ \cdot | L_B A L_C} = \mathbbm{1}_{(L_B A L_C)^\ast}/(d_{L_B} d_A d_{L_C})$. Noting also that 
$\hat{\rho}^U_{ BFC | \cdot} = \mathbbm{1}_{BFC}/(d_B d_F d_C)$, and using the fact that the von Neumann entropy of the partial trace of a pure state is equal to the von Neumann entropy of the complementary partial trace, one finds that the first two terms equal $\log (d_F d_C)$ and $\log d_B$ respectively. Summing these, Eq.~(\ref{eq:CI:proof-1to3-first-entopy-relation}) follows.

Second
\begin{equation} \label{eq:CI:proof-1to3-second-entopy-relation}
I(L_B : L_C | A )=0,
\end{equation}
which follows immediately from $\hat{\rho}^U_{ \cdot | L_B A L_C} = \mathbbm{1}_{(L_B A L_C)^\ast}/(d_{L_B} d_A d_{L_C})$.

Third,
\begin{equation} \label{eq:CI:proof-1to3-third-entopy-relation}
I(B : L_C | L_B A) = 0.
\end{equation}
As with Eq.~(\ref{eq:CI:proof-1to3-first-entopy-relation}), prove this by expanding
\begin{equation}
I(B : L_C | L_B A)  = S(\hat{\rho}^U_{B | L_B A}) + S(\hat{\rho}^U_{\cdot | L_B A L_C}) - S(\hat{\rho}^U_{B | L_B A L_C}) - S(\hat{\rho}^U_{\cdot | L_B A}).
\end{equation}
The second and fourth terms are entropies of maximally mixed states on their respective systems, which hence sum to $\log d_{L_C}$. Since there is no causal influence from $L_C$ to $B$ in $U$ if follows that $\hat{\rho}^U_{B | L_B A L_C} = \hat{\rho}^U_{B | L_B A} \otimes \mathbbm{1}_{L_C^\ast}/d_{L_C}$. Hence, the third term is $S(\hat{\rho}^U_{B | L_B A}) + \log d_{L_C}$, which gives Eq.~(\ref{eq:CI:proof-1to3-third-entopy-relation}).

Fourth and finally
\begin{equation} \label{eq:CI:proof-1to3-fourth-entopy-relation}
I(C : L_B | A L_C) = 0,
\end{equation}
which follows symmetrically to Eq.~(\ref{eq:CI:proof-1to3-third-entopy-relation}) by using the assumption that there is no influence from $L_B$ to $C$ in $U$. 

Equations~(\ref{eq:CI:proof-1to3-first-entopy-relation}, \ref{eq:CI:proof-1to3-second-entopy-relation}, \ref{eq:CI:proof-1to3-third-entopy-relation}, \ref{eq:CI:proof-1to3-fourth-entopy-relation}) can be used to show that $\hat{\rho}_{BC|A}$ satisfies $I(B:C|A)=0$. This follows from Ref.~\cite[Thm.~4.5]{LeiferPoulin08}, which states that quantum conditional mutual informations on partial traces of a multipartite quantum state satisfy the \emph{semi-graphoid axioms} familiar from the classical formalism of causal networks \cite{Pearl09}. Therefore, the semi-graphoid axioms satisfied in this case are:
\begin{align} \label{eq:CI:proof-1to3-semi-graphoid-1}
\left[I(X:Y|Z) = 0\right]  &\Rightarrow \left[I(Y:X|Z) = 0\right], \\
\left[I(X:YW|Z) = 0\right] &\Rightarrow \left[I(X:Y|Z) = 0\right], \\
\left[I(X:YW|Z) = 0\right] &\Rightarrow \left[I(X:Y|ZW) = 0\right], \\
\left[I(X:Y|Z) = 0 \right]
\wedge \left[I(X:W|YZ) = 0\right] &\Rightarrow \left[I(X:YW|Z) = 0\right]. \label{eq:CI:proof-1to3-semi-graphoid-4}
\end{align}

By applying Eqs.~(\ref{eq:CI:proof-1to3-semi-graphoid-1}--\ref{eq:CI:proof-1to3-semi-graphoid-4}) to Eqs.~(\ref{eq:CI:proof-1to3-first-entopy-relation}, \ref{eq:CI:proof-1to3-second-entopy-relation}, \ref{eq:CI:proof-1to3-third-entopy-relation}, \ref{eq:CI:proof-1to3-fourth-entopy-relation}) one finds
\begin{align}
\left[I(B:FC| L_B A L_C ) = 0 \right] &\Rightarrow \left[I(B : C | L_B A L_C) = 0\right] \\
\left[I(C: L_B | AL_C) = 0 \right] \wedge \left[I(B : C | L_B A L_C) = 0\right] &\Rightarrow \left[I(C : BL_B | AL_C) = 0\right] \\
\left[I(L_B : L_C | A) = 0 \right] \wedge \left[I(L_C : B | L_B A) = 0\right] &\Rightarrow \left[I(L_C : B L_B | A) = 0\right] \\
\left[I(B L_B : L_C | A) = 0\right] \wedge \left[I(B L_B : C | A L_C) = 0\right] &\Rightarrow \left[I(BL_B : CL_C | A) = 0\right]
\end{align}

This shows that condition~(1) implies $I(BL_B : C L_C | A) = 0$, calculated on $\hat{\rho}^U_{BC|L_B AL_C}$. Using Lem.~\ref{lem:CI:hayden-jozsa} gives
\begin{equation} 
\hat{\rho}^U_{BC | L_B A L_C} = \sum_i p_i \left( \hat{\rho}^U_{B|L_B A_i^L} \otimes \hat{\rho}^U_{C|A_i^R L_C} \right),
\end{equation}
for some appropriate decomposition of $\mathcal{H}_A^\ast$ and probability distribution $\{p_i\}_i$. This decomposition, with the fact that $\hat{\rho}^U_{ \cdot | L_B A L_C} = \mathbbm{1}_{(L_B A L_C)^\ast}/(d_{L_B} d_A d_{L_C})$, implies
\begin{equation}
\rho^U_{BC | L_B A L_C} = \sum_i \left( \rho^U_{B|L_B A_i^L} \otimes \rho^U_{C|A_i^R L_C} \right),
\end{equation}
where, for each $i$, the components satisfy $\mathrm{Tr}_B \rho^U_{B|L_B A_i^L} = \mathbbm{1}_{(L_B A_i^L)^\ast}$ and $\mathrm{Tr}_C \rho^U_{C|\lambda_C A_i^R} = I_{(A_i^R L_C)^\ast}$. Definition~\ref{def:CI:compatible-unitary-common-cause} shows that the operator $\rho_{BC|A}$ is obtained by acting with this channel on the input states $\rho_{L_B}$ and $\rho_{L_C}$ respectively for $L_B$  and $L_C$. Finally, therefore
\begin{equation}
\rho_{BC | A} = \sum_i \left( \rho_{B| A_i^L} \otimes \rho_{C|A_i^R } \right),
\end{equation}
where $\mathrm{Tr}_B \rho_{B | A_i^L} = \mathbbm{1}_{(A_i^L)^\ast}$ and $\mathrm{Tr}_C \rho_{C | A_i^R} = \mathbbm{1}_{(A_i^R)^\ast}$, as required.
\end{proof}

To prove the converse, that condition (4) implies (1), one can show how to construct an appropriate unitary dilation.

\begin{proof}[Proof: (4) $\Rightarrow$ (1)]
Each channel $\rho_{B|A_i^L}$ can be dilated to a unitary transformation $V_i$, with ancilla input $L_B$ in a fixed state $\rho_{L_B}$, such that $V_i$ acts on $\mathcal{H}_{L_B}\otimes \mathcal{H}_{A_i^L}$. Similarly, $\rho_{C|A_i^R}$ can be dilated to a unitary transformation $W_i$, with ancilla $L_C$ in a fixed state $\rho_{L_C}$, acting on $\mathcal{H}_{A_i^R}\otimes \mathcal{H}_{L_C}$. By choosing the dimension of $L_B$ large enough, a single system $L_B$ and state $\rho_{L_B}$ can be used for each value of $i$ and similarly for $L_C$.

For each $i$, let $V_i^\prime$ be the unitary that acts as $V_i\otimes \mathbbm{1}_{A_i^R} \otimes \mathbbm{1}_{L_C}$ on the subspace $\mathcal{H}_{L_B} \otimes \mathcal{H}_{A_i}\otimes \mathcal{H}_{L_C}$ and as zero on each subspace $\mathcal{H}_{L_B} \otimes \mathcal{H}_{A_j}\otimes \mathcal{H}_{L_C}$ where $j\ne i$. Similarly, for each $i$ let $W_i^\prime$ be the unitary that acts as $\mathbbm{1}_{L_B}\otimes \mathbbm{1}_{A_i^L}\otimes W_i$ on the subspace $\mathcal{H}_{L_B} \otimes \mathcal{H}_{A_i}\otimes \mathcal{H}_{L_C}$, and as zero on each subspace $\mathcal{H}_{L_B} \otimes \mathcal{H}_{A_j}\otimes \mathcal{H}_{L_C}$ for every $j\ne i$. Using these, define unitaries
\begin{align}
V &\eqdef \sum_i V_i^\prime, \\
W &\eqdef \sum_i W_i^\prime.
\end{align}
Note that: they commute $[V,W]=0$; $V$ acts as the identity on $L_C$; and $W$ acts as the identity on $L_B$.

It can easily be checked that the unitary $WV=VW$ is a unitary dilation for the channel $\rho_{BC|A}$ in the form of condition (4), with ancillae $L_B$ and $L_C$. From the form of $V$ and $W$ it can also be seen that there can be no causal influence from $L_C$ to $B$ in $WV$ (the output at $B$ already exists, by the action of $V$, before $W$ acts on $L_C$ at all). Similarly, since $WV=VW$, there can be no causal influence from $L_B$ to $C$. The unitary $WV$ is therefore the unitary dilation required by condition (1).
\end{proof}

These final two pieces complete the proof of Thm.~\ref{thm:CI:QCI-full}.

\begin{proof}[Proof: (2) $\Rightarrow$ (3)]
Assuming a channel $\rho_{BC|A} = \rho_{B|A}\rho_{C|A}$ it follows that:
\begin{align}
\rho_{BC|A}  = & \exp\left( \log\rho_{B|A} + \log\rho_{C|A} \right) \\
\log\rho_{BC|A}  = & \log\rho_{B|A} + \log\rho_{C|A} \\
\log\rho_{BC|A} + \log\rho_{\cdot|A}  = & \log\rho_{B|A} + \log\rho_{C|A} \\
\log(d_A^{-1}\rho_{BC|A}) + \log(d_A^{-1}\rho_{\cdot|A})  = & \log(d_A^{-1}\rho_{B|A}) + \log(d_A^{-1}\rho_{C|A}).
\end{align}
The first line follows because $[\rho_{B|A},\rho_{C|A}]=0$ (as noted in Sec.~\ref{sec:CI:justifying-quantum-Reichenbach}); the third because $\rho_{\cdot|A}=\mathbbm{1}_{A^\ast}$ and therefore $\log\rho_{\cdot|A}$ is the zero matrix; and the final line by adding $2\log d_A^{-1}$ to both sides. It is proved in Ref.~\cite{Ruskai02} that if $\rho_{XYZ}$ is any trace-one density operator, then $\log\rho_{XYZ} + \log\rho_{Z}  =  \log\rho_{XZ} + \log\rho_{YZ}$ is equivalent to $I(X:Y|Z)=0$. Therefore, the conditional mutual information of $\hat{\rho}_{BC|A}$ vanishes.
\end{proof}

\begin{proof}[Proof: (4) $\Rightarrow$ (2)]
Condition (4) implies 
\begin{align}
\rho_{B|A} & =  \sum_{i} \left( \rho_{B|A_{i}^{L}}\otimes \mathbbm{1}_{(A_{i}^{R})^\ast} \right), \\
\rho_{C|A} & =  \sum_{j} \left( \mathbbm{1}_{(A_{j}^{L})^\ast}\otimes\rho_{C|A_{j}^{R}} \right).
\end{align}
Taking the product, terms where $i\ne j$ have support on orthogonal subspaces and so vanish, therefore
\begin{eqnarray}
\rho_{B|A}\rho_{C|A} & = & \sum_{i,j} \left( \rho_{B|A_{i}^{L}}\otimes \mathbbm{1}_{(A_{i}^{R})^\ast} \right) \left( \mathbbm{1}_{(A_{j}^{L})^\ast}\otimes\rho_{C|A_{j}^{R}} \right) = \sum_{i} \rho_{B|A_{i}^{L}}\otimes\rho_{C|A_{i}^{R}} \\
& = & \rho_{BC|A}.
\end{eqnarray}
\end{proof}

Combining these proof steps, any of the conditions of Thm.~\ref{thm:CI:QCI-full} can be seen to imply any other. Moreover, some of the contents of the proofs are somewhat revealing. For instance, in the proof for (4) $\Rightarrow$ (1) a specific possible structure for the unitary dilation is revealed, one that makes the no-causal-influence properties explicit. Some of the theorems used from other papers also point to places in the literature concerned with similar ideas and which may be fruitful places to apply these results. 

\subsection{Circuits for Common Cause Channels} \label{sec:CI:circuits}

The definitions for quantum conditional independence given in Thm.~\ref{thm:CI:QCI-full} can be considered in the language of quantum circuits. This is useful both to elucidate their meanings and for comparison with the corresponding classical definitions.

Figure~\ref{fig:CI:classical-summary} illustrates several equivalent ways to view a classical conditionally-independ\-ent channel $\mathbb{P}(Y,Z|X)$ (\emph{viz.} one where $X$ could be a complete common cause). Equality (1) of that figure simply repeats that any classical channel can be dilated to a deterministic channel by a function $f$ and random variable $\Lambda$ that absorbs all stochasticity. The other three equalities are more interesting and may be understood as forming a commuting diagram, showing two equivalent ways to get from the top left to bottom right. That is, two different ways of understanding that $\mathbb{P}(Y,Z|X)$ is compatible with $X$ being the deterministic common cause for $Y$ and $Z$.

\begin{figure}
\begin{centering}
\includegraphics[scale=0.35,angle=0]{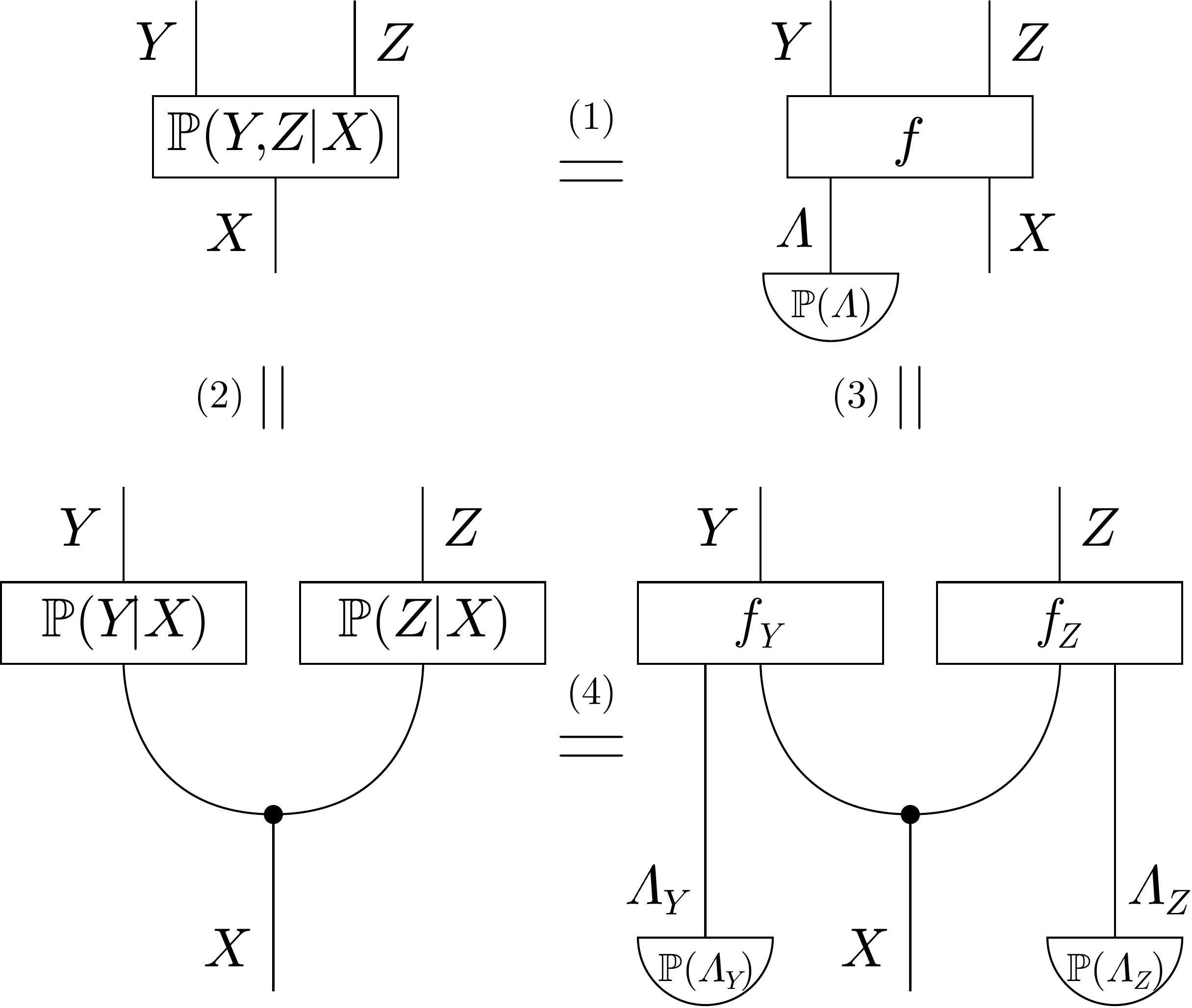}
\par\end{centering}
\caption{Circuit representations of a classical channel $\mathbb{P}(Y,Z|X)$ where $Y$ and $Z$ are conditionally independent given $X$ (equivalently, where $X$ could be a complete common cause for $Y$ and $Z$). The symbol \ccopy{} represents the classical copy operation. Note that equalities (1) and (4) do not depend on conditional independence when viewed in isolation, but they do in the context of the other equalities in the diagram.}
\label{fig:CI:classical-summary}
\end{figure}

Equality (2) of Fig.~\ref{fig:CI:classical-summary} states that if $Y$ and $Z$ are conditionally independent given $X$, then the channel $\mathbb{P}(Y,Z|X)$ may be achieved by first copying $X$ and then applying separate channels $\mathbb{P}(Y|X)$ and $\mathbb{P}(Z|X)$ to each copy. The symbol \ccopy{} is used to represent the copying operation. Equality (4) mirrors equality (1): the channels $\mathbb{P}(Y|X)$ and $\mathbb{P}(Z|X)$ can be individually dilated to functions $f_Y$ and $f_Z$ respectively.

Finally, equality (3) follows as $X$ is a \emph{complete} common cause for $Y$ and $Z$. Together, equalities (1) and (3) illustrate the justification for classical Reichenbach's principle given in Sec.~\ref{sec:CI:justifying-classical-Reichenbach}.

Mirroring this, Fig.~\ref{fig:CI:quantum-summary} illustrates the analogous conditions for quantum conditional independence of channel $\rho_{BC|A}$. In order to do this some non-standard notation in quantum circuits has been introduced in order to capture condition (4) of Thm.~\ref{thm:CI:QCI-full}. This figure may similarly be read as a commuting diagram.

\begin{figure}
\begin{centering}
\includegraphics[scale=0.35,angle=0]{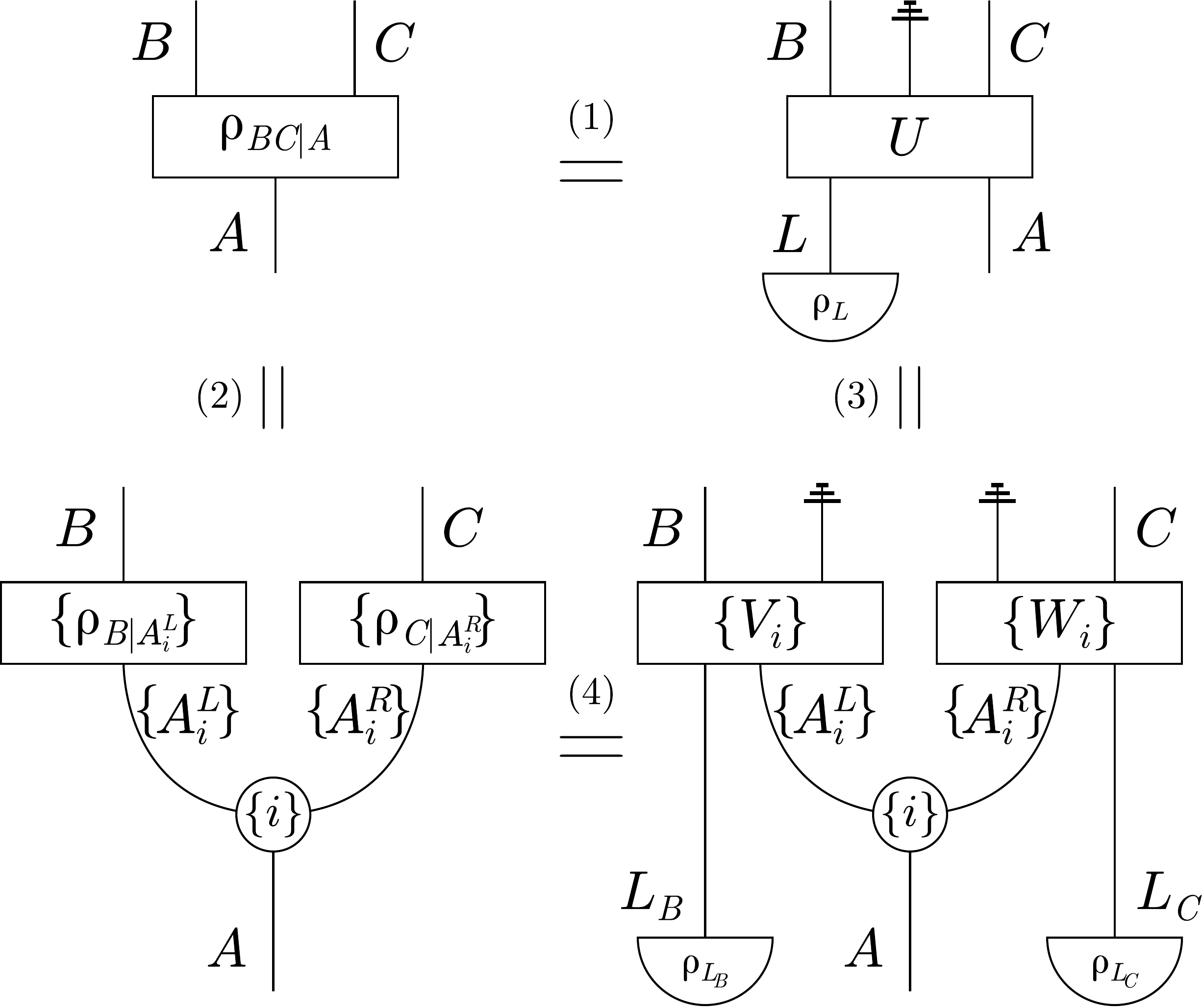}
\par\end{centering}
\caption{Circuit representations of a quantum channel $\rho_{BC|A}$ where $B$ and $C$ are conditionally independent given $A$ (equivalently, where $A$ could be a complete common cause for $B$ and $C$). The meaning of the symbol \qcopy{} is discussed in the text. \trace{} denotes taking the partial trace (discarding) of the indicated subsystem.}
\label{fig:CI:quantum-summary}
\end{figure}

Equality (1) of Fig.~\ref{fig:CI:quantum-summary} reaffirms that $\rho_{BC|A}$ has a unitary dilation $U$. The system being traced-out (the symbol \trace{}) is $F$ from Sec.~\ref{sec:CI:justifying-quantum-Reichenbach}.

Equality (2) expresses condition (4) of Thm.~\ref{thm:CI:QCI-full}. In order to do this the symbol \qcopy{} has been introduced. \qcopy{} may be read as re-expressing the Hilbert space of input system $A$ as a direct sum of factorising spaces $\mathcal{H}_A = \bigoplus_i \mathcal{H}_{A_i^L} \otimes \mathcal{H}_{A_i^R}$. The output wire $\{A_i^L\}$ then carries the left factor of each subspace, which is then acted upon by the relevant channel from $\{\rho_{B|A_i^L}\}$ (similarly for the other output wire). This is non-standard for quantum circuits, where wires normally correspond to a single Hilbert space and adjacent wires represent the tensor product of Hilbert spaces. Here, wires labelled as $\{A_i^L\}$ and $\{A_i^R\}$ only represent a whole Hilbert space when considered together as a pair. This has the advantage of allowing factorised channels $\{\rho_{B|A_i^L}\}$ and $\{\rho_{C|A_i^R}\}$ to act on separate wires. The interpretation of these symbols will be discussed further below.

Equality (4) then applies unitary dilations to each $\rho_{B|A_i^L}$ and $\rho_{C|A_i^R}$, such that all dilations (for each $i$) use a common ancillary input $L_B$ and $L_C$ respectively.

There are two ways to understand equality (3). First, simply as the conjunction of equalities (1), (2), and (4). Second, as expressing that the unitary $U$ has the no-causal-influence properties demanded by Thm.~\ref{thm:CI:QCI-full}. There is no direct way to express the lack of causal influence in a unitary diagrammatically. However, in the proof of Thm.~\ref{thm:CI:QCI-full} the existence of a unitary dilation of the form shown here is seen to be equivalent to the existence of a unitary with appropriate no-causal-influence properties. So the equalities (1) and (3) taken together might be viewed as illustrating the justification of quantum Reichenbach of Sec.~\ref{sec:CI:justifying-quantum-Reichenbach} (as in the classical case) but the correspondence is not immediate.

The introduction of the new symbol \qcopy{} to quantum circuits should, by rights, be accompanied by a thorough description of its meaning. However, in this case it is useful to leave the meaning somewhat ambiguous. For the purposes of this thesis \qcopy{} need only have meaning in the diagrams of Fig.~\ref{fig:CI:quantum-summary}. There are multiple ways to interpret the symbol such that those diagrams are well-defined and it would be premature to pick one above the others before a more general diagrammatic use of \qcopy{} has been found.

One interpretation of \qcopy{} is the \emph{passive} interpretation: it simply re-interprets the input Hilbert space as a direct sum of factorising spaces indexed by $i$, analogously to re-interpreting a single Hilbert space as a tensor product of two factors. Indeed, when $i$ only takes one value \qcopy{} \emph{is} just simple factorisation. In this view, \qcopy{} is entirely reversible and no physical operation occurs until the output wires are acted on. As noted above, the output wires only form a complete Hilbert space when taken together, but allow appropriate sets of channels to act independently on each factor.

Another valid interpretation of \qcopy{} is the \emph{active} interpretation: it represents a von Neumann measurement on $A$ defined by the linear subspaces labelled by $i$, followed by a factorisation of the output system depending on the outcome $i$. This is clearly a non-reversible physical operation.

A conditionally independent channel satisfying Thm.~\ref{thm:CI:QCI-full} is decohering across the linear subspaces labelled by $i$. Under the passive interpretation, this decoherence doesn't take hold until the sets of channels $\{\rho_{B|A_i^L}\}$ and $\{\rho_{C|A_i^R}\}$ are applied. Under the active interpretation on the other hand, the decoherence occurs at \qcopy{} due to the measurement. This allows the subsequent channels, $\rho_{B|A_i^L}$ \emph{etc.}, to be ordinary quantum channels, where the appropriate ones are selected depending on the outcome $i$.

Both interpretations are compatible with the definitions of Thm.~\ref{thm:CI:QCI-full}. The passive interpretation might be preferred as a description of channels that satisfy quantum conditional independence, while the active interpretation gives a concrete way to achieve quantum conditional independence operationally. Of course, there are other possible interpretations too.

If \qcopy{} is only used in Fig.~\ref{fig:CI:quantum-summary}, why should the interpretation matter? Comparing Figs.~\ref{fig:CI:classical-summary} and \ref{fig:CI:quantum-summary} suggests that \qcopy{} is, in some way, analogous to the classical copy \ccopy{}. It represents a general way for two agents to independently act on a single system $A$, just as classically two agents can independently use input $X$ by first copying it. There is certainly potential for fleshing out the relationship between \qcopy{} and \ccopy{} more generally.

The symbol \qcopy{} is most closely related to condition (4) of Thm.~\ref{thm:CI:QCI-full} and both active and passive interpretations can be applied to that definition. Given the above discussion, condition (4) seems to require channels where Bob and Clare can work independently on a single input. Importantly, it shows that there are more general ways of achieving this than simply factorising the quantum input. Condition (4) therefore appears to hint at a general characterisation of when agents can share a single input while working independently.

\subsection{Examples} \label{sec:CI:examples}

Definitions~\ref{def:CI:compatible-unitary-common-cause} and \ref{def:CI:quantum-conditional-independence} respectively defined common causes and conditional independence for quantum channels, which were then used to identify a quantum Reichenbach's principle. It is now time to consider some examples to see how these apply in practice, as well as to double check that the definitions given are reasonable.

\subsubsection{Generic Unitary Transformations} \label{sec:CI:example-unitary}

Consider the circuit of Fig.~\ref{fig:CI:AD-to-BC-unitary}(a) where systems $A$ and $D$ unitarily evolve to $B$ and $C$. Figure~\ref{fig:CI:AD-to-BC-unitary}(b) shows the causal structure one would expect for this basic situation. It would be very troubling indeed if the definition proposed in this chapter forbade $AD$ from being a complete common cause for $B$ and $C$ (that is, if $B$ and $C$ were not quantum conditionally independent given $AD$).

\begin{figure}
\begin{centering}
\includegraphics[scale=0.35,angle=0]{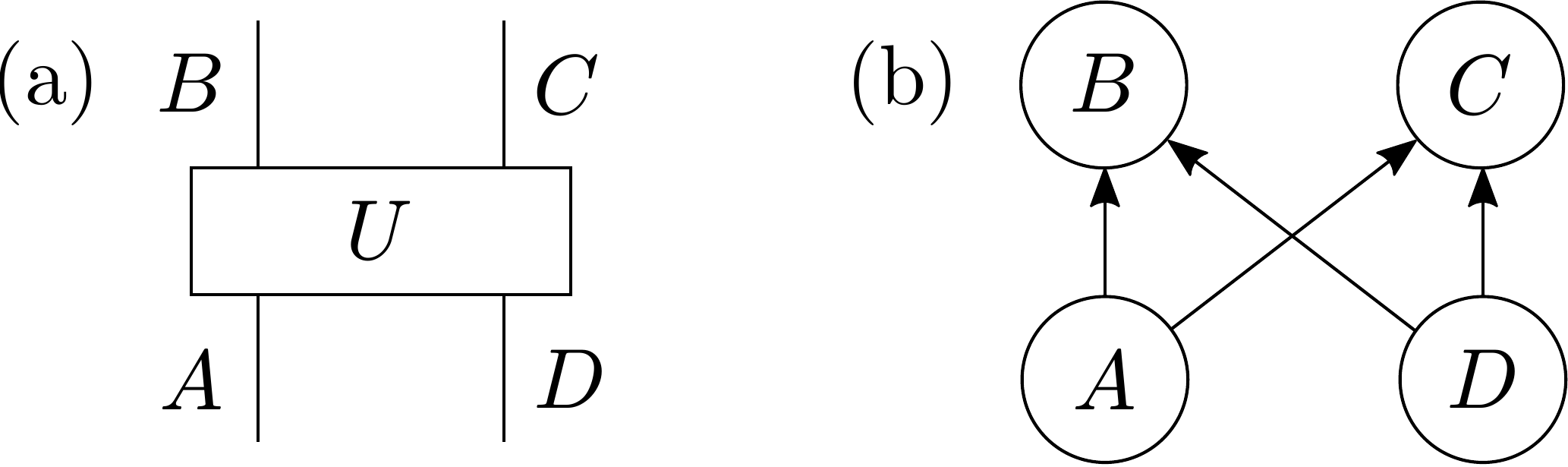}
\par\end{centering}
\caption{(a) A generic unitary transformation from $AD$ to $BC$. (b) The corresponding causal structure, making no assumptions about the properties of $U$. }
\label{fig:CI:AD-to-BC-unitary}
\end{figure}

Fortunately, this is not the case. The channel $\rho_{BC|AD}$ in this example is its own unitary dilation which trivially satisfies Def.~\ref{def:CI:compatible-unitary-common-cause}. Indeed, it is easy to verify directly that the other conditions of Thm.~\ref{thm:CI:QCI-full} are also satisfied. For example, $\hat{\rho}_{BC|AD}$ is a pure state where $AD$ is maximally entangled with $BC$, so $I(B:C|A)=0$. Therefore Def.~\ref{def:CI:quantum-conditional-independence} and Thm.~\ref{thm:CI:QCI-full} pass the most basic of sanity-checks.

\subsubsection{Coherent and Incoherent Copies} \label{sec:CI:example-copies}

Perhaps the simplest non-trivial example of a classical common cause is copying a bit. That is, the classical channel for binary random variables $X,Y,Z \in \{0,1\}$
\begin{equation} \label{eq:CI:classical-bit-copy}
\mathbb{P}(Y,Z|X=x) = \delta(Y,x)\delta(Z,x)
\end{equation}
which simply sets both outputs $Y$ and $Z$ to match the input $X$. Clearly this channel satisfies classical conditional independence and, perhaps most importantly, it is intuitively obvious that this should be the case: $Y$ and $Z$ simply take the value of $X$, so $X$ must explain all correlations between them.

One quantum generalisation for this classical channel is the \emph{incoherent copy} channel for qubits $d_A = d_B = d_C = 2$
\begin{equation}
\alpha |0\rangle_A + \beta |1\rangle_A \rightarrow |\alpha|^2|00\rangle_{BC}\langle 00| + |\beta|^2|11\rangle_{BC}\langle 11|.
\end{equation}
This channel reduces to the classical bit-copy in the case where inputs are diagonal in the $\{|0\rangle,|1\rangle\}$ basis. The Choi-Jamio\l{}kowski state for this channel is
\begin{equation}
\rho_{BC|A}^{\mathrm{inc}} \eqdef |000\rangle_{BCA^\ast}\langle 000| + |111\rangle_{BCA^\ast}\langle 111|.
\end{equation}
It is easy to verify that this satisfies all of the conditions of Thm.~\ref{thm:CI:QCI-full} and therefore satisfies quantum conditional independence according to Def.~\ref{def:CI:quantum-conditional-independence}. This should be unsurprising, the incoherent copy is essentially a classical channel which removes any quantum coherence.

There is, however, another easy quantum generalisation for the classical bit-copy. That is, the \emph{coherent copy} channel
\begin{equation} \label{eq:CI:coherent-copy-channel}
\alpha|0\rangle_A + \beta|1\rangle_A \rightarrow \alpha|00\rangle_{BC} + \beta|11\rangle_{BC}
\end{equation}
defined by Choi-Jamio\l{}kowski operator
\begin{equation}
\rho_{BC|A}^{\mathrm{coh}} \eqdef \left( |000\rangle + |111\rangle \right)_{BCA^\ast}\left( \langle 000| + \langle 111| \right).
\end{equation}
This channel also reduces to the classical bit-copy in the same diagonal cases. However, this channel fails to satisfy the conditions of Thm.~\ref{thm:CI:QCI-full} and therefore $B$ and $C$ are not conditionally independent given $A$. In particular, it is easy to verify that $I(B:C|A)=1$ on $\hat{\rho}_{BC|A}^{\mathrm{coh}}$. $A$ cannot be considered a complete common cause for $B$ and $C$ in the coherent copy.

This should seem troubling. Just as in the classical case, $B$ and $C$ take the value of $A$ in Eq.~(\ref{eq:CI:coherent-copy-channel}), so how can $A$ fail to be a complete common cause? There are at least three ways to understand this apparent discrepancy and see that, far from being a failure of quantum conditional independence, it is exactly what should hold.

The first is to recall the discussion of Sec.~\ref{sec:CI:circuits}, where it was noted that conditionally independent channels capture the idea that two agents can ``act independently'' on a single input. Classically, this is easy: simply copy the input and give one to each agent. In quantum theory, this is achieved by using \qcopy{} in place of the classical copy. The coherent copy creates arbitrary entanglement between two qubits from a single qubit. From this perspective, if Eq.~(\ref{eq:CI:coherent-copy-channel}) satisfied quantum conditional independence then agents would be able to create entanglement by acting independently and without communication on a single qubit. This would certainly be surprising, perhaps even perverse (for one, it would smell of remote entanglement preparation \cite{BennettHayden+05}), and therefore it is encouraging that Thm.~\ref{thm:CI:QCI-full} rules it out.

The second is to consider the deterministic/unitary dilations of the channels Eqs.~(\ref{eq:CI:classical-bit-copy}, \ref{eq:CI:coherent-copy-channel}), shown in Figs.~\ref{fig:CI:classical-bit-copy} and \ref{fig:CI:coherent-copy} respectively.  Classically, the causal influence in a CNOT gate is unidirectional, as shown in Fig.~\ref{fig:CI:classical-bit-copy}(b). The value at $\Lambda$ has no causal influence on $Y$. However, this is not the case for the quantum CNOT gate, which has a back-action (see also Ref.~\cite{SchumacherWestmoreland12}). This is reflected in Fig.~\ref{fig:CI:coherent-copy}, where $L$ has a causal influence on both $B$ and $C$ and therefore is an additional common cause (preventing $A$ from being the complete common cause).

\begin{figure}
\begin{centering}
\includegraphics[scale=0.35,angle=0]{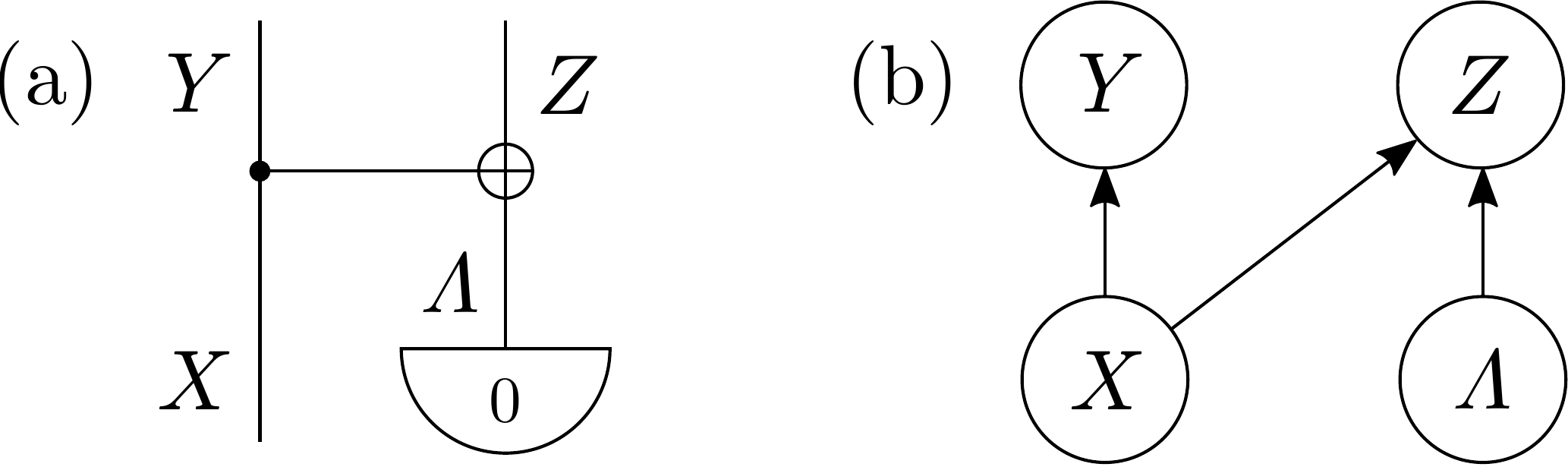}
\par\end{centering}
\caption{(a) Classical deterministic dilation of the bit-copy from $X$ to $Y$ and $Z$. The ancilla $\Lambda$ is prepared with the value $0$ then a CNOT gate is applied, controlled on $X$. (b) The corresponding causal structure of this dilation, note that there is no causal influence from $\Lambda$ to $Y$.}
\label{fig:CI:classical-bit-copy}
\end{figure}

\begin{figure}
\begin{centering}
\includegraphics[scale=0.35,angle=0]{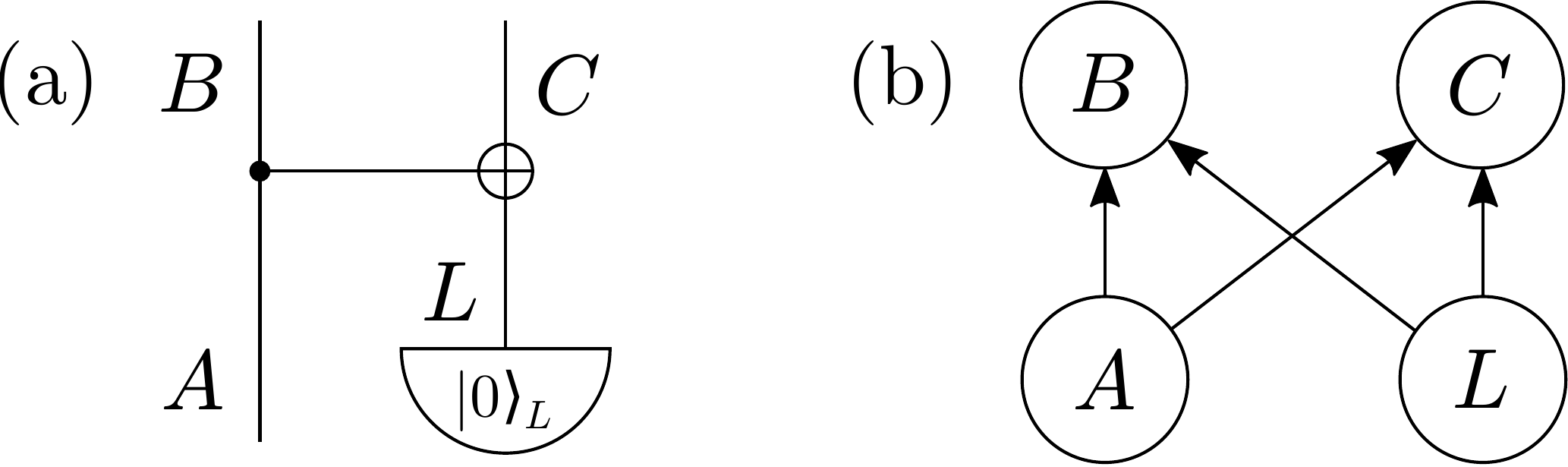}
\par\end{centering}
\caption{(a) Quantum unitary dilation of the coherent copy from $A$ to $BC$. (b) The corresponding causal structure of this dilation, note the explicit back-action from $L$ to $B$.}
\label{fig:CI:coherent-copy}
\end{figure}

These diagrams show just one example of a dilation for each channel (though arguably the most natural), but similar remarks hold for other possible dilations. In particular, it is not possible to find a unitary dilation of Eq.~(\ref{eq:CI:coherent-copy-channel}) where the ancilla does not act as an additional common cause for $B$ and $C$.

One might worry that this back-action should not matter in the quantum CNOT, since the ancilla $L$ always takes the state $|0\rangle_L$. But this does not change the fact that there is no way to achieve Eq.~(\ref{eq:CI:coherent-copy-channel}) unitarily without introducing an additional common cause. Moreover, for those who take pure quantum states to represent ``maximal but incomplete'' information about the system \cite{CavesFuchs+02b,Fuchs02,Spekkens07,Leifer06,Spekkens16} a pure quantum state is not analogous to a classical point distribution\footnote{Indeed, in Spekkens' Toy Model \cite{Spekkens07} (a subset of quantum theory which explicitly models quantum states as maximal but incomplete information and the CNOT as a deterministic map, briefly discussed in Sec.~\ref{sec:SO:classifying-ontologies}) classical conditional independence fails for the coherent copy channel.}.

Finally, the third way to understand why the coherent copy fails to satisfy conditional independence is via the Bayesian updating procedure discussed in Sec.~\ref{sec:CM:Bayesian-updating}. The details of this example are discussed in that section, for now it suffices to say that if $A$ is a complete common cause for $B$ and $C$, then any knowledge about $C$ gained at $B$ should be expressible in terms of knowledge about $A$. Classically, this is always the case for a complete common cause and Sec.~\ref{sec:CM:Bayesian-updating} demonstrates the same for quantum causal models. However, this is not generally possible for the coherent copy: $B$ can (in general) locally learn more about $C$ than can be expressed via $A$. This is a key property of a common cause that fails due to the entanglement between $B$ and $C$.

So the quantum incoherent copy can be the result of a common cause, just as the classical copy channel, while the coherent copy cannot. Reviewing the reasons given above, the core of this discrepancy seems to come from the strength of entanglement that the coherent copy can create and how another qubit is necessary for its creation.

\subsubsection{Bell Experiments}

In Sec.~\ref{sec:CI:introduction-Bell}, Bell experiments were given as one of the main motivations for quantum conditional independence. Simply put, Reichenbach's principle using classical conditional independence fails in these experiments. It is therefore fitting to check how the notion of quantum conditional independence introduced in Sec.~\ref{sec:CI:justifying-quantum-Reichenbach} deals with Bell experiments.

A standard Bell experiment is of the following form. Dave prepares a pair of qubits in the entangled singlet state $|\Psi^-\rangle \eqdef \left( |0\rangle|1\rangle - |1\rangle|0\rangle \right)/\sqrt{2}$ and then gives the whole system to Alice. Alice then noiselessly distributes the qubits, one to Bob and one to Clare. Bob and Clare are each prevented from any means of causal influence on the other for the whole experiment (this is normally done by invoking relativistic locality and ensuring they are spacelike separated). Bob and Clare each independently decide on a choice of measurement (from a pre-determined set of possibilities) on their qubit which they perform, obtaining outcomes $B_r$ and $C_r$ respectively. This scenario is illustrated in Fig.~\ref{fig:CI:bell-experiment}. Note that the slightly unusual step has been taken here of separating the preparation of $|\Psi^-\rangle$ (by Dave) from the distribution of the qubits (by Alice).

The key result from Ref.~\cite{WoodSpekkens15} is that in general there is no classical causal model (and certainly no classical common cause) that explains the correlations between $B_r$ and $C_r$ without fine-tuning. A satisfying result would be that $B_r$ and $C_r$ are appropriately quantum conditionally independent, since the experimental set-up seems to demand an explanation in terms of a common cause from Alice and Dave.

\begin{figure}
\begin{centering}
\includegraphics[scale=0.35]{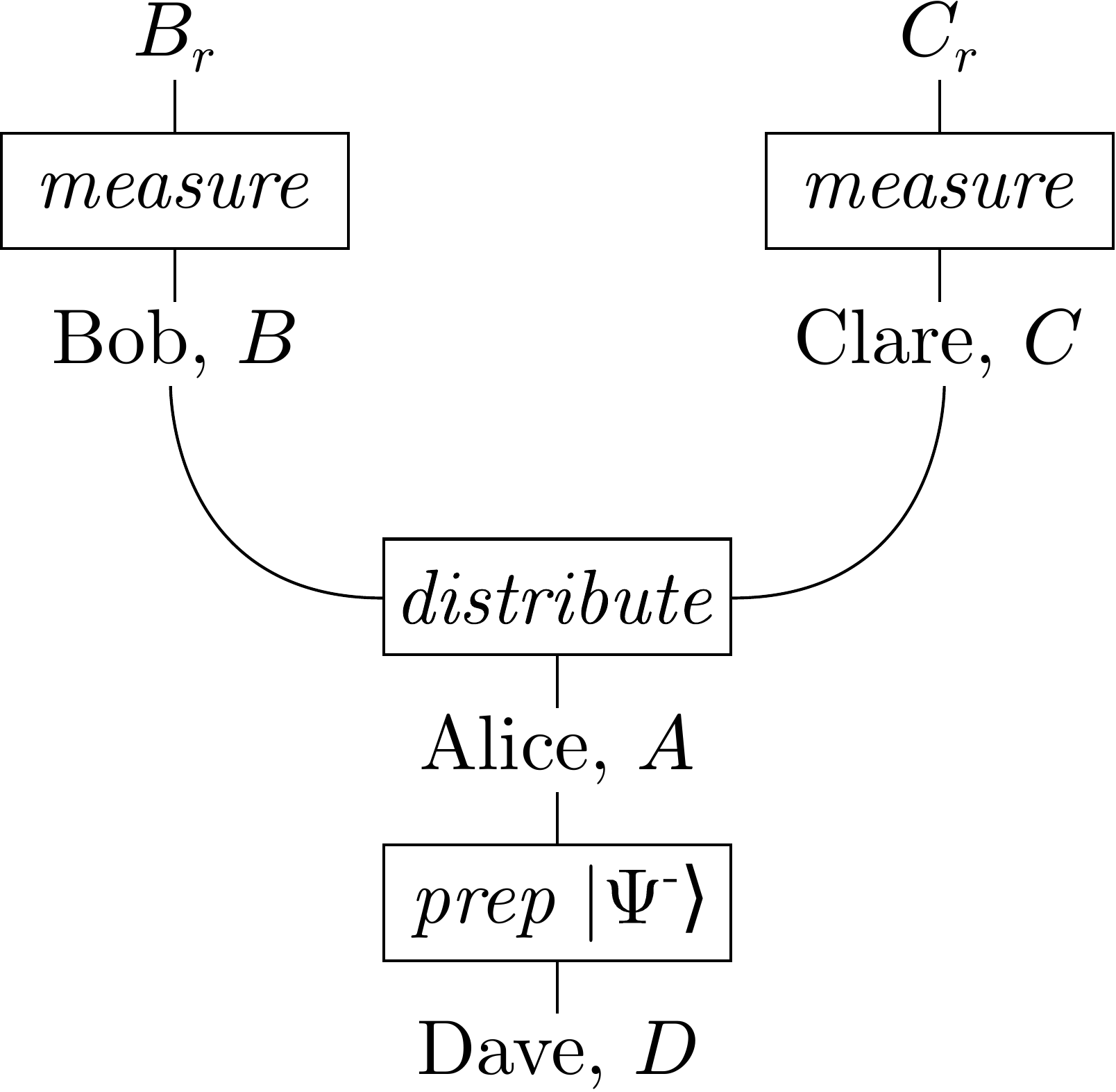}
\par\end{centering}	
\protect\caption{Schematic representation of a Bell experiment as described in the text. The measurements of Bob and Clare are chosen by them independently from a set of predefined choices.}
\label{fig:CI:bell-experiment}
\end{figure}

While $B_r$ and $C_r$ are experimental outcomes, it is necessary to consider them as encoded in quantum systems to apply quantum conditional independence as defined here. That is, the outcomes $B_r$ and $C_r$ are just represented as diagonal density operators of quantum systems in some obvious way.

It immediately follows that $B$ and $C$ are quantum conditionally independent given $A$, since the channel $\rho_{BC|A}$ that describes Alice distributing the qubits to Bob and Clare is simply the identity channel. So conditional independence follows from Sec.~\ref{sec:CI:example-unitary}. Since the measurement procedures of Alice and Bob are entirely factorised, it follows that $B_r$ and $C_r$ must also be quantum conditionally independent given $A$. It is simple to verify from Thm.~\ref{thm:CI:QCI-full} that composing a factorised channel (such as these measurements) after a quantum conditionally independent channel must result in another conditionally independent channel.

However, if one instead asks whether $B$ and $C$ (or $B_r$ and $C_r$) are quantum conditionally independent given $D$, then the answer depends what system Dave starts with to prepare $|\Psi^-\rangle$. If, for example, Dave starts with a single qubit, then the composite channel $\rho_{BC|D}$ is a variant of the coherent copy and is therefore not conditionally independent. However, if Dave starts with a pair of qubits and unitarily evolves them to $|\Psi^-\rangle$, then the composite channel $\rho_{BC|D}$ is unitary and the channel satisfies conditional independence as above.

Should this be troubling? Not at all. This analysis shows that statistics from Bell experiments are quantum conditionally independent because all factorised quantum measurements are quantum conditionally independent, even when made on entangled states. This is as it should be, since conditional independence is a property of the channel and not of the channel-with-input. On the other hand, the cases where $B$ and $C$ (and therefore $B_r$ and $C_r$) are not quantum conditionally independent given $D$ are those where Dave's starting system is too simple to underwrite the correlations. But this also occurs classically: condition on too simple an ancestor and variables that were conditionally independent are rendered dependent again. One only expects conditional independence when conditioning on a complex enough ancestor.

\subsection{Generalisation to \texorpdfstring{$k$}{k} Outputs} \label{sec:CI:k-outputs-generalisation}

Theorem~\ref{thm:CI:QCI-full} only considers quantum channels from a single input $A$ to a pair of outputs $BC$. It is natural to ask whether a similar set of equivalences hold for channels from a single input $A$ to an arbitrary finite set $\{B_l\}_{l=1}^k$ of $k$ outputs. This is more than idle curiosity; it is a first step towards generalising to full quantum causal models, defined in Chap.~\ref{ch:CM}.

Consider such a channel $\rho_{B_1\ldots B_k | A}$, and let $\bar{B}_l$ denote the whole joint output system excepting only $B_l$. Rather than starting from scratch, consider the following natural generalisation of Def.~\ref{def:CI:compatible-unitary-common-cause}. 

\begin{definition} \label{def:CI:compatible-unitary-k-outputs}
A quantum channel $\rho_{B_1\ldots B_k | A}$ is \emph{compatible with $A$ being the unitary complete common cause for $B_1,\ldots, B_k$} if and only if there is
\begin{enumerate}[label=(\alph*)]
\item a unitary dilation $U$ of the channel in terms of ancillae $L_1,\ldots, L_k$ with factorised initial state $\rho_{L_1}\otimes\cdots\otimes\rho_{L_k}$ such that
\item for each $l$, $L_l$ has no causal influence on $\bar{B}_l$.
\end{enumerate}
\end{definition}

Beyond being perhaps the easiest generalisation of Def.~\ref{def:CI:compatible-unitary-common-cause}, there is not yet any reason to suppose that this is a good definition. Its strength comes from the following theorem, which generalises Thm.~\ref{thm:CI:QCI-full}.

\begin{theorem} \label{thm:CI-QCI-multipartite-full}
Given a quantum channel $\rho_{B_1 \ldots B_k | A}$, the following are equivalent:
\begin{enumerate}
\item  $\rho_{B_1\ldots B_k|A}$ is compatible with $A$ being a complete common cause of $B_1,\ldots, B_k$.
\item $\rho_{B_1\ldots B_k | A} = \rho_{B_1|A} \cdots \rho_{B_k|A}$, where $[\rho_{B_l|A}, \rho_{B_m|A}]=0$ for all $l$ and $m$.
\item $I(B_l : \bar{B}_l | A) = 0$ for all $l$ when evaluated on the (positive, trace-one) operator $\hat{\rho}_{B_1\ldots B_k|A}$.
\item The Hilbert space of $A$ has a decomposition $\mathcal{H}_A = \bigoplus_i \mathcal{H}_{A_i^1}\otimes\cdots\otimes\mathcal{H}_{A_i^k}$ for which $\rho_{B_1\ldots B_k|A} = \sum_i \left(\rho_{B_1|A_i^1}\otimes\cdots\otimes\rho_{B_k|A_i^k}\right)$, where for each $i$ and $l$, $\rho_{B|A_i^l}$ is a quantum channel from $A_i^l$ to $B_l$.
\end{enumerate}
Each of these conditions is an equivalent definition for when \emph{$B_1,\ldots,B_k$ are quantum conditionally independent given $A$} in a channel $\rho_{B_1\ldots B_k|A}$.
\end{theorem}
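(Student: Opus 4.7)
The strategy is to establish the four-way equivalence by a mixture of direct generalisation of the bipartite arguments used for Thm.~\ref{thm:CI:QCI-full} and a short induction on $k$ that lets the bipartite theorem itself do most of the work. The base case $k=2$ is simply Thm.~\ref{thm:CI:QCI-full}, so throughout the plan I assume $k\geq 3$ and take Thm.~\ref{thm:CI:QCI-full} as available. I will prove the cycle (4) $\Rightarrow$ (2) $\Rightarrow$ (3) $\Rightarrow$ (4), and separately (4) $\Rightarrow$ (1) $\Rightarrow$ (3), so that any one of the conditions implies all of the others.

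The implications (4) $\Rightarrow$ (2) and (4) $\Rightarrow$ (1) generalise almost verbatim from the bipartite proofs. For (4) $\Rightarrow$ (2), the marginals $\rho_{B_l|A}$ pick out only the $l$-th factor in each summand $i$, so the product $\rho_{B_1|A}\cdots\rho_{B_k|A}$ collapses, by orthogonality of the summands in the direct-sum decomposition, to $\sum_i\rho_{B_1|A_i^1}\otimes\cdots\otimes\rho_{B_k|A_i^k}$, and pairwise commutativity is immediate from the block structure. For (4) $\Rightarrow$ (1), I will copy the unitary-construction argument from Thm.~\ref{thm:CI:QCI-full}: dilate each $\rho_{B_l|A_i^l}$ to a unitary $V_{l,i}$ with ancilla $L_l$ in a fixed state $\rho_{L_l}$, extend each by zero to the other sectors of the direct sum, sum over $i$ to obtain $V_l$, and take $U=V_k\cdots V_1$. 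The commutation $[V_l,V_m]=0$ (each $V_l$ acts as the identity on all ancillae except $L_l$) ensures that swapping the order of the $V_m$ shows that $L_l$ has no causal influence on any $B_m$ with $m\neq l$, i.e.\ on $\bar{B}_l$, giving condition (1).

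The implication (2) $\Rightarrow$ (3) is then a quick application of the bipartite theorem: for each fixed $l$, write $\rho_{\bar{B}_l|A} = \prod_{m\neq l}\rho_{B_m|A}$; this commutes with $\rho_{B_l|A}$ by (2), so $\rho_{B_l\bar{B}_l|A} = \rho_{B_l|A}\rho_{\bar{B}_l|A}$ satisfies condition (2) of Thm.~\ref{thm:CI:QCI-full} in its bipartite form, hence $I(B_l:\bar{B}_l|A)=0$. For (1) $\Rightarrow$ (3) I will follow the semi-graphoid strategy of the bipartite (1) $\Rightarrow$ (4) proof: starting from the unitary dilation $U$, establish that $I(B_l:FL_l|A L_1\cdots L_k)=0$ (by purity and dimension counting), $I(L_1:\ldots:L_k|A)=0$ (from the factorised ancilla state), and, for each $m\neq l$, $I(B_l:L_m|A L_1\cdots \hat{L}_m\cdots L_k)=0$ (from Def.~\ref{def:CI:compatible-unitary-k-outputs}(b) and the no-causal-influence characterisation $\rho^U_{B_l|\cdots} = \rho^U_{B_l|\cdots}\otimes \mathbbm{1}$), and then chain these with the semi-graphoid axioms \cite{LeiferPoulin08} to reach $I(B_l:\bar{B}_l|A)=0$.

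The real work is (3) $\Rightarrow$ (4), and this is where I expect the main obstacle. The plan is induction on $k$. Apply Lem.~\ref{lem:CI:hayden-jozsa} to the vanishing of $I(B_1:\bar{B}_1|A)$ on $\hat\rho_{B_1\cdots B_k|A}$; this furnishes a decomposition $\mathcal{H}_A=\bigoplus_i \mathcal{H}_{A_i^1}\otimes \mathcal{H}_{A_i^R}$ and
\begin{equation}
\hat\rho_{B_1\cdots B_k|A}=\sum_i p_i\, \hat\sigma_{B_1 A_i^1}\otimes \hat\tau_{B_2\cdots B_k A_i^R}.
\end{equation}
The subtle step is to show that, within each sector $i$, the conditional mutual informations $I(B_l:\bar{B}_l\setminus B_1|A_i^R)=0$ for $l\geq 2$ (computed on the normalised $\hat\tau$) are inherited from the original hypotheses. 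This I will do by arguing that the direct-sum/tensor-product structure makes $\hat\rho_{B_2\cdots B_k|A}$ block-diagonal in $i$ with blocks proportional to $\hat\tau_{B_2\cdots B_k A_i^R}$, so any conditional mutual information $I(B_l:\bar{B}_l|A)$ on $\hat\rho$ decomposes as a convex sum (plus a classical contribution from the index $i$ that vanishes because conditioning is on $A$ which contains $i$); vanishing on the whole therefore forces vanishing on each block. Applying the inductive hypothesis within each $\mathcal{H}_{A_i^R}$ then refines each $\mathcal{H}_{A_i^R}=\bigoplus_j \mathcal{H}_{A_{i,j}^2}\otimes\cdots\otimes \mathcal{H}_{A_{i,j}^k}$, and reindexing the double sum over $(i,j)$ yields condition (4). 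The main obstacle is precisely this inheritance argument: making rigorous that the ``inner'' conditional mutual informations computed on the $\hat\tau_i$ really do vanish, and that the final reindexed decomposition of $\mathcal{H}_A$ produces channels $\rho_{B_l|A_{\text{idx}}^l}$ with the correct normalisation $\Tr_{B_l}\rho_{B_l|A_{\text{idx}}^l}=\mathbbm{1}$, as in the bipartite (3) $\Leftrightarrow$ (4) step.
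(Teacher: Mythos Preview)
Your proposal is correct and essentially follows the paper's approach for (4) $\Rightarrow$ (2), (4) $\Rightarrow$ (1), (2) $\Rightarrow$ (3), and (3) $\Rightarrow$ (4). In particular, your inductive (3) $\Rightarrow$ (4) argument---apply Lem.~\ref{lem:CI:hayden-jozsa} to the $B_1:\bar{B}_1$ split, establish that the inner conditional mutual informations vanish sector-by-sector via the block-diagonal structure, then recurse---is exactly what the paper does (the paper phrases the inheritance step as an explicit entropy computation rather than a convex-sum argument, but these are the same).

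The one place your route diverges is (1) $\Rightarrow$ (3). You plan to rerun the full semi-graphoid chain from the bipartite proof, generalised to $k$ ancillae. This would work, but the paper takes a much shorter path that you may prefer. Observe that for each $m\neq l$, the hypothesis says $L_m$ has no causal influence on $\bar{B}_m$, and since $B_l$ is a subsystem of $\bar{B}_m$ this implies $L_m$ has no causal influence on $B_l$. Iterating over all $m\neq l$ (using the marginal characterisation $\rho^U_{B_l|\cdots} = \rho^U_{B_l|\cdots}\otimes\mathbbm{1}_{L_m^\ast}$) shows that $\bar{L}_l$ \emph{jointly} has no causal influence on $B_l$. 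But then the unitary $U$, viewed with the bipartition $(B_l,\bar{B}_l)$ of outputs and $(L_l,\bar{L}_l)$ of ancillae, satisfies condition (1) of the bipartite Thm.~\ref{thm:CI:QCI-full} directly, and that theorem immediately yields $I(B_l:\bar{B}_l|A)=0$. No new semi-graphoid manipulations are needed: the bipartite theorem already packages them for you.
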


Defining quantum conditional independence with these conditions is done by analogy with the classical and $k=2$ cases. Again, each condition is an equally valid definition and there is no particular \emph{a priori} reason to privilege one above any other. The whole theorem reduces to Thm.~\ref{thm:CI:QCI-full} if $k=2$. The proof follows very similarly to that of Thm.~\ref{thm:CI:QCI-full} and is presented here in parts.

\begin{proof}[Proof: (3) $\Rightarrow$ (2)]
The proof proceeds by induction and repeated use of Thm.~\ref{thm:CI:QCI-full}. To this end, suppose the result holds for $\rho_{B_1\ldots B_n |A}$ for some $n < k$ (the inductive hypothesis) and that $I(B_{n+1} : \bar{B}_{n+1} |A) = 0$ as in condition (3).

By Thm.~\ref{thm:CI:QCI-full}, this implies that
\begin{equation}
\rho_{B_1 \ldots B_{n+1} |A} = \rho_{B_1 \ldots B_n | A}\rho_{B_{n+1}|A}
\end{equation}
where $[\rho_{B_1 \ldots B_n}, \rho_{B_{n+1}}] = 0$. The inductive hypothesis implies
\begin{equation}
\rho_{B_1 \ldots B_n |A} = \rho_{B_1|A} \cdots \rho_{B_n|A}
\end{equation}
where the $\rho_{B_j}$ factors pairwise commute. Together these imply the factorisation of condition (2) for $\rho_{B_1 \ldots B_{n+1}|A}$. To show that $\rho_{B_{n+1}}$ commutes with each $\rho_{B_l}$, simply note that $\rho_{B_1 \ldots B_n} \rho_{B_{n+1}} = \rho_{B_{n+1}} \rho_{B_1 \ldots B_n}$ and trace over all $B_{m\neq l}$. Therefore, using Thm.~\ref{thm:CI:QCI-full} as the base case the result holds for all $k \geq 2$ by induction.
\end{proof}

\begin{proof}[Proof: (2) $\Rightarrow$ (3)]
This follows immediately from Thm.~\ref{thm:CI:QCI-full} by letting $B = B_l$ and $C = \bar{B}_l$ for each $l$.
\end{proof}

\begin{proof}[Proof: (3) $\Rightarrow$ (4)]
Assume that (3) holds, so that $I(B_l : \bar{B}_l | A) = 0$ for all $l$. Since quantum conditional mutual information cannot increase if systems are discarded and is non-negative \cite{NielsenChuang00} it follows that
\begin{equation}
0 = I(B_2 : B_1, B_3, \ldots, B_k | A) = I(B_2 : B_3, \ldots, B_k | A).
\end{equation}

By Thm.~\ref{thm:CI:QCI-full}, $I(B_1 : \bar{B}_1 | A)=0$ implies a decomposition $\mathcal{H}_A = \bigoplus_i \mathcal{H}_{A_i^L} \otimes \mathcal{H}_{A_i^R}$ such that
\begin{eqnarray}
\rho_{B_1 \ldots B_k |A} &= & \sum_i \rho_{B_1 | A_i^L} \otimes \rho_{B_2 \ldots B_k | A_i^R} \\
\Rightarrow \hat{\rho}_{B_2 \ldots B_k | A} &= & \sum_i \frac{p_i}{d_{A_i^L}} \mathbbm{1}_{A_i^L} \otimes \hat{\rho}_{B_2 \ldots B_k | A_i^R}
\end{eqnarray}
where $p_i = d_{A_i^L}d_{A_i^R}/d_{A}$ form a probability distribution. The terms in $\hat{\rho}_{B_2\ldots B_k|A}$ only have support on orthogonal subspaces, so \cite{NielsenChuang00}
\begin{eqnarray} 
S(\hat{\rho}_{B_2 \ldots B_k|A}) &= & H(\{p_i\}) + \sum_i p_i \log d_{A_i^L} + \sum_i p_i S(\hat{\rho}_{{B}_2 \ldots B_k|A_i^R}), \\
S(\hat{\rho}_{B_2 |A}) &= & H(\{p_i\}) + \sum_i p_i \log d_{A_i^L} + \sum_i p_i S(\hat{\rho}_{B_2 |A_i^R}), \\
S(\hat{\rho}_{B_3 \ldots B_k|A}) &= & H(\{p_i\}) + \sum_i p_i \log d_{A_i^L} + \sum_i p_i S(\hat{\rho}_{B_3 \ldots B_k|A_i^R}), \\
S(\hat{\rho}_{\cdot |A}) &= & H(\{p_i\}) + \sum_i p_i \log d_{A_i^L} + \sum_i p_i S(\hat{\rho}_{\cdot |A_i^R}). 
\end{eqnarray}
Substituting into
\begin{equation}
I(B_2:B_3,\ldots,B_k|A) \eqdef S(\hat{\rho}_{B_2|A}) + S(\hat{\rho}_{B_3\ldots B_k|A}) - S(\hat{\rho}_{B_2 \ldots B_k|A}) - S(\hat{\rho}_{\cdot |A})
\end{equation}
the $H(\{p_i\})$ terms and the $\sum_i p_i \log d_{A_i^L}$ terms cancel, leaving 
\begin{equation}
I(B_2:B_3,\ldots,B_k|A) = \sum_i p_i I(B_2:B_3,\ldots,B_k|A_i^R) = 0.
\end{equation}
Therefore $I(B_2:B_3\ldots B_k|A_i^R)=0$ for each $i$ by non-negativity and Thm.~\ref{thm:CI:QCI-full} can be applied to $\rho_{B_2\ldots B_k|A_i^R}$. Iterating this procedure yields the required decomposition.
\end{proof}

\begin{proof}[Proof: (4) $\Rightarrow$ (3)]
This also follows immediately from Thm.~\ref{thm:CI:QCI-full} by letting $B = B_l$ and $C = \bar{B}_l$ for each $l$.
\end{proof}

\begin{proof}[Proof: (1) $\Rightarrow$ (4)]
Assume that condition (1) holds. Note from Def.~\ref{def:CI:quantum-no-causal-influence} that as there is no causal influence from $L_l$ to $\bar{B}_l$ for every $l$, there is also no causal influence from $\bar{L}_l$ to $B_l$ for every $l$. Therefore, by partitioning $B_1 \cdots B_k$ into $B_l$ and $\bar{B}_l$ and partitioning $L_1 \cdots L_k$ into $L_l$ and $\bar{L}_l$, Thm.~\ref{thm:CI:QCI-full} implies that $I(B_l : \bar{B}_l |A) = 0$. Thus, since $l$ is arbitrary, this implies (3) which further implies (4).
\end{proof}

\begin{proof}[Proof: (4) $\Rightarrow$ (1)]
Follows by a straightforward extension of the corresponding proof in Sec.~\ref{sec:CI:proving-QCI-conditions}.
\end{proof}

Together, these	steps can be used to show that any condition of Thm.~\ref{thm:CI-QCI-multipartite-full} implies any other, thus completing the proof.

As well as being interesting in its own right, Thm.~\ref{thm:CI-QCI-multipartite-full} also places further confidence in the quantum conditional independence defined here being correct. The fact that the simplest multipartite generalisation of Thm.~\ref{thm:CI:QCI-full} also holds is a testament to the naturalness of Thm.~\ref{thm:CI:QCI-full} and Def.~\ref{def:CI:quantum-conditional-independence}.

\section{Summary}

This chapter started to consider the nature of ontological causal influences in quantum theory. A natural starting point is Reichenbach's principle, a cornerstone of causal reasoning but one that fails in certain quantum experiments. In particular, classical causal models fail to adequately explain Bell experiments \cite{Bell87,WoodSpekkens15}. Since classical causal models generalise Reichenbach's principle, it is sensible to first find a natural quantum version of the principle before attempting to define a full framework of quantum causal models.

Finding a quantum Reichenbach's principle becomes easier if the original is split into two parts: qualitative and quantitative [Sec.~\ref{sec:CI:Reichenbach-bipartite}]. The qualitative part can be commuted to quantum theory with almost no change. The quantitative part applies specifically to the case of a complete common cause.

One way to derive the classical quantitative part is by assuming fundamentally deterministic dynamics [Sec.~\ref{sec:CI:justifying-classical-Reichenbach}]. An analogous procedure was followed to justify a quantum quantitative part by assuming fundamental unitarity. This gave rise to Def.~\ref{def:CI:quantum-conditional-independence} of quantum conditional independence of $B$ and $C$ given $A$ for a channel $\rho_{BC|A}$. Such quantum conditional independence gave a natural quantum quantitative part and therefore a full quantum Reichenbach's principle [Sec.~\ref{sec:CI:justifying-quantum-Reichenbach}].

Classical conditional independence can be defined in many equivalent ways, each lending itself to different intuitions. It was then proposed that the same can be said for quantum conditional independence of a channel (Props.~\ref{prop:CI:quantum-conditional-independence}, \ref{prop:CI:QCI-equivalents}). In Thm.~\ref{thm:CI:QCI-full}, the four proposed definitions were shown to be equivalent. Each of these is a generalisation of a definition for classical conditional independence and each reduces to the corresponding classical case when states and channels appropriately decohere into some choice of ``classical'' bases.

One of these definitions, condition (4) of Thm.~\ref{thm:CI:QCI-full}, is particularly interesting. It identifies a particular structure that conditionally independent channels must follow. In Sec.~\ref{sec:CI:circuits} it was shown that this structure has an operational interpretation and plays a similar role to the classical copying map. It appears that this structure generally allows two agents to ``act independently'' on a single input.

Theorem~\ref{thm:CI-QCI-multipartite-full} generalised quantum conditional independence to the case of a channel with $k>2$ outputs. It showed that the easy generalisations of the definitions in Thm.~\ref{thm:CI:QCI-full} hold in this more general scenario, giving extra confidence that those definitions are correct.

Some examples were considered in Sec.~\ref{sec:CI:examples}. After verifying that the proposed definitions act appropriately in some basic cases, particular attention was given to Bell experiments since they were a key motivation for developing quantum Reichenbach in the first place. It was shown that the outcomes of Bell experiments can be considered to arise from a quantum common cause, even though Bell's theorem shows they cannot arise from a classical common cause \cite{Bell87}. The conditional independence does depend on what is chosen to be conditioned on, but this is also the case for any classical conditionally independent channel.

Having motivated, defined, and tested quantum Reichenbach's principle, the next step is to follow the classical example and generalise it to a full framework of quantum causal models. This will follow in the next chapter. A full discussion of these results, with open questions for further investigation, is deferred until Sec.~\ref{sec:CM:summary-and-discussion} once the full framework of quantum causal models has been defined.

\chapter{Quantum Causal Models} \label{ch:CM}

\section{The Need for Quantum Causal Models}

The previous chapter began to consider causal influences as an ontological issue in quantum theory. By looking at the simple scenario of a complete common cause it was discovered that a significant revision of the classical ways of describing causation was required. Such a revision was provided for that scenario in the form of a quantum Reichenbach's principle, utilising a notion of quantum conditional independence. In this chapter, these discussions are extended to more general causal scenarios. Much of the work presented in this chapter has been published in Ref.~\cite{AllenBarrett+17}.

\subsection{The Reality of Causal Influences}

Reichenbach's principle is a powerful link between the probabilistic/statistical notion of correlation and the notion of a causal influence between two events. It supports the idea that causal influences are ``real,'' even for those (such as probabilistic Bayesians) who prefer to think of probabilities as purely subjective. 

But Reichenbach's principle only applies to some very simple causal scenarios, with two events of interest. A natural generalisation of the principle is found in the framework of classical causal models \cite{Pearl09,SpirtesGlymour+01}, which describes arbitrary causal structures and their relationship with possible probability distributions. Just as Reichenbach's principle can support the reality of causal influences in its limited realm of applicability, so do classical causal models more generally\footnote{Another important result in this regard is the de Finetti theorem \cite{deFinetti93,DiaconisFreedman80,deFinetti75} (which also has quantum counterparts \cite{BrandaoHarrow13,BarrettLeifer09,ChristandlToner09,ChristandlKoenig+07,CavesFuchs+02c}). Amongst other things, this theorem provides subjective Bayesians justification for behaving \emph{as if} the results of ancestrally independent trails are due to a single unknown probability distribution.}. 

A classical causal model, defined fully in Sec.~\ref{sec:CM:classical-CM}, comes in two parts. First, the \emph{causal structure} simply specifies what causal influences exist between events (or \emph{nodes}). Simple causal structures have already been encountered in the previous chapter, \emph{e.g.} Figs.~\ref{fig:CI:X-to-YZ-dilated}, \ref{fig:CI:coherent-copy}(b). Second, the probability distribution over these events compatible with the causal structure. The framework of classical causal models specifies which causal structures are permissible and which distributions are compatible with any given causal structure. Note the similarity to Reichenbach's principle. The qualitative part simply demands the causal structure take some particular form, while the quantitative part ensures the distribution is compatible. This is no accident: Reichenbach's principle becomes a corollary of the more general framework of causal models.

\subsection{Causation in General Quantum Experiments} \label{sec:CM:general-quantum-experiments}

Given the discussion of the previous chapter, one should expect classical causal models to be insufficient for describing general quantum experiments. It has already been noted in Sec.~\ref{sec:CI:introduction-Bell} that no classical causal model can describe a general Bell experiment without fine-tuning. But Bell experiments are specifically designed to show that a simple classical common cause explanation fails. Are there other types of quantum experiments that also resist explanation by a classical causal model?

Several such experiments have been considered in the literature \cite{GreenbergerHorne+89,BranciardGisin+10,HobanWallman+11,BranciardRosset+12,Fritz12,TavakoliSkrzypczyk+14}. It should perhaps be unsurprising that many are variations or generalisations of Bell experiment set-ups. In any case, the general point is clear: there are many conceivable quantum experiments that resist natural explanation by classical causal models\footnote{Here ``natural'' might mean ``in the absence of fine-tuning'' or ``in the absence of superluminal causal influences'' or similar reasonable restrictions depending on the exact experiment considered.}.

Just as in the case of Bell experiments and Reichenbach's principle, these experiments demand a revision of classical ideas of causality (each in their own way). Classical causal models are simply insufficient to adequately explain many quantum experiments. A satisfactory revision would ideally describe all such experiments without fine tuning, superluminal influences, or other similar undesirable features.

Beyond this, there is another lesson to take from the search for novel Bell-like experiments. Several techniques have been developed for deriving bounds on the observable statistics from particular classical causal models \cite{LeeSpekkens15,WolfeSpekkens+16,Chaves16,RossetBranciard+16}. It has been noted that, given an analogous framework of quantum causal models, these techniques might be adapted to derive similar bounds for achievable quantum statistics from causal models with the same causal structures \cite{ChavesLuft+14a,ChavesMajenz+15,ChavesKueng+15}. Then, by seeing where the quantum bounds differ from the corresponding classical bounds, one could identify experiments where quantum statistics violate classical causal models. A framework of quantum causal models would hopefully enable this to be done systematically.

Such a framework of quantum causal models will be presented in Sec.~\ref{sec:CM:quantum-cm-generalise-reichenbach}. This is found by generalising quantum Reichenbach as given in the previous chapter, just as classical causal models generalise Reichenbach's principle.

The quantum causal models presented here are entirely natively quantum. They are not classical causal models with quantum additions. They do, however, reduce to the classical case in appropriate decohering limits [Sec.~\ref{sec:CM:classical-limit}]. The framework builds on the foundation of quantum Reichenbach from Chap.~\ref{ch:CI} and contains it as a special case. These features strengthen the claim that this is an appropriate definition of quantum causal models and speak to a certain degree of robustness.

Use of the framework will be illustrated in Sec.~\ref{sec:CM:examples} by examples. These include the case of a confounding common cause and simple Bayesian updating across a common cause. Finally, Sec.~\ref{sec:CM:summary-and-discussion} will summarise the chapter and discuss the implications from both this chapter and the previous.

Just as in Chap.~\ref{ch:CI}, all random variables and graphs will be assumed to be finite. Similarly, all quantum systems will be assumed to be finite-dimensional. This is done for clarity and no major conceptual changes needed to extend the results to infinite cases are anticipated.

\subsection{Previous Approaches to Quantum Causal Models} \label{sec:CM:previous-attempts}

This is certainly not the first time that moves in the direction of quantum causal models have been made\footnote{That is probably the work of Refs.~\cite{Tucci95,Tucci12}, where transition amplitudes are taken to replace conditional probabilities from classical causal models.}. This section will review some of this previous and related work to put the results of this chapter into their proper context.

An early approach to causality in quantum theory started small, with a pair of quantum systems $B,C$ acted on by a single quantum operation. This led to defining the properties ``(semi)-causal'' and ``(semi)-localisable'' for such operations, as well as establishing relationships between these properties \cite{BeckmanGottesman+01,EggelingSchlingemann+02,PianiHorodecki+06}. This was extended in Ref.~\cite{SchumacherWestmoreland05} to the tripartite case $BAC$ with very strong results on the structure of ``local'' unitaries. These results are related to the channels with ``no causal influence'' used in Secs.~\ref{sec:CI:justifying-quantum-Reichenbach}, \ref{sec:CI:proving-QCI-conditions}.

References~\cite{LeiferPoulin08,LeiferSpekkens13} made strides towards quantum causal models by viewing density operator quantum theory as a non-commutative generalisation of probability theory. There, density operators took the role of probability distributions and Choi-Jamio\l{}kowski operators took the role of conditional distributions. This is very similar to the way quantum Reichenbach was approached in the previous chapter. Formally, there is a certain amount of overlap, with some results from Ref.~\cite{LeiferPoulin08} being used in proofs in Sec.~\ref{sec:CI:proving-QCI-conditions}. However, Refs.~\cite{LeiferPoulin08,LeiferSpekkens13} primarily aimed to develop ``causally neutral'' quantum causal models that could be used for Bayesian inference, like classical causal models. This is rather different from traditional approaches to quantum theory, where spacelike and timelike relationships between systems are fundamentally different. This focus on quantum-theory-as-probability-theory led to causal models of a different character and the programme has been hampered by difficulties in defining quantum-state-like objects for timelike separated systems \cite{HorsmanHeunen+16}.

The programme of deriving Bell-like bounds in more general causal structures was mentioned in Sec.~\ref{sec:CM:general-quantum-experiments}. This inspired independent formulations of quantum causal models in Ref.~\cite{HensonLal+14} and Ref.~\cite{Fritz16} with the aim of providing frameworks for deriving such bounds. Both approaches are based on some underlying operational theory: the arrows of the causal structure are taken to represent systems of that theory while the nodes (events) represent transformations. The operational framework of \emph{general probabilistic theories} (GPTs, \cite{Hardy01,Hardy11,ChiribellaDAriano+10,ChiribellaDAriano+11}) is the basis for Ref.~\cite{HensonLal+14}, while Ref.~\cite{Fritz16} applies to theories based on symmetric monoidal categories. Both of these are more general operational theories that include classical and quantum as special cases. Reference~\cite{PienaarBrukner15} constructs specifically quantum causal models compatible with Ref.~\cite{Fritz16}.

Both frameworks provide some unification due to their generality. They also prove some novel results (notably, analogues to d-separation theorems in Refs.~\cite{HensonLal+14,PienaarBrukner15}). However, neither approach is natively quantum. They only define conditional independence between observed classical outcomes, rather than between quantum systems as in Thm.~\ref{thm:CI:QCI-full}. In particular, one cannot condition on a quantum node in those frameworks. The framework defined in this chapter, by contrast, will generalise quantum Reichenbach and will apply directly to quantum systems just as that principle did.

The approaches of Refs.~\cite{HensonLal+14,Fritz16,PienaarBrukner15} demonstrate that there are close links between quantum causal models and operational formulations of quantum theory. This is especially true of those operational formulations that consider relativistic causal structure, including the causaloid framework \cite{Hardy09}, the multi-time formalism \cite{AharonovVaidman07,AharonovPopescu+09,AharonovPopescu+14,SilvaGuryanova+14}, quantum combs \cite{ChiribellaDAriano+09,Chiribella12,ChiribellaDAriano+13}, categorical quantum theory \cite{CoeckeKissinger17,CoeckeLal13}, and process matrices \cite{OreshkovCosta+12,AraujoCosta+14}. In particular, many of these frameworks \cite{AharonovPopescu+09,ChiribellaDAriano+09,OreshkovCosta+12} use a pair of isomorphic Hilbert spaces to represent a system at a point in spacetime: an ``input'' system which is received then transformed to an ``output'' system. This has been particularly useful for describing local \emph{interventions}, where the transformation from input to output may be freely chosen. This approach to interventions will be used in the nodes of quantum causal models later.

A similar interventional approach to nodes in causal models has also been considered classically \cite{RichardsonRobins13}, as noted in Ref.~\cite{CostaShrapnel16}. This was done to describe possible counterfactual interventions and similar properties make the quantum interventional approach useful here.

The previous work that most closely resembles the quantum causal models defined here is Ref.~\cite{CostaShrapnel16}. There, a definition for quantum causal models is given based on the process matrix formalism \cite{OreshkovCosta+12,AraujoCosta+14}. As in this chapter, the dual-Hilbert-space interventional method noted above is used for nodes in the causal structure. Nonetheless, there are important differences between quantum causal models as defined here and those of Ref.~\cite{CostaShrapnel16} which will be discussed in Sec.~\ref{sec:CM:summary-and-discussion}.

Perhaps the most important way that the quantum causal models defined in this chapter differ from all previous attempts is that they are generalised from a quantum Reichenbach's principle. As demonstrated in the previous chapter, quantum Reichenbach is robust, general, and natively quantum. If this is the correct way to generalise Reichenbach's principle to quantum systems, then one would expect that its natural generalisation to quantum causal models would be similarly correct. The philosophy of this chapter and the previous is also deliberately neutral and sticks to structures from vanilla quantum theory as far as possible. This has been done to allow the results to be as widely applicable as possible.

\section{Classical Causal Models Generalise Reichenbach's Principle} \label{sec:CM:classical-CM}

\subsection{Classical Causal Models in Two Parts} \label{sec:CM:classical-CM-two-parts}

Like Reichenbach's principle, the framework of classical causal models \cite{Pearl09,SpirtesGlymour+01} neatly divides into two parts. First, \emph{causal structures} between random variables represented by directed acyclic graphs (DAGs). Second, the \emph{Markov condition} which defines when a joint distribution over variables is compatible with any given causal structure.

Causal structures are simply a formal way to express causal relationships between events represented by random variables. A DAG consists of a set of nodes $\{X_i\}_i$ joined by arrows (a directed graph), such that it is impossible to travel from any node back to itself by following arrows (acyclic). It is convenient to use standard genealogical terminology to describe relationships between these nodes. For any node $X_i$, $\Pa(X_i)=\Pa(i)$ is the set of parent nodes for $X_i$. Similarly, $\Ch(X_i)=\Ch(i)$ is the set of children nodes of $X_i$ and $\Desc(X_i)=\Desc(i)$ is the set of descendants of $X_i$ (which conventionally includes $X_i$ itself). Finally, $\Ndesc(X_i)=\Ndesc(i)$ is the set of ``non-descendants'' of $X_i$, \emph{viz.} the complement of $\Desc(i)$.

The Markov condition specifies which joint distributions over these variables are compatible with a given causal structure representing what actually occurred. A joint distribution $\mathbb{P}(\{X_i\}_i)$ is said to be \emph{Markov for a given DAG} if and only if the nodes are those same random variables and the joint distribution can be written in the form
\begin{equation} \label{eq:CM:Markov-condition}
\mathbb{P}(\{X_i\}_i) = \prod_i \mathbb{P}(X_i | \Pa(i))
\end{equation}
(recalling that each $\mathbb{P}(X_i |\Pa(i))$ can be calculated from $\mathbb{P}(\{X_i\}_i)$ once the DAG is known).

This defines the \emph{framework} of causal models. A \emph{specific} causal model is given by: a set of random variables $\{X_i\}_i$ representing events, a DAG with nodes of those random variables, and a set of conditional probability distributions $\mathbb{P}(X_i |\Pa(i))$. One could equivalently just specify a joint distribution that is Markov for the graph, but it is often more practical\footnote{More practical in two ways. First, when constructing a causal model for a given situation the individual conditional probabilities are often simply easier to discover or estimate. Second, it is more economical, as union of the spaces of different conditional distributions is normally much smaller than the space of joint distributions (you always get a Markov joint distribution by specifying conditional distributions, but not every joint distribution is Markov).} to give each conditional distribution which guarantees a Markov joint distribution can be constructed.

This framework generalises Reichenbach's principle. If $X$ and $Y$ are ancestrally independent, then the Markov condition requires that $\mathbb{P}(Y,Z) = \mathbb{P}(Y)\mathbb{P}(Z)$, which is the qualitative part of Reichenbach's principle. If there is a complete common cause $X$ for $Y$ and $Z$ in the causal structure (\emph{e.g.} Fig.~\ref{fig:CI:X-to-YZ}), then the Markov condition guarantees that $\mathbb{P}(Y,Z|X) = \mathbb{P}(Y|X)\mathbb{P}(Z|X)$, which is the quantitative part. Moreover, a loose generalisation of Reichenbach's principle for causal models may be stated as: correlations between variables should have causal explanations in the causal structure.

\subsection{Justifying the Markov Condition} \label{sec:CM:justifying-classical-CM}

In Sec.~\ref{sec:CI:justifying-classical-Reichenbach}, one possible justification for the quantitative part of Reichenbach's principle from the qualitative part was presented. This was done by temporarily assuming fundamental determinism for illustrative purposes. A very similar argument can be used to justify the Markov condition [Eq.~(\ref{eq:CM:Markov-condition})] from the qualitative part of Reichenbach's principle. This is strictly stronger than the argument in Sec.~\ref{sec:CI:justifying-classical-Reichenbach} since, as seen above, the Markov condition implies Reichenbach's principle.

To this end suppose once again that classical dynamics is fundamentally deterministic. The task is to prove that if a situation has a causal structure given by some DAG, then the qualitative part of Reichenbach's principle requires that the distribution over variables is Markov for that DAG. This argument closely mirrors that of Sec.~\ref{sec:CI:justifying-classical-Reichenbach} and so will be covered quickly.

Just as in Sec.~\ref{sec:CI:justifying-classical-Reichenbach}, a determinist will always view the classical maps between nodes in the causal structure to be the result of \emph{deterministic dilations}. That is, for every node $X$ there must, in reality, also be some latent node $\Lambda$ that is a cause for $X$ in a more fundamental causal structure. This $\Lambda$ ensures the classical channel that outputs $X$ is the result of a deterministic dilation with input $\Pa(X)\times\Lambda$. Simply put, any stochasticity in the original causal model is explained by functions including hidden latent variables in a more fundamental causal model.

Moreover, each latent variable must be unique to each node and have no parents. If this were not the case then adding the latent variables would invalidate the original causal structure by introducing new common cause links. In particular therefore, the latent variables are ancestrally independent and thus, by the qualitative part of Reichenbach's principle, their distributions factorise.

Putting this together, one can say the following.

\begin{definition} \label{def:CM:classical-compatibility}
A joint distribution $\mathbb{P}(\{X_i\}_i)$ is \emph{deterministically compatible with a causal structure given by DAG $G$} if and only if:
\begin{enumerate}[label=(\alph*)]
\item the nodes of $G$ are the variables $\{X_i\}$ and
\item there exists causal structure $G^\prime$ obtained from $G$ by adding a node $\Lambda_i$ and a single arrow $\Lambda_i \rightarrow X_i$ for each $X_i$ such that
\item there exist distributions $\mathbb{P}(\Lambda_i)$ and functions\footnote{Here $\Pa(i)$ is being used to refer to the parents of $X_i$ in $G$. Of course, in $G^\prime$ each $\Lambda_i$ is a parent for the corresponding $X_i$.} $f_i : \Pa(i)\times\Lambda_i \rightarrow X_i$ that form deterministic dilations for each channel $\mathbb{P}(X_i |\Pa(X_i))$ derived from the joint distribution.
\end{enumerate}
\end{definition}

Just as in Sec.~\ref{sec:CI:justifying-classical-Reichenbach}, this mathematics is independent from the temporary assumption of determinism used to reach it. By simply entertaining the logical possibility of a deterministic dilations, the following theorem from Ref.~\cite{Pearl09} completes the justification of the Markov condition.

\begin{theorem}[(Ref.~\cite{Pearl09})] \label{thm:CM:classical-compatibility}
Given a joint distribution $\mathbb{P}(\{X_i\}_i)$ and a causal structure the following are equivalent:
\begin{enumerate}
\item $\mathbb{P}(\{X_i\}_i)$ is deterministically compatible with the causal structure.
\item $\mathbb{P}(\{X_i\}_i)$ is Markov for the causal structure, satisfying Eq.~(\ref{eq:CM:Markov-condition}).
\end{enumerate}
\end{theorem}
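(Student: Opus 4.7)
The plan is to prove the equivalence in two directions, both of which are elementary once the qualitative part of Reichenbach's principle is invoked to handle the latent variables in the extended DAG $G'$. For $(1)\Rightarrow(2)$, I would first observe that in $G'$ each latent $\Lambda_i$ is a root node and shares no ancestors with any other $\Lambda_j$, so the collection $\{\Lambda_i\}$ is jointly ancestrally independent. Applying the qualitative part of Reichenbach's principle then forces the joint latent distribution to factorise as $\prod_i \mathbb{P}(\Lambda_i)$. Combined with the deterministic dilations, the joint distribution on $G'$ takes the form
\begin{equation}
\mathbb{P}(\{X_i\},\{\Lambda_i\}) = \left[\prod_i \mathbb{P}(\Lambda_i)\right] \prod_i \delta\bigl(X_i, f_i(\Pa(i),\Lambda_i)\bigr).
\end{equation}
For each fixed $i$, the factor $\mathbb{P}(\Lambda_i=\lambda_i)\,\delta(X_i, f_i(\Pa(i),\lambda_i))$ depends on $\lambda_i$ but on no other $\lambda_j$, so the sum over $\{\lambda_i\}$ factorises. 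Using the defining property $\sum_{\lambda_i}\mathbb{P}(\Lambda_i=\lambda_i)\,\delta(X_i, f_i(\Pa(i),\lambda_i)) = \mathbb{P}(X_i\mid\Pa(i))$ of a deterministic dilation, the marginalisation yields $\mathbb{P}(\{X_i\}) = \prod_i \mathbb{P}(X_i\mid\Pa(i))$, which is the Markov condition.

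For $(2)\Rightarrow(1)$, I would give a constructive dilation, node by node. For each $X_i$, take $\Lambda_i$ to be a random variable whose sample space is the set of functions $\lambda_i:\Pa(i)\to X_i$, distributed as the product measure $\mathbb{P}(\Lambda_i)\eqdef \prod_{\mathrm{pa}}\mathbb{P}(X_i\mid\Pa(i)=\mathrm{pa})$, and define $f_i(\mathrm{pa},\lambda_i)\eqdef \lambda_i(\mathrm{pa})$. A direct computation then confirms that $\sum_{\lambda_i}\mathbb{P}(\Lambda_i=\lambda_i)\,\delta(X_i, f_i(\mathrm{pa},\lambda_i)) = \mathbb{P}(X_i\mid\Pa(i)=\mathrm{pa})$, so this is a valid deterministic dilation of $\mathbb{P}(X_i\mid\Pa(i))$. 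Choosing the $\Lambda_i$'s independently across $i$ then produces the extended DAG $G'$ satisfying every clause of Def.~\ref{def:CM:classical-compatibility}.

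The main subtlety is not in either construction in isolation---both are essentially bookkeeping---but in making precise the role of the qualitative part of Reichenbach's principle in the forward direction. Without it, there would be no reason to insist that the latent distributions factorise, and the argument would collapse into a tautology that merely re-expresses the Markov condition through dilations. It is exactly this appeal to Reichenbach's principle on $G'$ that promotes the result from an algebraic reformulation to a genuine derivation of the Markov condition from the (temporarily assumed) bare determinism of the underlying dynamics, mirroring the parallel role that Reichenbach's principle played in the justification of conditional independence in Sec.~\ref{sec:CI:justifying-classical-Reichenbach}.
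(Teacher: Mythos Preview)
The paper does not actually prove this theorem; it simply cites Pearl's book as the source and uses it as a black box. Your proposal supplies a complete and correct proof for both directions, so in that sense you have gone beyond what the paper does.

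One point worth sharpening concerns where Reichenbach's principle enters. In the paper's architecture, the qualitative part of Reichenbach's principle is invoked \emph{before} Def.~\ref{def:CM:classical-compatibility} is stated, as part of the motivational narrative explaining why the latent variables should be independent. The definition itself then already specifies separate distributions $\mathbb{P}(\Lambda_i)$ for each $i$, so the factorisation $\prod_i \mathbb{P}(\Lambda_i)$ is built into condition (c) rather than being something one must derive inside the proof. Consequently, the theorem as formally stated is a purely mathematical equivalence---``independent from the temporary assumption of determinism used to reach it,'' as the paper puts it---and Reichenbach's principle is not a step in its proof. Your final paragraph shows you are aware of this tension, but the body of your $(1)\Rightarrow(2)$ argument still presents the appeal to Reichenbach as a necessary formal step, which it is not given the definition. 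The computation itself (marginalising the latents to recover the Markov factorisation) is exactly right, as is your standard functional-model construction for $(2)\Rightarrow(1)$.
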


Theorem~\ref{thm:CM:classical-compatibility} is clearly analogous to Thm.~\ref{thm:CI:classical-compatibility}. It also admits similar readings: both as the justification presented above and as proving that Eq.~(\ref{eq:CM:Markov-condition}) and Def.~\ref{def:CM:classical-compatibility} are equivalent definitions of the Markov condition. As in Sec.~\ref{sec:CI:justifying-classical-Reichenbach}, this depends on whether condition (1) is taken as a causal or probabilistic statement respectively.

Once again, this argument is not meant as a complete first-principles derivation of classical causal models. It is instead simply illustrative and helps to motivate the approach to quantum causal models that follows.

\section{Quantum Causal Models Generalise Quantum Reichenbach} \label{sec:CM:quantum-cm-generalise-reichenbach}

In the previous section, it was seen that the framework of classical causal models can be seen as a generalisation of Reichenbach's principle to more general causal structures. Moreover, one way to justify the crucial Markov condition of classical causal models was shown by assuming fundamental determinism (mirroring the justification of the quantitative part of Reichenbach's principle in Sec.~\ref{sec:CI:justifying-classical-Reichenbach}). 

This will now be used to motivate a definition for quantum causal models. In particular, quantum causal models should generalise both quantum Reichenbach and classical causal models. The definition of quantum causal models that follows naturally achieves both of these things, putting it in a strong position.

\subsection{Defining Quantum Causal Models} \label{sec:CM:quantum-definition}

In Secs.~\ref{sec:CI:justifying-classical-Reichenbach}, \ref{sec:CI:justifying-quantum-Reichenbach}, and \ref{sec:CM:justifying-classical-CM} justifications were given for Reichenbach's principle, quantum Reichenbach, and classical causal models respectively. Each of these proceeded by temporarily assuming either fundamental determinism (in the classical cases) or fundamental unitarity (in the quantum case).

It would be neat, therefore, to continue that pattern here. That is, to justify a definition for quantum causal models by assuming fundamental unitarity. Unfortunately, this is difficult to do \emph{a priori} since it is unclear what sort of mathematical object a quantum analogue for the Markov condition should apply to. For Reichenbach's principle, the quantitative part applies to the classical channel. Correspondingly, for quantum Reichenbach the quantitative part applies to the quantum channel. For classical causal models, the Markov condition applies to the joint probability distribution. Here lies the problem. As already noted, there is no convenient or accepted quantum analogue for a joint probability distribution for causally-related systems \cite{HorsmanHeunen+16}.

The approach taken here, therefore, is to propose a definition that most simply generalises quantum Reichenbach to general causal structures. The resulting definition can then be checked to satisfy other natural criteria.

As with the qualitative part of Reichenbach's principle, the causal structures from classical causal models can be commuted to quantum theory with almost no change. This is because a causal structure is, mathematically, just a DAG. All that remains is to specify what the nodes should correspond to.

Mathematically, the nodes of a classical causal structure are taken to be random variables. Physically, they can be thought of as events/systems in local regions of spacetime which can take several values/physical states. In quantum theory, the most general way to describe the same thing is with a quantum instrument applied to a system. This covers both the interpretation as an event with outcome value and as a system with a state.

Therefore, each node $A_i$ of a quantum causal structure will correspond to two Hilbert spaces: an \emph{input space} $\mathcal{H}_{A_i}^\mathrm{in} = \mathcal{H}_i^\mathrm{in}$ and an \emph{output space} $\mathcal{H}_{A_i}^\mathrm{out} \eqdef \mathcal{H}_i^\mathrm{out} \eqdef (\mathcal{H}_i^\mathrm{in})^*$ which is its dual. The fact that the output Hilbert space is the dual to the input Hilbert space is simply a convenient mathematical convention that will be useful later, morally they can be thought of as ``the same'' spaces. This allows the interpretation that each node $A$ contains the local laboratory of some agent, who performs some quantum instrument with input from $\mathcal{H}_i^\mathrm{in}$ and output to $\mathcal{H}_i^\mathrm{out}$. It also allows for less general interpretations, such as each node simply representing the state of some system when the quantum instrument is just an identity map.

As noted in Sec.~\ref{sec:CM:previous-attempts}, this interventional approach to quantum nodes has been used many times before in the literature. It is well-known as a general and convenient way to allow for both agent interventions and measurement outcomes where they might be needed.

The first part of a quantum causal model is therefore to specify a causal structure as a DAG and interpret the nodes as pairs of input/output Hilbert spaces as above. To complete the causal model, by analogy with a classical causal model, channels between the nodes are required.

The simplest way to generalise quantum Reichenbach to these causal structures is to specify, for each node $A_i$, a quantum channel $\rho_{A_i | \Pa(i)}$ from $\bigotimes_{A_j\in\Pa(i)} \mathcal{H}_j^\mathrm{out}$ to $\mathcal{H}_i^\mathrm{in}$ (using the notational convention from Sec.~\ref{sec:CI:cj-isomorphism}). Then, to ensure that these channels are compatible, simply require that they commute pairwise. That is, for all nodes $A_i$ and $A_j$, require $[\rho_{A_i |\Pa(i)}, \rho_{A_j |\Pa(j)}]=0$. As such, a complete channel from common parents to their children can be taken to be the product of each of these channels $\rho_{A_i | \Pa(i)}\rho_{A_j | \Pa(j)}\cdots$. The fact that each channel commutes with every other ensures that such a product still defines a valid quantum channel via the Choi-Jamio\l{}kowski isomorphism.

It is convenient to wrap all of these channels into a single object. This can easily be done by taking their product. The result is the \emph{model state}
\begin{equation} \label{eq:CM:quantum-Markov}
\sigma \eqdef \prod_i \rho_{A_i | \Pa(i)}
\end{equation}
which is an operator on $\bigotimes_i \left( \mathcal{H}_i^\mathrm{in} \otimes \mathcal{H}_i^\mathrm{out} \right)$ where, recall, each factor commutes with every other.

This completes the definition of a quantum causal model. It consists of a causal structure with local laboratories for the nodes $\{A_i\}_i$ and a set of channels $\rho_{A_i | \Pa(i)}$ which pairwise commute. These channels together define a model state by Eq.~(\ref{eq:CM:quantum-Markov}).

Equivalently, a quantum causal model could be taken as a causal structure together with a model state $\sigma$ over $\bigotimes_i \left( \mathcal{H}_i^\mathrm{in} \otimes \mathcal{H}_i^\mathrm{out} \right)$ which is required to satisfy Eq.~(\ref{eq:CM:quantum-Markov}). This way, Eq.~(\ref{eq:CM:quantum-Markov}) is analogous to the Markov condition and is therefore called the \emph{quantum Markov condition}.

Crucially, this definition generalises quantum Reichenbach. If the causal network is taken to be a complete common cause from $A$ to $BC$ [Fig.~\ref{fig:CI:A-to-BC}], then the quantum Markov condition requires that the channel factorises as $\rho_{BC|A}=\rho_{B|A}\rho_{C|A}$. This is the reason for claiming that this definition is the simplest to generalise quantum Reichenbach. Moreover, it also generalises classical causal models, as shall be shown explicitly in Sec.~\ref{sec:CM:classical-limit}

More work is required to flesh out the meaning of these quantum causal models and how they should be used, but first some remarks should be made about the relationship between this approach and traditional classical causal models.

\subsection{Bayesian vs. do-conditionals} \label{sec:CM:conditionals}

Strictly speaking, an important distinction should be made between two types of conditional in classical causal models, discussed at length in Ref.~\cite{Pearl09}. The \emph{Bayesian conditional} is derived from a joint probability distribution by Bayes' rule $\mathbb{P}(Y|X) \eqdef \mathbb{P}(Y,X)/\mathbb{P}(X)$. This represents the set of probability distributions that should be assigned to $Y$ given that $X$ is found to have any given value. The \emph{do-conditional} $\mathbb{P}(Y|\docn X)$, on the other hand, gives the set of probability distributions for $Y$ given that an agent has intervened and set $X$ to any definite value. Rather than following Bayes' rule, this is found by modifying the causal model to remove any arrows entering $X$ and setting $\mathbb{P}(X)$ to the appropriate point distribution for the specifically set value.

A classical channel from $X$ to $Y$ should strictly, therefore, be written $\mathbb{P}(Y|\docn X)$ and similarly quantum channels $\rho_{B|A}$ are more analogous to do- than Bayesian conditionals. This relates to the problem with joint states over causal structure in quantum theory discussed above \cite{HorsmanHeunen+16}. 

Since Bayesian conditionals are derived from a joint distribution, finding a close quantum analogue for the Bayesian conditional seems difficult at best (finding a joint state over the network is not straightforward). For this reason, no such analogue is given here. Rather, the quantum Markov condition has been given in terms of quantum channels $\rho_{A|\Pa(A)}$, which do not require such a joint state but are more analogous to do-conditionals.

Fortunately, this need not break the analogy between quantum and classical causal models. If a probability distribution $\mathbb{P}(\{X_i\}_i)$ is Markov for a causal network, then for each node $\mathbb{P}(X_i |\Pa(i)) = \mathbb{P}(X_i |\docn\Pa(i))$ \cite{Pearl09}. That is, the Bayesian conditionals match the form of the classical channels in the network. Therefore one can equivalently define classical causal models as causal structures supplemented with do-conditionals $\mathbb{P}(X_i |\docn\Pa(i))$ for each node. It is this definition to which quantum causal models are more closely analogous.

These distinctions have not and will not be used in the bulk of this thesis for brevity. Reference~\cite{CostaShrapnel16} discusses similar issues in the context of its related approach to causal models.
 
Of course the other main difference between how quantum causal models and classical causal models have been defined here is that quantum causal models use an interventionalist approach to nodes, as noted in Sec.~\ref{sec:CM:previous-attempts}. 

\subsection{Making Predictions} \label{sec:CM:predictions}

A quantum causal model would not be much use if it did not give predictions for measurements. Fortunately, once a quantum instrument has been specified at each node joint probabilities for their outcomes can be found.

For a quantum causal model with nodes $\{A_i\}_i$, let $\{\mathcal{E}_i^{k_i} \}_{k_i}$ be the quantum instrument at node $A_i$, expressed as a set of completely positive maps from $\mathcal{H}_i^\mathrm{out}$ to itself. Index $k_i$ labels the classical outcomes of the intervention in that local laboratory. An equivalent way to define these instruments is with their Choi-Jamio\l{}kowski isomorphic states on $\mathcal{H}_i^\mathrm{out}\otimes\mathcal{H}_i^\mathrm{in}$
\begin{equation} \label{eq:CM:quantum-instrument-operator}
\tau_i^{k_i} = \tau_{A_i}^{k_i} \eqdef \sum_{p,q} \mathcal{E}_i^{k_i}\left( |p\rangle_\mathrm{out}\langle q| \right) \otimes |p\rangle_\mathrm{in}\langle q|
\end{equation}
where $\{|p\rangle_\mathrm{in}\}_p$ is an arbitrary orthonormal basis on $\mathcal{H}_i^\mathrm{in}$ and $\{|p\rangle_\mathrm{out}\}_p$ its dual basis in $\mathcal{H}_i^\mathrm{out}$.

In the special case where the instrument is the identity (\emph{i.e.} there is no intervention) then this state is the \emph{linking operator} used in Sec.~\ref{sec:CI:cj-isomorphism}
\begin{equation} \label{eq:CM:linking-operator}
\tau_i^\id = \tau_{A_i}^\id \eqdef \sum_{p,q} |p\rangle_\mathrm{out}\langle q| \otimes |p\rangle_\mathrm{in} \langle q|.
\end{equation}

With this notation, the joint probability for any set of outcomes $\{k_i\}_i$ given choices of quantum instruments is given by
\begin{equation}
\mathbb{P}(\{k_i\}_i) = \Tr \left( \sigma \left( \tau_1^{k_1} \otimes \tau_2^{k_2} \otimes \cdots \right) \right).
\end{equation}
It is easy to verify that this produces the same probabilities as standard quantum theory.

The linking operator allows one to remove or ignore nodes from a causal network. This process is called \emph{linking out} and corresponds to the classical marginalisation process. For example, suppose node $A$ is ignored. This can be represented by linking it out from the model state
\begin{equation} \label{eq:CM:linking-out}
\Ln_A \sigma \eqdef \Tr_{A^\mathrm{in}A^\mathrm{out}}\left( \sigma \tau_A^\id \right).
\end{equation}
The linking operation $\Ln$ may be thought of as a modified partial trace, which includes insertion of the linking operator. Doing this produces a new model state which can be used to obtain outcome probabilities from the other nodes as above.

There is no general reason to expect $\Ln_A \sigma$ to satisfy the quantum Markov condition for an appropriate causal structure even when $\sigma$ did for the original causal structure. However, in the special case where $A$ has no children, then it is easy to verify that $\Ln_A \sigma$ does satisfy the quantum Markov condition for the causal structure obtained by removing node $A$.

\subsection{Classical Limits} \label{sec:CM:classical-limit}

As noted above, it is crucial that any reasonable definition of quantum causal networks be a generalisation of classical causal networks. It shall now be shown that this is the case for quantum causal models as defined in Sec.~\ref{sec:CM:quantum-definition}.

Quantum theory describes a classical situation in a decohering limit. That is, when there is a choice of orthonormal basis for each system such that all states and channels are diagonal with respect to those bases. In such a case, the elements of those bases become the classical states, the quantum states probability distributions over them, and the channels become classical channels.

Consider any decohering limit for a quantum causal network and let all of the quantum instruments be the identity. By taking the product of the model state $\sigma$ with the linking operators and tracing out the input spaces of each node, a state diagonal in these bases is obtained
\begin{equation} \label{eq:CM:classical-limit}
\varsigma \eqdef \Tr_{A_1^\mathrm{in} A_2^\mathrm{in} \cdots} \left( \sigma \left( \tau_1^\id \otimes \tau_2^\id \otimes \cdots \right) \right).
\end{equation}
This is a positive trace-one operator over $\mathcal{H}_{A_1}^\mathrm{out} \otimes \mathcal{H}_{A_2}^\mathrm{out} \otimes \cdots$.

The diagonal entries of $\varsigma$ (in the decohering bases) form a probability distribution over the decohered classical states of each $\mathcal{H}_{A_i}^\mathrm{out}$. The claim is that if $\sigma$ satisfies the quantum Markov condition [Eq.~(\ref{eq:CM:quantum-Markov})] in such a classical limit, then the distributions encoded in $\varsigma$ satisfy the classical Markov condition for the same causal structure. This will now be shown.

Suppose $\sigma$ satisfies Eq.~(\ref{eq:CM:quantum-Markov}). By writing $\sigma$ as a product where each factor $\rho_{A|\Pa(A)}$ appears before the corresponding factor for any of its children, then each operator $\tau^\id_A$ can be commuted through to appear to the immediate right of the corresponding $\rho_{A|\Pa(A)}$. The result is a product with one factor
\begin{equation} \label{eq:CM:classical-limit-factor}
\Tr_{A^\mathrm{in}} \left( \rho_{A|\Pa(A)} \tau_A^\id \right)
\end{equation}
for each node $A$, in any order where parents appear before children.

Suppose the linking operator is expanded in terms of the decohering basis $\{|a\rangle\}_a$ for $A$, so that Eq.~(\ref{eq:CM:classical-limit-factor}) takes the form
\begin{equation}
\sum_{a,b} {}_{A^\mathrm{in}}\langle b|\rho_{A|\Pa(A)}|a\rangle_{A^\mathrm{in}} \otimes |a\rangle_{A^\mathrm{out}} \langle b |.
\end{equation}
Recalling that $\rho_{A|\Pa(A)}$ is diagonal in the decohering bases, all terms vanish except where $b=a$. Letting $\{|\bar{p}\rangle\}_p$ be the product basis of the decohering bases for each node in $\Pa(A)$, the operator $\rho_{A|\Pa(A)}$ can be expanded to find
\begin{equation}
\sum_{a,p} \left\{ \langle a|\mathcal{E}_{A|\Pa(A)} \left(|\bar{p}\rangle_{\Pa(A)^\mathrm{in}} \langle \bar{p}|\right)|a\rangle \right\} \; |a\rangle_{A^\mathrm{out}}\langle a| \otimes |\bar{p} \rangle_{\Pa(A)^\mathrm{out}}\langle \bar{p}|.
\end{equation}
Here the fact that $\mathcal{E}_{A|\Pa(A)}$ only produces non-zero output when the input is diagonal in the $\{|\bar{p}\rangle\}_p$ basis (due to decoherence in the classical bases) has been used. Finally, note that the factor in braces is simply the probability that the classical values of $\Pa(A)$ represented by $|\bar{p}\rangle$ are mapped by $\mathcal{E}_{A|\Pa(A)}$ to the classical value $|a\rangle$ of $A$. That is, $\langle a|\mathcal{E}_{A|\Pa(A)}( |\bar{p}\rangle\langle\bar{p}|)|a\rangle = \mathbb{P}(A=a | \Pa(A)=p)$.

This shows that $\varsigma$ is a product of factors, each diagonal in the product basis of the decohering bases
\begin{equation}
\varsigma = \prod_A \left\{ \sum_{a,p} \mathbb{P}( A = a |\Pa(A) = p ) |a\rangle_{A^\mathrm{out}}\langle a| \otimes |\bar{p}\rangle_{\Pa(A)^\mathrm{out}}\langle\bar{p}| \right\}.
\end{equation}
Each diagonal entry in this basis is a product of conditional probability distributions $\mathbb{P}(A = a | \Pa(A) = p)$, which forms a joint distribution for the variables to take values $a,p,\ldots$ \emph{etc}. Thus, the entries are joint distributions over the graph which are Markov for the graph, as claimed.

In this way, quantum causal models naturally reduce to classical causal models in the appropriate limit with no interventions.

\subsection{Examples} \label{sec:CM:examples}

Quantum causal models have now been defined, shown to predict measurement outcomes, and  shown to reproduce classical causal models in the relevant limits. Since quantum causal models generalise quantum Reichenbach, the examples of Sec.~\ref{sec:CI:examples} are also examples for quantum causal models. What follows are further examples to illustrate the properties of quantum causal models.

\subsubsection{Confounding Common Cause} \label{sec:CM:confounding-common-cause}

Consider a quantum causal model with causal structure shown in Fig.~\ref{fig:CM:confounding-common-cause}(a). Such a model requires channels $\rho_{C|AB}$, $\rho_{B|A}$ and $\rho_A$ to be specified, which must commute pairwise. This produces a model state $\sigma = \rho_{C|AB}\rho_{B|A}\rho_A$ on $\mathcal{H}_C^\mathrm{out} \otimes \mathcal{H}_C^\mathrm{in} \otimes \mathcal{H}_B^\mathrm{out} \otimes \mathcal{H}_B^\mathrm{in} \otimes \mathcal{H}_A^\mathrm{out} \otimes \mathcal{H}_A^\mathrm{in}$. 

\begin{figure}
\begin{centering}
\includegraphics[scale=0.35]{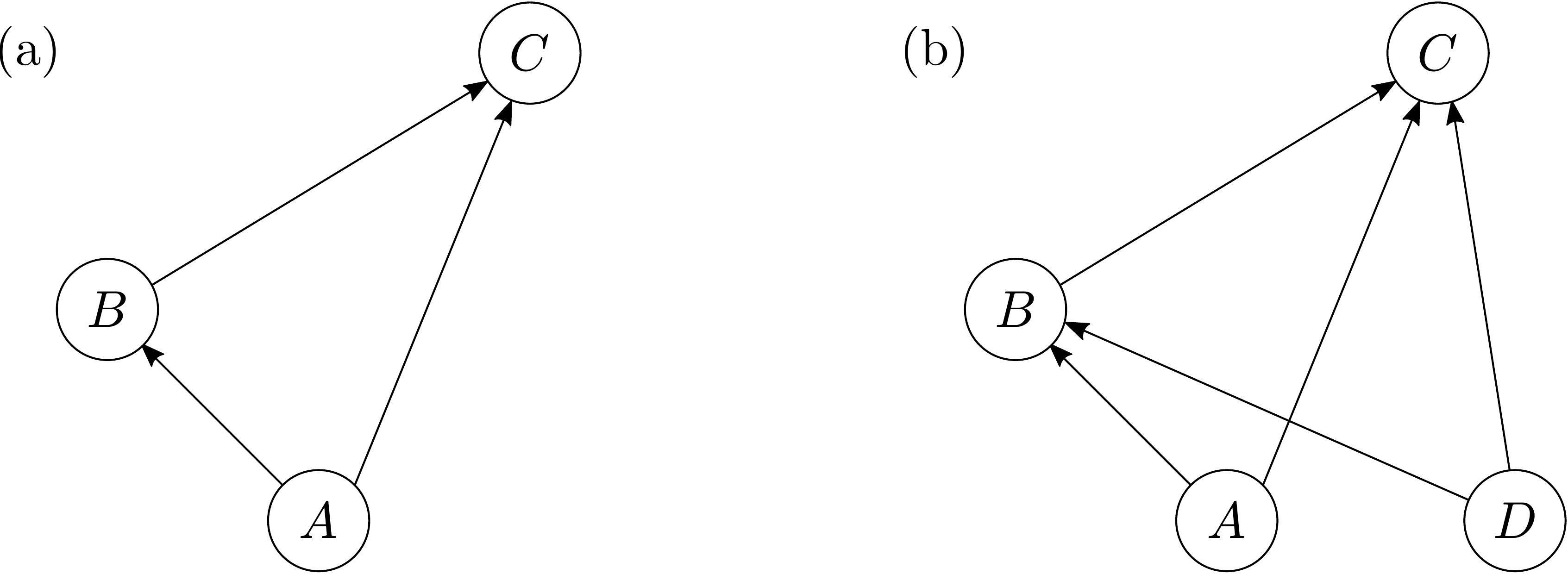}
\par\end{centering}
\protect\caption{Causal structures where $B$ causes $C$, but they also have additional common causes. (a) $A$ is the only common cause (though not a complete common cause due to the arrow from $B$ to $C$). (b) $D$ is an additional common cause.}
\label{fig:CM:confounding-common-cause}
\end{figure}

First, note some properties of these states. Since $A$ has no parents, its ``channel'' is from the trivial system, which is simply a quantum state $\rho_A$ on $\mathcal{H}_A^\mathrm{in}$. Some of the commutation relations, such as $[\rho_{B|A},\rho_A]=0$, follow immediately, since they only act non-trivially on different Hilbert spaces. However, the requirement that $\rho_{C|BA}$ and $\rho_{B|A}$ commute is a significant restriction, as they both act on $\mathcal{H}_A^\mathrm{out}$. By Thm.~\ref{thm:CI:QCI-full}, this requires that there is a decomposition $\mathcal{H}_A^\mathrm{out} = \bigoplus_i \mathcal{H}_{A_i^L}^\mathrm{out} \otimes \mathcal{H}_{A_i^R}^\mathrm{out}$ where $\rho_{C|BA}$ acts only on the right-hand factor in each subspace and $\rho_{B|A}$ only on the left hand factors.

This is very different from \emph{classical} causal models of the same causal structure. It is easy to check that any classical distribution $\mathbb{P}(A,B,C)$ is Markov for Fig.~\ref{fig:CM:confounding-common-cause}(a) interpreted as a classical causal structure. There is no analogue of the strong constraint that $[\rho_{C|BA},\rho_{B|A}]=0$ puts on the channels possible in a quantum causal network. Classically, both $B$ and $C$ may depend arbitrarily on $A$, whereas in a quantum causal network these dependencies must satisfy a compatibility condition required by commutation.

Figure~\ref{fig:CM:confounding-common-cause}(b) shows a variation of this causal network, where $D$ is an additional common cause. This might represent, for example, a system interacting with an environment. $A$, $B$, and $C$ are the system at different times and $D$ is the initial state of the environment. The causal arrow from $A$ to $C$ is necessary since information can flow into the environment from $A$ and back to the system at $C$ (nodes for the environment at later times have been omitted from the diagram).

Clearly, if nodes $A$ and $D$ are considered together then this is the exact same form as Fig.~\ref{fig:CM:confounding-common-cause}(a). Therefore, the Hilbert space $\mathcal{H}_A^\mathrm{out} \otimes \mathcal{H}_D^\mathrm{out}$ must factorise in linear subspaces as above, with the left hand factor going to $B$ and the right hand factor to $C$. This \emph{does not} imply that each individual space $\mathcal{H}_A^\mathrm{out}$ and $\mathcal{H}_D^\mathrm{out}$ must decompose in this way, only that the tensor product of the two does. Suppose that $\rho_{D}$ is fixed in some pure state $|0\rangle\langle 0|$ and consider removing $D$ from the structure (returning to Fig.~\ref{fig:CM:confounding-common-cause}(a)). There is no reason to expect that the resulting model state, found by Eq.~(\ref{eq:CM:linking-out}), satisfies the quantum Markov condition for Fig.~\ref{fig:CM:confounding-common-cause}(a). This is because the channels out of $\mathcal{H}_A^\mathrm{out}$ will not generally have the required decomposition.

Once again, this is in marked contrast with the case of Fig.~\ref{fig:CM:confounding-common-cause}(b) interpreted as a \emph{classical} causal structure. Again, there is no restriction on the form of the classical channels out of $A$ or $D$ (or both together). Moreover, if $D$ is fixed to have some definite value and then is marginalised out of the causal model, the resulting distribution will always be Markov for Fig.~\ref{fig:CM:confounding-common-cause}(a). Of course, it must be, since every distribution over $A$, $B$, and $C$ is Markov for Fig.~\ref{fig:CM:confounding-common-cause}(a).

One way to understand this to consider pure quantum states to contain some irreducible stochasticity. For example, one may prefer to think of them as ``maximal but incomplete information'' about a system (see also Sec.~\ref{sec:CI:example-copies}). In this view, the idea that a pure state of $D$ could still underwrite correlations between $B$ and $C$ is a natural one.

These examples demonstrate a curious thing: that quantum causal models appear to place \emph{more} restrictions on the same causal structure than their classical counterparts. Of course, quantum theory and quantum casual models contain classical theory and classical causal models respectively, so there is no contradiction. However, this shows that causal structure tempers some of the extra power of quantum theory in a way that it does not classically.

\subsubsection{Simple Bayesian Updating with a Complete Common Cause} \label{sec:CM:Bayesian-updating}

Consider again the case of $X$ being a complete classical common cause for $Y$ and $Z$, illustrated in Fig.~\ref{fig:CM:X-to-YZ-with-A-to-BC}(a). Classically, an important property of a complete common cause is that any knowledge about $Z$ gained at $Y$ is expressible via $X$. That is, if an agent learns something about $Y$, then the derived knowledge they gain about $Z$ is entirely the result of derived knowledge they have learned about $X$. The new information ``follows the arrows'' of the causal model (both backwards then forwards). Indeed, properly formalised, this can be viewed as another definition for classical conditional independence.

\begin{figure}
\begin{centering}
\includegraphics[scale=0.35,angle=0]{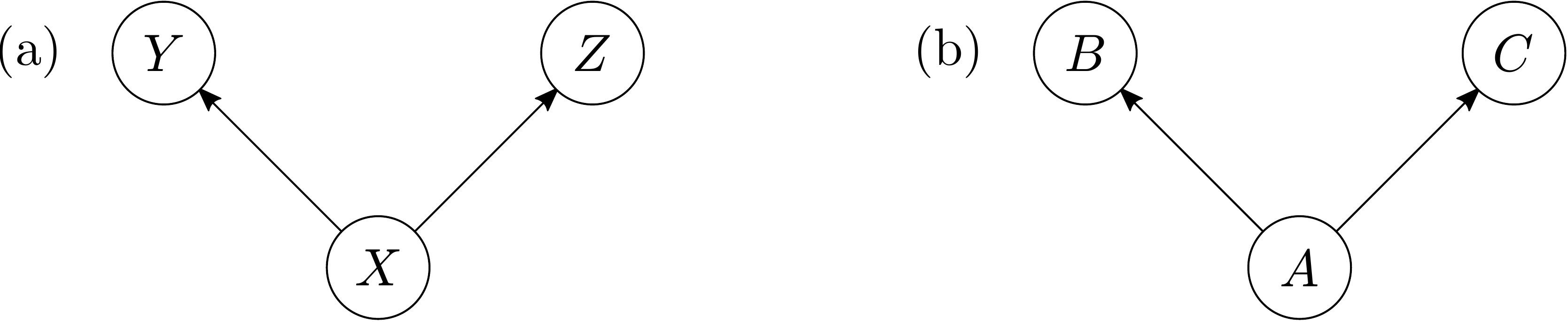}
\par\end{centering}
\caption{Causal structures for simple complete common causes in (a) the classical case and (b) the quantum case.}
\label{fig:CM:X-to-YZ-with-A-to-BC}
\end{figure}

Specifically, the causal model specifies channels $\mathbb{P}(Y|X)$ and $\mathbb{P}(Z|X)$ together with distribution $\mathbb{P}(X)$. Suppose that an agent learns that $Y=y$. This allows them to replace the initial distribution $\mathbb{P}(X)$ with $\tilde{\mathbb{P}}(X) \eqdef \mathbb{P}(X|Y=y)$. Then passing this new distribution through the channel $\mathbb{P}(Z|X)$ they find
\begin{equation}
\mathbb{P}(Z|Y=y) = \sum_{x\in X}\mathbb{P}(Z|X=x)\tilde{\mathbb{P}}(X=x).
\end{equation}
The result is exactly the same as if the agent had simply calculated $\mathbb{P}(Z|Y)$ from $\mathbb{P}(Y,Z,X)$ using Bayes' rule. For this procedure to work generally, it is essential that $Y$ and $Z$ are conditionally independent given $X$.

This is a natural and intuitive property of a complete common cause: it underwrites all correlations and therefore can be used to mediate information gained. It is therefore desirable, perhaps even essential, for the same procedure to work in quantum causal models.

For the corresponding quantum causal model of Fig.~\ref{fig:CM:X-to-YZ-with-A-to-BC}(b), suppose that an agent makes some measurement at $B$, obtaining outcome $k_B$. This corresponds to the operator $\tau_B^{k_B}$ from a quantum instrument, as in Eq.~(\ref{eq:CM:quantum-instrument-operator}). It is well known from standard quantum theory that the agent can consider the system at $C$ to be in an updated state $\sigma_{C|k_B}$ based on this measurement outcome due to ``collapse'' upon measurement. So long as $A$ is a complete quantum common cause for $B$ and $C$, this can be expressed as the result of a modified input state $\tilde{\rho}_A$ being passed into the channel $\rho_{BC|A}$ and then marginalising over $B$. This mirrors exactly the classical updating procedure above.

The updated state at $A$ required to do this is 
\begin{equation}
\tilde{\rho}_A \eqdef \frac{ \rho_A \Tr_{B^\mathrm{in}B^\mathrm{out}}\left( \tau_B^{k_B} \rho_{B|A} \right)} { \Tr\left( \rho_A \tau_B^{k_B} \rho_{B|A} \right) }.
\end{equation}
To verify that this state behaves as required, simply apply the channel $\rho_{BC|A} = \rho_{B|A}\rho_{C|A}$ and trace out the output at $B^\mathrm{in}$. The result is equal to (up to normalisation)
\begin{equation}
\sigma_{C|k_B} = \Tr_{B^\mathrm{out}} \left( (\mathcal{E}_B^{k_B} \otimes \id_C)( \mathcal{E}_{BC|A}( \rho_A ) ) \right)
\end{equation}
which is exactly the marginal state on $C$ after the agent finds the outcome $k_B$ as predicted by standard quantum theory. In other words, rather than taking the joint state on $BC$ and measuring $B$, the agent can find the marginal at $C$ after measurement by inputting $\tilde{\rho}_A$ into the channel $\rho_{BC|A}$. Just as in the classical case, to make this work conditional independence of the channel $\rho_{BC|A}$ was necessary. Also similarly to the classical case, all that is needed to find $\tilde{\rho}_A$ is the original state on $A$, the channel $\rho_{B|A}$ and the outcome $k_B$. The agent can express $\tilde{\rho}_A$ in complete ignorance of $C$ or the channel to it.

Consider, for example, the coherent copy channel discussed in Sec.~\ref{sec:CI:example-copies}. There it was seen that this channel, summarised in Eq.~(\ref{eq:CI:coherent-copy-channel}), does not satisfy conditional independence, which initially seems surprising. However, this updating procedure provides another way of understanding why it should not. 

Suppose the input to the coherent copy [Eq.~(\ref{eq:CI:coherent-copy-channel})] is $|+\rangle_A = (|0\rangle + |1\rangle)/\sqrt{2}$. The output is therefore the Bell state $|\Phi^+\rangle_{BC} = (|00\rangle + |11\rangle)/\sqrt{2}$. If a measurement were made at $B$ and outcome $|+\rangle_B$ obtained then the resulting collapsed marginal at $C$ would become $|+\rangle_C$. However, there is no state which can be input to that channel to produce $|+\rangle_C$ as the marginal output at $C$. Therefore, $\tilde{\rho}_A$ doesn't exist for this channel and measurement. This is in violation of the discussion above, where it was seen that every common cause channel and subsequent measurement can be understood in this way.

This whole procedure of knowledge ``following the arrows'' is reminiscent of Bayesian updating. It is, however, much more limited in scope. This discussion only directly applies to complete common causes. Recall also the disanalogy between quantum causal models and Bayesian conditionals noted in Sec.~\ref{sec:CM:conditionals}.

\wcomment{
\section{d-separation} \label{sec:CM:d-separation}

\red{ [You need to do some proving if you're going to include this section.] }

\red{ [Motivate the importance of d-separation in the standard ways.] }

\red{ [Find a way to link that back to our ontology theme.] }

\red{ [Define d-separation as a graphical criterion here.] }

\red{ [---] }

\red{ [Even if I don't have the proof, I can still put the proposed result in as a conjecture. Motivate my QCI-generalisation as the correct one from the QCI introduced in the last chapter, then use that to say ``look, if this is a good way of generalising causal models then we'd expect $d$-separation to hold, and it should probably look like this.'' This would be a very nice replacement, even if I can't do the proof exactly.] }

\subsection{d-separation in Classical Causal Models}

\red{ [On top of the obvious purpose of this section, see if you can find anything explicit on interventional classical causal models and d-separation.] }

\subsection{d-separation in Quantum Causal Models}

\red{ [Basically just state and prove our result.] }

\red{ [If you don't have a full proof, state the semi-results you have and reason why you think the full result should be true as a generalisation of the common cause case above.] }

\red{ [Think about the mutual-information-over-time way of doing this. There are close links with the quantum states over time people and their way of doing a joint measurement over time.] }

\red{ [Remember that our conditional independences are more like do-conditional independences, so don't take into account input states at all. Consider whether this is important.] }

} %\wcomment

\section{Summary and Discussion} \label{sec:CM:summary-and-discussion}

This chapter completes the investigation of ontological causal influences in quantum theory in this thesis. Following the classical example, this has been done by generalising quantum Reichenbach of Chap.~\ref{ch:CI} to a full framework of quantum causal models. Since quantum Reichenbach is well motivated, the resulting framework is placed on a solid foundation.

In order to respect the great success of the theory of classical causal models, any framework of quantum causal models should also be a generalisation of that theory. To this end, Sec.~\ref{sec:CM:classical-CM} briefly outlined the theory of classical causal models and demonstrated one justification for them, mirroring the justification for Reichenbach's principle in Sec.~\ref{sec:CI:justifying-classical-Reichenbach}.

Just as with Reichenbach's principle, finding a quantum generalisation of classical causal models becomes easier when the original is bisected. The causal part, causal structures, can be used in quantum theory with little change. All that is needed is a specification of what the nodes correspond to. The real challenge is finding a generalisation of the probabilistic/statistical part---the Markov condition Eq.~(\ref{eq:CM:Markov-condition}).

To this end, the definition of quantum causal models in Sec.~\ref{sec:CM:quantum-definition} sought to most simply generalise quantum Reichenbach to general causal networks. Each node in the causal structure was modelled as a local laboratory, with input system, output system, and the possibility of some agent who intervenes there. This allowed the nodes to be thought of as events, observations, or system states as needed. It also facilitated a quantum Markov condition that applies to a global \emph{model state} [Eq.~(\ref{eq:CM:quantum-Markov})] that is largely disconnected, avoiding the problems with defining true joint quantum states over time \cite{HorsmanHeunen+16}.

To further justify this somewhat speculative definition, Sec.~\ref{sec:CM:predictions} demonstrated how it can be used to obtain predictions for experiments in line with standard quantum theory. Further, Sec.~\ref{sec:CM:classical-limit} explicitly showed how these quantum causal models generalise classical causal models. This was followed by some examples in Sec.~\ref{sec:CM:examples}, which touch on the power of quantum causal models for explaining phenomena and highlight some key differences with classical causal models.

\wcomment{ \red{ [Insert the summary of the d-separation section once it's written.] } }

As noted in Sec.~\ref{sec:CM:previous-attempts}, this is not the first time that a framework of quantum causal models has been defined. The framework given here, however, has two key advantages over previous attempts. First, it is a natively quantum framework. All classical systems described by the network are quantum systems in some classical decohering limit. This is a great strength over frameworks which treat quantum systems as add-ons to classical causal models. Here, quantum systems are the only first-class citizens. Second, it rests on a thoroughly motivated quantum Reichenbach's principle.  So, at least in complete common cause scenarios, the causal models inherit this motivation and have many close analogies to the classical case.

In particular, the important differences between quantum causal models given here and those of Ref.~\cite{CostaShrapnel16} should be noted. Both use a similar set-up, with input and output spaces at each node with possible interventions between them. The important difference is in how these nodes can be linked. In Ref.~\cite{CostaShrapnel16}, the output space from one node must factorise as a tensor product of spaces for each outgoing causal arrow. This is much stronger that the condition here, which requires that output Hilbert spaces must factorise \emph{in linear subspaces} (which may be one-dimensional) \emph{when they are a complete common cause}. Compare, for example, the causal models of Sec.~\ref{sec:CM:confounding-common-cause} where this thesis only requires the combined $AD$ system to factorise in this way, while Ref.~\cite{CostaShrapnel16} would require $A$ and $D$ to individually factorise as simple tensor products.

The definition of quantum Reichenbach provided in Sec.~\ref{sec:CI:quantum-Reichenbach} rests on a concept of quantum conditional independence of outputs $B$ and $C$ given input $A$, which has four equivalent definitions given in Thm.~\ref{thm:CI:QCI-full}. A corresponding set of definitions of conditional independence of $k>2$ outputs given one input are given in Thm.~\ref{thm:CI-QCI-multipartite-full}. Consider again each of these defining conditions.

Condition (1) fits perfectly with the idea that all quantum dynamics is fundamentally unitary, a popular position known as the ``church of the larger Hilbert space''. It defines exactly which unitaries can be considered to act as complete common cause channels (satisfying conditional independence of outputs given input). In such a view, therefore, quantum conditional independence provides a strong characterisation of the structure of ontological transformations. Moreover, this definition of conditional independence can be found just by assuming a larger Hilbert space position, as in Sec.~\ref{sec:CI:justifying-quantum-Reichenbach}.

Condition (2) is perhaps the most pleasingly simple, but is not directly terribly useful. Not only is the Choi-Jamio\l{}kowski state $\rho_{BC|A}$ a somewhat roundabout description of the channel, but the product of two such states has no direct physical meaning in general. The great exception, of course, is in a decohering classical limit, when it becomes the familiar condition for classical conditional independence.

Condition (3) is a very direct generalisation of the corresponding classical condition. To obtain the state $\hat{\rho}_{BC|A}$ as a simple joint quantum state, one can prepare a maximally entangled input $|\Phi\rangle = \sum_i |i\rangle_A |i\rangle_A$ for any basis $\{|i\rangle\}_i$ of $\mathcal{H}_A$ and input half of it into the channel $\rho_{BC|A}$. The joint output state, including the extra $A$ system, will be mathematically equal to $\hat{\rho}_{BC|A}$. This closely mirrors how one can obtain the corresponding classical distribution $\hat{\mathbb{P}}(Y,Z,X)$ as discussed in Sec.~\ref{sec:CI:alternative-QCI}. Condition (3) also provides an interesting link to the study of approximate quantum Markov chains \cite{IbinsonLinden+08,FawziRenner15,Wilde15,SutterTomamichel+16,LiWinter14,JungeRenner+15,BertaLemm+15,SutterFawzi+16}. Typically, (approximate) quantum Markov chains are defined for a tripartite joint quantum state with (approximately) vanishing conditional mutual information. In the approximate case, questions of how characterise quantum states with small-but-non-zero conditional mutual information are of great importance. Such results should also be very important for the study of quantum causal models, especially if one wants to reason about causal structures from imperfect experimental data.

Condition (4) is perhaps the most curious, but is also very persuasive. As discussed in Sec.~\ref{sec:CI:circuits}, it seems to directly capture the idea that such channels are those allowing two agents to independently act on a single input. A general procedure for this is to perform a von Neumann measurement on the input and factorise the resulting state in a manner that can depend on the measurement result. Equivalently, the channel factorises into two, potentially differently in different linear subspaces. The factorisation in linear subspaces is analogous to the classical copy operation, in so far as it allows two agents to act independently on a single input. This is an interesting new analogue for the copy operation. In Sec.~\ref{sec:CI:circuits}, these operations were represented in circuit diagrams with the new symbol \qcopy{}.

The use of \qcopy{} in circuit diagrams immediately suggests extending circuit formulations of quantum theory to natively support such structures. In particular, it would be interesting to see how \qcopy{} could be incorporated into the diagrammatic reformulation of quantum theory of Refs.~\cite{Coecke10,CoeckeKissinger17} based on category-theoretical underpinnings.

Perhaps, if an operational formalisation of the concept of two agents ``independently'' acting on a single input were found, one could derive condition (4) as a consequence. This would provide another strong justification for quantum Reichenbach as defined here, as well as being an interesting result in its own right.

Classical causal networks have become invaluable in the study of statistics. In particular, the field of causal inference \cite{Pearl09,SpirtesGlymour+01,LeeSpekkens15,WolfeSpekkens+16,ChavesLuft+14b}, which aims to deduce properties of the causal structure from uncontrolled statistical data. This has significant applications in many areas of science and beyond. The quantum causal models presented here should provide the appropriate framework for achieving similar utility for quantum experiments. \wcomment{ \red{ [If d-separation has been done, say that this is encouraging for being able to directly import algorithms.] } }

As noted in Sec.~\ref{sec:CM:general-quantum-experiments}, classical causal models have provided techniques for deriving Bell-like inequalities for more general causal structures \cite{LeeSpekkens15,WolfeSpekkens+16,Chaves16,RossetBranciard+16}. The quantum causal models presented here should therefore prove useful in adapting these techniques to finding bounds on observables in quantum experiments with the same causal structures \cite{ChavesLuft+14a,ChavesMajenz+15,ChavesKueng+15}.

There has been much recent interest in the idea of ``indefinite causal structures'' in quantum theory. That is, considering the possibility of a coherent superposition of different causal structures \cite{Chiribella12,OreshkovCosta+12,MacLeanReid+16,FeixBrukner16}. This may be significant for the project of unifying quantum and general relativistic theories \cite{Hardy09}. One might expect the framework of quantum causal models defined here, as well as the understanding of common causes, to be able to significantly contribute to this conversation.

Finally, there is a notable way in which the development of the quantum causal models presented in Sec.~\ref{sec:CM:quantum-definition} could be improved. Section~\ref{sec:CI:justifying-classical-Reichenbach} presented a justification of Reichenbach's principle from fundamental determinism and Sec.~\ref{sec:CM:justifying-classical-CM} did the same for classical causal models. By way of analogy, Sec.~\ref{sec:CI:justifying-quantum-Reichenbach} presented a justification for quantum Reichenbach from fundamental unitarity, giving the first definition of quantum conditional independence. What is missing is the analogous justification of quantum causal models from fundamental unitarity. That is, one might reasonably expect a framework of quantum causal models which generalises quantum Reichenbach to be derivable by assuming that the mappings between nodes are all unitary, supplemented by latent nodes which satisfy the qualitative part of Reichenbach's principle. This has not yet been achieved. The interconnected nature of general causal structures makes this more difficult, since a global unitary evolution normally assumes a sense of global time. However, it is suspected that the task is not insurmountable and the result would place quantum causal models on even firmer footing.

\chapter{Causal Loops: Time Travel to the Past} \label{ch:TT}

\section{Closed Timelike Curves and the Ontology of Time Travel} \label{sec:TT:intro}

This thesis is primarily concerned with the restrictions that quantum theory places on ontology and causality. These topics meet head on when considering the possibility of time travel to the past in a quantum universe. Quantum theory is famously linear but as soon as time travel is included non-linearities creep in. This destroys the equivalence between many ontological interpretations possible in standard quantum theory. Moreover, the lack of consensus towards quantum ontology means a range of possible models for time travel should be considered. Most of the work in this chapter has been published in Ref.~\cite{Allen14}.

\subsection{Why Study Time Travel?} \label{sec:TT:why-TT}

There are few areas of physics in which one confronts the idea of time travel to the past. Indeed, from the causal perspective, true causal loops are normally ruled out by hypothesis (this assumption being built into the use of directed \emph{acyclic} graphs in causal models, Chap.~\ref{ch:CM}). When it is discussed, time travel is often associated with particular ways of thinking about quantum theory and quantum field theory. One sometimes hears the view that in quantum teleportation, for example, the teleported information travels back in time to the point at which entanglement was created, before proceeding forward in time to the recipient \cite{Penrose98,Jozsa98,Jozsa04,Timpson06}. Another example might be remarks about antiparticles in quantum field theory being akin to particles ``travelling backwards in time''. The time travel considered in this chapter is rather different. Here, it will be assumed that time travel into the past is a possible physical process and the question is asked: how could or should quantum theory be modified to account for time travel?

There are a few ways to motivate this. Perhaps the most common is in relation to general relativity and \emph{closed timelike curves} (CTCs), discussed in Sec.~\ref{sec:TT:CTCs-and-GR}. More pertinent to this thesis are the insights it may give to the relationship between causality and ontology in quantum theory. In a classical and ostensibly deterministic universe our familiar ideas of causality and physical ontology are strong and considering time travel to the past is easy enough to be the subject of many of popular stories. For quantum theory, on the other hand, there are a wide variety of possible ontological interpretations, none of which are uncontroversial. As shall be seen, this gives rise to many reasonable models for quantum time travel. Time travel to the past forces us to consider quantum features such as indeterminism and inseparability of states in a new light.

These conversations are valuable even if time travel to the past is not possible in our universe. Treated as a thought experiment, time travel can draw particular attention to issues that may be overlooked in other situations. In the words of Deutsch in the original paper on quantum time travel as approached in this thesis: ``It is curious that the analysis of a physical situation which might well not occur should yield so many insights into quantum theory'' \cite{Deutsch91}.

Taking the opposite perspective, theoretical studies into quantum time travel may yield \emph{a priori} restrictions on whether and how such effects are logically possible. It is well known that classical treatments of time travel are prone to paradoxes, but a common feature of most quantum approaches to time travel is that certain paradoxes can be avoided altogether. This may be taken as evidence that time travel to the past is not as nonsensical as our classically-influenced brains might assume. On the other hand, if a very well-motivated argument is found stating that no sensible account of quantum time travel can be given this may be taken as evidence that time travel to the past is impossible.

There are also computational motivations for studying time travel. One of the main tools in the complexity theorist's box is analysing the abilities of different computational paradigms when given access to powerful additional resources, often specified as ``oracles'' \cite{AroraBarak09}. This allows the derivation of so-called ``relativised'' separation results, which establish relationships between complexity classes given access to certain oracles. The situation with time travel to the past need be no different, allowing comparison of different models of computation when given access to these anachronistic resources. This is of particular interest in quantum computation. Most confirmed computational speed-ups of quantum computation over classical computation are in tasks which involve an oracle's resources \cite{NielsenChuang00}. It is therefore interesting to compare the capabilities of quantum and classical computers with access to different models of time travel.

Finally, by incorporating time travel to the past, one obtains a non-linear extension to quantum theory \cite{Hawking95,Anderson95,FewsterWells95,Hartle94}. Standard quantum theory is characteristically linear and yet the possibility of non-linear extensions to it remain a subject of interest, especially with regard to any hypothetical post-quantum physics \cite{GhirardiRimini+86,Weinberg89,Goldin08,Kent05,Penrose98,AaronsonWatrous09,Aaronson05,CavalcantiMenicucci+12}. In this view, quantum time travel provides an interesting source of reasonable non-linear extensions which can serve as either examples or counter-examples for how non-linear quantum theory can or must behave.

\subsection{Closed Timelike Curves and General Relativity} \label{sec:TT:CTCs-and-GR}

By far the most common motivation given for studying time travel to the past is that general relativity allows for closed timelike curves (CTCs). CTCs are features of exotic spacetimes which allow massive particles to travel to their own past, while heading apparently forwards in time at all points. As such they give a concrete possible mechanism for time travel to the past. It has long been known that solutions containing CTCs can be found to the Einstein field equations \cite{Godel49,Bonnor80,Gott91}. Unsurprisingly, there are doubts that any such solutions could be found in nature \cite{Hawking92,MorrisThorne+88} but it has not been possible to absolutely preclude them. It is therefore prudent to treat them as real possibilities and, since our universe is quantum, the problem of quantum theory along CTCs (that is, quantum time travel to the past) must be addressed.

The most obvious approach to analysing quantum theory with CTCs is to use quantum field theory and general relativity in curved spacetimes. This is far from easy, however. Spacetimes containing CTCs are globally non-hyperbolic and typically contain no Cauchy surfaces. This makes them incompatible with standard relativistic quantum field theory and without a well-defined initial value problem \cite{BirrellDavies82,Wald10}.

Nonetheless, there is a tradition of attempts to model quantum time travel using the path-integral approach to quantum theory \cite{EcheverriaKlinkhammer+91,BirrellDavies82,Hartle94,Politzer94}. The approach taken in this chapter will, however, be somewhat more abstract. Rather than dealing directly with curved spacetimes, one can use the quantum circuit approach \cite{Deutsch91}. The result is an abstraction away from the precise time travel mechanism, retaining only the essential feature that some system is sent into its own past \cite{Allen14}. This move has turned out to be fruitful, giving rise to the two most popular current models of time travel in quantum theory: \emph{D-CTCs} (or ``Deutschian'' CTCs) and \emph{P-CTCs} (or ``post-selection'' CTCs).

This general approach to modelling quantum time travel will be detailed in Sec.~\ref{sec:TT:circuits-time-travel} together with a short discussion of the types of paradoxes that time-travel can cause in the classical case. Section~\ref{sec:TT:review} will then briefly outline the models of D- and P-CTCs and discuss the ontological issues raised, in particular with respect to non-linearity. The main results of the chapter will be in Sec.~\ref{sec:TT:alternative-models}, where alternatives to the D- and P-CTC models are found. One of these alternative models, dubbed T-CTCs, will be fully fleshed out and compared with the established models. Finally, Sec.~\ref{sec:TT:summary-and-discussion} will summarise these results and discuss their significance for the ontology of quantum theory, time travel, and non-linear extensions to quantum theory more generally.

\section{Modelling Time Travel with Quantum Circuits} \label{sec:TT:circuits-time-travel}

\subsection{The Standard Form of Time Travel Circuits} \label{sec:TT:standard-form}

The quantum circuit approach to time travel is based on a particular form of circuit introduced in Ref.~\cite{Deutsch91}. This convenient building block, from which all other circuits involving time travel can be built, will be called the \emph{standard form circuit}. It allows different circuits and models to be easily and concisely specified.

The quantum circuit model is an abstraction from the precise physical mechanisms that neatly separates quantum evolution from spatial motion. Quantum interactions described by unitary gates are assumed to only occur in small, freely falling, non-rotating regions of spacetime so that they obey non-relativistic quantum theory. To include time travel to the past in such a quantum circuit model is therefore equivalent to saying that the classical paths quantum systems take between gates are allowed to go back in time. Within this approach, different models are then defined by their behaviour when sending systems back in time. For clarity, this chapter will only deal with finite-dimensional Hilbert spaces.

The standard form circuit, illustrated in Fig.~\ref{fig:TT:standardform}, contains a single time travel event in a localised spacetime region and a single unitary quantum interaction $U$. To simplify the discussion, it will be assumed that there is some system in the circuit which does not go back in time\footnote{This assumption could be dropped, but at the expense of a more fiddly discussion. It also seems reasonable to expect any model containing non-localised time travel events to be extendible to a model where all are localised.}. The \emph{chronology violating} (CV) system arrives from its own future in the state $\tau_{i}$ and after the interaction is said to be in the state $\tau_{f}$. Meanwhile, the \emph{chronology respecting} (CR) system arrives from the unambiguous past in the state $\rho_{i}$ and emerges into the unambiguous future in the state $\rho_{f}$. Therefore, a standard form circuit is completely specified by these two systems and $U$, while a model for quantum theory with time travel
is a general specification of $\rho_{f}$ given $U$ and $\rho_{i}$.

\begin{figure}
\begin{centering}
\includegraphics[scale=0.35,angle=0]{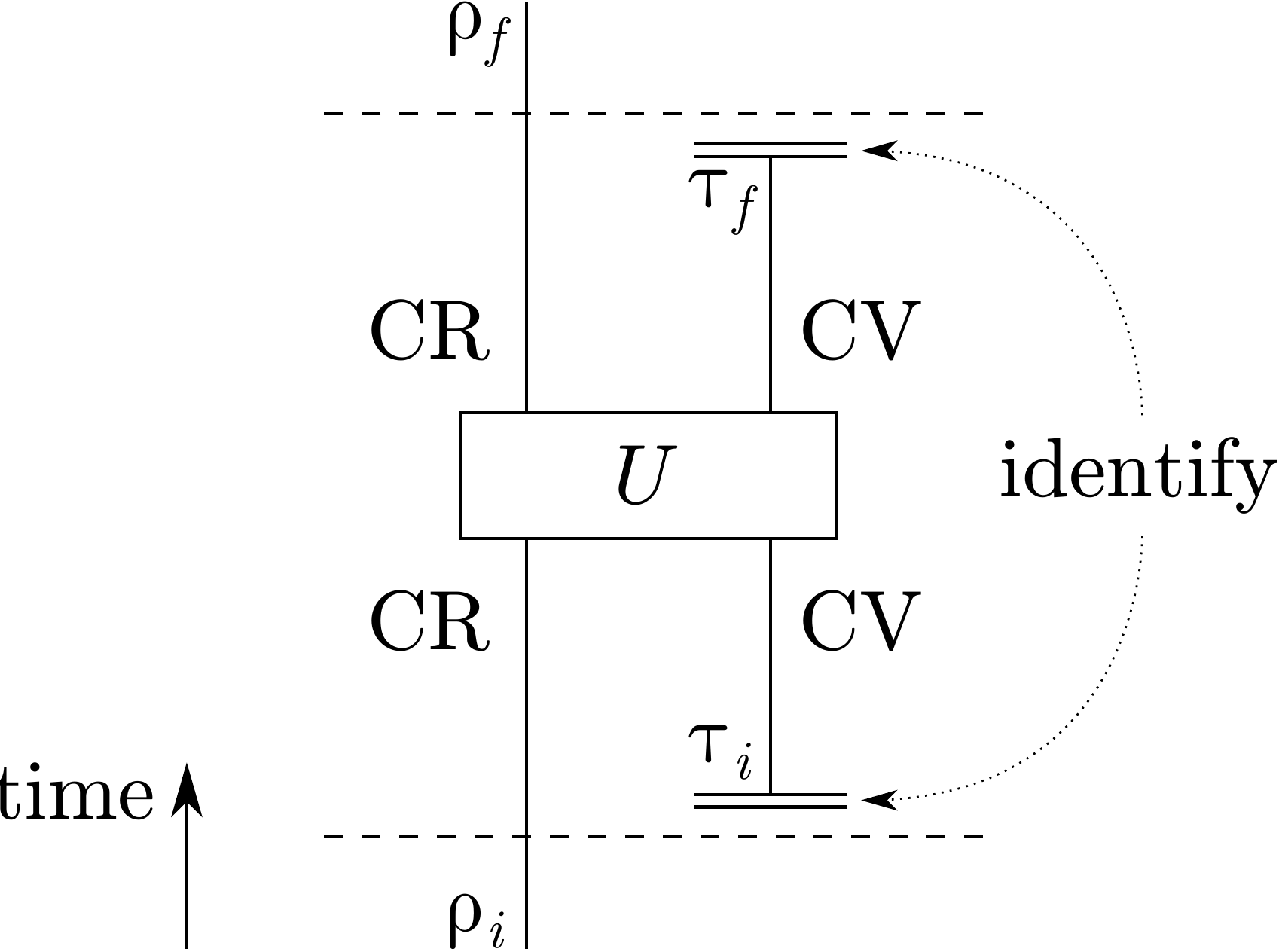}
\par\end{centering}
\protect\caption{Schematic diagram for the standard form circuit described in the text. The chronology respecting (CR) and chronology violating (CV) systems are shown entering the gate labelled by unitary $U$. Time, when it is unambiguous at least, increases up the diagram. The double bars represent the time travel event and may be thought of as two depictions of the same spacelike hypersurface forming a CTC. The dashed lines represent spacelike boundaries of the region in which time travel takes place; CV system is restricted to that region.} \label{fig:TT:standardform}
\end{figure}

In cases where the CV system is initially known to be in a pure state it may be written as $|\phi\rangle$ so that $\tau_{i}=|\phi\rangle\langle\phi|$. Similarly, in cases where the CR system input or output states are known to be pure, they may be written $|\psi_{i,f}\rangle$ so that $\rho_{i,f}=|\psi_{i,f}\rangle\langle\psi_{i,f}|$. In Sec. \ref{sec:TT:non-linearity} it will be seen that $\rho_{i}$ can always be purified to some $|\psi_i\rangle$.

A standard assumption used in all circuit models of quantum time travel is that the CR and CV systems are not initially entangled. That is, before the action of $U$ they are in the product state $\rho_i \otimes \tau_i$. \cite{DeJongheFrey+10}. It has been argued that this is an unreasonable assumption and that prior entanglement should be considered as the CV system contains $\rho_i$ in its past \cite{Politzer94}. That possibility will not be considered here because alternative assignment methods all have undesirable features. The problem is one of finding an assignment procedure that gives a joint state $\omega$ on $\mathcal{H}_{\textsc{CR}}\otimes\mathcal{H}_{\textsc{CV}}$ given only a state $\rho_{i}$ on $\mathcal{H}_{\textsc{CR}}$. The procedure assumed here---where $\omega=\rho_{i}\otimes\tau_{i}$---is the only one for which: $\omega$ is always positive, $\Tr_{\textsc{CV}}\omega=\rho_{i}$, and mixtures are preserved \cite{Alicki95}.

\subsection{Time Travel Paradoxes and the Classical Model} \label{sec:TT:paradoxes}

In order to define the possible types of paradox in time travel it is useful to leave quantum theory to one side briefly and concentrate on classical time travel. Consider a classical version of the standard form circuit, with classical states $\tilde{\rho}_{i,f}$ and $\tilde{\tau}_{i,f}$ and some classical dynamical evolution $\tilde{U}$ replacing their quantum counterparts. Since classically it can be assumed that there is no fundamental stochasticity, assume that $\tilde{\rho}_{i,f}$ and $\tilde{\tau}_{i,f}$ are both ontic.

The standard way of introducing time travel into classical theories is to impose a consistency condition on states that go back in time; that is, $\tilde{\tau}\eqdef\tilde{\tau}_{i}=\tilde{\tau}_{f}$. In other words, the ontic state that emerges into the past is required to be the same one that left from the future. With knowledge of a $\tilde{\rho}_{i}$ and $\tilde{U}$, a consistent $\tilde{\tau}$ can be deduced, from which $\tilde{\rho}_{f}$ may be calculated.

Classical time travel to the past may give rise to both \emph{dynamical consistency} paradoxes and \emph{information} paradoxes. This chapter takes a slightly unconventional view on paradoxes whereby a computation, which algorithmically produces unambiguous output from an input, is never paradoxical. So a theory that predicts ``absurdly'' powerful communication or computational abilities will not be called paradoxical because of them, even if those abilities make the theory appear unreasonable or hard to accept. Some ``absurd'' conclusions should probably be expected when ``absurdity'' of time travel has been assumed.

A dynamical consistency paradox is a situation in which a consistent history of events is not possible \cite{Deutsch91,LloydMaccone+11a,LloydMaccone+11b}. The usual example is the ``grandfather paradox'' where a grandchild travels back to kill their infant grandfather. Dynamical consistency paradoxes occur when a model fails to specify any valid final state from some initial state and evolution. Classically, this is exactly because the consistency condition cannot be satisfied\footnote{It is worth noting that, although these paradoxes appear to be possible in classical models of physics, such situations are quite difficult, if not impossible, to construct in classical models with continuous state spaces. Reference \cite{ArntzeniusMaudlin13} is a useful introduction to such issues.}. The only way to avoid these paradoxes in general is to disallow the interactions that lead to them from the model \cite{Allen14}.

An information paradox is a situation with consistent dynamics but information that appears from nowhere; \emph{viz.} the information has not been computed. The prototypical example is the ``unproved theorem'' paradox: a mathematician reads the proof of a theorem from a book only to travel back in time to author that same book. The proof has no ultimate source.

Under these definitions, information paradoxes arise if and only if a theory contains a \emph{uniqueness ambiguity}: the model specifies more than one final state given some initial state and evolution, but fails to give probabilities for each possibility. Any other dynamically consistent evolution is counted as a computation and information paradoxes have been defined as exactly those where an uncomputed output is produced. For example, in the classical unproved theorem paradox there are many time travelling states $\tilde{\tau}$ compatible with the consistency condition, each producing a different $\tilde{\rho}_{f}$---one $\tilde{\tau}$ produces a theorem answering $\mathsf{P}\eqquest\mathsf{NP}$, another answering $\mathsf{BPP}\eqquest\mathsf{BQP}$, many others where the ``theorem'' is nonsense, \emph{etc}. It should be noted that the equivalence of the uniqueness ambiguity and information paradoxes is not universally used, but it follows from the definitions used here as described above. Readers preferring different definitions may replace each further instance of ``information paradox'' with ``uniqueness ambiguity'' without affecting the meaning.

For example, some authors prefer a wider definition of paradox in which time travel circuits are counted as having information paradoxes when the only consistent evolution reveals a fixed point of some given function \cite{Deutsch91,LloydMaccone+11c}. Such circuits uniquely produce solutions to problems that are hard if $\mathsf{P}\neq\mathsf{NP}$ very rapidly \cite{GareyJohnson79}. As such, they are counted as very powerful computations rather than paradoxes as defined here. This is not to claim that such processes are necessarily reasonable, but rather to reflect that they are qualitatively different from the uniqueness ambiguities equivalent to information paradoxes as defined here. If there is no uniqueness ambiguity then any information that appears as a result of the time travel circuit is uniquely specified by the structure of, and input to, the circuit \cite{LloydMaccone+11c}. It is therefore reasonable to say that this circuit is algorithmically computing as instructed.

\section{Quantum Time Travel Models, Non-linearity, and Ontology} \label{sec:TT:review}

The model of D-CTCs, introduced by Deutsch, was the first to use the quantum circuit model to analyse quantum time travel \cite{Deutsch91}. P-CTCs are a more recent development which make use of post-selection and ideas from quantum teleportation to construct a very different model. In this section each will be briefly introduced and discussed, focussing on what they suggest about ontology and non-linearity in quantum theory.

\subsection{Overview of D-CTCs} \label{sec:TT:D-CTCs}

While not the original line of reasoning, the D-CTC model can be rapidly constructed by assuming that reduced density operators are ontic states, following Ref.~\cite{WallmanBartlett12}. Just as in the classical model, this imposes a consistency condition $\tau\eqdef\tau_i=\tau_f$ on the ontic time travelling states. For a standard form circuit, this implies 
\begin{equation} \label{eq:TT:D-CTC-CC}
\tau=\mathcal{G}(\tau)\eqdef\Tr_{\textsc{CR}}\left( U(\rho_{i}\otimes\tau)U^{\dagger} \right).
\end{equation}
Noting that the CR and CV systems have been separated using a partial trace in Eq.~(\ref{eq:TT:D-CTC-CC}), this suggests that the final state should be given by
\begin{equation} \label{eq:TT:D-CTC-EoM}
\rho_{f}=\Tr_{\textsc{CV}}\left( U(\rho_{i}\otimes\tau)U^{\dagger} \right).
\end{equation}

The key difference between this and ordinary unitary quantum theory is the implied map
\begin{equation} \label{eq:TT:D-CTC-dynamical-change}
U\left(\rho_{i}\otimes\tau\right)U^{\dagger}\rightarrow\rho_{f}\otimes\tau,
\end{equation}
that replaces the quantum state after $U$ with the product of its reduced density operators. This defines the action of the time travel event in D-CTCs.

Between them, Eqs.~(\ref{eq:TT:D-CTC-CC}, \ref{eq:TT:D-CTC-EoM}) define the D-CTC model of time travel. Given $\rho_i$ and $U$, one can solve Eq.~(\ref{eq:TT:D-CTC-CC}) to get $\tau$ and then calculate $\rho_f$. Equation~(\ref{eq:TT:D-CTC-EoM}) takes the role of an equation of motion and is clearly both non-linear and non-unitary in general.

The first thing to note is that a solution $\tau$ for Eq.~(\ref{eq:TT:D-CTC-CC}) always exists. This follows from Schauder's fixed point theorem which guarantees that every trace-preserving quantum channel, such as $\mathcal{G}$, has at least one fixed point \cite{Schauder30,Tychonoff35,Zeidler85}. Therefore dynamical consistency paradoxes cannot arise in the D-CTC model\footnote{A direct proof of this can be found in Ref.~\cite{Deutsch91}}. On the other hand, solutions for $\tau$ are not always unique. Therefore, D-CTCs have the same uniqueness ambiguity present in classical time travel and hence information paradoxes.

To avoid uniqueness ambiguities the \emph{maximum entropy rule} has been suggested \cite{Deutsch91}, stating that one should choose the unique $\tau$ with maximum von Neumann entropy. However, this is not universally accepted as part of the D-CTC model and alternative principles do exist \cite{Politzer94,DeJongheFrey+10}. Interestingly, these ambiguities vanish in the presence of arbitrary non-zero noise. Suppose one incorporates a noise channel $\mathcal{N}$ applied to the CV system so that Eq.~(\ref{eq:TT:D-CTC-CC}) becomes $\tau = \mathcal{N}(\mathcal{G}(\tau))$. As a noisy channel, $\mathcal{N}(\mathcal{G}(\cdot))$ should be \emph{strictly contractive} (that is, trace distance should always decrease under its action) \cite{Raginsky02} and therefore have a unique fixed point\footnote{This argument resolves a conjecture from Ref.~\cite{Allen14}.} \cite{NielsenChuang00}.

Thus, D-CTCs are free from dynamical consistency paradoxes but can have information paradoxes (which vanish under arbitrarily small noise). The non-linearity of the model enables them to have the following abilities, beyond those of ordinary quantum circuits.

They can solve any problem in $\mathsf{PSPACE}$ in polynomial time and are therefore likely to be vastly more powerful even than quantum computers \cite{AaronsonWatrous09}. D-CTCs are also capable of producing discontinuous mappings from $\rho_i$ to $\rho_f$, which means that (for all practical purposes) the model loses predictive power near these discontinuities. 

Given any finite set of pure states (which are not necessarily orthogonal), there is a D-CTC circuit that can render them distinguishable with a single measurement \cite{BrunWilde12}. In this way, D-CTCs can violate the Holevo bound \cite{BrunHarrington+09,Holevo73}. Given this distinguishing ability, it is perhaps unsurprising that D-CTC circuits are also capable of cloning arbitrary quantum states \cite{AhnMyers+12,BrunWilde+13} (though, of course, entanglement is not cloned as this is forbidden by monogamy \cite{CoffmanKundu+00}).

\subsection{Overview of P-CTCs} \label{sec:TT:P-CTCs}

The P-CTC model is due to Svetlichny \cite{Svetlichny09} (inspired by diagrammatic approaches to quantum theory \cite{CoeckeKissinger17}) and Lloyd \emph{et al. }\cite{LloydMaccone+11a,LloydMaccone+11b} (based on the unpublished work of Bennett and Schumacher \cite{BennettSchumacher05} and inspired by Ref.~\cite{HorowitzMaldacena04}). Reference \cite{BrunWilde12} contains an accessible introduction. It is defined by ignoring the precise mechanism behind the time travel and postulating only that the effect is mathematically equivalent to teleportation into the past, achieved by the following unphysical operational protocol schematically illustrated in Fig. \ref{fig:TT:P-CTCs}.

\begin{figure}
\centering{}\includegraphics[scale=0.35,angle=0]{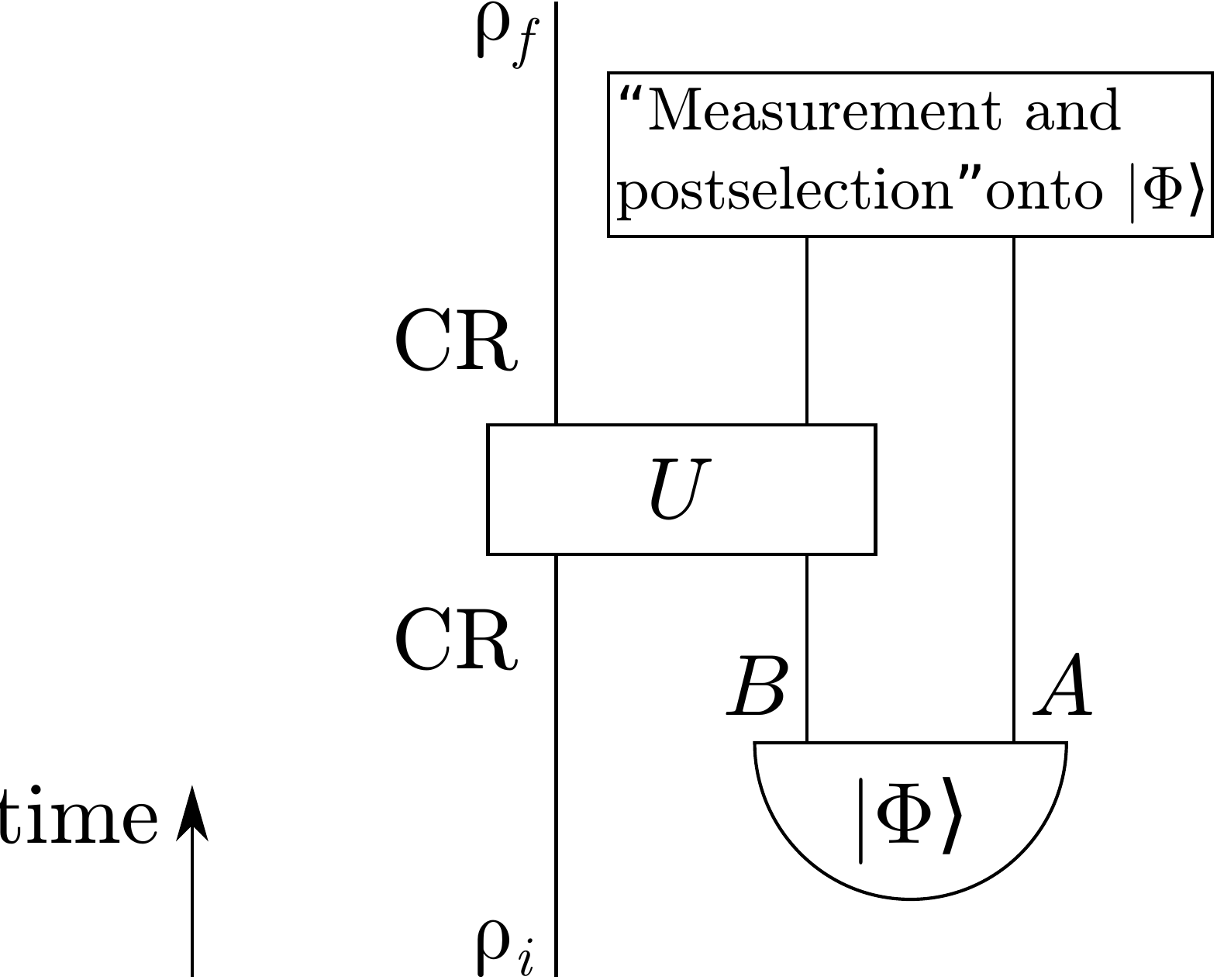}
\protect\caption{Schematic illustration of the protocol defining the action of P-CTCs as described in the text. In standard teleportation, Alice and Bob share entangled systems $A$ and $B$ and Alice can teleport a state to Bob by performing a joint measurement with $A$ and obtaining the outcome $|\Phi\rangle$. This situation differs from that of standard teleportation in two ways. First, the system that Alice teleports is the same as $B$, just at a later time. Second, that Alice can get the outcome $|\Phi\rangle$ by postselection with certainty and so no classical communication to Bob is required to complete the teleportation.}
\label{fig:TT:P-CTCs}
\end{figure}

Prepare two copies of the CV system, $A$ and $B$, in the maximally entangled state $|\Phi\rangle \propto \sum_{i}|i\rangle_{B}|i\rangle_{A}$, where $\{|i\rangle\}_{i}$ is any orthonormal basis of the CV system. Let $B$ interact with the CR system as in the standard form circuit and then perform a joint measurement on $B$ and $A$ which contains outcome  $|\Phi\rangle$. The unphysical step is to postselect this measurement on the outcome $|\Phi\rangle$. This is equivalent to simply projecting the tripartite system of CR, $B$, and $A$ onto $\langle\Phi|$ and then renormalising the resulting state. Comparing this to the standard quantum teleportation protocol, the effect is to ``teleport'' the final state of $B$ back onto the $B$ system in the past. This protocol may be simulated in the laboratory by manual postselection of measurement outcomes \cite{Svetlichny09,LloydMaccone+11a}.

If the CR system is initially in the pure state $|\psi_i\rangle$ then the effect of this protocol is (up to renormalisation)
\begin{eqnarray}
_{BA}\langle\Phi|U_{\textsc{CR},B}|\psi_{i}\rangle|\Phi\rangle_{BA} & \propto & \sum_{i}\,_{B}\langle i|U_{\textsc{CR},B}|i\rangle_{B}|\psi_{i}\rangle\nonumber \\
 & \propto & \Tr_{\textsc{CV}}(U)\,|\psi_{i}\rangle.
\end{eqnarray}
Generalising this result to mixed input $\rho_{i}$ and including the renormalisation, the general action of a P-CTC becomes
\begin{equation} \label{eq:TT:P-CTC-EoM}
\rho_{f}=\frac{P\rho_{i}P^{\dagger}}{\Tr(P\rho_{i}P^{\dagger})},\quad P\eqdef\Tr_{\textsc{CV}}(U).
\end{equation}
The map of Eq.~(\ref{eq:TT:P-CTC-EoM}) completely specifies the action of a P-CTC standard form circuit. So having obtained this result, one can take Eq.~(\ref{eq:TT:P-CTC-EoM}) as the definition of the P-CTC model and take the above unphysical protocol simply as an argument showing that the model is equivalent to teleporting the final state of the CV system into the past. This equation of motion is both non-linear and non-unitary in general.

Since Eq.~(\ref{eq:TT:P-CTC-EoM}) maps each $\rho_{i}$ onto a specific $\rho_{f}$ without ambiguity, P-CTCs cannot suffer uniqueness ambiguities and so are not vulnerable to information paradoxes. On the other hand, the $P$ operators defined in Eq.~(\ref{eq:TT:P-CTC-EoM}) can act as $P\rho_i P^\dagger = 0$ on some input states $\rho_i$. These cases are dynamical consistency paradoxes as no consistent outcome can be obtained from these inputs and interactions. However, such dynamical consistency paradoxes vanish under arbitrary non-zero noise \cite{Allen14}. Note how P-CTCs generically suffer from dynamical consistency paradoxes, while D-CTCs do not, but D-CTCs suffer from information paradoxes, while P-CTCs do not, and in both cases the paradoxes vanish under the action of any finite noise.

As with the D-CTC model, the non-linearity of this model permits various super-quantum abilities. P-CTC circuits can solve any problem in $\mathsf{PP}$ in polynomial time \cite{LloydMaccone+11b}, making them likely much more powerful than quantum computers but likely much less powerful than D-CTCs. This follows because P-CTCs and quantum theory with postselection are computationally equivalent \cite{BrunWilde12,Aaronson05}. Unlike D-CTCs, P-CTCs can never produce discontinuous evolutions because of the more mild form of non-linearity, as shown in Sec.~\ref{sec:TT:non-linearity}.

There are P-CTC circuits that can render any finite set of linearly independent pure states distinguishable with a single measurement \cite{BrunHarrington+09}. Note that, while still able to distinguish non-orthogonal quantum states with certainty, this is much less powerful than distinguishing with D-CTCs, as linear independence is required (in particular, the Holevo bound cannot be violated in this way). P-CTCs are also capable of generically deleting arbitrary quantum states, something impossible in standard linear quantum theory \cite{PatiBraunstein00}.

\subsection{The Problems of Non-linearity} \label{sec:TT:non-linearity}

Even from this brief overview, it should be clear that the non-linearity of D-CTCs and P-CTCs causes some striking departures from standard linear quantum theory. Central features, including no-cloning and indistinguishability of non-orthogonal states, are broken due to the non-linearity and non-unitarity of Eqs.~(\ref{eq:TT:D-CTC-EoM}, \ref{eq:TT:P-CTC-EoM}). However, the abilities of both models are still bounded and it is sensible to consider how and why these two models differ.

There are different types of non-linear maps one can consider applying to quantum states. Generally, non-linearity occurs due to input state appearing at quadratic or higher orders in an equation of motion, as in D-CTCs. A special case of non-linearity is \emph{renormalisation non-linearity}, where the equation of motion is linear except for a scalar factor which simply normalises the final state. P-CTCs are renormalisation non-linear since Eq.~(\ref{eq:TT:P-CTC-EoM}) would be entirely linear were it not for the renormalising denominator. A non-linear equation may be called \emph{polynomial non-linear} if it is not renormalisation non-linear.

This distinction facilitates a more general discussion of different non-linear theories and their features. For example, renormalisation non-linear equations cannot lead to the discontinuous state evolutions possible with D-CTCs. It is simple to verify that if $\rho_f(\rho_i) = \mathsf{L}(\rho_i)/\Tr(\mathsf{L}(\rho_i))$ is a renormalisation non-linear equation of motion (that is, if $\mathsf{L}$ is linear in this equation) then 
\begin{equation}
\lim_{\epsilon\rightarrow0}\rho_{f}(\rho_{i}+\epsilon\sigma) = \rho_{f}(\rho_{i})
\end{equation}
for any $\sigma$ so the mapping is always continuous.

There are also different types of mixed quantum states that can be identified. Typically, one encounters \emph{proper mixtures}---epistemic mixtures due to the observer's ignorance of the actual quantum state---and \emph{improper mixtures}---the way to describe only part of an entangled system. Both of these are described using density operators and in standard quantum theory two mixtures described by the same density operator behave identically regardless of the type.

The operational equivalence of different types of mixture directly facilitates the operational equivalence of many interpretations and ontologies for quantum theory. For example, Everettian ontologies have only a single global quantum state that is hyper-entangled and evolves deterministically, implying an absence of proper mixtures in favour of improper mixtures\footnote{Of course, any Everettian observer could still choose to be ignorant, but this does not give rise to mixed states in the same way. To be clear, consider how proper mixtures arise in objective collapse models. An observer $O$ sets up a measurement $M$ on a system $S$, but chooses (by not looking, or what have you) to remain ignorant of the outcome of $M$. In an objective collapse model, $O$ now knows (assuming sufficient understanding of $S$, $M$, and quantum theory) that the state of the universe has now collapsed into one of multiple possible states with corresponding probabilities; this ensemble forms a proper mixture which $O$ uses to describe $S$ after $M$. In an Everettian model, however, $O$ now knows that the universal state has evolved into some macroscopic superposition, the only uncertainty is about which branch of this $O$ is in. For more on this see, for example, Ref.~\cite[especially \S3.2]{Albert10}.}. A GRW-style collapse theory \cite{GhirardiRimini+86} would, on the other hand, allow both types of mixture and specifically have many fewer improper mixtures.

In any extension to quantum theory with non-linearity, however, it is not valid to describe proper mixtures using density operators. This follows because, for a non-linear evolution $\mathcal{E}$ acting on some ensemble of states and corresponding probabilities $\{(\rho_{j},p_{j})\}_{j}$, applying $\mathcal{E}$ to the initial density operator is not generally the same as the density operator obtained by applying $\mathcal{E}$ to each individual state in the ensemble:
\begin{equation}
\mathcal{E}\left( \sum_{j}p_{j}\rho_{j} \right) \neq \sum_{j}p_{j}\mathcal{E}(\rho_{j}).
\end{equation}
On the other hand, density operators are the correct way to describe improper mixtures under non-linear evolution. By examining the derivation of reduced density operators, as given in \cite{NielsenChuang00} for example, it is easily seen that linearity of operations is not assumed at any point.

A third distinct type of mixed state becomes particularly evident when discussing time travel models \cite{Deutsch91,BrunWilde12}. These are \emph{true mixtures}: mixtures that are not entangled to any reference system
and yet would still be described as mixed by an observer with maximum knowledge. These arise naturally in the model of D-CTCs, where it is possible for a mixed output to be produced from a pure state. Moreover, the CV state $\tau$ in D-CTCs will generally be mixed and cannot be purified, so must also be a true mixture \cite{PatiChakrabarty+10}.

However, both improper and true mixtures may still be validly described by density operators in a non-linear theory (unlike proper mixtures). It therefore follows that non-linear theories do not treat true and improper mixtures
differently. So whilst models of time travel may introduce true mixtures conceptually, they do not affect the way in which calculations are performed. As such the purification theorem still holds for $\rho_{i}$ and one may always assume that $\rho_{i}=|\psi_{i}\rangle\langle\psi_{i}|$ by simply extending $U$ to act on the purification ancilla as the identity.

The different ontologies implied by proper, improper, and true mixtures must be carefully borne in mind when dealing with non-linear models, such as those of time travel. Insufficient clarity on this point caused significant controversy over the capabilities of D-CTCs, the so-called ``linearity trap'' \cite{BennettLeung+09,RalphMyers10,CavalcantiMenicucci10,CavalcantiMenicucci+12,BrunWilde12,Pienaar13}. The solution is to not only specify the density operator of a mixture in a non-linear model, but also specify the ontology of the mixture so that the correct treatment can be used \cite{Allen14}.

Moreover, since all non-linear evolutions treat proper and improper mixtures differently in general then if the difference is observable the result is an \emph{entanglement detector}: a device capable of telling whether or not a system is entangled with another. In any interpretation that involves instantaneous disentanglement by measurement then an entanglement detector necessarily facilitates superluminal signalling \cite[\S3.1.3]{Pienaar13}. D-CTCs and P-CTCs therefore both lead to superluminal signalling with such an interpretation.

Instantaneous disentanglement is not a necessary feature of quantum measurement, however. There are many ontologies that do not require instantaneous disentanglement. Moreover, one can construct alternative models of quantum measurement that prevent superluminal signalling in non-linear theories by construction without subscribing to a particular interpretation \cite{CzachorDoebner02,Kent05}. So whilst non-linearity does not necessarily lead to signalling, it may do depending on the ontology \cite[\S3]{Pienaar13}.

\subsection{The Role of Ontology} \label{sec:TT:role-of-ontology}

Any non-linear extension to quantum theory (and, in particular, models of time travel) raises ontological problems compared to standard quantum theory. The clearest examples are in the differences that non-linearity introduces between proper and improper mixtures. As discussed in Sec.~\ref{sec:TT:non-linearity}, this essentially means that density operators cannot be used to describe proper mixtures in a non-linear model.

Traditionally, the only differences between the types of mixture in quantum theory have been ones of interpretation and preferred ontology. Non-linearity breaks this long-standing equivalence of different interpretations. For example, with non-linearity Everettian quantum theory need not produce the same predictions as Bohmian quantum theory simply because they differ ontologically. Therefore, if a method of time-travel to the past were discovered (a real CTC, for example) then in principle it could be used to experimentally distinguish between certain ontological interpretations.

There is a wrinkle in this argument, however, that points towards the other role of ontology in models of time travel. That is, there are at least two models of time travel in quantum theory each of which is somewhat well-motivated: D-CTCs and P-CTCs. They are only ``somewhat'' well-motivated as neither comes with a solid first-principles argument deriving the model, they are instead developed using plausibility arguments. A clear idea of a quantum ontology will play a key role in selecting the correct model for time travel.

Compare this to the classical case. An uncontroversial model for classical time travel to the past was briefly outlined in Sec.~\ref{sec:TT:paradoxes}. Why was this model used and why is it not controversial? Because, as far as physics is concerned, there is broad consensus on the essential features of classical ontology. Classical physics suggests that systems have a local ontology that describes their entire state at any point, so it is easy to simply transport this local ontology back in time when required.

As an aside, it is interesting to note that in Ref.~\cite{AaronsonWatrous09} a curiously different model of classical time travel was used. In that paper, it was claimed that the computational power of D-CTCs is equivalent to a classical computer equipped with time travel abilities (both being $\mathsf{PSPACE}$). However, this is only true if one takes the unusual position of requiring that classical time travel imposes a consistency condition on \emph{probability distributions} rather than ontic states, as in Sec.~\ref{sec:TT:paradoxes}. In this way, Ref.~\cite{AaronsonWatrous09} avoids dynamical consistency paradoxes normally associated with classical time travel\footnote{Another way to look at this oddity is to note that, since classical theory is a subset of quantum theory, P-CTCs should be able to model classical time travel but P-CTCs only have the computational power of $\mathsf{PP}$. If classical time travel has the power of $\mathsf{PSPACE}$ this would seem to suggest $\mathsf{PP} = \mathsf{PSPACE}$ which would be a hugely surprising result (of course, this is not implied mathematically, but merely illustrates the odd model of classical time travel assumed).}.

To illustrate how ontology influences the plausibility of different time travel models, consider D-CTCs. When first introduced, the suggested interpretation of D-CTCs was in an Everettian ontology where the CV system heading back in time enters a different branch or ``world'' of the quantum state \cite{Deutsch91}. This is not an ontological assumption from which the model is derived, but rather a suggested interpretation that makes a certain amount of sense of the model and was likely influenced by Deutsch's pre-existing preference for that interpretation\footnote{In Ref.~\cite{Deutsch91} he repeatedly refers to Everettian quantum theory as ``unmodified'' quantum theory.}. On the other hand, D-CTCs can be analysed from an epistemic realist perspective on quantum state ontology---see Sec.~\ref{sec:SO:desired-ontologies}---and found to be inconsistent \cite{WallmanBartlett12}. The conclusion is clear: someone, like Deutsch, who prefers Everettian ontology can consistently choose the D-CTC model (though they are not necessarily forced to) while an epistemic realist cannot. Moreover, as in Sec.~\ref{sec:TT:D-CTCs}, one can try to justify D-CTCs by assuming that reduced mixtures are ontic states and therefore suitable candidates for a classical-like consistency condition, but taking this ontological route seriously requires a careful justification of the very unusual quantum ontology that results.

Interestingly, P-CTCs were introduced without any suggested ontological basis at all. In fact, one of the papers that introduced the model preferred to call it ``effective quantum time travel'' that one might simulate in a lab using manual postselection \cite{Svetlichny09}. The primary motivation is that quantum teleportation creates a quantum communication channel and that the (unphysical) protocol described in Sec.~\ref{sec:TT:P-CTCs} modifies this channel to communicate with the past. Despite this, there are still ontological arguments to be made for and against P-CTCs. For example, P-CTCs have been shown to be compatible with one model for quantum time travel that takes the path-integral approach, rather than the circuit approach used here \cite{Politzer94}. On the other hand, the P-CTC model fails to specify a state for the system that travels back in time analogous to $\tau$ in D-CTCs. It is therefore unclear how to answer questions like ``what is the state of the CR and CV system just after applying $U$?'' with P-CTCs, for example. This may not be necessary for some mechanisms of time travel, but it would certainly be bizarre if there were not a well-defined state for a system traversing a CTC.

It is a strength of the abstract quantum circuit-based approach that it can easily isolate these ontological concerns. In particular, it facilitates direct comparisons between the classical and quantum cases which then suggests two core features of quantum theory that contribute to the ontological ambiguity found in quantum time travel. First, the stochasticity found in quantum theory is normally thought to be irreducible, while classical physics is normally taken to be fundamentally deterministic. Classically, this allows one to unproblematically consider the state of the CV system to be in a definite ontic state, even if the precise state is not known. Second, quantum states of the CV and CR systems are generally non-separable, whereas classically there is always a well-defined concept of the state of the CV system separate from the CR system. Classically, this allows the CV system to be extracted from the future and transplanted to the past easily, while quantum mechanically some non-trivial operation is required to separate the two. With D-CTCs this is a partial trace, while with P-CTCs it is a projection. Without these two features, defining an uncontroversial quantum model for time travel would likely be as easy as it is classically.

On the other hand, the circuit approach also introduces some further ambiguities as a result of being somewhat divorced from the precise physical mechanisms. In particular, one prominent ambiguity might be called the \emph{dynamical ambiguity} \cite{Allen14}. In a D-CTC circuit, for example, Eq.~(\ref{eq:TT:D-CTC-dynamical-change}) specifies a non-trivial dynamical change but it is not clear exactly when this should occur. There is a similar ambiguity in P-CTCs: there is a physical change that has no well-defined location. These ambiguities are related to how, in quantum circuits, one can slide gates along wires freely without changing the overall effect of the circuit and is probably therefore a fundamental ambiguity for this circuit approach. While a little inelegant, this ambiguity need not be of particular concern, however, because it is entirely unobservable in both models (the predictions are the same regardless of where Eq.~(\ref{eq:TT:D-CTC-dynamical-change}) is placed, for example).

So ontology can and should inform one's approach to quantum time travel. Likewise, theorising about quantum ontology can highlight certain ontological issues such as non-separability in quantum theory and whether it can make sense to discuss a fundamentally different sort of evolution on one part of a system (CV) from the other (CR). There is also one further link between the studies of quantum time travel and ontology: models of quantum time travel might directly influence the development of ontologies for quantum theory. Suppose, for example, that a physical CTC were discovered. Experiments on it could reveal a particular model of quantum time travel to be correct which should then,  in turn, suggest some quantum ontologies as more plausible than others (\emph{e.g.} D-CTCs may suggest Everettian ontology and rule out many epistemic realist ontologies). It may even possible for this to happen entirely theoretically: if an exceptionally well-motivated and natural model for quantum time travel were developed, then this would probably also provide clues towards natural ontologies for quantum theory.

\section{Alternative Time Travel Models} \label{sec:TT:alternative-models}

The discussion of D-CTC and P-CTC models in the previous section makes three things clear. First, there can be more than one reasonable way to extend quantum theory to include time travel to the past. Second, there are several reasons to dislike either of these models, depending on one's philosophical bent. Third, the ontological and interpretational foundations of each are far from certain but, if one thing is known, they definitely differ.

These points raise some questions. What other reasonable models of quantum theory with time travel might exist? How might they compare to these existing examples? What ontological implications would they suggest? These questions will be tackled in this section. Before developing some new theories, it will be useful to first review some background on integrating over quantum states.

\subsection{Integrating over Quantum States} \label{sec:TT:integrating}

Consider any $d$-dimensional quantum system with Hilbert space $\mathcal{H}$. For any scalar function $\mathcal{J}:\mathcal{H}\rightarrow\mathbb{C}$ one can consider the integral over the pure states 
\begin{equation}
J = \int_{\mathcal{P}(\mathcal{H})} \d[\phi] \,\mathcal{J}(\phi),
\end{equation}
where the integration measure $\d[\phi]$ is yet to be defined. Conveniently, there exists a unique natural measure over $\mathcal{P}(\mathcal{H})$ that is invariant under unitary transformations given by taking a random unitary matrix distributed according to the Haar measure on the group $U(d)$ \cite{ZyczkowskiSommers01}. One way to write this is in the Hurwitz parametrisation \cite{BengtssonZyczkowski06,ZyczkowskiSommers01,Hurwitz97} defined with respect to some orthonormal basis $\{|\alpha\rangle\}_{\alpha=0}^{d-1}$ of $\mathcal{H}$, such that any pure state $|\phi\rangle$ takes the form
\begin{multline} \label{eq:TT:Hurwitz}
|\phi\rangle=\prod_{\beta=d-1}^{1}\sin\theta_{\beta}|0\rangle + \sum_{\alpha=1}^{d-2}e^{i\varphi_{\alpha}}\cos\theta_{\alpha}\prod_{\beta=d-1}^{\alpha+1}\sin\theta_{\beta}|\alpha\rangle + e^{i\varphi_{d-1}}\cos\theta_{d-1}|d-1\rangle,
\end{multline}
for some parameters $\theta_{\alpha}\in[0,\pi/2]$ and $\varphi_{\alpha}\in[0,2\pi)$. In this parametrisation, the integration measure takes the form
\begin{equation} \label{eq:TT:integration-measure}
\d[\phi(\theta_\alpha, \varphi_\alpha)] = \prod_{\alpha=1}^{d-1} \cos\theta_\alpha ( \sin\theta_\alpha )^{2\alpha-1} \d\theta_\alpha \d\varphi_\alpha.
\end{equation}
This is then a natural measure for integrating over pure quantum states that is unique up to a multiplicative constant.

This natural measure is invariant under unitary operations, so that under $|\phi\rangle\rightarrow U|\phi\rangle$ the measure transforms as $\d[\phi] \rightarrow \d[U\phi] = \d[\phi]$. It may be useful to observe that the Hurwitz parametrisation is a generalisation of the Bloch sphere parametrisation often used for qubits $\mathcal{H} = \mathbb{C}^2$ where the measure is, up to a scalar, the rotationally-invariant area measure on a sphere:
\begin{eqnarray}
|\phi\rangle & = & \sin\theta|0\rangle+e^{i\varphi}\cos\theta|1\rangle,\\
\d[\phi] & \propto & \sin(2\theta) \d(2\theta) \d\varphi.
\end{eqnarray}

Integrals over mixed quantum states can be considered similarly. For some scalar function $\mathcal{J}:\mathcal{D}(\mathcal{H})\rightarrow\mathbb{C}$, consider the integral over the density operators on $\mathcal{H}$,
\begin{equation}
J = \int_{\mathcal{D}(\mathcal{H})} \d[\tau] \,\mathcal{J}(\tau),
\end{equation}
for some integration measure $\d[\tau]$. Unlike $\mathcal{P}(\mathcal{H})$, there is no unique natural measure on $\mathcal{D}(\mathcal{H})$ \cite{BengtssonZyczkowski06} and so one must be chosen, along with a useful way to parametrise $\tau$. The result is that there is no unique natural way to define $J$; it will depend on the choice of measure used.

\subsection{Desiderata} \label{sec:TT:desiderata}

When considering how new models of quantum theory with time travel might be developed, it is useful to first consider how it might be desirable for such a theory to behave. A list of possible desirable features follows. Of course, all desiderata are linked to various philosophical prejudices, but there is still utility in considering them.

\begin{enumerate}

\item The model should have sound physical motivation and an ontological interpretation.

\item The model should be compatible with standard quantum theory, at least in so far as current experiments are concerned. In the case of CTCs it should reproduce quantum theory locally along the CTC, as well as in spacetime regions far from the CTC. It is also expected to be locally approximately consistent with special relativity and relativistic causal structure. Specifically, it should not allow superluminal signalling.

\item The model should not allow dynamical inconsistencies. In other words, it should not have disallowed evolutions that lead to dynamical consistency paradoxes for any $U$ or $\rho_i$.
 
\item The model should specify $\rho_{f}$ uniquely given $U$ and $\rho_{i}$. If multiple possible output states are considered, then probabilities for each of these should be specified. In other words, it should not have uniqueness ambiguities that lead to information paradoxes (Sec.~\ref{sec:TT:paradoxes}).

\item The model should specify a state $\tau$ that travels back in time; this should either be uniquely specified or an ensemble of possibilities with corresponding probabilities should be uniquely specified.

\item Given a pure $\rho_{i}$, prejudice might require either or both of $\rho_{f}$ and $\tau$ to also be pure.

\item The model should not be able to distinguish non-orthogonal states in a single measurement, neither should it be able to clone arbitrary quantum states.

\end{enumerate}

Feature (1) is the most subtle and subjective of these but arguably the most important. It was discussed to some extent in Sec.~\ref{sec:TT:role-of-ontology} for D- and P-CTCs.

D-CTCs have feature (2) so long as ontological assumptions regarding collapse are made that rule out superluminal signalling. P-CTCs only have feature (2) in the presence of finite noise, and even then similar assumptions about collapse are required to rule out signalling. However, as noted in Sec. \ref{sec:TT:non-linearity}, adding any non-linear evolution to quantum theory opens up the possibility of signalling in this way.

Feature (3) is not in the D-CTC model but is in the P-CTC model, while feature (4) is definitely in P-CTCs but is only in D-CTCs by adding an extra postulate or in the presence of noise, as noted in Sec.~\ref{sec:TT:D-CTCs}.

Neither the D-CTC nor P-CTC models have feature (5). P-CTCs do not specify any $\tau$, while D-CTCs specify $\tau$ but not necessarily uniquely. Feature (6) is perhaps the least compelling feature listed and is one that neither D-CTCs nor P-CTCs have. Feature (7) is also not one respected by either P-CTCs or D-CTCs.

Of course, it would be ambitious to ask for a model that satisfies all these desiderata. Notably, the standard way of introducing time travel into classical mechanics does not satisfy features (2), (3), (4), or (5). However, they do form a helpful guide for where one might start looking for new time travel models.

\subsection{Some Alternative Models} \label{sec:TT:alternative-model-selection}

The above desiderata can be used to guide the construction of new models of time travel in quantum theory. In this section two overlapping classes of new models will be considered: \emph{weighted D-CTCs} and
\emph{transition probability models}. An example of the latter, dubbed \emph{T-CTCs}, will be fully fleshed out in Sec.~\ref{sec:TT:T-CTCs}.

Weighted D-CTCs represent an extension of the model of D-CTCs. These are described by parametrising the convex subset of density operators $\tau_\alpha$ allowed by the consistency condition Eq.~(\ref{eq:TT:D-CTC-EoM}) with $\alpha$ and then assigning a weight $w_\alpha \geq 0$ to each of these valid CV states. This weighted mixture is then used to calculate $\tau=\int\d\alpha \,w_\alpha\tau_\alpha / \int\d\alpha \,w_\alpha$. The D-CTC protocol can then be used with this uniquely determined choice of $\tau$.

Clearly, this describes a whole class of models based on how the parametrisation is done and which integration measure is chosen. For example, one could weight all possibilities equally $w_{\alpha}=1$, giving a \emph{uniform} weighted D-CTC model. In terms of the desiderata, this theory would gain features (4) and (5) at least, possibly at the expense of feature (1) depending on the details and motivation of the model. Such a model is essentially that of D-CTCs, with an alternative to the maximum entropy rule.

Transition probability models make use of some useful intuition from standard quantum theory. It is common to say that the probability of an initial state $|I\rangle$ to transition into a final state $|F\rangle$ under the unitary transformation $V$ is given by the transition probability $|\langle F|V|I\rangle|^{2}$. More precisely, what is meant is that $|\langle F|V|I\rangle|^{2}$ is the Born rule probability of finding the system in state $|F\rangle$ if one were to measure the system to see if it is in state $|F\rangle$ after the transformation. As an example of this useful way of thinking, consider starting with a bipartite system, initially in state $|\psi_{i}\rangle|\phi\rangle$, and act upon it with the unitary $U$; the ``probability of finding the second system in $|\phi\rangle$'' after the transformation is $p(\phi)=\left\Vert \langle\phi|U|\psi_{i}\rangle|\phi\rangle\right\Vert ^{2}$. 

The transition probability can be generalised to apply to to mixed states $\rho$ and $\tau$. One way to do this is to use the Hilbert-Schmidt inner product $\Tr(\rho\tau)$, which is the probability for $\rho$ to be found in an eigenstate of the proper mixture $\tau$ and for $\tau$ to be a realisation of that same eigenstate, averaged over all eigenstates of $\tau$. Another option is the square of the fidelity, the interpretation of which involves considering $\rho$ as an improper mixture on which a measurement is performed by projective measurement of the larger purified system \cite{Uhlmann11}. Both of these options reduce to the transition probability in the case of both states being pure. The use of any mixed state transition probability must be motivated by its operational meaning in context. For the remainder of this chapter it will be assumed that the appropriate generalisation of transition probability to mixed states is the Hilbert-Schmidt inner product, so that the probability for a bipartite system initially in the state $\rho_{i}\otimes\tau$ to have the second system found in the state $\tau$ after some unitary transformation $U$ is given by\footnote{One curiosity of using this generalisation of transition probability is that the probability for $\rho$ to transition to $\rho$ under unitary $\mathbbm{1}$ is strictly less than unity for mixed $\rho$. This is simply a reflection of the fact that mixed states can be viewed as epistemic states over the pure states.} $p(\tau)=\Tr\left(\tau U(\rho_{i}\otimes\tau)U^{\dagger}\right)$.

Transition probability models are obtained by applying these ideas to time travel. The choices that need to be made to be define a specific model include: whether pure or mixed CV states are used, which $|\phi\rangle$ or $\tau$ are to be considered, and how $\rho_{f}$ should be separated from the CV system.

One example of a transition probability model is also a weighted D-CTC model. This is obtained by choosing the weights of a weighted D-CTC model to be the transition probabilities $w_\alpha = p(\tau_\alpha) = \Tr(\tau_\alpha^2)$, giving the equation of motion
\begin{equation}
\rho_f = \frac{ \int \d\alpha \Tr\left( \tau_\alpha^2 \right) \Tr_{\textsc{CV}} \left( U( \rho_i \otimes \tau_\alpha )U^\dagger \right) }{ \int \d\alpha \Tr\left( \tau_{\alpha}^2 \right) }.
\end{equation}

Another collection of transition probability models is found by integrating over all possible initial CV states, weighted by the corresponding transition probability, and so 
\begin{equation} \label{eq:TT:transition-tau_i}
\tau_i = Z^{-1} \int_{\mathcal{P}(\mathcal{H}_{\textsc{CV}})} \d[\phi] \,p(\phi)|\phi\rangle\langle\phi|
\end{equation}
is used if pure CV states are considered, or similarly with an integral over $\mathcal{D}(\mathcal{H}_{\textsc{CV}})$ if mixed CV states are considered. $Z>0$ normalises the state $\tau_i$ and $p(\phi)$ is the transition probability as described above. There are two options for then isolating $\rho_f$: either take the partial trace as with D-CTCs or use the same partial projection used to calculate the transition probability. The first of these, using pure CV states, gives a model with the equation of motion
\begin{equation} \label{eq:TT:pure-TI-CTCs}
\rho_f = Z^{-1} \int \d[\phi] \,p(\phi) \Tr_{\textsc{CV}}\left( U( |\psi_i\rangle\langle\psi_i| \otimes |\phi\rangle\langle\phi| )U^\dagger \right),
\end{equation}
which is one of various models on this theme.

Another such transition probability model is the \emph{model of T-CTCs}, which is the model found by integrating over all pure CV states, weighted by transition probability, but using the partial projection to find $\rho_f$. T-CTCs will be developed fully in Sec.~\ref{sec:TT:T-CTCs}, complete with a discussion of their physical motivation and ontological implications. Some of the other theories that are variations on this theme will be briefly considered in Sec.~\ref{sec:TT:relation-to-other-alternatives}.

\subsection{The Uniqueness Ambiguity and Epistemic Reasoning} \label{sec:TT:T-CTC-uniqueness-ambiguity}

Before proceeding to detail the model of T-CTCs, some remarks are in order about the uniqueness ambiguity. In Sec.~\ref{sec:TT:paradoxes} this ambiguity was introduced as a necessary and sufficient condition for a model to suffer information paradoxes, as defined in that section. The argument hinges on the idea that if the final state is uniquely determined by the initial state and dynamics, then what has occurred can be regarded as a (possibly very powerful) computation and is therefore not paradoxical according to that definition.

The argument still holds if the unique final state is an epistemic state (that is, a probability distribution over ontic states), so long as the probabilities in the epistemic state are determined by the physics rather than purely epistemic principles. If the probabilities are physically determined then any new information obtained can be viewed as being due to a probabilistic computation. For any particular final state to be likely, the physics must not only establish that final state as a possibility, but also that the corresponding probability is sufficiently high. Models for probabilistic computation are well-established and certainly not paradoxical.

Compare this to D-CTCs without either noise or the maximum entropy rule. In Sec.~\ref{sec:TT:D-CTCs} it was claimed that D-CTCs suffer from the uniqueness ambiguity and therefore information paradoxes. This is different from a probabilistic computation since D-CTCs assign no probabilities to the possible final states; they are merely left as possibilities.

Therefore, as far as the definitions in Sec.~\ref{sec:TT:paradoxes} go, information paradoxes are still avoided if uniqueness ambiguity is avoided; \emph{viz.}, when a unique physically determined epistemic state is specified. It is for this reason that requirements (4) and (5) of Sec.~\ref{sec:TT:desiderata} allow for uniquely specified epistemic states.

For example, suppose a time travel circuit is designed to produce previously unknown theorems. If this circuit produces unique theorems with certainty from the input, then, as discussed in Sec.~\ref{sec:TT:paradoxes}, this is a type of computation: a novel automated theorem prover. Similarly, if the circuit produces one of a selection of possible theorems from the input, each with a given probability, then the circuit is performing a (possibly novel) probabilistic computation. On the other hand, if a model allows for a circuit that could produce one of a range of possible theorems but has no way of giving a probability for each, then that is an information paradox.

\subsection{T-CTCs} \label{sec:TT:T-CTCs}

The model of T-CTCs may be motivated as follows. Consider a CR observer watching a standard form time travel circuit evolve and suppose that the primitive states of quantum theory are pure states. Because of this, and the purification theorem, take $\rho_i = |\psi_i \rangle\langle \psi_i|$ to be pure.

This observer watches a CV system emerge from the future in some unknown pure state $|\phi\rangle$. This is then observed to interact with a CR system, initially in state $|\psi_{i}\rangle$, via unitary $U$. The CV system then proceeds to head back in time. At this point, the observer may judge whether any given $|\phi\rangle$ is a consistent initial CV state. If someone were to have measured the CV system immediately before it travelled back in time, then the probability of their finding that any given $|\phi\rangle$ is a consistent initial state would be $p(\phi) = \left\Vert \langle\phi|U|\psi_i\rangle|\phi\rangle \right\Vert^2$. So for any pair $|\phi_1\rangle$ and $|\phi_2\rangle$, the former would be found to be consistent $p(\phi_1)/p(\phi_2)$ times more often than the latter. It therefore seems reasonable to conclude that $|\phi_1\rangle$ is $p(\phi_1)/p(\phi_2)$ times more likely to have been the initial state than $|\phi_2\rangle$. The observer therefore considers $\tau_i$ as a proper mixture over all $|\phi\rangle \in \mathcal{P}(\mathcal{H}_{\textsc{CV}})$, each weighted by $p(\phi)$, Eq. (\ref{eq:TT:transition-tau_i}). On the other hand, consistency demands that if $|\phi\rangle$ was the initial CV state, then on heading back in time the CV system must be found to be in the same state again. So the observer can describe the final state of the CR system in each case by the partial projection $\langle\phi|U|\psi_i\rangle|\phi\rangle / \left\Vert \langle\phi|U|\psi_i\rangle|\phi\rangle \right\Vert^2$ consistent with this being the case. The resulting final state for the CR system is, therefore,
\begin{eqnarray}
\rho_f & = & Z^{-1} \int \d[\phi] \,U_\phi |\psi_i\rangle\langle\psi_i| U_\phi^\dagger, \label{eq:TT:T-CTC-EoM} \\
U_\phi & \eqdef & \langle\phi|U|\phi\rangle, \\
Z & \eqdef & \int \d[\phi] \,\langle\psi_i|U_\phi^\dagger U_\phi|\psi_i\rangle, \label{eq:TT:T-CTC-normalisation}
\end{eqnarray}
where the operator $U_\phi$ acts only on $\mathcal{H}_{\textsc{CR}}$ and the constant $Z>0$ is defined to normalise $\rho_f$.

Equations~(\ref{eq:TT:T-CTC-EoM}--\ref{eq:TT:T-CTC-normalisation}) define the behaviour of the model of T-CTCs. The CR input state $|\psi_i\rangle$ was assumed to be pure, but these equations can equally be applied to mixed input states by simply replacing instances of $|\psi_i\rangle\langle\psi_i|$ with $\rho_i$.

This is not intended as a derivation, but a motivational explanation for T-CTCs akin to those given for D-CTCs and P-CTCs. As with those models, T-CTCs are defined by the Eqs.~(\ref{eq:TT:T-CTC-EoM}--\ref{eq:TT:T-CTC-normalisation}) rather than by any particular interpretation. One might even use a similar argument to motivate other models, including some of the other transition probability models mentioned in Sec.~\ref{sec:TT:alternative-model-selection}.

Several features of T-CTC immediately follow from Eqs.~(\ref{eq:TT:T-CTC-EoM}--\ref{eq:TT:T-CTC-normalisation}). First, it is a non-unitary and non-linear model. Second, it is only renormalisation non-linear and as such it only gives rise to continuous evolutions, as discussed in Sec.~\ref{sec:TT:non-linearity}. Third, there is no ambiguity in the equation of motion (\ref{eq:TT:T-CTC-EoM}), so there is no uniqueness ambiguity and no information paradoxes. Before proceeding to consider what other features T-CTCs may have, it will first be useful to re-write Eq.~(\ref{eq:TT:T-CTC-EoM}) in a simpler form.

\subsubsection{Simplification of the T-CTC Equation of Motion} \label{sec:TT:T-CTC-simplification}

The equation of motion (\ref{eq:TT:T-CTC-EoM}) for T-CTCs in its current form is rather opaque. In order to more easily calculate with the model it is useful to perform the integration in generality and thereby simplify this equation.

Let $\{|\alpha\rangle\}_{i=0}^{d-1}$ be an orthonormal basis for the $d$-dimensional CV system and expand the unitary $U$ in the Kronecker product form in this basis $U = \sum_{\alpha,\beta} A_{\alpha\beta} \otimes |\alpha\rangle\langle\beta|$, where $A_{\alpha\beta}$ are operators on the CR system. In this form, the equation of motion is
\begin{equation} \label{eq:TT:T-CTC-simplification-1}
\rho_f = Z^{-1} \sum_{\alpha,\beta,\gamma,\delta} I_{\alpha\beta,\gamma\delta} A_{\alpha\beta} |\psi_i\rangle\langle\psi_i| A_{\gamma\delta}^\dagger,
\end{equation}
having defined the integrals
\begin{equation} \label{eq:TT:T-CTC-simplification-2}
I_{\alpha\beta,\gamma\delta} = \int \d[\phi] \, \langle\phi|\alpha\rangle \langle\beta|\phi\rangle \langle\phi|\delta\rangle \langle\gamma|\phi\rangle.
\end{equation}

Now consider expanding both $\d[\phi]$ and $|\phi\rangle$ in the Hurwitz parametrisation [Eqs.~(\ref{eq:TT:Hurwitz}, \ref{eq:TT:integration-measure})] with respect to the same basis. Since, for each $\alpha$, $\d\varphi_{\alpha}$ factorises out of the measure, any integrand in which the only $\varphi_\alpha$-dependence is an integer power of $e^{i\varphi_\alpha}$ will integrate to zero. Considering the integrals in Eq.~(\ref{eq:TT:T-CTC-simplification-2}), every integrand will have such a phase factor unless at least one of the two following conditions is met: $\alpha = \beta$ and $\gamma=\delta$, or $\alpha = \gamma$ and $\beta = \delta$. In these cases, all phase factors will cancel out and the phase integrals will not come to zero. Discarding these zero integrals in Eq.~(\ref{eq:TT:T-CTC-simplification-1}) it is, therefore, found that
\begin{multline}
\rho_f = Z^{-1} \left( \sum_{\alpha\neq\beta} I_{\alpha\beta,\alpha\beta} A_{\alpha\beta} |\psi_i\rangle\langle\psi_i| A_{\alpha\beta}^\dagger + \sum_{\alpha\neq\beta} I_{\alpha\alpha,\beta\beta} A_{\alpha\alpha} |\psi_i\rangle\langle\psi_i| A_{\beta\beta}^\dagger \right. \\
 + \left. \sum_\alpha I_{\alpha\alpha,\alpha\alpha} A_{\alpha\alpha} |\psi_i\rangle\langle\psi_i|A_{\alpha\alpha}^\dagger \right). \label{eq:TT:T-CTC-simplification-3}
\end{multline}

By unitary invariance of the integration measure one may rotate $|\phi\rangle$ in each of these integrals so that only the $|d-1\rangle$ and $|d-2\rangle$ components contribute. Therefore, for $\alpha \neq \beta$, $I_{\alpha\beta,\alpha\beta}$ and $I_{\alpha\alpha,\beta\beta}$ are both equal to
\begin{multline}
\int \d[\phi] \, |\langle\phi|d-1\rangle|^2 |\langle\phi|d-2\rangle|^2 \\
= (2\pi)^{d-1} \left( \int \prod_{\gamma=1}^{d-3} \d\theta_\gamma \cos\theta_\gamma ( \sin\theta_\gamma )^{2\gamma-1} \right) \\
 \times \int \d\theta_{d-1} \d\theta_{d-2} \cos^3\theta_{d-1} \cos^3\theta_{d-2} \sin^{2d-1}\theta_{d-1} \sin^{2d-5}\theta_{d-2}, \label{eq:TT:T-CTC-integral-1}
\end{multline}
where in the final line the integrand has been expanded out in the Hurwitz parametrisation. Similarly 
\begin{multline}
I_{\alpha\alpha,\alpha\alpha} = \int \d[\phi] \,|\langle\phi|d-1\rangle|^4 \\
 = (2\pi)^{d-1} \left( \int\prod_{\gamma=1}^{d-3} \d\theta_\gamma \cos\theta_\gamma ( \sin\theta_\gamma )^{2\gamma-1} \right) \\
 \times \int \d\theta_{d-1} \d\theta_{d-2} \cos^5\theta_{d-1} \cos\theta_{d-2} \sin^{2d-3}\theta_{d-1} \sin^{2d-5}\theta_{d-2}. \label{eq:TT:T-CTC-integral-2}
\end{multline}
By evaluating the final lines of Eqs.~(\ref{eq:TT:T-CTC-integral-1}, \ref{eq:TT:T-CTC-integral-2}) it is seen that, for $\alpha \neq \beta$, the ratio $I_{\alpha\alpha,\alpha\alpha}/I_{\alpha\beta,\alpha\beta} = 2$. From Eq.~(\ref{eq:TT:T-CTC-simplification-3}) one therefore finds
\begin{equation}
\rho_f \propto \sum_{\alpha,\beta} \left( A_{\alpha\beta} |\psi_i\rangle\langle\psi_i| A_{\alpha\beta}^\dagger + A_{\alpha\alpha} |\psi_i\rangle\langle\psi_i| A_{\beta\beta}^\dagger \right).
\end{equation}

Finally, note the following identities, which may readily be verified by expanding the traces: $P \eqdef \Tr_{\textsc{CV}}U = \sum_\alpha A_{\alpha\alpha}$ and $\sum_{\alpha,\beta} A_{\alpha\beta} |\psi_i\rangle\langle\psi_i| A_{\alpha\beta}^\dagger = \Tr_{\textsc{CV}} \left( U\left( |\psi_i\rangle\langle\psi_i| \otimes \mathbbm{1} \right) U^\dagger \right)$. Using these, and introducing a normalising scalar $z>0$ (which is
generally different from $Z$ used before), the final form of the equation of motion becomes
\begin{equation} \label{eq:TT:T-CTC-simple-EoM}
\rho_f = z^{-1} \left( P|\psi_i\rangle\langle\psi_i|P^\dagger + d \Tr_{\textsc{CV}} \left( U\left( |\psi_i\rangle\langle\psi_i| \otimes \frac{\mathbbm{1}}{d} \right)U^\dagger \right) \right).
\end{equation}

Equation~(\ref{eq:TT:T-CTC-simple-EoM}) is in a much more revealing form than Eq.~(\ref{eq:TT:T-CTC-EoM}). It shows that the T-CTC equation of motion is a weighted mixture of the corresponding P-CTC equation of motion (\ref{eq:TT:P-CTC-EoM}) with an ordinary quantum channel. This gives the impression that T-CTCs are akin to noisy P-CTCs. Like Eq.~(\ref{eq:TT:T-CTC-EoM}), this simplified equation of motion can equally be applied to mixed CR input states $\rho_i$.

\subsubsection{The \texorpdfstring{$P$}{P} Operator} \label{sec:TT:P-operator}

The operator $P$ in Eq.~(\ref{eq:TT:T-CTC-simple-EoM}) is the same as used in P-CTCs in Eq.~(\ref{eq:TT:P-CTC-EoM}). It is the partial trace of a unitary operator and is therefore not generally unitary itself. For instance, for a general $P$ there can be states $|\phi\rangle$ in $\mathcal{P}(\mathcal{H})$ for which $P|\phi\rangle = 0$, which would not be possible if $P$ were unitary.

So $P$ does not generally preserve state norms, but the effect of $P$ on state norms is still bounded. Consider $P$ acting on a vector $|\psi\rangle$ and let $\{|\alpha\rangle\}_{\alpha=0}^{d-1}$ be any orthonormal basis on the $d$-dimensional CV system. Using the triangle inequality and unitarity one can bound $\left\Vert P |\psi\rangle \right\Vert$ as follows
\begin{equation} \label{eq:TT:P-bound}
\left\Vert P|\psi\rangle \right\Vert \leq \sum_\alpha \left\Vert \langle\alpha|U|\psi\rangle|\alpha\rangle \right\Vert \leq \sum_\alpha \left\Vert U|\psi\rangle|\alpha\rangle\right\Vert = d \left\Vert \psi\right\Vert. 
\end{equation}
Moreover, this bound can be achieved as demonstrated in Ref.~\cite{BrunWilde17} and is therefore the tightest possible general bound.

\subsubsection{Paradoxes} \label{sec:TT:T-CTC-paradoxes}

It has already been noted that T-CTCs have no uniqueness ambiguities and therefore no information paradoxes. With the simplified equation of motion (\ref{eq:TT:T-CTC-simple-EoM}) to hand it is now easy to see that T-CTCs are also always dynamically consistent. Even though it is possible for $P|\psi_i\rangle = 0$, the second term of Eq.~(\ref{eq:TT:T-CTC-simple-EoM}) will always give a non-zero density operator. Therefore, T-CTCs contain neither type of paradox identified in Sec.~\ref{sec:TT:paradoxes}. Unlike P-CTCs and D-CTCs, no noise or extra rule is required to avoid these paradoxes.

An example is instructive. Consider the following toy model of an unproved theorem paradox as a standard form time travel circuit, introduced in Ref.~\cite{LloydMaccone+11a} and illustrated in Fig.~\ref{fig:TT:unproved-theorem}. The CR system is a pair of qubits, $M$ and $B$ representing the mathematician and book respectively, initially in state $|0\rangle_B |0\rangle_M$. The CV system is a single qubit representing the mathematician heading back in time. The unitary $U$ consists of a pair of $\textsc{CNOT}$ gates representing the writing and reading of the book with a swap for when the mathematician swaps places with their time-travelling self, as illustrated. Clearly, by extending this toy model to use $N$ qubits for each of $M$, $B$, and CV it would allow a theorem to be encoded in an $N$-bit string.

\begin{figure}
\begin{centering}
\includegraphics[scale=0.35,angle=0]{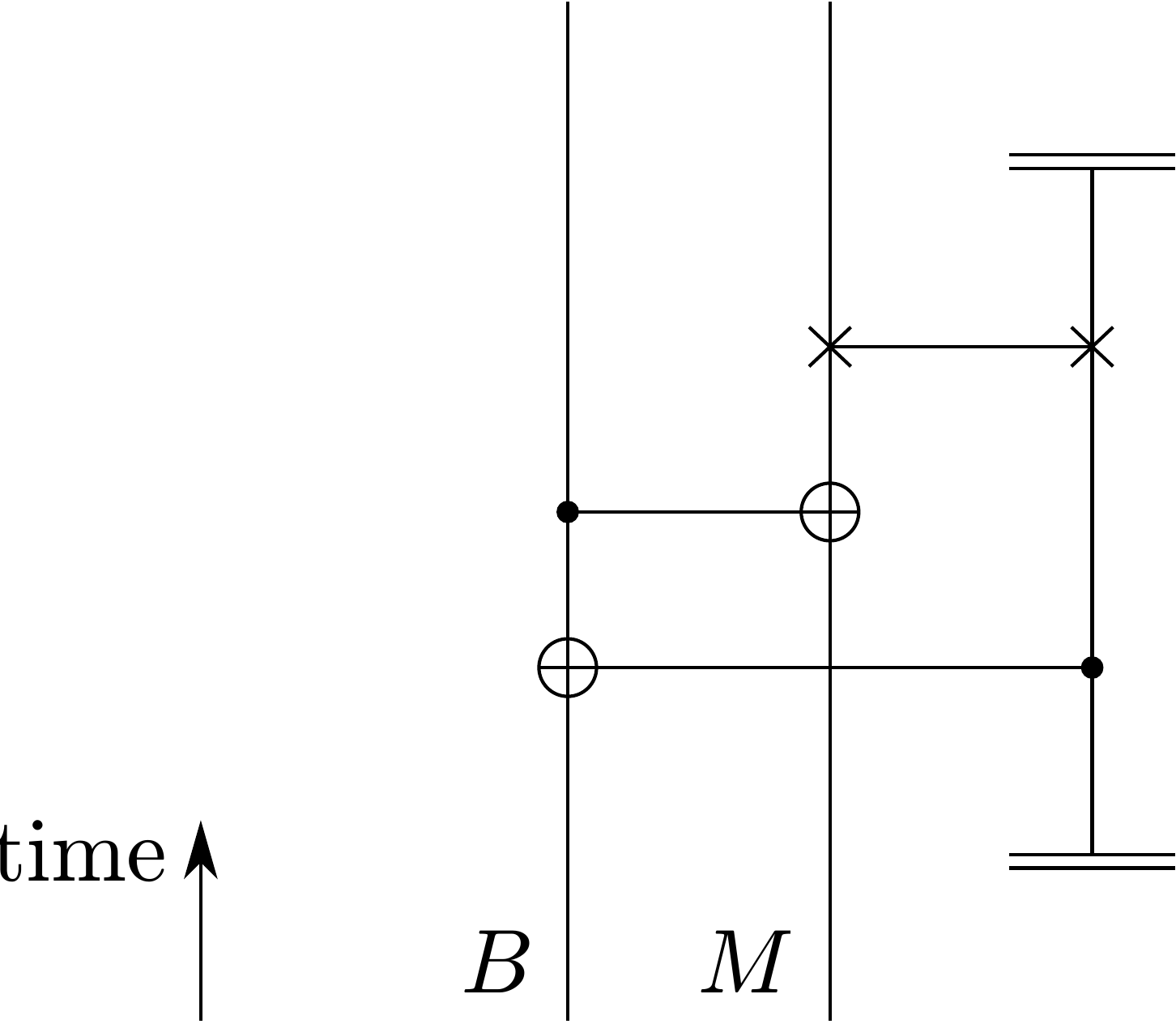}
\par\end{centering}
\protect\caption{Toy model of an unproved theorem paradox as a standard form circuit from Ref.~\cite{LloydMaccone+11a}. The two CR qubits are labelled $B$ for the book and $M$ for the mathematician. The unitary gates illustrated represent, from bottom to top, the book being written with a $\textsc{CNOT}$, the book being read with a $\textsc{CNOT}$, and the mathematician swapping places with their future self.} \label{fig:TT:unproved-theorem}
\end{figure}

For this circuit the $P$ operator is
\begin{equation} \label{eq:TT:unproved-theorem-P}
P = |0\rangle_B\langle0| \otimes |0\rangle_M\langle0| + |0\rangle_B\langle1| \otimes |1\rangle_M\langle1| 
+ |1\rangle_B\langle1| \otimes |0\rangle_M\langle1| + |1\rangle_B\langle0| \otimes |1\rangle_M\langle0|
\end{equation}
and therefore $P|00\rangle_{BM} = |00\rangle_{BM} + |11\rangle_{BM}$. To get the second term of Eq.~(\ref{eq:TT:T-CTC-simple-EoM}), note that $\Tr_{\textsc{CV}} \left( U( |00\rangle_{BM}\langle00| \otimes \mathbbm{1} )U^\dagger \right) = |00\rangle_{BM}\langle00| + |11\rangle_{BM}\langle11|$. The output the unproved theorem T-CTC circuit is therefore
\begin{equation}
\rho_f = \frac{1}{2} |00\rangle_{BM}\langle00| + \frac{1}{4} |00\rangle_{BM}\langle11| + \frac{1}{4} |11\rangle_{BM}\langle00| + \frac{1}{2} |11\rangle_{BM}\langle11|.
\end{equation}
This result may also be verified, with rather more effort, directly from the integral expression Eq.~(\ref{eq:TT:T-CTC-EoM}).

Compare this to the cases of D-CTCs and P-CTCs. For D-CTCs, any state diagonal in the computational basis will satisfy Eq.~(\ref{eq:TT:D-CTC-CC}) for this circuit. The result is a one-parameter continuous family of consistent evolutions representing any probabilistic mixture of possible ``theorems'' along the CV system. Notably, since the D-CTC description is diagonal it is entirely classical, except the consistency condition requires consistency on probability distributions over the theorems rather than the theorems themselves\footnote{\emph{C.f.} the unusual model for classical time travel used in Ref.~\cite{AaronsonWatrous09}, discussed in Sec.~\ref{sec:TT:role-of-ontology}}. For P-CTCs, Eq.~(\ref{eq:TT:unproved-theorem-P}) gives $|\psi_f\rangle = \frac{1}{\sqrt{2}}\left( |0\rangle_B |0\rangle_M + |1\rangle_B |1\rangle_M \right)$ as the circuit's output, so one obtains an equal superposition over all possible ``theorems''.

\subsubsection{Computation} \label{sec:TT:T-CTC-computation}

Some basic facts about the computational abilities of T-CTCs can be read straight from Eq.~(\ref{eq:TT:T-CTC-simple-EoM}). The second term could be realised in ordinary quantum theory and so is limited to the power of $\mathsf{BQP}$, while the first term is the P-CTC equation of motion (\ref{eq:TT:P-CTC-EoM}). It follows immediately that P-CTCs can therefore trivially simulate T-CTCs, so clearly T-CTCs cannot efficiently solve any problems that are not contained within $\mathsf{PP}$. This shows that P-CTCs are at least as computationally powerful as T-CTCs.

Moreover, the form of Eq.~(\ref{eq:TT:T-CTC-simple-EoM}) suggests that T-CTCs may be less powerful than P-CTCs. This is because, for a T-CTC, $\rho_{f}$ is only a pure state if either $P|\psi_{i}\rangle=0$ or if the two terms in Eq. (\ref{eq:TT:T-CTC-simple-EoM}) are equal. So every T-CTC algorithm that outputs a pure state is achievable on an
ordinary quantum computer in exactly the same way. Any potential algorithm for a T-CTC-equipped computer that makes computational use of the extra power of first term in Eq. (\ref{eq:TT:T-CTC-simple-EoM}) would therefore have to output mixed states. This observation also prevents T-CTCs from being able to perform an arbitrary postselected quantum measurement, since many postselected measurement outcomes are pure states. Therefore, one could not prove that T-CTCs have the power of $\mathsf{PP}$ using the same method used for P-CTCs \cite{LloydMaccone+11b}.

Despite this, it has recently been shown that T-CTCs can efficiently simulate any P-CTC circuit to arbitrary precision \cite{BrunWilde17} (that is, the undesired ``error term'' can be made exponentially small with a linear number of CV qubits). This does not invalidate the above comments, which demonstrate that T-CTCs cannot \emph{perfectly} simulate P-CTCs. This approximate simulation is, however, enough to show that T-CTCs do indeed have the computational power of $\mathsf{PP}$.

\subsubsection{Mixed States and Non-linearity} \label{sec:TT:T-CTC-non-linearity}

In Sec.~\ref{sec:TT:non-linearity} it was noted that both improper and true mixtures are validly described with density operators in non-linear extensions of quantum theory and that the purification still holds so $\rho_i = |\psi_i\rangle\langle\psi_i|$ may always be assumed. This remains true in the model of T-CTCs. It also remains true that proper mixtures are not validly described by density operators and that non-linearity of T-CTCs allows for the possibility of creating an entanglement detector and thereby signalling, exactly as with D-CTCs and P-CTCs.

Another consequence of non-linearity is that both D-CTCs and P-CTCs are capable of distinguishing non-orthogonal states in a single measurement. However, this is not the case with T-CTCs as shall now be shown. 

Consider the problem of distinguishing between two states $\rho$ and $\sigma$. The probability of success when using a single optimal measurement is given by $\frac{1}{2}\left( 1+D(\rho,\sigma) \right)$, where $D(\rho,\sigma) \eqdef \frac{1}{2}\Tr|\rho-\sigma|$ is the trace distance between the states \cite{FuchsvandeGraaf99}. Therefore, $\rho$ and $\sigma$ are perfectly distinguishable in a single measurement if and only if $D(\rho,\sigma)=1$.

Another measure of distinguishability of quantum states is the the fidelity between $\rho$ and $\sigma$, defined as \cite{NielsenChuang00}
\begin{equation} \label{eq:TT:fidelity}
F(\rho,\sigma) \eqdef \Tr \sqrt{ \rho^{1/2} \sigma \rho^{1/2} }.
\end{equation}
In the case of pure states $|a\rangle$ and $|b\rangle$, the Fidelity takes on the particularly simple form $F(|a\rangle, |b\rangle) = |\langle a|b\rangle|$. 

Suppose one wishes to distinguish quantum states using a T-CTC. Only pure state inputs need be considered, so what is required is a bound on the distinguishability of the output states of some T-CTC circuit, $\rho_f^a$ and $\rho_f^b$, for which the input states were $|a\rangle$ and $|b\rangle$, respectively. Note first that trace distance is bounded by fidelity \cite{NielsenChuang00}
\begin{equation} \label{eq:TT:disitinguishability-1}
D(\rho_f^a, \rho_f^b) \leq \sqrt{ 1 - F(\rho_f^a, \rho_f^b)^2 }.
\end{equation}
It is then useful to separate the terms of $\rho_f^{a,b}$ seen in Eq. (\ref{eq:TT:T-CTC-simple-EoM}). Therefore, write $\rho_f^\psi = ( 1 - \lambda^\psi )\sigma^\psi + \lambda^\psi \tau^\psi$, where $\tau^\psi \eqdef \Tr\left( U( |\psi\rangle\langle\psi| \otimes \frac{\mathbbm{1}}{d} )U^\dagger \right)$ and 
\begin{equation} \label{eq:TT:disitinguishability-2}
\lambda^\psi \eqdef \frac{ d }{ d + \left\Vert P|\psi\rangle \right\Vert^2 } \geq \frac{1}{d+1}
\end{equation}
where this inequality is a result of Eq. (\ref{eq:TT:P-bound}). Using strong concavity and monotonicity of fidelity under quantum operations \cite{NielsenChuang00} it is seen that 
\begin{eqnarray} \label{eq:TT:disitinguishability-3}
F(\rho_f^a, \rho_f^b) & \geq & \sqrt{ \lambda^a \lambda^b } F(\tau^a, \tau^b ) \nonumber \\
 & \geq & \sqrt{ \lambda^a \lambda^b } F(|a\rangle, |b\rangle) \geq \frac{1}{d+1} |\langle a|b\rangle|.
\end{eqnarray}
Finally, observe that by Eqs.~(\ref{eq:TT:disitinguishability-1}--\ref{eq:TT:disitinguishability-3})
\begin{equation}
D(\rho_f^a, \rho_f^b) \leq \sqrt{ 1 - \frac{|\langle a|b\rangle|^2 }{ (d + 1)^2 } }\leq 1,
\end{equation}
with equality to unity only possible if $\langle a|b\rangle = 0$.

This proves that the output states of a T-CTC circuit are only perfectly distinguishable from one another in a single measurement if the input states were. However, as proved in Ref.~\cite{BrunWilde17}, T-CTCs can approximately simulate P-CTCs to arbitrary precision efficiently. It therefore follows that while T-CTCs cannot \emph{perfectly} distinguish non-orthogonal quantum states, they can distinguish any set of linearly independent non-orthogonal states to arbitrary precision by simulating the corresponding P-CTC circuit.

It is comparatively very simple to observe that T-CTCs are incapable of cloning pure states. A pure state cloning machine always outputs a pure state. Since any T-CTC outputting a pure state can be simulated exactly by an ordinary quantum operation, then the no-cloning theorem for T-CTCs is simply a result of the no-cloning theorem in ordinary quantum theory. In exactly the same way, it also immediately follows that T-CTCs are incapable of deleting arbitrary pure states. However, the question as to whether mixed states can be broadcast \cite{BarnumCaves+96} is left open.

\subsubsection{Relation to Other Alternatives} \label{sec:TT:relation-to-other-alternatives}

Before discussing these results, it should be noted that some of the results proved for T-CTCs above can be easily modified to apply to some of the closely related models introduced in Sec.~\ref{sec:TT:alternative-model-selection}.

First, consider the modification to T-CTCs where, instead of integrating over pure CV states, mixed CV states are integrated over. This represents a class of models since there is no unique natural choice for the integration measure, but it is still possible to deduce some general properties. These models are only renormalisation non-linear and so continuity follows immediately. It is also possible to show that these models always define a unique non-zero $\rho_f$ for every $\rho_i$ and $U$, so that the models suffer neither dynamical consistency nor information paradoxes\footnote{The proof that $\rho_f$ is non-zero follows by showing that the integrand is positive semi-definite and that there exist some $\tau$ for which the integrand is non-zero. Importantly, an appropriate assumption would be needed about the positivity of the chosen measure.}.

Now consider the modification to T-CTCs expressed in Eq.~(\ref{eq:TT:pure-TI-CTCs}), where, instead of separating CV and CR systems by a projection, they are separated by a partial trace. This model always defines a unique non-zero\footnote{The proof that $\rho_f$ is always non-zero follows similarly to the previous case, by proving that the integrand is always positive semi-definite and there always exist CV states $|\phi\rangle$ for which both the integrand and $\d[\phi]$ are non-zero.} $\rho_{f}$, thus avoiding both dynamical consistency and information paradoxes. It is not, however, renormalisation non-linear but polynomial non-linear.

There are, of course, further variations which could be considered. For example, one could use an alternative generalisation of transition probability for mixed states, as noted in Sec.~\ref{sec:TT:alternative-model-selection}. The purpose of this discussion is to show that whilst T-CTCs were focussed on above, the other theories mentioned in Sec.~\ref{sec:TT:alternative-model-selection} also have reasonable properties and may be worthy of further development.

\section{Summary and Discussion} \label{sec:TT:summary-and-discussion}

This chapter approached ontology and causality in quantum theory from an unusual angle. By assuming that time travel to the past is possible, one is confronted with an interesting cross-section of problems from ontology and causality. In particular, one is drawn towards non-linear extensions of quantum theory which cast light on certain ontological issues in quantum theory.

Having discussed the various motivations for studying quantum time travel in Sec.~\ref{sec:TT:intro}, this chapter followed the tradition of circuit-based approaches to time travel to the past. This is a convenient abstract approach that can be summarised using standard form circuits introduced in Sec.~\ref{sec:TT:standard-form}. Of course, time travel to the past always raises the possibility of paradoxes, which were systematically introduced in Sec.~\ref{sec:TT:paradoxes} in the context of classical time travel.

Following this, previous work on quantum time travel in the circuit approach was outlined, focussing on the model of D-CTCs in Sec.~\ref{sec:TT:D-CTCs} and the model of P-CTCs in Sec.~\ref{sec:TT:P-CTCs}. The roles of non-linearity and ontology in these models were discussed in Secs.~\ref{sec:TT:non-linearity}, \ref{sec:TT:role-of-ontology} where it was noted that not only do one's ontological preferences influence time travel models, but that the types of reasonable non-linear extensions to quantum theory may hint at approaches to quantum ontology.

The main results of this chapter were in Sec.~\ref{sec:TT:alternative-models}. The strengths and shortcomings of D- and P-CTCs were used to identify two classes of new quantum time travel models. In particular, the model of T-CTCs was developed in full and its properties were derived.

Non-linear extensions of quantum theory are subtle since the long-standing plurality of co-existing interpretations is broken. When considering time travel this manifests itself in at least two ways. The first is in the development and motivation of various possible models: ontological bias will affect decisions made. The second is in using those models: mixed states with ontological differences but the same density operator may behave differently, as discussed in Sec.~\ref{sec:TT:non-linearity}. Neither of these issues arise when considering time travel classically, since ontology is generally clear and non-linear evolutions are commonplace.

This uniquely quantum issue has both positive and negative effects on the resulting models. The way in which quantum theory works allows models of time travel that do not suffer from the paradoxes that are present classically, but which generally break some of the central structure of quantum theory. Distinguishability of non-orthogonal states, state cloning/deleting, and the spectre of superluminal signalling all present themselves. It also appears that computational power is greatly increased even beyond that of quantum computers.

The existing models of quantum time travel are not without their shortcomings. Most troubling is that both D-CTCs and P-CTCs suffer from paradoxes (of the information and dynamical consistency types respectively). While both can be eliminated by arbitrarily small noise, the models themselves remain paradoxical. The two classes of new models presented in Sec.~\ref{sec:TT:alternative-model-selection} were designed to avoid these paradoxes and hopefully also satisfy many of the other desiderata laid out in Sec.~\ref{sec:TT:desiderata}.

Of the new models, that of T-CTCs was selected primarily due to its physical motivation. To illustrate the strength of the physical story told in Sec.~\ref{sec:TT:T-CTCs}, consider applying the same reasoning in a classical context.

A CR observer watching a classical time travel circuit sees a CV system in an unknown ontic state $\tilde{\tau}_{i}$ emerge from the future, interact with a CR system in a known state, and then disappear back to the past in the ontic state $\tilde{\tau}_{f}$. For each $\tilde{\tau}_{i}$, the observer knows that when it heads back in time it must be found to be in the same state. Whilst in quantum theory the probability of finding a system in a given state is given by the transition probability, the corresponding probability classically is either unity or zero: Either $\tilde{\tau}_{i}=\tilde{\tau}_{f}$ or $\tilde{\tau}_{i}\neq\tilde{\tau}_{f}$. So when the CR observer takes a probability distribution over all possible CV states $\tilde{\tau}_{i}$, the only states to which non-zero probabilities are assigned are precisely those states for which $\tilde{\tau}_{i}=\tilde{\tau}_{f}$ after the interaction. What is missing from this account is a way of specifying the relative probabilities assigned to each CV state. One might choose to use the principle of indifference and weight each possibility equally, but this is not necessary.

A very similar argument could be used to argue in favour of D-CTCs, for example, by demanding exact equality of reduced density operators rather than consistency via the transition probability. In this case, using the principle of indifference would lead to a result that is equivalent to the uniform weighted D-CTCs mentioned in Sec.~\ref{sec:TT:alternative-model-selection}. However, by accepting the interpretation of transition probabilities and supposing that only pure states are primitive, T-CTCs do have a clear physical motivation.

Of course, this was never meant to be a cast-iron argument for T-CTCs and there is a certain amount of vagueness in the description given in Sec.~\ref{sec:TT:T-CTCs}. Similar interpretational vagueness is found with both D-CTCs and P-CTCs and should be expected when attempting to extend quantum theory (which lacks consensus on interpretation) to a non-linear regime that is so alien to it.

These arguments for the classical model and uniform weighted D-CTCs differ from the argument for T-CTCs in an important respect. The probabilities assigned to the different possible histories with T-CTCs are physically determined: they are proportional to the transition probabilities. On the other hand, in the above discussion of the classical model and D-CTCs using the same narrative there is no physical assignment of probabilities. The use of principle of indifference is an epistemic move, not a physical one.

Consider the desirable features listed in Sec.~\ref{sec:TT:desiderata} in the light of the model of T-CTCs. The above argument aims to satisfy feature (1) at least as far as with D- or P-CTCs. As with D- and P-CTCs, feature (2) is satisfied so long as an appropriate ontology of measurement is chosen [Sec.~\ref{sec:TT:non-linearity}]. Features (3), (4), and (5) are satisfied without condition, as discussed in Sec.~\ref{sec:TT:T-CTC-paradoxes}. Feature (6) is potentially partially satisfied, in that both $\tau$ and $\rho_{f}$ might be considered ontologically pure, but since a proper mixture is taken over so many possibilities, the mathematical form of either is very rarely pure. In Sec.~\ref{sec:TT:T-CTC-non-linearity} it is shown that feature (7) is satisfied strictly speaking, although if one allows for arbitrarily small error then it is violated.

Does this mean that T-CTCs are a better model for quantum theory with time travel or, more specifically, for quantum theory in the presence of CTCs? Not necessarily. The physical motivation for T-CTCs is far from a ``first principles'' argument and there is still the question as to how, and when, the state projection occurs. However, both D-CTCs and P-CTCs have incomplete physical motivations and both leave questions as to exactly how, and when, a proposed physical change occurs---all three models share the unobservable dynamical ambiguity.

What has been comprehensively shown is that there is a whole landscape of other theories out there. The quantum circuit approach to quantum theory with time travel may be very attractive in that it abstracts away from knotty problems with spacetime geometry or any other exact mechanism for time travel, but it is perhaps too general for the problem at hand. In order to identify a more robustly physical solution to quantum theory with time travel it may be necessary to use a different approach, such as path integral or field theoretic ideas. Alternatively, by very carefully committing to a specific ontology for quantum theory it may be possible to identify the corresponding theory of time travel. When non-linearity is present vagueness on this point is problematic.

\chapter{Conclusions and Further Work} \label{ch:CO}

This thesis has considered the possibilities and impossibilities for ontologies of quantum systems in three broad ways. First, by taking a rather traditional approach to the ontology of quantum states. Second, by considering how to properly analyse causality in quantum systems. Third, by considering how one might account for time travel to the past in a quantum universe and how this interacts with possible ontologies. The conclusions for each will now be discussed in turn, together with what they suggest in terms of further work.

\section{State Ontology and Macro-realism}

The limitations that quantum theory places on the ontology of states and the possibility of macro-realist ontologies were discussed in Chaps.~\ref{ch:SO} and \ref{ch:MR} respectively. In the first instance, these chapters were concerned with which ontological features can or cannot be compatible with quantum theory, with extensions concerning compatibility with experiments being developed later.

The primary result of Chap.~\ref{ch:SO} was Thm.~\ref{thm:SO:superpositions-are-ontic} which proved that almost every quantum superposition state must be ontic in $d>3$ dimensions (equivalently, they cannot be epistemic). The conclusion for any prospective epistemic realist must be that even though a superposition inherits all of its properties and dynamics from underlying basis states, one cannot use this fact to construct a simpler ontology where the superposition also inherits its ontic states from underlying basis states. Note that the theorem gives almost the strongest version of this statement possible, holding for almost every superposition with respect to any given orthonormal basis. 

To illustrate, consider Bohmian mechanics of a single particle and its implied ontology where ontic states are of the form $\lambda = (\vec{r},|\psi\rangle)$, as described in Sec.~\ref{sec:MR:loopholes}. Here, the position basis of the particle plays a special role, with its ``true value'' being represented by the position vector $\vec{r}$. A less extravagant variation on this ontology might aim to make use of this special role and have other quantum states inherit their ontology from the ontic states of the position basis. However, Thm.~\ref{thm:SO:superpositions-are-ontic} shows that this can never be achieved---superpositions will always require novel ontic states not accessible to the underlying basis in order to be compatible with quantum theory.

One avenue for developing on this result is to consider \emph{ontic independence} more generally. By analogy with linear independence in geometry, one might define a quantum state $|\psi\rangle$ to have ontic independence with respect to some set of other quantum states if it can access ontic states not covered by preparations of that set. In this language, Thm.~\ref{thm:SO:superpositions-are-ontic} requires almost all quantum states to have ontic independence with respect to any given basis to be compatible with quantum theory in $d>3$ dimensions. It would be interesting to see if more powerful ontic independence results could be proved, showing that certain quantum states necessarily have ontic independence with respect to other sets that aren't necessarily bases.

The techniques of Thm.~\ref{thm:SO:superpositions-are-ontic} then formed the basis of the rest of the results in Chaps.~\ref{ch:SO}, \ref{ch:MR}. One of the primary motivations for epistemic realist ontologies with ontic overlaps is to explain the indistinguishability of non-orthogonal quantum states in terms of these overlaps. Much like how Ref.~\cite{BarrettCavalcanti+14} proved that such overlaps cannot fully explain \emph{all} indistinguishabilities (by a failure of being maximally $\psi$-epistemic), Thm.~\ref{thm:SO:no-maximally-epistemic} proved that the indistinguishability between \emph{any pair} of quantum states cannot be fully explained in this way (for $d>3$ dimensions). Moreover, it was proved that no individual quantum state can be maximally $\psi$-epistemic at all. By considering the effect of higher dimensions, this was adapted in Thm.~\ref{thm:SO:no-overlap-large-d} to prove that many pairs of identifiable quantum states have necessarily small ontic overlap in large dimensions. In particular, as dimension $d\rightarrow\infty$ the ontic overlap between many such pairs must approach zero, while maintaining finite Born rule overlap.

These conclusions make it very difficult for the epistemic realist to use ontic overlaps to explain indistinguishability in particular, but also any other phenomena that might seem to naturally gel with ontic overlaps, such as no-cloning for example. The situation becomes very difficult in large-dimensional systems, where the ontic overlaps between any pair of quantum states satisfying $|\langle\phi\rangle|\psi\rangle|^2 < \frac{1}{4}$ must be very small (by Thm.~\ref{thm:SO:no-overlap-large-d}). Importantly, these results apply to many \emph{identifiable} pairs of quantum states, \emph{viz.} you give me a pair of quantum states and I can tell you whether the results apply to that pair in particular. This closes an important loophole common to previous similar theorems \cite{Maroney12a,BarrettCavalcanti+14,Branciard14,Leifer14a,Ballentine14}, which only proved the existence of pairs of quantum states with bounded overlaps. Theorem~\ref{thm:SO:no-overlap-large-d} also addressed another shortcoming of those previous theorems, in that its limiting case applies to states with finite Born rule overlap, not states that approach orthogonality (as discussed in Sec.~\ref{sec:SO:previous-theorems}).

These results improve on the current state of no-go theorems for epistemic realist ontologies in several ways, but there are plenty of opportunities for further improvement. Perhaps the ultimate aim for such results would be a proof that any ontological model for a quantum system must be \emph{sometimes $\psi$-ontic} \cite{Leifer14b}. An ontological model is sometimes $\psi$-ontic if for each quantum state there is some finite-measure set of ontic states that can only be obtained by preparing that particular quantum state---\emph{i.e.} each quantum state keeps a region of the ontic state space to itself. This is very similar to $\psi$-ontic, but is not ruled out by the existence of $\psi$-epistemic ontological models \cite{LewisJennings+12,AaronsonBouland+13}. Perhaps the greatest advantage to such a proof would be that it implies several other important foundational results, including Bell's theorem, and therefore could act to unify those results \cite{Leifer14b}.

Chapter~\ref{ch:MR} shifted focus to macro-realism---another possible property of quantum state ontology. The main result was Thm.~\ref{thm:MR:no-ESMR-EMMR}, which used Thm.~\ref{thm:SO:superpositions-are-ontic} to prove that two of three types of macro-realism, ESMR and EMMR, are incompatible with quantum theory in $d>3$ dimensions. This is more powerful than the Leggett-Garg argument, which is only able to rule out EMMR, while both leave the possibility of a third type of macro-realism, SSMR. Since Bohmian mechanics reproduces quantum predictions and satisfies SSMR, then no theorem can prove that quantum theory is incompatible with every SSMR ontology. However, Bohmian mechanics is $\psi$-ontic (a very restrictive condition) and it therefore may still be possible for further work to prove that quantum theory is incompatible with all $\psi$-epistemic (opposite of $\psi$-ontic) SSMR ontologies. More generally, it would be good for further work to clarify whether any significant subset of SSMR ontologies is also incompatible with quantum theory.

These main results are all unfortunately intolerant to error. That is, the proofs do not directly generalise to the case where quantum probabilities are only assumed to be approximately correct. The core reason for error-intolerance is that they are all based on the asymmetric overlap, a quantity that loses meaning in the presence of finite error. To address this, Thm.~\ref{thm:SO:small-symmetric-overlap-large-d} adapted the proof Thm.~\ref{thm:SO:no-overlap-large-d} to use the alternative symmetric overlap, resulting in a somewhat weaker yet crucially error-tolerant result. The conclusion is that an appropriate set of experiments that reproduce quantum predictions to within some small $\pm\epsilon$ could in principle be used to experimentally bound the ontic overlap between an appropriate specific pair of quantum states.

However, Thm.~\ref{thm:SO:small-symmetric-overlap-large-d} is something of a proof-of-concept result, as the proof is not very well matched to the symmetric overlap and the resulting error term does not scale very well. An alternative approach was taken in Sec.~\ref{sec:MR:error-tolerant-argument} where the $\epsilon$-asymmetric overlap was introduced as an error-tolerant generalisation of the asymmetric overlap. This was used to prove Thm.~\ref{thm:MR:no-large-overlap-for-basis}, which can be used to state error-tolerant variations on Thms.~\ref{thm:SO:superpositions-are-ontic}, \ref{thm:MR:no-ESMR-EMMR}. Similarly, the conclusion is that appropriate sets of experiments could be used to rule out epistemic superpositions and ESMR/EMMR macro-realism to within given precisions.

Both of these error-tolerant results show the way for further experimental work in this area. However in order to achieve that, a more thorough analysis of the exact measurements needed for each would be required. The introduction of the $\epsilon$-asymmetric overlap also suggests that other results using the asymmetric overlap could be made error-tolerant and amenable to experimental investigation in this way, including overlap results already present in the literature \cite{Ballentine14,LeiferMaroney13,Maroney12a}. An experimental refutation of ESMR macro-realist ontologies would be particularly useful, as approaches based on the Leggett-Garg argument are incapable of achieving this.

Section~\ref{sec:SO:communication} applied these methods to information theory and classical simulations of quantum channels in particular. The result was Thm.~\ref{thm:SO:communication-bound}, proving that to perfectly simulate an $n_q$-qubit noiseless quantum channel with one-way classical communication a noiseless classical channel of at least $2^{n_q + \mathcal{O}(1)} -1$ bits is needed, even when the sender and receiver have access to arbitrarily large shared random data. This may be seen as a kind of ``anti-Holevo'' bound [Sec.~\ref{sec:SO:communication-bound}], since while the Holevo bound requires $n_c$ qubits to store $n_c$ classical bits this requires $\mathcal{O}(2^{n_q})$ classical bits to store $n_q$ qubits. The bound proved in Thm.~\ref{thm:SO:communication-bound} asymptotically matches the best known bounds for the same simulation but, as noted in Sec.~\ref{sec:SO:further-communication-bounds}, has three key advantages. First, it is much simpler to prove. Second, because of its simplicity it may be easily seen as a result of a certain property of quantum states. Most importantly, third, the proof method can be reused to produce potentially better bounds, given a certain class of classical error-correction codes.

Clearly then, a good avenue for further work is finding examples of classical error-correcting codes to give better and more explicit bounds via Thm.~\ref{thm:SO:communication-bound}. In particular, an explicit family of such codes would produce a bound of the form $c 2^{n_q} - 1$ for some specific $c$. If $c>0.293$, this bound would exceed the best known specific bound \cite{Montina11b}.

More generally, Thm.~\ref{thm:SO:communication-bound} demonstrates the power of a fact that, to my knowledge, has not been utilised in quantum information before. That is, the existence of sets of quantum states exponentially large in dimension where every triple from that set is anti-distinguishable. Given the three-way incompatibility displayed by anti-distinguishable triples, it seems likely that this fact may become useful in the study of quantum cryptography.

Recall from Sec.~\ref{sec:SO:previous-theorems} that the current generation of ontology theorems concentrate on single systems to avoid the issues raised by Bell's theorem and the PBR theorem in multipartite environments. In particular, the preparation independence postulate (PIP) is assumed in the PBR theorem, which uses it to prove that all compatible ontological models are $\psi$-ontic. It may be interesting to revisit some of these multipartite issues in light of the quantum causal models of Chap.~\ref{ch:CM}. Bell locality and the PIP are both conditions on the way that the ontologies of multiple quantum systems can combine into a global ontology, while causal models naturally describe how local independent systems can interact in general, so it seems likely that some properties of the causal models framework might influence how one treats multiple systems in ontological models.

\section{Quantum Causality}

The appropriate way to describe and analyse causality in quantum theory was the subject of Chaps.~\ref{ch:CI}, \ref{ch:CM}. In Chap.~\ref{ch:CI}, a definition for quantum conditional independence was given and motivated, enabling a definition for a quantum Reichenbach's principle and a characterisation of quantum common causes. These were used in Chap.~\ref{ch:CM} to motivate and give a corresponding definition for full quantum causal models, capable of describing any acyclic causal scenario.

Quantum conditional independence has four equivalent definitions, enumerated in Thm.~\ref{thm:CI:QCI-full}. The particular strength of this quantum conditional independence comes from the fact that each of these definitions is a natural generalisation of a corresponding definition for classical conditional independence. While the first of these (conditions (1) and (2)) were obtained in Sec.~\ref{sec:CI:justifying-quantum-Reichenbach} by assuming fundamentally unitary dynamics, there are many other ways of obtaining the same quantum conditional independence from different classical starting points. For example, an information-theoretic approach may obtain condition (3) as the first definition of quantum conditional independence. Regardless of one's philosophical predisposition, however, the fact that so many natural definitions are equivalent gives them all strength.

Condition (4) of Thm.~\ref{thm:CI:QCI-full} is of particular interest. In Sec.~\ref{sec:CI:circuits}, it was informally argued that channels satisfying this condition can represent two agents acting independently on a single input system. An important contribution would be to formalise this concept. That is, from a principled (perhaps operational) definition of what it means to ``act independently'' on a common input, prove that all channels that do so are of the form required by condition (4). This would provide another good way of justifying quantum conditional independence via condition (4). Assuming this is possible, it would establish channels of this form as very important generalisations of factorised quantum channels, potentially having important uses in quantum information, quantum theory in relativistic causal settings, and beyond.

In Sec.~\ref{sec:CI:circuits}, a new symbol, \qcopy{}, was added to quantum circuit diagrams to concisely depict channels satisfying condition (4). There is much potential for formalising the role of \qcopy{} in quantum circuits, as well as other diagrammatic formulations of quantum theory \cite{CoeckeKissinger17,ChiribellaDAriano+10,ChiribellaDAriano+11,ChiribellaDAriano+16,Hardy01,Hardy11}. This would be especially useful given a full characterisation of these channels as discussed above. Moreover, simply deriving rules for how diagrams involving \qcopy{} can be manipulated could greatly simplify many calculations in quantum causal models.

Quantum conditional independence characterises channels where the input can act as a complete common cause for the outputs. This enables a definition of quantum Reichenbach's principle. Classically, the framework of causal models may be seen as a generalisation of Reichenbach's principle. By way of analogy, Sec.~\ref{sec:CM:quantum-cm-generalise-reichenbach} proposed a definition of quantum causal models that most simply generalises quantum Reichenbach and then illustrated its utility with examples.

However, as noted in Sec.~\ref{sec:CM:quantum-definition}, it would be better to justify these causal models by an argument from fundamental unitarity, mirroring the justification for quantum conditional independence given in Sec.~\ref{sec:CI:justifying-quantum-Reichenbach}. Filling this gap is an important piece of further work, one that would no doubt benefit from a more thorough understanding of the behaviour of \qcopy{}. Not only that, it would be of particular relevance to those who believe that quantum dynamics \emph{is} fundamentally unitary just as the argument in Sec.~\ref{sec:CI:justifying-quantum-Reichenbach} is.

In quantum causal models, the process of ``linking out'' nodes was introduced in Sec.~\ref{sec:CM:predictions} as the 	quantum causal model analogue of marginalising over nodes in a classical causal model---that is, the mathematical procedure corresponding to ignoring that node. As a first step to being able to derive more properties of quantum causal models, including in applications to specific experiments, it would be very useful to investigate the properties of this linking out operation. In particular, a general specification for how linking out affects the model state, the causal structure, and the relationship between them should be found. One simple case is for linking out a leaf node (that is, a node with no children), where it is easy to check that the resulting model state is simply obtained by removing the factor corresponding to that node from Eq.~(\ref{eq:CM:quantum-Markov}). A corresponding specification for linking out a general node would be more difficult to find, but potentially much more useful.

Beyond these opportunities for further work on the formalism of quantum causal models itself, there are also many ways in which they might be used and extended.

An important use case for quantum causal models is in the analysis of experiments. Since the output of experiments is, generally speaking, statistical data, a first step in this direction would be a thorough analysis of what characterises statistics obtained from quantum experiments with certain causal structures. Given a set of experimental data, it is known how to easily check whether a certain variable could represent a complete common cause of another pair of variables, for example. Developing similar techniques based on quantum causal models would go a long way towards bringing them to experimental relevance.

Another use case for quantum causal models is in bringing a causal understanding to certain ``paradoxes'' or other puzzling thought experiments in quantum foundations. For example, the \emph{extended Wigner's friend} ``paradox'' \cite{Wigner67,Deutsch85,Brukner17,FrauchigerRenner16} has a branched causal structure and a foundational analysis of it may benefit from a causal understanding facilitated by quantum causal models, since the exact disanalogy between the classical and quantum cases remains somewhat unclear \cite{FrauchigerRenner16,BaumannHansen+16}.

One powerful way to extend the quantum causal models framework would be to find a d-separation theorem, or something similar \cite{Pearl09,SpirtesGlymour+01,HensonLal+14}. Classically, d-separation is a graphical criterion that applies to causal structures and the d-separation theorem establishes it as both sound and complete for conditional independence in a Markov causal model. This is one of the foundational results in the study of causal models and the basis of many causal discovery algorithms \cite{Pearl09,SpirtesGlymour+01}. If a similar enough theorem were found for quantum causal models, then it may be possible to use some of these algorithms already derived in the classical setting with minimal changes. Techniques used in Ref.~\cite{HensonLal+14} to prove a d-separation theorem for their version of quantum causal models could be useful for developing such an extension.

Finally, it is interesting to consider whether the techniques of Chaps.~\ref{ch:CI}, \ref{ch:CM} could be used as a template for developing formulations of causal models in frameworks for beyond-quantum physical theories. One example is the \emph{general probabilistic theory} (GPT) framework \cite{Hardy01,Hardy11,ChiribellaDAriano+10,ChiribellaDAriano+11}, which is a general way to discuss a wide landscape of theories including classical and quantum. Such a GPT causal model formalism would not only be useful for the study of GPTs themselves, but also could potentially unify quantum and classical causal models. As noted in Sec.~\ref{sec:CM:previous-attempts}, Refs.~\cite{HensonLal+14,Fritz16} present alternative formulations of quantum causal models (inequivalent to that presented here) that apply to broader frameworks in this way and may be useful resources for this effort.

\section{Time Travel and CTCs}

The final approach to ontology in quantum theory taken in this thesis was in Chap.~\ref{ch:TT}, which considered time travel to the past in a quantum universe. This was done by taking the quantum circuit approach, which has yielded the models of D-CTCs and P-CTCs. After reviewing this previous work in Sec.~\ref{sec:TT:review}, the significance of ontology and non-linearity in these models was discussed in Secs.~\ref{sec:TT:non-linearity}, \ref{sec:TT:role-of-ontology}. This led to the identification of two classes of new models for quantum time travel in Sec.~\ref{sec:TT:alternative-model-selection}: weighted D-CTCs and transition probability CTCs. From these, a specific transition probability model called T-CTCs was selected, thoroughly investigated, and compared to the previous models.

Both P-CTCs and T-CTCs make use of an operator $P \eqdef \Tr_{\textsc{CV}}\left( U \right)$ which is the partial trace of a unitary operator [Eq.~(\ref{eq:TT:P-CTC-EoM})]. Any further work that develops a fuller understanding of the properties of such operators would be in a good place to quickly identify further properties of P-CTCs and T-CTCs. The only general property of such operators so far proved is a tight upper bound on the action of $P$ on vector norms [Sec.~\ref{sec:TT:P-operator}].

The original paper on D-CTCs discussed how the second law of thermodynamics remains respected in that model (except within the chronology violating region) \cite{Deutsch91}. No such analysis has yet appeared for either the P-CTC or T-CTC models. This analysis would be interesting for further comparing the properties of these models and may have a bearing on their relative plausibilities.

Of the opportunities for further work based into the properties of D-, P-, and T-CTCs, perhaps the most intriguing is using them to clarify the relationships between non-linearity, computation, and distinguishability of quantum states. In particular, a well-known result in Ref.~\cite{AbramsLloyd98} claims that ``virtually any'' non-linear extension to quantum theory is able to solve $\mathsf{NP}$-complete and $\mathsf{\#P}$ problems in polynomial time; however, the method used in the proof of that result does not apply to the non-linear evolutions provided by T-CTCs. While T-CTCs can solve any $\mathsf{PP}\supseteq\mathsf{NP}$ problem in polynomial time and does therefore not provide a counter-example, it does call for a more thorough analysis of this claim. Reference \cite[Chap. 9]{Aaronson13a} suggests, in a conversational manner, that deterministic distinguishability of states is a necessary consequence of non-linearity. However, it has already been established that T-CTCs, while generally non-linear, cannot distinguish non-orthogonal quantum states in a single shot with certainty. In general, results such as these demonstrate a close relationship between non-linearity, computational power, and distinguishability of non-orthogonal states, but one that is not yet fully understood. It is clear that non-linearity is necessary for the latter two, but further study of time travel models may help to clarify the degree to which it is also sufficient.

Beyond these open questions for the study of D-, P-, and T-CTCs, this also suggests fruitful further work for the study of time travel beyond these models.

Perhaps the most obvious avenue for work beyond these models is to identify other particular weighted D-CTCs or transition probability CTCs that may warrant further investigation. A few of these are noted in Secs.~\ref{sec:TT:alternative-model-selection}, \ref{sec:TT:relation-to-other-alternatives} but none other than T-CTCs have received a full treatment. This would be especially interesting if a convincing ontological argument could be given for a particular model.

In Sec.~\ref{sec:TT:role-of-ontology}, intrinsic stochasticity and non-separability were identified as features of quantum theory that make it particularly difficult to find an ontologically sound model of time travel, compared to the classical context. The precise roles of these features could be clarified by starting with a generic classical model with time travel and incrementally adding stochastic and non-separable features. By starting from the ontologically uncontroversial classical setting, this could reveal concrete suggestions of how to consistently model quantum time travel. A complementary task would be to develop a model of time travel for Spekkens' toy model \cite{Spekkens07} (since that is a toy model for quantum theory that has an explicit ontology) and use it as an inspiration for time travel models in quantum theory.

The analysis of computation in the presence of time travel has potentially interesting implications for computation in generalised probabilistic theories (GPTs) \cite{LeeBarrett15,BarrettBeaudrap+17}. Since the GPT framework is designed to characterise an exceedingly large class of theories that might be applicable to physics, it seems likely that D-, P-, and T-CTC models should be expressible as GPTs. Expressing them in this way should help with comparisons between them, to classical theories, and to standard quantum theory. In particular, a general framework of computation in GPTs is developed in Ref.~\cite{LeeBarrett15} using several plausible assumptions, including those of ``tomographic locality'' and a ``uniformity condition''. There it was proved that no GPT computation in that framework is capable of solving problems outside $\mathsf{AWPP}\subseteq\mathsf{PP}$ in polynomial time. This is puzzling in light of the known results about D-CTCs, which are capable of solving all problems in $\mathsf{PSPACE}$ in polynomial time, since it is believed that $\mathsf{PSPACE}$ is strictly larger than $\mathsf{PP}$ \cite{AroraBarak09} (and therefore also $\mathsf{AWPP}$). This puzzle reveals a tension between the model of GPT computation of Ref.~\cite{LeeBarrett15} and the model of computation used in Ref.~\cite{AaronsonWatrous09} to prove that D-CTCs have the computational power of $\mathsf{PSPACE}$.

By explicitly expressing D-, P-, and T-CTCs as GPTs it should be possible to compare them to the assumptions of Ref.~\cite{LeeBarrett15}. Doing this would certainty clarify the discrepancy between the results of Ref.~\cite{AaronsonWatrous09} and Ref.~\cite{LeeBarrett15} and may even suggest ways in which they might be improved. \emph{A priori}, D-, P-, and T-CTCs seem like they should all define plausible enough GPTs which have similarly plausible notions of computation. Therefore, it would be interesting to see, especially in the case of D-CTCs, whether the assumptions used in Ref.~\cite{LeeBarrett15} hold for these theories, and whether any subtleties are brought to light. In particular, it is not obvious whether the GPTs corresponding to D-, P-, or T-CTCs would be either tomographically local or satisfy the uniformity condition. In general, the study of GPTs could benefit from some more fully worked-out examples and time travel models could benefit from some new settings in which to examine them.

Finally, the framework of quantum causal models developed in Chap.~\ref{ch:CM} could perhaps also be of use for in the study of quantum time travel. The stipulation of time travel to the past is, after all, a causal one. This will not be as simple as writing a causal model for a CTC, since causal models rule out such causal loops by fiat. However, certain intuitions and understandings from causal models, including the characterisation of channels acting independently on a single system, may be useful in trying to understand smaller parts of the time travel process.

\section{To Conclude}

It is undoubtedly ambitious to attempt to describe what ``fundamental reality'' can and cannot be like. The way to even partially achieve this, while not losing your head or embarrassing yourself too much, is to be careful and precise with your statements and arguments. This makes concluding a bothersome business, as substantial statements tend to need so many caveats and conditions you might as well just re-state the theorems. Perhaps it is best then to simply say this. If you want real states of affairs respecting probability, then you probably need to accept a certain similarity to quantum states (including superpositions). If you want to correctly describe causal influences, then you need to understand quantum conditional independence and all that it implies. If you want to make sense of time travel to the past, then there are many ontological choices to be made and a wide variety of models to explore. If you want to know what quantum theory means for ontology, then there is plenty more work to be done.

% Make the appendix headings include "Appendix" before the letter
\makeatletter
\renewcommand{\@makechapterhead}[1]{\chapterheadstartvskip%
  {\size@chapter{\sectfont\raggedleft
      {\chapnumfont
        \ifnum \c@secnumdepth >\m@ne%
        \if@mainmatter {\LARGE Appendix\quad}\thechapter%
        \fi\fi
        \par\nobreak}%
      {\raggedleft\advance\leftmargin10em\interlinepenalty\@M #1\par}}
    \nobreak\chapterheadendvskip}}
\makeatother

%now enable appendix numbering format and include any appendices
\begin{appendices}
\include{appendix-proof-thm-no-maximally-epistemic}
\include{appendix-proof-thm-small-symmetric-overlap-large-d}
\end{appendices}

%next line adds the Bibliography to the contents page
\addcontentsline{toc}{chapter}{Bibliography}
%uncomment next line to change bibliography name to references
%\renewcommand{\bibname}{References}

\bibliography{references}%,revtex-custom}        %use a bibtex bibliography file references.bib

\newcommand{\etalchar}[1]{$^{#1}$}
\begin{thebibliography}{GCKGM{\etalchar{+}}12}
\providecommand{\urlprefix}{}
\makeatletter
\providecommand \url  [0]{\begingroup\@sanitize@url \@url }%
\providecommand \@url [1]{\endgroup\@href {#1}{ }}%
\providecommand \eprint [0]{\href }%
\makeatother

\bibitem[Aar05]{Aaronson05}
S.~Aaronson, \emph{Quantum Computing, Postselection, and Probabilistic
  Polynomial-time}. Proceedings of the Royal Society A \textbf{461}(2063), 3473
  (2005). \eprint {http://arxiv.org/abs/quant-ph/0412187}
  {arXiv:quant-ph/0412187 [quant-ph]}

\bibitem[Aar13]{Aaronson13a}
S.~Aaronson, \emph{Quantum Computing since {D}emocritus}. Cambridge University
  Press, Cambridge (2013).
  \urlprefix\url{http://www.scottaaronson.com/democritus/}

\bibitem[AB09]{AroraBarak09}
S.~Arora and B.~Barak, \emph{Computational Complexity: {A} Modern Approach}.
  Cambridge University Press, Cambridge (2009)

\bibitem[ABCL13]{AaronsonBouland+13}
S.~Aaronson, A.~Bouland, L.~Chua, and G.~Lowther, \emph{$\psi$-Epistemic
  Theories: {T}he Role of Symmetry}. Physical Review A \textbf{88}(3), 032111
  (2013). \eprint {http://arxiv.org/abs/1303.2834} {arXiv:1303.2834 [quant-ph]}

\bibitem[ABH{\etalchar{+}}17]{AllenBarrett+17}
J.-M.~A. Allen, J.~Barrett, D.~C. Horsman, C.~M. Lee, and R.~W. Spekkens,
  \emph{Quantum Common Causes and Quantum Causal Models}. Physical Review X
  \textbf{In preparation} (2017). \eprint {http://arxiv.org/abs/1609.09487}
  {arXiv:1609.09487 [quant-ph]}

\bibitem[ACB14]{AraujoCosta+14}
M.~Ara\'{u}jo, F.~Costa, and \v{C}aslav Brukner, \emph{Computational Advantage
  from Quantum-Controlled Ordering of Gates}. Physical Review Letters
  \textbf{113}(25), 250402 (2014). \eprint {http://arxiv.org/abs/1401.8127}
  {arXiv:1401.8127 [quant-ph]}

\bibitem[AL98]{AbramsLloyd98}
D.~S. Abrams and S.~Lloyd, \emph{Nonlinear Quantum Mechanics Implies
  Polynomial-time Solution for NP-complete and \#{}P Problems}. Physical Review
  Letters \textbf{81}(18), 3992 (1998). \eprint
  {http://arxiv.org/abs/quant-ph/9801041} {arXiv:quant-ph/9801041 [quant-ph]}

\bibitem[Alb10]{Albert10}
D.~Albert, \emph{Probability in the {E}verett Picture}. In \emph{Many Worlds?
  {E}verett, Quantum Theory, \& Reality}, eds. S.~Saunders, J.~Barrett,
  A.~Kent, and D.~Wallace, chap.~11, Oxford University Press, Oxford (2010)

\bibitem[Ali95]{Alicki95}
R.~Alicki, \emph{Comment on ``{R}educed Dynamics Need Not Be Completely
  Positive''}. Physical Review Letters \textbf{75}(16), 3020 (1995)

\bibitem[All14]{Allen14}
J.-M.~A. Allen, \emph{Treating Time Travel Quantum Mechanically}. Physical
  Review A \textbf{90}(4), 042107 (2014). \eprint
  {http://arxiv.org/abs/1401.4933} {arXiv:1401.4933 [quant-ph]}

\bibitem[All16]{Allen16a}
J.-M.~A. Allen, \emph{Quantum Superpositions Cannot be Epistemic}. Quantum
  Studies: Mathematics and Foundations \textbf{3}(2), 161--177 (2016). \eprint
  {http://arxiv.org/abs/1501.05969} {arXiv:1501.05969 [quant-ph]}

\bibitem[AM13]{ArntzeniusMaudlin13}
F.~Arntzenius and T.~Maudlin, \emph{Time Travel and Modern Physics}. In
  \emph{The {S}tanford Encyclopedia of Philosophy}, ed. E.~N. Zalta, winter
  2013 ed. (2013)

\bibitem[AMG16]{AllenMaroney+16}
J.-M.~A. Allen, O.~J.~E. Maroney, and S.~Gogioso, \emph{A Stronger Theorem
  Against Macro-realism}  (2016). \eprint {http://arxiv.org/abs/1610.00022}
  {arXiv:1610.00022 [quant-ph]}

\bibitem[AMRM12]{AhnMyers+12}
D.~Ahn, C.~R. Myers, T.~C. Ralph, and R.~B. Mann, \emph{Quantum State Cloning
  in the Presence of a Closed Timelike Curve} (2012). \eprint
  {http://arxiv.org/abs/1207.6062} {arXiv:1207.6062 [quant-ph]}

\bibitem[And95]{Anderson95}
A.~Anderson, \emph{Unitarity Restoration in the Presence of Closed Timelike
  Curves}. Physical Review D \textbf{51}(10), 5707 (1995). \eprint
  {http://arxiv.org/abs/gr-qc/9405058} {arXiv:gr-qc/9405058 [gr-qc]}

\bibitem[APT14]{AharonovPopescu+14}
Y.~Aharonov, S.~Popescu, and J.~Tollaksen, \emph{Each Instant of Time a New
  Universe}. In \emph{Quantum Theory: {A} Two-Time Success Story: Yakir
  Aharonov Festschrift}, eds. D.~C. Struppa and J.~M. Tollaksen, pp. 21--36,
  Springer, Berlin (2014). \eprint {http://arxiv.org/abs/1305.1615}
  {arXiv:1305.1615 [quant-ph]}

\bibitem[APTV09]{AharonovPopescu+09}
Y.~Aharonov, S.~Popescu, J.~Tollaksen, and L.~Vaidman, \emph{Multiple-Time
  States and Multiple-Time Measurements in Quantum Mechanics}. Physical Review
  A \textbf{79}(5), 052110 (2009). \eprint {http://arxiv.org/abs/0712.0320}
  {arXiv:0712.0320 [quant-ph]}

\bibitem[AV07]{AharonovVaidman07}
Y.~Aharonov and L.~Vaidman, \emph{The Two-State Vector Formalism: {A}n Updated
  Review}. In \emph{Time in Quantum Mechanics}, eds. G.~Muga, R.~S. Mayato, and
  I.~Egusquiza, pp. 399--447, Springer, Berlin (2007)

\bibitem[AW09]{AaronsonWatrous09}
S.~Aaronson and J.~Watrous, \emph{Closed Timelike Curves make Quantum and
  Classical Computing Equivalent}. Proceedings of the Royal Society A
  \textbf{465}(2102), 631 (2009). \eprint {http://arxiv.org/abs/0808.2669}
  {arXiv:0808.2669 [quant-ph]}

\bibitem[Bac15]{Bacciagaluppi15}
G.~Bacciagaluppi, \emph{{L}eggett-{G}arg Inequalities, Pilot Waves and
  Contextuality}. International Journal of Quantum Foundations \textbf{1},
  1--17 (2015). \eprint {http://arxiv.org/abs/1409.4104} {arXiv:1409.4104
  [quant-ph]}

\bibitem[Bal87]{Ballentine87}
L.~E. Ballentine, \emph{Realism and Quantum Flux Tunneling}. Physical Review
  Letters \textbf{59}(14), 1493 (1987)

\bibitem[Bal14]{Ballentine14}
L.~Ballentine, \emph{Ontological Models in Quantum Mechanics: {W}hat do they
  Tell Us?} (2014). \eprint {http://arxiv.org/abs/1402.5689} {arXiv:1402.5689
  [quant-ph]}

\bibitem[BB95]{BeltramettiBugajski95}
E.~G. Beltrametti and S.~Bugajski, \emph{A Classical Extension of Quantum
  Mechanics}. Journal of Physics A \textbf{28}(12), 3329 (1995)

\bibitem[BBHL17]{BarrettBeaudrap+17}
J.~Barrett, N.~de~Beaudrap, M.~J. Hoban, and C.~M. Lee, \emph{The Computational
  Landscape of General Physical Theories} (2017). \eprint
  {http://arxiv.org/abs/1702.08483} {arXiv:1702.08483 [quant-ph]}

\bibitem[BCF{\etalchar{+}}96]{BarnumCaves+96}
H.~Barnum, C.~M. Caves, C.~A. Fuchs, R.~Jozsa, and B.~Schumacher,
  \emph{Noncommuting Mixed States Cannot Be Broadcast}. Physical Review Letters
  \textbf{76}(15), 2818--2821 (1996). \eprint
  {http://arxiv.org/abs/quant-ph/9511010} {arXiv:quant-ph/9511010 [quant-ph]}

\bibitem[BCLM14]{BarrettCavalcanti+14}
J.~Barrett, E.~G. Cavalcanti, R.~Lal, and O.~J.~E. Maroney, \emph{No
  $\psi$-Epistemic Model Can Fully Explain the Indistinguishability of Quantum
  States}. Physical Review Letters \textbf{112}(25), 250403 (2014). \eprint
  {http://arxiv.org/abs/1310.8302} {arXiv:1310.8302 [quant-ph]}

\bibitem[BCMW10]{BuhrmanCleve+10}
H.~Buhrman, R.~Cleve, S.~Massar, and R.~de~Wolf, \emph{Non-locality and
  Communication Complexity}. Reviews of Modern Physics \textbf{82}(1), 665--698
  (2010). \eprint {http://arxiv.org/abs/0907.3584} {arXiv:0907.3584 [quant-ph]}

\bibitem[BCT99]{BrassardCleve+99}
G.~Brassard, R.~Cleve, and A.~Tapp, \emph{Cost of Exactly Simulating Quantum
  Entanglement with Classical Communication}. Physical Review Letters
  \textbf{83}(9), 1874 (1999). \eprint {http://arxiv.org/abs/quant-ph/9901035}
  {arXiv:quant-ph/9901035 [quant-ph]}

\bibitem[BCW98]{BuhrmanCleve+98}
H.~Buhrman, R.~Cleve, and A.~Wigderson, \emph{Quantum {vs.} Classical
  Communication and Computation}. In \emph{STOC 98 Proceedings of the Thirtieth
  Annual ACM Symposium on Theory of Computing}, pp. 63--68, ACM (1998). \eprint
  {http://arxiv.org/abs/quant-ph/9802040} {arXiv:quant-ph/9802040 [quant-ph]}

\bibitem[BCWW01]{BuhrmanCleve+01}
H.~Buhrman, R.~Cleve, J.~Watrous, and R.~de~Wolf, \emph{Quantum
  Fingerprinting}. Physical Review Letters \textbf{87}(16), 167902 (2001).
  \eprint {http://arxiv.org/abs/quant-ph/0102001} {arXiv:quant-ph/0102001
  [quant-ph]}

\bibitem[BD82]{BirrellDavies82}
N.~D. Birrell and P.~C.~W. Davies, \emph{Quantum Fields in Curved Space}.
  Cambridge University Press, Cambridge (1982)

\bibitem[Bel66]{Bell66}
J.~S. Bell, \emph{On the Problem of Hidden Variables in Quantum Mechanics}.
  Reviews of Modern Physics \textbf{38}(3), 447--452 (1966)

\bibitem[Bel87]{Bell87}
J.~S. Bell, \emph{Speakable and Unspeakable in Quantum Mechanics: Collected
  Papers on Quantum Philosophy}. Cambridge University Press, Cambridge (1987)

\bibitem[BGNP01]{BeckmanGottesman+01}
D.~Beckman, D.~Gottesman, M.~A. Nielsen, and J.~Preskill, \emph{Causal and
  Localizable Quantum Operations}. Physical Review A \textbf{64}(5), 052309
  (2001). \eprint {http://arxiv.org/abs/quant-ph/0102043}
  {arXiv:quant-ph/0102043 [quant-ph]}

\bibitem[BGP10]{BranciardGisin+10}
C.~Branciard, N.~Gisin, and S.~Pironio, \emph{Characterizing the Nonlocal
  Correlations Created via Entanglement Swapping}. Physical Review Letters
  \textbf{104}(17), 170401 (2010)

\bibitem[BH95]{BohmHiley95}
D.~Bohm and B.~J. Hiley, \emph{The Undivided Universe: {A}n Ontological
  Interpretation of Quantum Theory}. Taylor \& Francis (1995)

\bibitem[BH13]{BrandaoHarrow13}
F.~G. S.~L. Brandao and A.~W. Harrow, \emph{Quantum {de Finetti} Theorems under
  Local Measurements with Applications}. In \emph{Proceedings of the
  Forty-Fifth Annual {ACM} Symposium on Theory of Computing}, eds. D.~Boneh,
  T.~Roughgarden, and J.~Feigenbaum, pp. 861--870, STOC, ACM (2013). \eprint
  {http://arxiv.org/abs/1210.6367} {arXiv:1210.6367 [quant-ph]}

\bibitem[BHK05]{BarrettHardy+05}
J.~Barrett, L.~Hardy, and A.~Kent, \emph{No Signaling and Quantum Key
  Distribution}. Physical Review Letters \textbf{95}(1), 010503 (2005). \eprint
  {http://arxiv.org/abs/quant-ph/0405101} {arXiv:quant-ph/0405101 [quant-ph]}

\bibitem[BHL{\etalchar{+}}05]{BennettHayden+05}
C.~H. Bennett, P.~Hayden, D.~W. Leung, P.~W. Shor, and A.~Winter, \emph{Remote
  Preparation of Quantum States}. IEEE Transactions on Information Theory
  \textbf{51}(1), 56--74 (2005). \eprint
  {http://arxiv.org/abs/quant-ph/0307100} {arXiv:quant-ph/0307100 [quant-ph]}

\bibitem[BHW09]{BrunHarrington+09}
T.~A. Brun, J.~Harrington, and M.~M. Wilde, \emph{Localized Closed Timelike
  Curves Can Perfectly Distinguish Quantum States}. Physical Review Letters
  \textbf{102}(21), 210402 (2009). \eprint {http://arxiv.org/abs/1209.0811}
  {arXiv:1209.0811 [quant-ph]}

\bibitem[BHW16]{BaumannHansen+16}
V.~Baumann, A.~Hansen, and S.~Wolf, \emph{The Measurement Problem is the
  Measurement Problem is the Measurement Problem} (2016). \eprint
  {http://arxiv.org/abs/1611.01111} {arXiv:1611.01111 [quant-ph]}

\bibitem[BK04]{BarrettKent04}
J.~Barrett and A.~Kent, \emph{Non-Contextuality, Finite Precision Measurement
  and the {K}ochen-{S}pecker Theorem}. Studies in History and Philosophy of
  Science Part B \textbf{35}(2), 151--176 (2004). \eprint
  {http://arxiv.org/abs/quant-ph/0309017} {arXiv:quant-ph/0309017 [quant-ph]}

\bibitem[BL09]{BarrettLeifer09}
J.~Barrett and M.~Leifer, \emph{The {de Finetti} Theorem for Test Spaces}. New
  Journal of Physics \textbf{11}, 033024 (2009). \eprint
  {http://arxiv.org/abs/0712.2265} {arXiv:0712.2265 [quant-ph]}

\bibitem[BLSS09]{BennettLeung+09}
C.~H. Bennett, D.~Leung, G.~Smith, and J.~A. Smolin, \emph{Can Closed Timelike
  Curves or Nonlinear Quantum Mechanics Improve Quantum State Discrimination or
  Help Solve Hard Problems?} Physical Review Letters \textbf{103}(17), 170502
  (2009). \eprint {http://arxiv.org/abs/0908.3023} {arXiv:0908.3023 [quant-ph]}

\bibitem[BLW15]{BertaLemm+15}
M.~Berta, M.~Lemm, and M.~M. Wilde, \emph{Monotonicity of Quantum Relative
  Entropy and Recoverability}. Quantum Information and Computation
  \textbf{15}(15--16), 1333--1354 (2015). \eprint
  {http://arxiv.org/abs/1412.4067} {arXiv:1412.4067 [quant-ph]}

\bibitem[Boh52a]{Bohm52a}
D.~Bohm, \emph{A Suggested Interpretation of the Quantum Theory in Terms of
  ``Hidden'' Variables. {I}}. Physical Review \textbf{85}(2), 166--179 (1952)

\bibitem[Boh52b]{Bohm52b}
D.~Bohm, \emph{A Suggested Interpretation of the Quantum Theory in Terms of
  ``Hidden'' Variables. {II}}. Physical Review \textbf{85}(2), 180--193 (1952)

\bibitem[Bon80]{Bonnor80}
W.~B. Bonnor, \emph{The Rigidly Rotating Relativistic Dust Cylinder}. Journal
  of Physics A \textbf{13}(6), 2121 (1980)

\bibitem[Bra14]{Branciard14}
C.~Branciard, \emph{How $\psi$-Epistemic Models Fail at Explaining the
  Indistinguishability of Quantum States}. Physical Review Letters
  \textbf{113}(2), 020409 (2014). \eprint {http://arxiv.org/abs/1407.3005}
  {arXiv:1407.3005 [quant-ph]}

\bibitem[BRGP12]{BranciardRosset+12}
C.~Branciard, D.~Rosset, N.~Gisin, and S.~Pironio, \emph{Bilocal versus
  Nonbilocal Correlations in Entanglement-Swapping Experiments}. Physical
  Review A \textbf{85}(3), 032119 (2012). \eprint
  {http://arxiv.org/abs/1112.4502} {arXiv:1112.4502 [quant-ph]}

\bibitem[Bro09]{deBroglie27}
L.~de~Broglie, \emph{The New Dynamics of Quanta}. In \emph{Quantum Theory at
  the Crossroads: Reconsidering the 1927 Solvay Conference}, eds.
  G.~Bacciagaluppi and A.~Valentini, pp. 373--406, Cambridge University Press,
  Cambridge (2009), ISBN 9780521814218. \eprint
  {http://arxiv.org/abs/quant-ph/0609184} {arXiv:quant-ph/0609184 [quant-ph]}

\bibitem[BRS12]{BartlettRudolph+12}
S.~D. Bartlett, T.~Rudolph, and R.~W. Spekkens, \emph{Reconstruction of
  {G}aussian Quantum Mechanics from {L}iouville Mechanics with an Epistemic
  Restriction}. Physical Review A \textbf{86}(1), 012103 (2012). \eprint
  {http://arxiv.org/abs/1111.5057} {arXiv:1111.5057 [quant-ph]}

\bibitem[Bru17]{Brukner17}
\v{C}aslav Brukner, \emph{On the Quantum Measurement Problem}. In \emph{Quantum
  {(Un)}Speakables {II}: {H}alf a Century of {B}ell's Theorem}, eds. R.~A.
  Bertlmann and A.~Zeilinger, pp. 95--117, Springer, Berlin (2017). \eprint
  {http://arxiv.org/abs/1507.05255} {arXiv:1507.05255 [quant-ph]}

\bibitem[BS05]{BennettSchumacher05}
C.~H. Bennett and B.~Schumacher, \emph{Talk on ``{T}eleportation, simulated
  time travel, and how to flirt with someone who has fallen into a black
  hole''} (2005). Slides available at
  \url{http://web.archive.org/web/20110514013911/http://www.research.ibm.com/people/b/bennetc/QUPONBshort.pdf}
  accessed 2017-03-02

\bibitem[BW12]{BrunWilde12}
T.~A. Brun and M.~M. Wilde, \emph{Perfect State Distinguishability and
  Computational Speedups with Postselected Closed Timelike Curves}. Foundations
  of Physics \textbf{42}(3), 341 (2012). \eprint
  {http://arxiv.org/abs/1008.0433} {arXiv:1008.0433 [quant-ph]}

\bibitem[BW17]{BrunWilde17}
T.~A. Brun and M.~M. Wilde, \emph{Simulations of Closed Timelike Curves}.
  Foundations of Physics \textbf{47}(3), 375--391 (2017). \eprint
  {http://arxiv.org/abs/1504.05911} {arXiv:1504.05911 [quant-ph]}

\bibitem[BWW13]{BrunWilde+13}
T.~A. Brun, M.~M. Wilde, and A.~Winter, \emph{Quantum State Cloning Using
  Deutschian Closed Timelike Curves}. Physical Review Letters \textbf{111}(19),
  190401 (2013). \eprint {http://arxiv.org/abs/1306.1795} {arXiv:1306.1795
  [quant-ph]}

\bibitem[BZ02]{BertlmannZeilinger02}
R.~A. Bertlmann and A.~Zeilinger, eds., \emph{Quantum {(Un)speakables}}.
  Springer, Berlin (2002)

\bibitem[BZ06]{BengtssonZyczkowski06}
I.~Bengtsson and K.~\.{Z}yczkowski, \emph{Geometry of Quantum States: {A}n
  Introduction to Quantum Entanglement}. Cambridge University Press, Cambridge
  (2006)

\bibitem[BZ17]{BertlmannZeilinger17}
R.~A. Bertlmann and A.~Zeilinger, eds., \emph{Quantum {(Un)}Speakables {II}:
  {H}alf a Century of {B}ell's Theorem}. Springer, Berlin (2017)

\bibitem[CD02]{CzachorDoebner02}
M.~Czachor and H.~D. Doebner, \emph{Correlation Experiments in Nonlinear
  Quantum Mechanics}. Physics Letters A \textbf{301}(3--4), 139 (2002). \eprint
  {http://arxiv.org/abs/quant-ph/0110008} {arXiv:quant-ph/0110008 [quant-ph]}

\bibitem[CDP09]{ChiribellaDAriano+09}
G.~Chiribella, G.~M. D'Ariano, and P.~Perinotti, \emph{Theoretical Framework
  for Quantum Networks}. Physical Review A \textbf{80}(2), 022339 (2009).
  \eprint {http://arxiv.org/abs/0904.4483} {arXiv:0904.4483 [quant-ph]}

\bibitem[CDP10]{ChiribellaDAriano+10}
G.~Chiribella, G.~M. D'Ariano, and P.~Perinotti, \emph{Probabilistic Theories
  with Purification}. Physical Review A \textbf{81}(6), 062348 (2010). \eprint
  {http://arxiv.org/abs/0908.1583} {arXiv:0908.1583 [quant-ph]}

\bibitem[CDP11]{ChiribellaDAriano+11}
G.~Chiribella, G.~M. D'Ariano, and P.~Perinotti, \emph{Informational Derivation
  of Quantum Theory}. Physical Review A \textbf{84}(1), 012311 (2011). \eprint
  {http://arxiv.org/abs/1011.6451} {arXiv:1011.6451 [quant-ph]}

\bibitem[CDP16]{ChiribellaDAriano+16}
G.~Chiribella, G.~M. D'Ariano, and P.~Perinotti, \emph{Quantum from
  Principles}. In \emph{Quantum Theory: {I}nformational Foundations and Foils},
  eds. G.~Chiribella and R.~W. Spekkens, vol. 181 of \emph{Fundamental Theories
  of Physics}, pp. 171--221, Springer, Berlin (2016). \eprint
  {http://arxiv.org/abs/1506.00398} {arXiv:1506.00398 [quant-ph]}

\bibitem[CDPV13]{ChiribellaDAriano+13}
G.~Chiribella, G.~M. D'Ariano, P.~Perinotti, and B.~Valiron, \emph{Quantum
  Computations without Definite Causal Structure}. Physical Review A
  \textbf{88}(2), 022318 (2013). \eprint {http://arxiv.org/abs/0912.0195}
  {arXiv:0912.0195 [quant-ph]}

\bibitem[CFS02a]{CavesFuchs+02a}
C.~M. Caves, C.~A. Fuchs, and R.~Schack, \emph{Conditions for Compatibility of
  Quantum-State Assignments}. Physical Review A \textbf{66}(6), 062111 (2002).
  \eprint {http://arxiv.org/abs/quant-ph/0206110} {arXiv:quant-ph/0206110
  [quant-ph]}

\bibitem[CFS02b]{CavesFuchs+02b}
C.~M. Caves, C.~A. Fuchs, and R.~Schack, \emph{Quantum Probabilities as
  {B}ayesian Probabilities}. Physical Review A \textbf{65}(2), 022305 (2002).
  \eprint {http://arxiv.org/abs/quant-ph/0106133} {arXiv:quant-ph/0106133
  [quant-ph]}

\bibitem[CFS02c]{CavesFuchs+02c}
C.~M. Caves, C.~A. Fuchs, and R.~Schack, \emph{Unknown Quantum States: {T}he
  Quantum {de Finetti} Representation}. Journal of Mathematical Physics
  \textbf{43}(9), 4537--4559 (2002). \eprint
  {http://arxiv.org/abs/quant-ph/0104088} {arXiv:quant-ph/0104088 [quant-ph]}

\bibitem[Cha16]{Chaves16}
R.~Chaves, \emph{Polynomial {B}ell inequalities}. Physical Review Letters
  \textbf{116}(1), 010402 (2016). \eprint {http://arxiv.org/abs/1506.04325}
  {arXiv:1506.04325 [quant-ph]}

\bibitem[Chi12]{Chiribella12}
G.~Chiribella, \emph{Perfect Discrimination of No-Signalling Channels via
  Quantum Superposition of Causal Structures}. Physical Review A
  \textbf{86}(4), 040301 (2012). \eprint {http://arxiv.org/abs/1109.5154}
  {arXiv:1109.5154 [quant-ph]}

\bibitem[Cho75]{Choi75}
M.~D. Choi, \emph{Completely Positive Linear Maps on Complex Matrices}. Linear
  Algebra and its Applications \textbf{10}(3), 285--290 (1975)

\bibitem[CK00]{CliftonKent00}
R.~Clifton and A.~Kent, \emph{Simulating Quantum Mechanics by Non-Contextual
  Hidden Variables}. Proceedings of the Royal Society A \textbf{456}(2001),
  2101--2114 (2000). \eprint {http://arxiv.org/abs/quant-ph/9908031}
  {arXiv:quant-ph/9908031 [quant-ph]}

\bibitem[CK15]{ClementeKofler15}
L.~Clemente and J.~Kofler, \emph{Necessary and Sufficient Conditions for
  Macroscopic Realism from Quantum Mechanics}. Physical Review A
  \textbf{91}(6), 062103 (2015). \eprint {http://arxiv.org/abs/1501.07517}
  {arXiv:1501.07517 [quant-ph]}

\bibitem[CK16]{ClementeKofler16}
L.~Clemente and J.~Kofler, \emph{No {F}ine Theorem for Macrorealism:
  {L}imitations of the {L}eggett-{G}arg Inequality}. Physical Review Letters
  \textbf{116}(15), 150401 (2016)

\bibitem[CK17]{CoeckeKissinger17}
B.~Coecke and A.~Kissinger, \emph{Picturing Quantum Processes: {A} First Course
  in Quantum Theory and Diagrammatic Reasoning}. Cambridge University Press,
  Cambridge (2017)

\bibitem[CKBG15]{ChavesKueng+15}
R.~Chaves, R.~Kueng, J.~B. Brask, and D.~Gross, \emph{Unifying Framework for
  Relaxations of the Causal Assumptions in {B}ell's Theorem}. Physical Review
  Letters \textbf{114}(14), 140403 (2015). \eprint
  {http://arxiv.org/abs/1411.4648} {arXiv:1411.4648 [quant-ph]}

\bibitem[CKMR07]{ChristandlKoenig+07}
M.~Christandl, R.~Koenig, G.~Mitchison, and R.~Renner, \emph{One-and-a-Half
  Quantum {de Finetti} Theorems}. Communications in Mathematical Physics
  \textbf{273}(2), 473--498 (2007). \eprint
  {http://arxiv.org/abs/quant-ph/0602130} {arXiv:quant-ph/0602130 [quant-ph]}

\bibitem[CKW00]{CoffmanKundu+00}
V.~Coffman, J.~Kundu, and W.~K. Wootters, \emph{Distributed Entanglement}.
  Physical Review A \textbf{61}(5), 052306 (2000). \eprint
  {http://arxiv.org/abs/quant-ph/9907047} {arXiv:quant-ph/9907047}

\bibitem[CL13]{CoeckeLal13}
B.~Coecke and R.~Lal, \emph{Causal Categories: {R}elativistically Interacting
  Processes}. Foundations of Physics \textbf{43}(4), 458--501 (2013). \eprint
  {http://arxiv.org/abs/1107.6019} {arXiv:1107.6019 [quant-ph]}

\bibitem[CL14]{CavalcantiLal14}
E.~G. Cavalcanti and R.~Lal, \emph{On Modifications of {R}eichenbach's
  Principle of Common Cause in Light of {B}ell's Theorem}. Journal of Physics
  A: Mathematical and Theoretical \textbf{47}(42), 424018 (2014). \eprint
  {http://arxiv.org/abs/1311.6852} {arXiv:1311.6852 [quant-ph]}

\bibitem[CLG14]{ChavesLuft+14a}
R.~Chaves, L.~Luft, and D.~Gross, \emph{Causal Structures from Entropic
  Information: {G}eometry and Novel Scenarios}. New Journal of Physics
  \textbf{16}(4), 043001 (2014). \eprint {http://arxiv.org/abs/1310.0284}
  {arXiv:1310.0284 [quant-ph]}

\bibitem[CLM{\etalchar{+}}14]{ChavesLuft+14b}
R.~Chaves, L.~Luft, T.~O. Maciel, D.~Gross, D.~Janzing, and B.~Sch{\"o}lkopf,
  \emph{Inferring Latent Structures via Information Inequalities}. In
  \emph{Proceedings of the 30th Conference on Uncertainty in Artificial
  Intelligence}, eds. N.~L. Zhang and J.~Tian, AUAI Press (2014). \eprint
  {http://arxiv.org/abs/1407.2256} {arXiv:1407.2256 [quant-ph]}

\bibitem[CM10]{CavalcantiMenicucci10}
E.~G. Cavalcanti and N.~C. Menicucci, \emph{Verifiable Nonlinear Quantum
  Evolution Implies Failure of Density Matrices to Represent Proper Mixtures}
  (2010). \eprint {http://arxiv.org/abs/1004.1219} {arXiv:1004.1219 [quant-ph]}

\bibitem[CMG15]{ChavesMajenz+15}
R.~Chaves, C.~Majenz, and D.~Gross, \emph{Information-Theoretic Implications of
  Quantum Causal Structures}. Nature Communications \textbf{6}, 5766 (2015).
  \eprint {http://arxiv.org/abs/1407.3800} {arXiv:1407.3800 [quant-ph]}

\bibitem[CMP12]{CavalcantiMenicucci+12}
E.~G. Cavalcanti, N.~C. Menicucci, and J.~L. Pienaar, \emph{The Preparation
  Problem in Nonlinear Extensions of Quantum Theory} (2012). \eprint
  {http://arxiv.org/abs/1206.2725} {arXiv:1206.2725 [quant-ph]}

\bibitem[Coe10]{Coecke10}
B.~Coecke, \emph{Quantum Picturalism}. Contemporary Physics \textbf{51}(1),
  59--83 (2010). \eprint {http://arxiv.org/abs/0908.1787} {arXiv:0908.1787
  [quant-ph]}

\bibitem[CR11]{ColbeckRenner11}
R.~Colbeck and R.~Renner, \emph{No Extension of Quantum Theory Can Have
  Improved Predictive Power}. Nature Communications \textbf{2}, 411 (2011).
  \eprint {http://arxiv.org/abs/1005.5173} {arXiv:1005.5173 [quant-ph]}

\bibitem[CR12]{ColbeckRenner12a}
R.~Colbeck and R.~Renner, \emph{Is a System's Wave Function in One-to-One
  Correspondence with Its Elements of Reality?} Physical Review Letters
  \textbf{108}(15), 150402 (2012). \eprint {http://arxiv.org/abs/1111.6597}
  {arXiv:1111.6597 [quant-ph]}

\bibitem[CR13]{ColbeckRenner13b}
R.~Colbeck and R.~Renner, \emph{A System's Wave Function is Uniquely Determined
  by its Underlying Physical State}. New Journal of Physics \textbf{19}(1),
  013016 (2013). \eprint {http://arxiv.org/abs/1312.7353} {arXiv:1312.7353
  [quant-ph]}

\bibitem[CR16]{ColbeckRenner12b}
R.~Colbeck and R.~Renner, \emph{The Completeness of Quantum Theory for
  Predicting Measurement Outcomes}. In \emph{Quantum Theory: {I}nformational
  Foundations and Foils}, eds. G.~Chiribella and R.~W. Spekkens, vol. 181 of
  \emph{Fundamental Theories of Physics}, pp. 497--528, Springer, Berlin
  (2016). \eprint {http://arxiv.org/abs/1208.4123} {arXiv:1208.4123 [quant-ph]}

\bibitem[CS16]{CostaShrapnel16}
F.~Costa and S.~Shrapnel, \emph{Quantum Causal Modelling}. New Journal of
  Physics \textbf{18}(6), 063032 (2016). \eprint
  {http://arxiv.org/abs/1512.07106} {arXiv:1512.07106 [quant-ph]}

\bibitem[CT09]{ChristandlToner09}
M.~Christandl and B.~Toner, \emph{Finite {de Finetti} Theorem for Conditional
  Probability Distributions Describing Physical Theories}. Journal of
  Mathematical Physics \textbf{50}, 042104 (2009). \eprint
  {http://arxiv.org/abs/0712.0916} {arXiv:0712.0916 [quant-ph]}

\bibitem[DE10]{DoratoEsfeld10}
M.~Dorato and M.~Esfeld, \emph{{GRW} as an Ontology of Dispositions}. Studies
  in History and Philosophy of Science Part B: Studies in History and
  Philosophy of Modern Physics \textbf{41}(1), 41--49 (2010)

\bibitem[Deu85]{Deutsch85}
D.~Deutsch, \emph{Quantum Theory as a Universal Physical Theory}. International
  Journal of Theoretical Physics \textbf{24}(1), 1--41 (1985)

\bibitem[Deu91]{Deutsch91}
D.~Deutsch, \emph{Quantum Mechanics near Closed Timelike Lines}. Physical
  Review D \textbf{44}(10), 3197 (1991)

\bibitem[DF80]{DiaconisFreedman80}
P.~Diaconis and D.~Freedman, \emph{Finite Exchangeable Sequences}. The Annals
  of Probability \textbf{8}(4), 745--764 (1980)

\bibitem[DFI10]{DeJongheFrey+10}
R.~DeJonghe, K.~Frey, and T.~Imbo, \emph{Discontinuous Quantum Evolutions in
  the Presence of Closed Timelike Curves}. Physical Review D \textbf{81}(8),
  087501 (2010). \eprint {http://arxiv.org/abs/0908.2655} {arXiv:0908.2655
  [quant-ph]}

\bibitem[DG73]{DeWittGraham73}
B.~S. DeWitt and R.~N. Graham, eds., \emph{The Many-Worlds Interpretation of
  Quantum Mechanics}. Princeton Series in Physics, Princeton University Press,
  Princeton (1973)

\bibitem[EKT91]{EcheverriaKlinkhammer+91}
F.~Echeverria, G.~Klinkhammer, and K.~S. Thorne, \emph{Billiard Balls in
  Wormhole Spacetimes with Closed Timelike Curves: {C}lassical Theory}.
  Physical Review D \textbf{44}(4), 1077 (1991)

\bibitem[ELN14]{EmaryLambert+14}
C.~Emary, N.~Lambert, and F.~Nori, \emph{{L}eggett-{G}arg Inequalities}.
  Reports on Progress in Physics \textbf{77}(1), 016001 (2014). \eprint
  {http://arxiv.org/abs/1304.5133} {arXiv:1304.5133 [quant-ph]}

\bibitem[EPR35]{EinsteinPodolsky+35}
A.~Einstein, B.~Podolsky, and N.~Rosen, \emph{Can Quantum-Mechanical
  Description of Physical Reality Be Considered Complete?} Physical Review
  \textbf{47}(10), 777--780 (1935)

\bibitem[ESSV13]{EmersonSerbin+13}
J.~Emerson, D.~Serbin, C.~Sutherland, and V.~Veitch, \emph{The Whole is Greater
  than the Sum of the Parts: {O}n the Possibility of Purely Statistical
  Interpretations of Quantum Theory} (2013). \eprint
  {http://arxiv.org/abs/1312.1345} {arXiv:1312.1345 [quant-ph]}

\bibitem[ESW02]{EggelingSchlingemann+02}
T.~Eggeling, D.~Schlingemann, and R.~F. Werner, \emph{Semicausal Operations are
  Semilocalizable}. Europhysics Letters \textbf{57}(6), 782--788 (2002).
  \eprint {http://arxiv.org/abs/quant-ph/0104027} {arXiv:quant-ph/0104027
  [quant-ph]}

\bibitem[FB16]{FeixBrukner16}
A.~Feix and {\v{C}}.~Brukner, \emph{Quantum Superpositions of ``Common-Cause''
  and ``Direct-Cause'' Causal Structures}  (2016). \eprint
  {http://arxiv.org/abs/1606.09241} {arXiv:1606.09241 [quant-ph]}

\bibitem[Fey64]{Feynman64}
R.~Feynman, \emph{Probability and Uncertainty: {T}he Quantum Mechanical View of
  Nature} (1964). ``The Character of Physical Law'' Lecutre 6, Cornell
  University. Recorded and broadcast by the BBC. Available at
  \url{http://www.cornell.edu/video/richard-feynman-messenger-lecture-6-probability-uncertainty-quantum-mechanical-view-nature}
  accessed 2017-03-02

\bibitem[FG99]{FuchsvandeGraaf99}
C.~A. Fuchs and J.~van~de Graaf, \emph{Cryptographic Distinguishability
  Measures for Quantum-mechanical States}. Information Theory, IEEE
  Transactions on \textbf{45}(4), 1216--1227 (1999). \eprint
  {http://arxiv.org/abs/quant-ph/9712042} {arXiv:quant-ph/9712042 [quant-ph]}

\bibitem[Fin75]{deFinetti75}
B.~de~Finetti, \emph{Theory of Probability: {A} Critical Introductory
  Treatment}, vol. 1 and 2. Wiley (1975)

\bibitem[Fin93]{deFinetti93}
B.~de~Finetti, \emph{Probabilit\`{a} e Induzione---Induction and Probability}.
  Biblioteca di STATISTICA, CLUEB, Bologna (1993)

\bibitem[FMS14]{FuchsMermin+14}
C.~A. Fuchs, N.~D. Mermin, and R.~Schack, \emph{An Introduction to {QBism} with
  an Application to the Locality of Quantum Mechanics}. American Journal of
  Physics \textbf{82}(8), 749--754 (2014). \eprint
  {http://arxiv.org/abs/1311.5253} {arXiv:1311.5253 [quant-ph]}

\bibitem[FR15]{FawziRenner15}
O.~Fawzi and R.~Renner, \emph{Quantum Conditional Mutual Information and
  Approximate {M}arkov Chains}. Communications in Mathematical Physics
  \textbf{340}(2), 575--611 (2015). \eprint {http://arxiv.org/abs/1410.0664}
  {arXiv:1410.0664 [quant-ph]}

\bibitem[FR16]{FrauchigerRenner16}
D.~Frauchiger and R.~Renner, \emph{Single-World Interpretations of Quantum
  Theory Cannot be Self-Consistent} (2016). \eprint
  {http://arxiv.org/abs/1604.07422} {arXiv:1604.07422 [quant-ph]}

\bibitem[Fri12]{Fritz12}
T.~Fritz, \emph{Beyond {B}ell's Theorem: {C}orrelation Scenarios}. New Journal
  of Physics \textbf{14}(10), 103001 (2012). \eprint
  {http://arxiv.org/abs/1206.5115} {arXiv:1206.5115 [quant-ph]}

\bibitem[Fri16]{Fritz16}
T.~Fritz, \emph{Beyond {B}ell's Theorem {II}: {S}cenarios with Arbitrary Causal
  Structure}. Communications in Mathematical Physics \textbf{341}(2), 391--434
  (2016). \eprint {http://arxiv.org/abs/1404.4812} {arXiv:1404.4812 [quant-ph]}

\bibitem[Fuc02]{Fuchs02}
C.~A. Fuchs, \emph{Quantum Mechanics as Quantum Information (and only a little
  more)} (2002). \eprint {http://arxiv.org/abs/quant-ph/0205039}
  {arXiv:quant-ph/0205039 [quant-ph]}

\bibitem[FW95]{FewsterWells95}
C.~J. Fewster and C.~G. Wells, \emph{Unitarity of Quantum Theory and Closed
  Timelike Curves}. Physical Review D \textbf{52}(10), 5773 (1995). \eprint
  {http://arxiv.org/abs/hep-th/9409156} {arXiv:hep-th/9409156 [hep-th]}

\bibitem[GCKGM{\etalchar{+}}12]{KneeSimmons+12}
S.~S. George C.~Knee, E.~M. Gauger, J.~J.~L. Morton, H.~Riemann, N.~V.
  Abrosimov, P.~Becker, H.-J. Pohl, K.~M. Itoh, M.~L.~W. Thewalt, G.~A.~D.
  Briggs, and S.~C. Benjamin, \emph{Violation of a {L}eggett-{G}arg Inequality
  with Ideal Non-invasive Measurements}. Nature Communications \textbf{3}, 606
  (2012). \eprint {http://arxiv.org/abs/1104.0238} {arXiv:1104.0238 [quant-ph]}

\bibitem[GHZ89]{GreenbergerHorne+89}
D.~M. Greenberger, M.~A. Horne, and A.~Zeilinger, \emph{Going Beyond {B}ell's
  Theorem}. In \emph{{B}ell's Theorem, Quantum Theory, and Conceptions of the
  Universe}, ed. M.~Kafatos, Fundamental Theories of Physics, Springer (1989).
  \eprint {http://arxiv.org/abs/0712.0921} {arXiv:0712.0921 [quant-ph]}

\bibitem[Gib87]{Gibbins87}
P.~Gibbins, \emph{Particles and Paradoxes: {T}he Limits of Quantum Logic}.
  Cambridge University Press, Cambridge (1987)

\bibitem[GJ79]{GareyJohnson79}
M.~R. Garey and D.~S. Johnson, \emph{Computers and Intractability: {A} Guide to
  the Theory of NP-Completeness}. Freeman, New York (1979)

\bibitem[G{\"{o}}d49]{Godel49}
K.~G{\"{o}}del, \emph{An Example of a New Type of Cosmological Solution of
  {E}instein's Field Equations of Gravitation}. Reviews of Modern Physics
  \textbf{21}(3), 447 (1949)

\bibitem[Gol08]{Goldin08}
G.~A. Goldin, \emph{Nonlinear Quantum Mechanics: {R}esults and Open Questions}.
  Physics of Atomic Nuclei \textbf{71}, 884 (2008)

\bibitem[Gol16]{Goldstein16}
S.~Goldstein, \emph{{B}ohmian Mechanics}. In \emph{The {S}tanford Encyclopedia
  of Philosophy}, ed. E.~N. Zalta, fall 2016 ed. (2016)

\bibitem[Got91]{Gott91}
J.~R. Gott, \emph{Closed Timelike Curves Produced by Pairs of Moving Cosmic
  Strings: {E}xact Solutions}. Physical Review Letters \textbf{66}(9), 1126
  (1991)

\bibitem[GRW86]{GhirardiRimini+86}
G.~C. Ghirardi, Rimini, and Weber, \emph{Unified dynamics for Microscopic and
  Macroscopic Systems}. Physical Review D \textbf{34}(2), 470 (1986)

\bibitem[GVW{\etalchar{+}}15]{GiustinaVersteegh+15}
M.~Giustina, M.~A. Versteegh, S.~Wengerowsky, J.~Handsteiner, A.~Hochrainer,
  K.~Phelan, F.~Steinlechner, J.~Kofler, J.~\r{A}ke Larsson, C.~Abell\'{a}n,
  W.~Amaya, V.~Pruneri, M.~W. Mitchell, J.~Beyer, T.~Gerrits, A.~E. Lita, L.~K.
  Shalm, S.~W. Nam, T.~Scheidl, R.~Ursin, B.~Wittmann, , and A.~Zeilinger,
  \emph{Significant-Loophole-Free Test of {B}ell's Theorem with Entangled
  Photons}. Physical Review Letters \textbf{115}(25), 250401 (2015). \eprint
  {http://arxiv.org/abs/1511.03190} {arXiv:1511.03190 [quant-ph]}

\bibitem[Hal11]{Hall11}
M.~J.~W. Hall, \emph{Generalisations of the Recent {P}usey-{B}arrett-{R}udolph
  Theorem for Statistical Models of Quantum Phenomena} (2011). \eprint
  {http://arxiv.org/abs/1111.6304} {arXiv:1111.6304 [quant-ph]}

\bibitem[Har94]{Hartle94}
J.~B. Hartle, \emph{Unitarity and Causality in Generalised Quantum Mechanics
  for Nonchronal Spacetimes}. Physical Review D \textbf{49}(12), 6543 (1994).
  \eprint {http://arxiv.org/abs/gr-qc/9309012} {arXiv:gr-qc/9309012 [gr-qc]}

\bibitem[Har01]{Hardy01}
L.~Hardy, \emph{Quantum Theory From Five Reasonable Axioms} (2001). \eprint
  {http://arxiv.org/abs/quant-ph/0101012} {arXiv:quant-ph/0101012 [quant-ph]}

\bibitem[Har04]{Hardy04}
L.~Hardy, \emph{Quantum Ontological Excess Baggage}. Studies in History and
  Philosophy of Science Part B: Studies in History and Philosophy of Modern
  Physics \textbf{35}(2), 267--276 (2004)

\bibitem[Har09]{Hardy09}
L.~Hardy, \emph{Quantum Gravity Computers: {O}n the Theory of Computation with
  Indefinite Causal Structure}. In \emph{Quantum Reality, Relativistic
  Causality, and Closing the Epistemic Circle}, eds. W.~C. Myrvold and
  J.~Christian, pp. 379--401, Springer, Berlin (2009). \eprint
  {http://arxiv.org/abs/quant-ph/0701019} {arXiv:quant-ph/0701019 [quant-ph]}

\bibitem[Har11]{Hardy11}
L.~Hardy, \emph{Reformulating and Reconstructing Quantum Theory} (2011).
  \eprint {http://arxiv.org/abs/1104.2066} {arXiv:1104.2066 [quant-ph]}

\bibitem[Har13]{Hardy13}
L.~Hardy, \emph{Are Quantum States Real?} International Journal of Modern
  Physics B \textbf{27}, 1345012 (2013). \eprint
  {http://arxiv.org/abs/1205.1439} {arXiv:1205.1439 [quant-ph]}

\bibitem[Haw92]{Hawking92}
S.~W. Hawking, \emph{Chronology Protection Conjecture}. Physical Review D
  \textbf{46}(2), 603 (1992)

\bibitem[Haw95]{Hawking95}
S.~W. Hawking, \emph{Quantum Coherence and Closed Timelike Curves}. Physical
  Review D \textbf{52}(10), 5681 (1995). \eprint
  {http://arxiv.org/abs/gr-qc/9502017} {arXiv:gr-qc/9502017 [gr-qc]}

\bibitem[HBD{\etalchar{+}}15]{HensenBernien+15}
B.~Hensen, H.~Bernien, A.~E. Dr\'{e}au, A.~Reiserer, N.~Kalb, M.~S. Blok,
  J.~Ruitenberg, R.~F.~L. Vermeulen, R.~N. Schouten, C.~Abell\'{a}n, W.~Amaya,
  V.~Pruneri, M.~W. Mitchell, M.~Markham, D.~J. Twitchen, D.~Elkouss,
  S.~Wehner, T.~H. Taminiau, and R.~Hanson, \emph{Loophole-Free {B}ell
  Inequality Violation Using Electron Spins Separated by 1.3 Kilometres}.
  Nature \textbf{526}(7575), 682--686 (2015)

\bibitem[Hel13]{Held13}
C.~Held, \emph{The {K}ochen-{S}pecker Theorem}. In \emph{The {S}tanford
  Encyclopedia of Philosophy}, ed. E.~N. Zalta, spring 2013 ed. (2013)

\bibitem[HHP{\etalchar{+}}16]{HorsmanHeunen+16}
D.~Horsman, C.~Heunen, M.~F. Pusey, J.~Barrett, and R.~W. Spekkens, \emph{Can a
  Quantum State Over Time Resemble a Quantum State at a Single Time?} (2016).
  \eprint {http://arxiv.org/abs/1607.03637} {arXiv:1607.03637 [quant-ph]}

\bibitem[HJPW04]{HaydenJozsa+04}
P.~Hayden, R.~Jozsa, D.~Petz, and A.~Winter, \emph{Structure of States Which
  Satisfy Strong Subadditivity of Quantum Entropy with Equality}.
  Communications in Mathematical Physics \textbf{246}(2), 359--374 (2004).
  \eprint {http://arxiv.org/abs/quant-ph/0304007} {arXiv:quant-ph/0304007
  [quant-ph]}

\bibitem[HLP14]{HensonLal+14}
J.~Henson, R.~Lal, and M.~F. Pusey, \emph{Theory-Independent Limits on
  Correlations from Generalized {B}ayesian Networks}. New Journal of Physics
  \textbf{16}(11), 113043 (2014). \eprint {http://arxiv.org/abs/1405.2572}
  {arXiv:1405.2572 [quant-ph]}

\bibitem[HM04]{HorowitzMaldacena04}
G.~T. Horowitz and J.~Maldacena, \emph{The Black Hole Final State}. Journal of
  High Energy Physics \textbf{02}, 008 (2004). \eprint
  {http://arxiv.org/abs/hep-th/0310281} {arXiv:hep-th/0310281 [hep-th]}

\bibitem[HM16]{HuffmanMizel16}
E.~Huffman and A.~Mizel, \emph{{L}eggett-{G}arg Test of Superconducting Qubit
  Addressing the Clumsiness Loophole} (2016). \eprint
  {http://arxiv.org/abs/1609.05957} {arXiv:1609.05957 [quant-ph]}

\bibitem[HM17]{HermensMaroney17}
R.~Hermens and O.~J.~E. Maroney, \emph{Constraints on Macroscopic Realism
  Without Assuming Non-invasive Measurability} (2017). (Forthcoming)

\bibitem[Hol73]{Holevo73}
A.~S. Holevo, \emph{Bounds for the Quantity of Information Transmitted by a
  Quantum Communication Channel}. Problemy Peredachi Informatsii \textbf{9}(3),
  3 (1973)

\bibitem[HR07]{HarriganRudolph07}
N.~Harrigan and T.~Rudolph, \emph{Ontological Models and the Interpretation of
  Contextuality} (2007). \eprint {http://arxiv.org/abs/0709.4266}
  {arXiv:0709.4266 [quant-ph]}

\bibitem[HS10]{HarriganSpekkens10}
N.~Harrigan and R.~W. Spekkens, \emph{Einstein, Incompleteness, and the
  Epistemic View of Quantum States}. Foundatations of Physics \textbf{40}(2),
  125 (2010). \eprint {http://arxiv.org/abs/0706.2661} {arXiv:0706.2661
  [quant-ph]}

\bibitem[Hur97]{Hurwitz97}
A.~Hurwitz. Nachrichten von der Gesellschaft der Wissenschaften zu
  G\"{o}ttingen, Mathematisch-Physikalische Klasse pp. 71--90 (1897)

\bibitem[HWB11]{HobanWallman+11}
M.~J. Hoban, J.~J. Wallman, and D.~E. Browne, \emph{Generalized
  {B}ell-Inequality Experiments and Computation}. Physical Review A
  \textbf{84}(6), 062107 (2011). \eprint {http://arxiv.org/abs/1108.4798}
  {arXiv:1108.4798 [quant-ph]}

\bibitem[ILW08]{IbinsonLinden+08}
B.~Ibinson, N.~Linden, and A.~Winter, \emph{Robustness of Quantum {M}arkov
  Chains}. Communications in Mathematical Physics \textbf{277}(2), 289--304
  (2008). \eprint {http://arxiv.org/abs/quant-ph/0611057}
  {arXiv:quant-ph/0611057 [quant-ph]}

\bibitem[Jam72]{Jamiolkowski72}
A.~Jamio{\l}kowski, \emph{Linear Transformations which Preserve Trace and
  Positive Semidefiniteness of Operators}. Reports on Mathematical Physics
  \textbf{3}(4), 275--278 (1972)

\bibitem[JL03]{JozsaLinden03}
R.~Jozsa and N.~Linden, \emph{On the Role of Entanglement in
  Quantum-Computational Speed-up}. Proceedings of the Royal Society A
  \textbf{459}(2036), 2011--2032 (2003). \eprint
  {http://arxiv.org/abs/quant-ph/0201143} {arXiv:quant-ph/0201143 [quant-ph]}

\bibitem[JL15]{JenningsLeifer15}
D.~Jennings and M.~S. Leifer, \emph{No Return to Classical Reality}.
  Contemporary Physics \textbf{57}(1), 60--82 (2015). \eprint
  {http://arxiv.org/abs/1501.03202} {arXiv:1501.03202 [quant-ph]}

\bibitem[Jos98]{Jozsa98}
R.~Josza, \emph{Entanglement and Quantum Computation}. In \emph{The Geometric
  Universe: {S}cience, Geometry, and the Work of Roger Penrose}, eds. S.~A.
  Huggett, L.~J. Mason, K.~P. Tod, S.~Tsou, and N.~M.~J. Woodhouse, pp.
  369--379, Oxford University Press, Oxford (1998)

\bibitem[Joz04]{Jozsa04}
R.~Jozsa, \emph{Illustrating the Concept of Quantum Information}. IBM Journal
  of Research and Development \textbf{48}(1), 79--85 (2004). \eprint
  {http://arxiv.org/abs/quant-ph/0305114} {arXiv:quant-ph/0305114 [quant-ph]}

\bibitem[JRS{\etalchar{+}}15]{JungeRenner+15}
M.~Junge, R.~Renner, D.~Sutter, M.~M. Wilde, and A.~Winter, \emph{Universal
  Recovery from a Decrease of Quantum Relative Entropy} (2015). \eprint
  {http://arxiv.org/abs/1509.07127} {arXiv:1509.07127 [quant-ph]}

\bibitem[KB13]{KoflerBrukner13}
J.~Kofler and \v{C}aslav Brukner, \emph{Condition for Macroscopic Realism
  Beyond the {L}eggett-{G}arg Inequalities}. Physical Review A \textbf{87}(5),
  052115 (2013). \eprint {http://arxiv.org/abs/1207.3666} {arXiv:1207.3666
  [quant-ph]}

\bibitem[Ken99]{Kent99}
A.~Kent, \emph{Non-Contextual Hidden Variables and Physical Measurements}.
  Physical Review Letters \textbf{83}(19), 3755--3757 (1999). \eprint
  {http://arxiv.org/abs/quant-ph/9906006} {arXiv:quant-ph/9906006 [quant-ph]}

\bibitem[Ken05]{Kent05}
A.~Kent, \emph{Nonlinearity Without Superluminality}. Physical Review A
  \textbf{72}(1), 012108 (2005). \eprint
  {http://arxiv.org/abs/quant-ph/0204106} {arXiv:quant-ph/0204106 [quant-ph]}

\bibitem[KKY{\etalchar{+}}16]{KneeKakuyanagi+16}
G.~C. Knee, K.~Kakuyanagi, M.-C. Yeh, Y.~Matsuzaki, H.~Toida, H.~Yamaguchi,
  S.~Saito, A.~J. Leggett, and W.~J. Munro, \emph{A Strict Experimental Test of
  Macroscopic Realism in a Superconducting Flux Qubit}. Nature Communications
  \textbf{7}, 13253 (2016). \eprint {http://arxiv.org/abs/1601.03728}
  {arXiv:1601.03728 [quant-ph]}

\bibitem[Kne16]{Knee16}
G.~C. Knee, \emph{Towards Optimal Experimental Tests on the Reality of the
  Quantum State} (2016). \eprint {http://arxiv.org/abs/1609.01558}
  {arXiv:1609.01558 [quant-ph]}

\bibitem[KS67]{KochenSpecker67}
S.~Kochen and E.~P. Specker, \emph{The Problem of Hidden Variables in Quantum
  Mechanics}. Journal of Mathematics and Mechanics \textbf{17}, 59--87 (1967)

\bibitem[LB15]{LeeBarrett15}
C.~M. Lee and J.~Barrett, \emph{Computation in Generalised Probabilistic
  Theories}. New Journal of Physics \textbf{17}, 083001 (2015). \eprint
  {http://arxiv.org/abs/1412.8671} {arXiv:1412.8671 [quant-ph]}

\bibitem[Leg88]{Leggett88}
A.~J. Leggett, \emph{Experimental Approaches to the Quantum Measurement
  Paradox}. Foundations of Physics \textbf{18}(9), 939--952 (1988)

\bibitem[Leg02a]{Leggett02a}
A.~J. Leggett, \emph{Probing Quantum Mechanics Towards the Everyday World:
  {W}here do we Stand?} Physica Scripta \textbf{2002}(T102), 69 (2002)

\bibitem[Leg02b]{Leggett02b}
A.~J. Leggett, \emph{Testing the Limits of Quantum Mechanics: {M}otivation,
  State of Play, Prospects}. Journal of Physics: Condensed Matter
  \textbf{14}(15), R415 (2002)

\bibitem[Lei06]{Leifer06}
M.~S. Leifer, \emph{Quantum Dynamics as an Analog of Conditional Probability}.
  Physical Review A \textbf{74}(4), 042310 (2006). \eprint
  {http://arxiv.org/abs/quant-ph/0606022} {arXiv:quant-ph/0606022 [quant-ph]}

\bibitem[Lei11]{Leifer11}
M.~S. Leifer, \emph{The {C}hoi-{J}amiolkowski Isomorphism: {Y}ou're Doing it
  Wrong!} (2011).
  \url{http://mattleifer.info/2011/08/01/the-choi-jamiolkowski-isomorphism-youre-doing-it-wrong/}
  [Online; accessed 2017-01-05]

\bibitem[Lei14a]{Leifer14b}
M.~S. Leifer, \emph{Is the Quantum State Real? {A}n Extended Review of
  $\psi$-ontology Theorems}. Quanta \textbf{3}(1), 67--155 (2014). \eprint
  {http://arxiv.org/abs/1409.1570} {arXiv:1409.1570 [quant-ph]}

\bibitem[Lei14b]{Leifer14a}
M.~S. Leifer, \emph{$\psi$-Epistemic Models are Exponentially Bad at Explaining
  the Distinguishability of Quantum States}. Physical Review Letters
  \textbf{112}(16), 160404 (2014). \eprint {http://arxiv.org/abs/1401.7996}
  {arXiv:1401.7996 [quant-ph]}

\bibitem[LG85]{LeggettGarg85}
A.~J. Leggett and A.~Garg, \emph{Quantum Mechanics versus Macroscopic Realism:
  {I}s the Flux There when Nobody Looks?} Physical Review Letters
  \textbf{54}(9), 857--860 (1985)

\bibitem[LG87]{LeggettGarg87}
A.~J. Leggett and A.~Garg, \emph{Comment on ``{R}ealism and Quantum Flux
  Tunneling''}. Physical Review Letters \textbf{59}(14), 1621 (1987)

\bibitem[LJBR12]{LewisJennings+12}
P.~G. Lewis, D.~Jennings, J.~Barrett, and T.~Rudolph, \emph{Distinct Quantum
  States Can Be Compatible with a Single State of Reality}. Physical Review
  Letters \textbf{109}(15), 150404 (2012). \eprint
  {http://arxiv.org/abs/1201.6554} {arXiv:1201.6554 [quant-ph]}

\bibitem[LM13]{LeiferMaroney13}
M.~S. Leifer and O.~J.~E. Maroney, \emph{Maximally Epistemic Interpretations of
  the Quantum State and Contextuality}. Physical Review Letters
  \textbf{110}(12), 120401 (2013). \eprint {http://arxiv.org/abs/1208.5132}
  {arXiv:1208.5132 [quant-ph]}

\bibitem[LMGP{\etalchar{+}}11a]{LloydMaccone+11b}
S.~Lloyd, L.~Maccone, R.~Garcia-Patron, V.~Giovannetti, and Y.~Shikano,
  \emph{Quantum Mechanics of Time Travel through Post-selected Teleportation}.
  Physical Review D \textbf{84}(2), 025007 (2011). \eprint
  {http://arxiv.org/abs/1007.2615} {arXiv:1007.2615 [quant-ph]}

\bibitem[LMGP{\etalchar{+}}11b]{LloydMaccone+11a}
S.~Lloyd, L.~Maccone, R.~Garcia-Patron, V.~Giovannetti, Y.~Shikano,
  S.~Pirandola, L.~A. Rozema, A.~Darabi, Y.~Soudagar, L.~K. Shalm, and A.~M.
  Steinberg, \emph{Closed Timelike Curves via Postselection: {T}heory and
  Experimental Test of Consistency}. Physical Review Letters \textbf{106}(4),
  040403 (2011). \eprint {http://arxiv.org/abs/1005.2219} {arXiv:1005.2219
  [quant-ph]}

\bibitem[LMGP{\etalchar{+}}11c]{LloydMaccone+11c}
S.~Lloyd, L.~Maccone, R.~Garcia-Patron, V.~Giovannetti, Y.~Shikano,
  S.~Pirandola, L.~A. Rozema, A.~Darabi, Y.~Soudagar, L.~K. Shalm, and A.~M.
  Steinberg, \emph{A Reply to ``Problems with Modelling Closed Timelike Curves
  as Post-Selected Teleportation''} (2011). \eprint
  {http://arxiv.org/abs/1108.0153} {arXiv:1108.0153 [quant-ph]}

\bibitem[LP08]{LeiferPoulin08}
M.~S. Leifer and D.~Poulin, \emph{Quantum Graphical Models and Belief
  Propagation}. Annals of Physics \textbf{323}(8), 1899--1946 (2008). \eprint
  {http://arxiv.org/abs/0708.1337} {arXiv:0708.1337 [quant-ph]}

\bibitem[LPZ{\etalchar{+}}16]{LiuPerry+16}
Z.-W. Liu, C.~Perry, Y.~Zhu, D.~E. Koh, and S.~Aaronson, \emph{Doubly Infinite
  Separation of Quantum Information and Communication}. Physical Review A
  \textbf{93}(1), 012347 (2016). \eprint {http://arxiv.org/abs/1507.03546}
  {arXiv:1507.03546 [quant-ph]}

\bibitem[LS13]{LeiferSpekkens13}
M.~S. Leifer and R.~W. Spekkens, \emph{Towards a Formulation of Quantum Theory
  as a Causally Neutral Theory of {B}ayesian Inference}. Physical Review A
  \textbf{88}(5), 052130 (2013). \eprint {http://arxiv.org/abs/1107.5849}
  {arXiv:1107.5849 [quant-ph]}

\bibitem[LS15]{LeeSpekkens15}
C.~M. Lee and R.~W. Spekkens, \emph{Causal Inference via Algebraic Geometry:
  {N}ecessary and Sufficient Conditions for the Feasibility of Discrete Causal
  Models} (2015). \eprint {http://arxiv.org/abs/1506.03880} {arXiv:1506.03880
  [stat.ML]}

\bibitem[LW14]{LiWinter14}
K.~Li and A.~Winter, \emph{Squashed Entanglement, k-extendibility, Quantum
  {M}arkov Chains, and Recovery Maps} (2014). \eprint
  {http://arxiv.org/abs/1410.4184} {arXiv:1410.4184 [quant-ph]}

\bibitem[Man16]{Mansfield16}
S.~Mansfield, \emph{Reality of the Quantum State: {T}owards A Stronger
  $\psi$-ontology Theorem}. Physical Review A \textbf{94}(4), 042124 (2016).
  \eprint {http://arxiv.org/abs/1412.0669} {arXiv:1412.0669 [quant-ph]}

\bibitem[Mar12]{Maroney12a}
O.~J.~E. Maroney, \emph{How Statistical are Quantum States?} (2012). \eprint
  {http://arxiv.org/abs/1207.6906} {arXiv:1207.6906 [quant-ph]}

\bibitem[Mey99]{Meyer99}
D.~A. Meyer, \emph{Finite Precision Measurement Nullifies the
  {K}ochen-{S}pecker theorem}. Physical Review Letters \textbf{83}(19),
  3751--3754 (1999). \eprint {http://arxiv.org/abs/quant-ph/9905080}
  {arXiv:quant-ph/9905080 [quant-ph]}

\bibitem[Mon08]{Montina08}
A.~Montina, \emph{Exponential Complexity and Ontological Theories of Quantum
  Mechanics}. Physical Review A \textbf{77}(2), 022104 (2008). \eprint
  {http://arxiv.org/abs/0711.4770} {arXiv:0711.4770 [quant-ph]}

\bibitem[Mon11]{Montina11b}
A.~Montina, \emph{Communication Cost of Classically Simulating a Quantum
  Channel with Subsequent Rank-1 Projective Measurement}. Physical Review A
  \textbf{84}(6), 060303 (2011). \eprint {http://arxiv.org/abs/1110.5944}
  {arXiv:1110.5944 [quant-ph]}

\bibitem[Mon12]{Montina12}
A.~Montina, \emph{Epistemic View of Quantum States and Communication Complexity
  of Quantum Channels}. Physical Review Letters \textbf{109}(11), 110501
  (2012). \eprint {http://arxiv.org/abs/1206.2961} {arXiv:1206.2961 [quant-ph]}

\bibitem[Mon13]{Montina13}
A.~Montina, \emph{Exponential Communication Gap Between Weak and Strong
  Classical Simulations of Quantum Communication}. Physical Review A
  \textbf{87}(4), 042331 (2013). \eprint {http://arxiv.org/abs/1301.3452}
  {arXiv:1301.3452 [quant-ph]}

\bibitem[Mon15]{Montina15}
A.~Montina, \emph{Communication Complexity and the Reality of the Wave
  Function}. Modern Physics Letters A \textbf{30}(1), 1530001 (2015). \eprint
  {http://arxiv.org/abs/1412.1723} {arXiv:1412.1723 [quant-ph]}

\bibitem[Mon16a]{Montanaro16}
A.~Montanaro, \emph{Quantum States Cannot be Transmitted Efficiently
  Classically} (2016). \eprint {http://arxiv.org/abs/1612.06546}
  {arXiv:1612.06546 [quant-ph]}

\bibitem[Mon16b]{MontinaWolf16}
A.~Montina, \emph{Information-Based Measure of Nonlocality}. New Journal of
  Physics \textbf{18}, 013035 (2016). \eprint {http://arxiv.org/abs/1312.6290}
  {arXiv:1312.6290 [quant-ph]}

\bibitem[MRSR16]{MacLeanReid+16}
J.-P.~W. MacLean, K.~Ried, R.~W. Spekkens, and K.~J. Resch,
  \emph{Quantum-Coherent Mixtures of Causal Relations} (2016). \eprint
  {http://arxiv.org/abs/1606.04523} {arXiv:1606.04523 [quant-ph]}

\bibitem[MT17]{MaroneyTimpson17}
O.~J.~E. Maroney and C.~G. Timpson, \emph{Quantum- vs. Macro- Realism: {W}hat
  does the {L}eggett-{G}arg Inequality Actually Test?} British Journal for the
  Philosophy of Science \textbf{In Production} (2017). \eprint
  {http://arxiv.org/abs/1412.6139} {arXiv:1412.6139 [quant-ph]}

\bibitem[MTY88]{MorrisThorne+88}
M.~S. Morris, K.~S. Thorne, and U.~Yurtsever, \emph{Wormholes, Time Machines,
  and the Weak Energy Condition}. Physical Review Letters \textbf{61}(13), 1446
  (1988)

\bibitem[NC00]{NielsenChuang00}
M.~A. Nielsen and I.~L. Chuang, \emph{Quantum Computation and Quantum
  Information}. Cambridge University Press, Cambridge (2000)

\bibitem[NMS{\etalchar{+}}15]{NiggMonz+15}
D.~Nigg, T.~Monz, P.~Schindler, E.~A. Martinez, M.~Chwalla, M.~Hennrich,
  R.~Blatt, M.~F. Pusey, T.~Rudolph, and J.~Barrett, \emph{Can Different
  Quantum State Vectors Correspond to the Same Physical State? {A}n
  Experimental Test}. New Journal of Physics \textbf{18}, 013007 (2015).
  \eprint {http://arxiv.org/abs/1211.0942} {arXiv:1211.0942 [quant-ph]}

\bibitem[OCB12]{OreshkovCosta+12}
O.~Oreshkov, F.~Costa, and \v{C}aslav Brukner, \emph{Quantum Correlations with
  no Causal Order}. Nature Communications \textbf{3}, 1092 (2012). \eprint
  {http://arxiv.org/abs/1105.4464} {arXiv:1105.4464 [quant-ph]}

\bibitem[PB00]{PatiBraunstein00}
A.~K. Pati and S.~L. Braunstein, \emph{Impossibility of Deleting an Unknown
  Quantum State}. Nature \textbf{404}(6774), 164--165 (2000). \eprint
  {http://arxiv.org/abs/quant-ph/9911090} {arXiv:quant-ph/9911090 [quant-ph]}

\bibitem[PB15]{PienaarBrukner15}
J.~Pienaar and \v{C}aslav Brukner, \emph{A Graph-Separation Theorem for Quantum
  Causal Models}. New Journal of Physics \textbf{17}(7), 073020 (2015). \eprint
  {http://arxiv.org/abs/1406.0430} {arXiv:1406.0430 [quant-ph]}

\bibitem[PBR12]{PuseyBarrett+12}
M.~F. Pusey, J.~Barrett, and T.~Rudolph, \emph{On the Reality of the Quantum
  State}. Nature Physics \textbf{8}(6), 475--478 (2012). \eprint
  {http://arxiv.org/abs/1111.3328} {arXiv:1111.3328 [quant-ph]}

\bibitem[PCA11]{PatiChakrabarty+10}
A.~K. Pati, I.~Chakrabarty, and P.~Agrawal, \emph{Purification of Mixed States
  with Closed Timelike Curve is not Possible}. Physical Review A
  \textbf{84}(6), 062325 (2011). \eprint {http://arxiv.org/abs/1003.4221}
  {arXiv:1003.4221 [quant-ph]}

\bibitem[Pea09]{Pearl09}
J.~Pearl, \emph{Causality: {M}odels, Reasoning, and Inference}. Second ed.,
  Cambridge University Press, Cambridge (2009)

\bibitem[Pen98]{Penrose98}
R.~Penrose, \emph{Quantum Computation, Entanglement and State Reduction}.
  Philosophical Transactions of the Royal Society of London A \textbf{356},
  1927--1939 (1998)

\bibitem[PHHH06]{PianiHorodecki+06}
M.~Piani, M.~Horodecki, P.~Horodecki, and R.~Horodecki, \emph{Properties of
  Quantum Nonsignaling Boxes}. Physical Review A \textbf{74}(1), 012305 (2006).
  \eprint {http://arxiv.org/abs/quant-ph/0505110} {arXiv:quant-ph/0505110
  [quant-ph]}

\bibitem[Pie13]{Pienaar13}
J.~L. Pienaar, \emph{Causality Violation and Nonlinear Quantum Mechanics}.
  {PhD} thesis, School of Mathematics and Physics, University of Queensland
  (2013). \eprint {http://arxiv.org/abs/1401.0167} {arXiv:1401.0167 [quant-ph]}

\bibitem[PJO15]{PerryJain+15}
C.~Perry, R.~Jain, and J.~Oppenheim, \emph{Communication Tasks with Infinite
  Quantum-Classical Separation}. Physical Review Letters \textbf{115}(3),
  030504 (2015). \eprint {http://arxiv.org/abs/1407.8217} {arXiv:1407.8217
  [quant-ph]}

\bibitem[Pol94]{Politzer94}
H.~D. Politzer, \emph{Path Integrals, Density Matrices, and Information Flow
  with Closed Timelike Curves}. Physical Review D \textbf{49}(8), 3981 (1994).
  \eprint {http://arxiv.org/abs/gr-qc/9310027} {arXiv:gr-qc/9310027 [gr-qc]}

\bibitem[PPM13]{PatraPironio+13}
M.~K. Patra, S.~Pironio, and S.~Massar, \emph{No-Go Theorems for
  $\psi$-Epistemic Models Based on a Continuity Assumption}. Physical Review
  Letters \textbf{111}(9), 090402 (2013). \eprint
  {http://arxiv.org/abs/1211.1179} {arXiv:1211.1179 [quant-ph]}

\bibitem[Rag02]{Raginsky02}
M.~Raginsky, \emph{Strictly Contractive Quantum Channels and Physically
  Realizable Quantum Computers}. Physical Review A \textbf{65}(3), 032306
  (2002). \eprint {http://arxiv.org/abs/quant-ph/0105141}
  {arXiv:quant-ph/0105141 [quant-ph]}

\bibitem[RBB{\etalchar{+}}16]{RossetBranciard+16}
D.~Rosset, C.~Branciard, T.~J. Barnea, G.~P{\"u}tz, N.~Brunner, and N.~Gisin,
  \emph{Nonlinear {B}ell Inequalities Tailored for Quantum Networks}. Physical
  Review Letters \textbf{116}(1), 010403 (2016). \eprint
  {http://arxiv.org/abs/1506.07380} {arXiv:1506.07380 [quant-ph]}

\bibitem[RDB{\etalchar{+}}15]{RingbauerDuffus+15}
M.~Ringbauer, B.~Duffus, C.~Branciard, E.~G. Cavalcanti, A.~G. White, and
  A.~Fedrizzi, \emph{Measurements on the Reality of the Wavefunction}. Nature
  Physics \textbf{11}, 249--254 (2015). \eprint
  {http://arxiv.org/abs/1412.6213} {arXiv:1412.6213 [quant-ph]}

\bibitem[Rei56]{Reichenbach56}
H.~Reichenbach, \emph{The Direction of Time}. University of California Press,
  Berkeley (1956)

\bibitem[RM10]{RalphMyers10}
T.~C. Ralph and C.~R. Myers, \emph{Information Flow of Quantum States
  Interacting with Closed Timelike Curves}. Physical Review A \textbf{82}(6),
  062330 (2010). \eprint {http://arxiv.org/abs/1003.1987} {arXiv:1003.1987
  [quant-ph]}

\bibitem[RR13]{RichardsonRobins13}
T.~S. Richardson and J.~M. Robins, \emph{Single World Intervention Graphs
  ({SWIGs}): {A} Unification of the Counterfactual and Graphical Approaches to
  Causality}. Center for the Statistics and the Social Sciences, University of
  Washington Series. Working Paper \textbf{128} (2013)

\bibitem[RS11]{RohatgiSaleh11}
V.~K. Rohatgi and A.~K. M.~E. Saleh, \emph{An Introduction to Probability and
  Statistics}, vol. 910 of \emph{Wiley Series in Probability and Statistics}.
  Second ed., John Wiley \& Sons (2011)

\bibitem[Rus02]{Ruskai02}
M.~B. Ruskai, \emph{Inequalities for Quantum Entropy: {A} Review with
  Conditions for Equality}. Journal of Mathematical Physics \textbf{43}(9),
  4358--4375 (2002). \eprint {http://arxiv.org/abs/quant-ph/0205064}
  {arXiv:quant-ph/0205064 [quant-ph]}

\bibitem[SBKW10]{SaundersBarrett+10}
S.~Saunders, J.~Barrett, A.~Kent, and D.~Wallace, eds., \emph{Many Worlds?:
  {E}verett, Quantum Theory, \& Reality}. Oxford University Press, Oxford
  (2010)

\bibitem[Sch30]{Schauder30}
J.~Schauder, \emph{Der Fixpunktsatz in Funktionalr\"{a}men}. Sudia Mathematica
  \textbf{2}, 171 (1930)

\bibitem[SF14]{SchlosshauerFine14}
M.~Schlosshauer and A.~Fine, \emph{No-Go Theorem for the Composition of Quantum
  Systems}. Physical Review Letters \textbf{112}(7), 070407 (2014). \eprint
  {http://arxiv.org/abs/1306.5805} {arXiv:1306.5805 [quant-ph]}

\bibitem[SFR16]{SutterFawzi+16}
D.~Sutter, O.~Fawzi, and R.~Renner, \emph{Universal Recovery Map for
  Approximate {M}arkov Chains}. Proceedings of the Royal Society A
  \textbf{472}(2186), 20150623 (2016). \eprint
  {http://arxiv.org/abs/1504.07251} {arXiv:1504.07251 [quant-ph]}

\bibitem[SGB{\etalchar{+}}14]{SilvaGuryanova+14}
R.~Silva, Y.~Guryanova, N.~Brunner, N.~Linden, A.~J. Short, and S.~Popescu,
  \emph{Pre- and Post-Selected Quantum States: {D}ensity Matrices, Tomography,
  and {K}raus Operators}. Physical Review A \textbf{89}(1), 012121 (2014).
  \eprint {http://arxiv.org/abs/1308.2089} {arXiv:1308.2089 [quant-ph]}

\bibitem[SGS01]{SpirtesGlymour+01}
P.~Spirtes, C.~Glymour, and R.~Scheines, \emph{Causation, Prediction, and
  Search}. Second ed., The MIT Press, Cambridge, MA (2001)

\bibitem[Shi13]{Shimony13}
A.~Shimony, \emph{{B}ell's Theorem}. In \emph{The {S}tanford Encyclopedia of
  Philosophy}, ed. E.~N. Zalta, winter 2013 ed. (2013)

\bibitem[SMSC{\etalchar{+}}15]{ShalmMeyer-Scott+15}
L.~K. Shalm, E.~Meyer-Scott, B.~G. Christensen, P.~Bierhorst, M.~A. Wayne,
  M.~J. Stevens, T.~Gerrits, S.~Glancy, D.~R. Hamel, M.~S. Allman, K.~J.
  Coakley, S.~D. Dyer, C.~Hodge, A.~E. Lita, V.~B. Verma, C.~Lambrocco,
  E.~Tortorici, A.~L. Migdall, Y.~Zhang, D.~R. Kumor, W.~H. Farr, F.~Marsili,
  M.~D. Shaw, J.~A. Stern, C.~Abell\'{a}n, W.~Amaya, V.~Pruneri, T.~Jennewein,
  M.~W. Mitchell, P.~G. Kwiat, J.~C. Bienfang, R.~P. Mirin, E.~Knill, and S.~W.
  Nam, \emph{A Strong Loophole-Free Test of Local Realism}. Physical Review
  Letters \textbf{115}(25), 250402 (2015). \eprint
  {http://arxiv.org/abs/1511.03189} {arXiv:1511.03189 [quant-ph]}

\bibitem[Spe05]{Spekkens05}
R.~W. Spekkens, \emph{Contextuality for Preparations, Transformations, and
  Unsharp Measurements}. Physical Review A \textbf{71}(5), 052108 (2005).
  \eprint {http://arxiv.org/abs/quant-ph/0406166} {arXiv:quant-ph/0406166
  [quant-ph]}

\bibitem[Spe07]{Spekkens07}
R.~W. Spekkens, \emph{Evidence for the Epistemic View of Quantum States: {A}
  Toy Theory}. Physical Review A \textbf{75}(3), 032110 (2007). \eprint
  {http://arxiv.org/abs/quant-ph/0401052} {arXiv:quant-ph/0401052 [quant-ph]}

\bibitem[Spe14]{Spekkens14}
R.~W. Spekkens, \emph{Quasi-quantization: {C}lassical Statistical Theories with
  an Epistemic Restriction} (2014). \eprint {http://arxiv.org/abs/1409.5041}
  {arXiv:1409.5041 [quant-ph]}

\bibitem[Spe16]{Spekkens16}
R.~W. Spekkens, \emph{Quasi-quantization: {C}lassical Statistical Theories with
  an Epistemic Restriction}. In \emph{Quantum Theory: {I}nformational
  Foundations and Foils}, eds. G.~Chiribella and R.~W. Spekkens, vol. 181 of
  \emph{Fundamental Theories of Physics}, pp. 83--135, Springer, Berlin (2016).
  \eprint {http://arxiv.org/abs/1409.5041} {arXiv:1409.5041 [quant-ph]}

\bibitem[STH16]{SutterTomamichel+16}
D.~Sutter, M.~Tomamichel, and A.~W. Harrow, \emph{Strengthened Monotonicity of
  Relative Entropy via Pinched {P}etz Recovery Map}. IEEE Transactions on
  Information Theory \textbf{62}(5), 2907--2913 (2016). \eprint
  {http://arxiv.org/abs/1507.00303} {arXiv:1507.00303 [quant-ph]}

\bibitem[Sve09]{Svetlichny09}
G.~Svetlichny, \emph{Effective Quantum Time Travel} (2009). \eprint
  {http://arxiv.org/abs/0902.4898} {arXiv:0902.4898 [quant-ph]}

\bibitem[SW05]{SchumacherWestmoreland05}
B.~Schumacher and M.~D. Westmoreland, \emph{Locality and Information Transfer
  in Quantum Operations}. Quantum Information Processing \textbf{4}(1), 13--34
  (2005). \eprint {http://arxiv.org/abs/quant-ph/0406223}
  {arXiv:quant-ph/0406223 [quant-ph]}

\bibitem[SW12]{SchumacherWestmoreland12}
B.~Schumacher and M.~D. Westmoreland, \emph{Isolation and Information Flow in
  Quantum Dynamics}. Foundations of Physics \textbf{42}(7), 926--931 (2012)

\bibitem[TB03]{TonerBacon03}
B.~F. Toner and D.~Bacon, \emph{Communication Cost of Simulating Bell
  Correlations}. Physical Review Letters \textbf{91}(18), 187904 (2003).
  \eprint {http://arxiv.org/abs/quant-ph/0304076} {arXiv:quant-ph/0304076
  [quant-ph]}

\bibitem[Tim06]{Timpson06}
C.~G. Timpson, \emph{The Grammar of Teleportation}. The British Journal for the
  Philosophy of Science \textbf{57}(3), 587--621 (2006)

\bibitem[TSCA14]{TavakoliSkrzypczyk+14}
A.~Tavakoli, P.~Skrzypczyk, D.~Cavalcanti, and A.~Ac\'{i}n, \emph{Nonlocal
  Correlations in the Star-Network Configuration}. Physical Review A
  \textbf{90}(6), 062109 (2014). \eprint {http://arxiv.org/abs/1409.5702}
  {arXiv:1409.5702 [quant-ph]}

\bibitem[Tuc95]{Tucci95}
R.~R. Tucci, \emph{Quantum {B}ayesian Nets}. International Journal of Modern
  Physics B \textbf{9}(3), 295--337 (1995). \eprint
  {http://arxiv.org/abs/quant-ph/9706039} {arXiv:quant-ph/9706039 [quant-ph]}

\bibitem[Tuc12]{Tucci12}
R.~R. Tucci, \emph{An Introduction to Quantum {B}ayesian Networks for Mixed
  States} (2012). \eprint {http://arxiv.org/abs/1204.1550} {arXiv:1204.1550
  [quant-ph]}

\bibitem[Tyc35]{Tychonoff35}
A.~Tychonoff, \emph{Ein Fixpunktsatz}. Mathematische Annalen \textbf{111}, 767
  (1935)

\bibitem[Uhl11]{Uhlmann11}
A.~Uhlmann, \emph{Transition Probability (Fidelity) and Its Relatives}.
  Foundations of Physics \textbf{41}(3), 288--298 (2011). \eprint
  {http://arxiv.org/abs/1106.0979} {arXiv:1106.0979 [quant-ph]}

\bibitem[Wal10]{Wald10}
R.~M. Wald, \emph{General Relativity}. University of Chicago Press, Chicago
  (2010)

\bibitem[Wal13]{Wallden13}
P.~Wallden, \emph{Distinguishing Initial State-Vectors from Each Other in
  Histories Formulations and the {PBR} Argument}. Foundations of Physics
  \textbf{43}(12), 1502--1525 (2013). \eprint {http://arxiv.org/abs/1211.2084}
  {arXiv:1211.2084 [quant-ph]}

\bibitem[WB12]{WallmanBartlett12}
J.~J. Wallman and S.~D. Bartlett, \emph{Revisiting Consistency Conditions for
  Quantum States of Systems on Closed Timelike Curves}. Foundations of Physics
  \textbf{42}(5), 656 (2012). \eprint {http://arxiv.org/abs/1005.2438}
  {arXiv:1005.2438 [quant-ph]}

\bibitem[WDN{\etalchar{+}}12]{WaldherrDada+12}
G.~Waldherr, A.~C. Dada, P.~Neumann, F.~Jelezko, E.~Andersson, and
  J.~Wrachtrup, \emph{Distinguishing between Nonorthogonal Quantum States of a
  Single Nuclear Spin}. Physical Review Letters \textbf{109}(18), 180501
  (2012). \eprint {http://arxiv.org/abs/1206.0453} {arXiv:1206.0453 [quant-ph]}

\bibitem[Wei89]{Weinberg89}
S.~Weinberg, \emph{Testing Quantum Mechanics}. Annals of Physics
  \textbf{194}(2), 336--386 (1989)

\bibitem[Wig67]{Wigner67}
E.~P. Wigner, \emph{Remarks on the Mind-Body Question}. In \emph{Symmetries and
  Reflections}, pp. 171--184, Indiana University Press (1967)

\bibitem[Wil15]{Wilde15}
M.~M. Wilde, \emph{Recoverability in Quantum Information Theory}. Proceedings
  of the Royal Society A \textbf{471}(2182), 20150338 (2015). \eprint
  {http://arxiv.org/abs/1505.04661} {arXiv:1505.04661 [quant-ph]}

\bibitem[WKZ{\etalchar{+}}17]{WangKnee+17}
K.~Wang, G.~C. Knee, X.~Zhan, Z.~Bian, J.~Li, and P.~Xue, \emph{Optimal
  Experimental Demonstration of Error-Tolerant Quantum Witnesses}. Physical
  Review A \textbf{95}(3), 032122 (2017). \eprint
  {http://arxiv.org/abs/1701.04630} {arXiv:1701.04630 [quant-ph]}

\bibitem[WM12]{WildeMizel12}
M.~M. Wilde and A.~Mizel, \emph{Addressing the Clumsiness Loophole in a
  {L}eggett-{G}arg Test of Macrorealism}. Foundations of Physics
  \textbf{42}(2), 256--265 (2012). \eprint {http://arxiv.org/abs/1001.1777}
  {arXiv:1001.1777 [quant-ph]}

\bibitem[WS15]{WoodSpekkens15}
C.~J. Wood and R.~W. Spekkens, \emph{The Lesson of Causal Discovery Algorithms
  for Quantum Correlations: {C}ausal Explanations of {B}ell-Inequality
  Violations Require Fine-Tuning}. New Journal of Physics \textbf{17}(3),
  033002 (2015). \eprint {http://arxiv.org/abs/1208.4119} {arXiv:1208.4119
  [quant-ph]}

\bibitem[WSF16]{WolfeSpekkens+16}
E.~Wolfe, R.~W. Spekkens, and T.~Fritz, \emph{The Inflation Technique for
  Causal Inference with Latent Variables} (2016). \eprint
  {http://arxiv.org/abs/1609.00672} {arXiv:1609.00672 [quant-ph]}

\bibitem[Zei85]{Zeidler85}
E.~Zeidler, \emph{Nonlinear Functional Analysis and its Applications Volume 1:
  {F}ixed Point Theorems}. Springer-Verlag, Berlin (1985)

\bibitem[ZHLG15]{ZhouHuelga+15}
Z.-Q. Zhou, S.~F. Huelga, C.-F. Li, and G.-C. Guo, \emph{Experimental Detection
  of Quantum Coherent Evolution through the Violation of {L}eggett-{G}arg-Type
  Inequalities}. Physical Review Letters \textbf{115}(11), 113002 (2015).
  \eprint {http://arxiv.org/abs/1209.2176} {arXiv:1209.2176 [quant-ph]}

\bibitem[ZS01]{ZyczkowskiSommers01}
K.~\.{Z}yczkowski and H.-J. Sommers, \emph{Induced Measures in the Space of
  Mixed Quantum States}. Journal of Physics A \textbf{34}(35), 7111 (2001).
  \eprint {http://arxiv.org/abs/quant-ph/0012101} {arXiv:quant-ph/0012101
  [quant-ph]}

\end{thebibliography}

\end{document}